\documentclass[12pt]{article}
\usepackage[dvipsnames,svgnames,x11names]{xcolor}
\usepackage{amsmath, amsthm, thm-restate}
\usepackage{xspace}
\usepackage{float}
\usepackage[T1]{fontenc}
\usepackage{mdframed}
\usepackage{mathtools}
\usepackage{enumitem}

\usepackage[linesnumbered,ruled]{algorithm2e}

\usepackage{algpseudocode}
\usepackage{authblk}

\usepackage{fullpage}
\usepackage{thmtools}
\usepackage[colorlinks]{hyperref}
\hypersetup{colorlinks={true},linkcolor={blue},citecolor=magenta}

\usepackage{amssymb,amsfonts,amsmath,amsthm,amscd,dsfont,mathrsfs}
\usepackage{complexity}
\usepackage{multirow}
\usepackage{mathpazo}
\usepackage{braket}
\usepackage{comment}
\usepackage{quantikz}
\usepackage[
backend=biber,
style=alphabetic,
sorting=ynt,
maxnames=99
]{biblatex}
\addbibresource{ref.bib}
\urlstyle{same}
\setcounter{biburlnumpenalty}{100}
\setcounter{biburlucpenalty}{100}
\setcounter{biburllcpenalty}{100}
\usepackage{url}
\usepackage{bbm}
\usepackage[capitalize]{cleveref}
\Urlmuskip=0mu plus 1mu
\usepackage{hyperref} 
\usepackage[margin=1in]{geometry}
\hypersetup{breaklinks=true}

\theoremstyle{plain}
\newtheorem{theorem}{Theorem}[section]
\newtheorem{lemma}[theorem]{Lemma}
\newtheorem*{claim*}{Claim}
\newtheorem{corollary}[theorem]{Corollary}
\newtheorem{definition}[theorem]{Definition}
\newtheorem{remark}[theorem]{Remark}

\newtheorem{game}{Game}
\newtheorem*{theorem*}{Theorem}
\newtheorem*{definition*}{Definition}


\newmdtheoremenv[backgroundcolor=gray!10,
                 linewidth=0pt,
                 innerleftmargin=4pt,
                 innerrightmargin=4pt,
                 innertopmargin=1pt,
                 innerbottommargin=4pt,
            splitbottomskip=4pt]{problem}[prob]{Problem}

\newmdtheoremenv[backgroundcolor=gray!10,
                 linewidth=0pt,
                 innerleftmargin=4pt,
                 innerrightmargin=4pt,
                 innertopmargin=1pt,
                 innerbottommargin=5.5pt,
            splitbottomskip=4pt]{conjecture}[conj]{Conjecture}

\newcommand{\abs}[1]{\left| #1 \right|}

\mathchardef\mhyphen="2D

\newcommand{\Id}{\mathbb{1}}
\newcommand{\xinD}[1]{\Pi^{ #1 \in\mathsf{db}}}
\newcommand{\inD}{\Pi^{\in\mathsf{db}}}
\newcommand{\ninD}{\Pi^{\in\mathsf{db}, \perp}}
\newcommand{\xninD}[1]{\Pi^{#1 \in\mathsf{db}, \perp}}

\newcommand{\pilt}{\Pi^{\leq t}}
\newcommand{\pit}{\Pi^{t}}

\renewcommand{\proj}[1]{\ensuremath{|#1\rangle \langle #1|}}
\newcommand{\norm}[1]{\left\Vert {#1} \right\Vert}
\renewcommand{\vec}[1]{\mathbf{#1}} 
\newcommand{\Tr}{\mathrm{Tr}}

\renewcommand{\E}{\mathbb{E}}

\newcommand{\bit}{\{0,1\}}

\newcommand{\algo}{\mathcal}

\newcommand{\negl}{\ensuremath{\operatorname{negl}}\xspace}

\newcommand{\heart}{\ensuremath\heartsuit}
\newcommand{\outerprod}[2]{|#1\rangle\langle #2|}
\newcommand{\innerprod}[2]{\langle #1| #2\rangle}
\newcommand{\cp}{\mathsf{cpO}} 
\newcommand{\cf}{\mathsf{cfO}} 

\newcommand{\la}{\leftarrow}
\newcommand{\ra}{\rightarrow}

\newcommand{\ua}{\uparrow}
\newcommand{\da}{\downarrow}

\newcommand{\reg}[1]{{\color{darkgray}\mathsf{#1}}}
\newcommand{\ol}{\overline}

\newcommand{\spn}[1]{\mathsf{span}\{#1\}}
\newcommand{\dom}{\mathsf{Dom}}
\newcommand{\im}{\mathsf{Im}}
\newcommand{\fc}{\mathsf{fC}} 
\newcommand{\pc}{\mathsf{pC}}

\newcommand{\pu}{\mathsf{P}}
\newcommand{\mfpu}{\mathsf{mfP}}
\newcommand{\botd}{\pmb \bot}
\newcommand{\boti}{\pmb \bot}
\newcommand{\flip}{\mathsf{F}}
\newcommand{\iso}{\mathcal{I}}
\newcommand{\cmf}{\mathsf{cmfO}}
\newcommand{\mfc}{\mathsf{mfC}}
\newcommand{\mflip}{\mathsf{mfF}}
\newcommand{\supp}{\mathsf{Supp}}
\newcommand{\adv}[1]{{\mathsf{Dist}(#1)}}
\newcommand{\feist}{\mathsf{Feist}}
\newcommand{\piv}{\Pi^{\mathsf{val}}}
\newcommand{\hval}{\mathcal H^{\mathsf{val}}}
\newcommand{\piqv}{\Pi^{\mathsf{qval}}}
\newcommand{\hqval}{\mathcal H^{\mathsf{qval}}}

\newcommand{\pinqv}{\Pi^{\mathsf{qval}, \perp}}

\newcommand{\hsoph}{\mathcal H^{\mathsf{soph}}}
\newcommand{\pisoph}{\Pi^{\mathsf{soph}}}
\newcommand{\hnsoph}{\mathcal H^{\mathsf{soph}, \perp}}
\newcommand{\pinsoph}{\Pi^{\mathsf{soph}, \perp}}
\newcommand{\hele}{\mathcal H^{\mathsf{ele}}}
\newcommand{\hfele}{\mathcal H^{\mathsf{fele}}}
\newcommand{\piele}{\Pi^{\mathsf{ele}}}

\newcommand{\pinele}{\Pi^{\mathsf{ele}, \perp}}
\newcommand{\pifele}{\Pi^{\mathsf{fele}}}
\newcommand{\pinfele}{\Pi^{\mathsf{fele},\perp}}

\title{Compressed Permutation Oracles}
\author{Joseph Carolan}

\affil{University of Maryland}

\date{}

\begin{document}

\maketitle

\begin{abstract}
    The analysis of quantum algorithms which query random, invertible permutations has been a long-standing challenge in cryptography. Many techniques which apply to random oracles fail, or are not known to generalize to this setting. As a result, foundational cryptographic constructions involving permutations often lack quantum security proofs. With the aim of closing this gap, we develop and prove soundness of a compressed permutation oracle. Our construction shares many of the attractive features of Zhandry's original compressed function oracle: the purification is a small list of input-output pairs which meaningfully reflect an algorithm's knowledge of the oracle.
    
    We then apply this framework to show that the Feistel construction with seven rounds is a strong quantum PRP, resolving an open question of (Zhandry, 2012). We further re-prove essentially all known quantum query lower bounds in the random permutation model, notably the collision and preimage resistance of both Sponge and Davies-Meyer, hardness of double-sided zero search and sparse predicate search, and give new lower bounds for cycle finding and the one-more problem.
\end{abstract}

\pagebreak
{
  \hypersetup{linkcolor=black}
    \tableofcontents
}

\pagebreak

\section{Introduction}

Quantum algorithms which query random permutations and their inverse are ubiquitous in quantum and post-quantum cryptography~\cite{zhandry2016notequantumsecureprps,rosmanis2022tight,Zhandry21,cojocaru25lifting,alagic2023twosided,alagic2025sponge}. This model is commonly referred to as the quantum random permutation model, or QRPM. Like the related quantum random oracle model (the QROM), this model is an idealization. Where the QROM is an idealization of hash functions, the QRPM is an idealization of cryptographic permutations and ciphers.

Many tools exist for analyzing the QROM \cite{Zhandry12,10.1007/978-3-642-25385-0_3,unruh14revocable,YZ21}. One of the most powerful is the compressed oracle \cite{Zhandry2018}, which analyzes the purification of a random oracle in a specially chosen basis that has an intuitive interpretation. In particular, the purification can be written as a \emph{database} of input-output pairs which meaningfully correspond to points the adversary has knowledge of. Moreover, each query adds at most a single such pair to the database. The introduction of this tool led to a slew of new results in quantum and post-quantum cryptography \cite{don2021extract,chung2021compressed}, as well as query complexity \cite{liu19multicols,Hamoudi_2023}.

The QRPM has proven more challenging to analyze. For instance, the compressed oracle technique has resisted many attempts at generalizations to permutation oracles~\cite{Unruh2021,Unruh2023,rosmanis2022tight,czajkowski19lazy}---or any correlated oracle distribution.
Even basic search problems remained open until recently \cite{carolan24oneway,hosoyamanda18dm,Zhandry21,Unruh2021,MMW24}, and despite recent progress on permutation based hash functions~\cite{cpz24precomp,alagic2025sponge,cojocaru25lifting}, many important constructions still lack quantum security proofs.

An emblematic example is the Feistel cipher, also known as the Luby-Rackoff cipher or Feistel network, which is a standard construction of a psuedorandom permutation from a psuedorandom function~\cite{LR88}. A number of influential block ciphers, such as DES~\cite{fips46} and Blowfish~\cite{blowfish}, are based on this construction.
For purposes of illustration, suppose that $h, k, f$ are pseudorandom functions from $n$ bits to $n$ bits. Then, the Feistel network constructs a $2n$ bit permutation $\varphi$ as the following sequence of operations. To compute $\varphi(x)$: \begin{enumerate}
    \item Take the first $n$ bits of $x$, referred to as $x_L$, and compute $h(x_L)$. XOR this value into the last $n$ bits of $x$, referred to as $x_R$, obtaining $x'=x_L \Vert (x_R \oplus h(x_L))$.
    \item Take the last $n$ bits of $x'$, and compute $k(x_R')$. XOR the result into $x'_L$, obtaining $x'' = (x_L' \oplus k(x_R')) \Vert x_R'$.
    \item Take the first $n$ bits of $x''$, and compute $f(x_L'')$. XOR the result into $x''_R$, obtaining $y=x_L'' \Vert (x_R'' \oplus f(x_L''))$, where $y=\varphi(x)$.
\end{enumerate}

\begin{figure}
    \centering
    \includegraphics[width=0.5\linewidth]{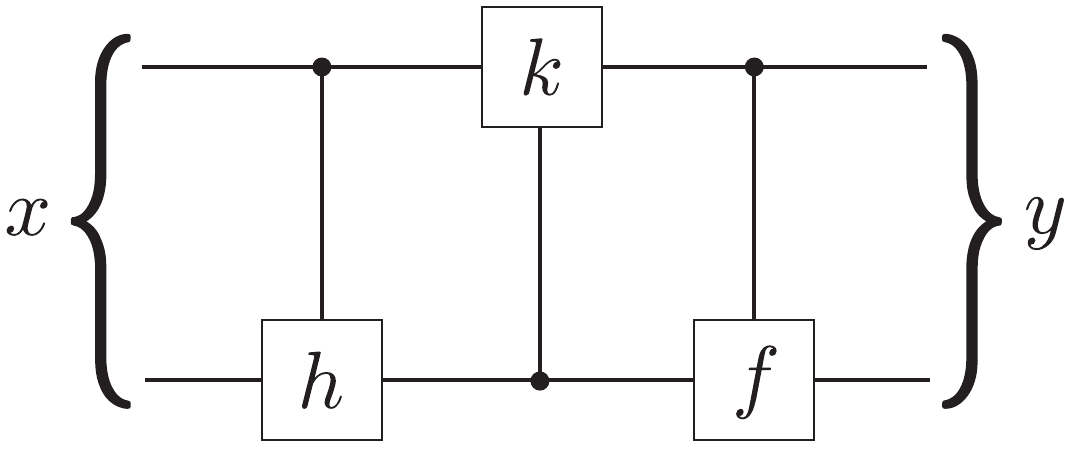}
    \caption{An illustration of three round Feistel.}
    \label{fig:feistel}
\end{figure}

This procedure is depicted in \Cref{fig:feistel}. The three functions corresponds to three rounds, and the construction can be extended with more functions in alternating direction for any number of rounds. It is known that the three round construction is indistinguishable from a random permutation with only forwards classical queries, and that four rounds are indistinguishable with forwards and inverse classical queries~\cite{LR88}. With quantum queries---the strongest plausible attack model---essentially nothing is known about security. This question was first raised by Zhandry in 2012~\cite{Zhandry12}, yet remains beyond existing techniques.\footnote{For instance, in a 2025 revision to \cite{zhandry2016notequantumsecureprps}, Zhandry says ``The quantum security of Feistel networks remains a largely open, very challenging problem.''}

Hosoyamada and Iwata~\cite{hosoyamada19lr} claimed a proof that four round Feistel is a quantum psuedorandom permutation or qPRP\footnote{Without the qualifier of strong, this means forwards-only secure.}, but their proof was found to contain a bug~\cite{bhaumik24badnorm}. The same paper which found the bug proved security against non-adaptive, forwards only queries.
On the algorithmic side, it is known that there is a quantum distinguishing attack for three rounds with forwards only queries~\cite{KM10}, and four rounds with both forwards and backwards queries~\cite{ito19feistcca}. These results indicate that classical proof techniques for Feistel security must break down.

\subsection{Our results}

Our main result is a compressed oracle technique for analyzing two-way accessible permutations. Our construction is simple to apply, and shares many features of Zhandry's original compressed function oracle~\cite{Zhandry2018}. In particular, our construction has a purifying register which is a list of input-output pairs, referred to as the compressed database. We establish that the construction is
\begin{enumerate}
    \item \emph{Sound}, in that it faithfully represents algorithms querying two-way accessible random permutations.
    \item \emph{Bounded}, in that the size of the compressed database grows by at most one with each query.
    \item \emph{Meaningful}, in that the points defined by the database correspond to points the algorithm ``knows''.
\end{enumerate}

We defer to \Cref{sec:comp-perms} for a formal description of both the construction and the above properties---note that the construction itself is only a few pages. 
At a high level, the construction is similar to the compressed function oracle but with two changes. First, the compression operator explicitly maintains injectivity of the database. Second, to answer inverse queries, we invert the database before and after answering.

The idea of inverting the truth table to answer inverse queries is due to Unruh~\cite{Unruh2023}; our construction is a small modification to Unruh's proposal.\footnote{In \cite{Unruh2023}, however, soundness is only conjectured.} The proof that this simple construction is sound constitutes the main technical component of this work.

\begin{theorem*}[Informal version of \Cref{thm:main-indist}]
    No algorithm can distinguish the compressed permutation oracle described above from a truly random size $N$ permutation unless it makes $\Omega(\sqrt[12]{N})$ bidirectional quantum queries.
\end{theorem*}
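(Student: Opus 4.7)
The plan is to establish soundness via a hybrid argument comparing the compressed permutation oracle to a purification of the uniformly random permutation on $[N]$. I would begin by fixing the canonical purification: a register holding a uniform superposition $(N!)^{-1/2}\sum_\pi \ket{\pi}$ over the full truth table of a permutation, together with the two standard query unitaries $O_\pi:\ket{x}\ket{y}\mapsto \ket{x}\ket{y\oplus\pi(x)}$ and $O_{\pi^{-1}}:\ket{x}\ket{y}\mapsto \ket{x\oplus\pi^{-1}(y)}\ket{y}$. Following \cite{Zhandry2018} and Unruh's proposal, the compressed oracle should arise as a unitary change of basis on this register, combined with the database-inversion trick that rewrites inverse queries as forward queries in a flipped representation. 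The first step is therefore to write this change of basis down explicitly and define an isometry $\iso$ from the compressed-database register to the full truth-table register.

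The heart of the proof is a local claim: on any database state of size $d$ that is injective (as the injectivity-enforcing compression operator guarantees), applying a single compressed query and then $\iso$ differs from applying $\iso$ and then the true purified query by only $O(\operatorname{poly}(d)/N)$ in operator norm. To prove this I would decompose a compressed forward query into (i) decompressing the queried input $x$ into an explicit superposition of outputs consistent with the current database, (ii) XORing into the adversary's answer register, and (iii) recompressing; and decompose an inverse query symmetrically, conjugating by database-inversion so that compression again acts on the "fresh" coordinate. The key computation is to show that the uniform superposition used for decompression---which ranges over all of $[N]$---is close in state norm to the \emph{correct} conditional superposition ranging over $[N]$ minus the $d$ values already occupied in the database. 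Chaining local claims over $q$ queries, with $d\le q$ throughout, yields a distinguishing advantage of $O(q^{c}/N)$ for some constant $c$, and setting this to a constant gives $q=\Omega(N^{1/c})$. I expect the loose constant $c=12$ to arise from compounding a handful of $\operatorname{poly}(d)$ factors: one factor for the complement-versus-full-range approximation on forward queries, one on inverse queries, one each for the two applications of database inversion that bracket an inverse query, and additional factors for controlling how injectivity-enforcement modifies the database between queries.

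The main obstacle will be the cross-terms introduced by database inversion in the inverse-query case. For functions, every row of the truth table is independent, so Zhandry's decompression of input $x$ touches only $x$'s row. For permutations, the correct conditional distribution couples every row: learning one pair $(x_1,y_1)$ must simultaneously forbid the value $y_1$ at other inputs \emph{and} the preimage $x_1$ at other outputs. Consequently one must verify that the injectivity-preserving compression operator commutes, up to $O(\operatorname{poly}(d)/N)$ error, with the inversion operator---otherwise a forward query followed by its intended inverse would fail to clean up coherently. I expect to handle this by working in an explicit basis of "phase states" on the truth table, written as symmetrized sums over permutations extending a given partial injective database, and to reduce the key estimates to counting the number of permutations compatible with two partial databases that differ by one point.

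Once these ratio-of-counts estimates are in hand, the meaningfulness and boundedness properties of the construction (asserted in the overview) feed directly in: boundedness limits $d\le q$, and meaningfulness provides the injective structure needed for the local error bound, so assembling them with the hybrid yields the $\Omega(\sqrt[12]{N})$ bound as stated.
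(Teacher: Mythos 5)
There is a genuine gap, and it sits exactly at the foundation of your plan. You propose to define the isometry $\iso$ by sending a partial injective database $I$ to the (normalized) symmetrized sum over all permutation truth tables extending $I$, and then to control the per-query error by counting permutations compatible with two databases differing in one point. But those target states are \emph{not} orthogonal for distinct $I$ (for instance, the state for $I$ and the state for any extension $I[x\to y]$ have large overlap, since the permutations extending $I[x\to y]$ are a subset of those extending $I$), so the map you describe is not an isometry, and the ``local claim'' cannot be phrased as a clean intertwining in operator norm. This is precisely the known obstruction: it is the approach of Unruh and of Rosmanis, whose non-orthogonal database states have so far limited the method to simple search problems and left soundness for two-way-accessible permutations conjectural. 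The inverse-query ``cross-term'' issue you flag is real, but your proposed remedy (ratio-of-counts estimates in a basis of symmetrized permutation sums) is essentially that conjectured route, not a proof.

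The paper avoids this entirely by changing the purification rather than the basis. It writes the uniform permutation as a \emph{masked Feistel}: an inner three-round Feistel built from uniform random functions $h,k,f$ on $\sqrt N$ points, sandwiched between two twirling permutations $\pi,\omega$ kept in full superposition. The internal functions are handled with the standard compressed \emph{function} oracle, and the purification attached to an injective database $I$ is a superposition over $(\pi,\omega)$ and over canonical internal database triples supporting $\pi * I * \omega$. These states are mutually orthogonal because they have disjoint supports in the computational basis, so the intertwiner $\iso$ is a genuine isometry. The soundness proof then shows, using statistical (``cromulent'') properties of the twirl and a canonical-compression analysis of the inner databases, that $\iso$ approximately intertwines the compressed masked-Feistel query with the compressed permutation oracle query, with per-query error $O(\sqrt{t^4/\sqrt N})$; a hybrid over $q$ queries plus Uhlmann's theorem gives advantage $O(q^3/N^{1/4})$, hence the $\Omega(\sqrt[12]{N})$ bound. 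Note also that your error accounting ($O(q^{12}/N)$ from compounded $\mathrm{poly}(d)$ factors) does not match where the exponent actually comes from: the $N^{1/4}$ arises because the inner Feistel functions act on $\sqrt N$-size domains (birthday-type collision terms $t^2/\sqrt{\sqrt N}$ per query), and the cubic in $q$ comes from summing those per-query errors.
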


While our bound is exponential in the size of an input to the permutation, it does not seem tight.
Once the construction is in place, we apply it to a number of problems in the QRPM. The main application is a proof that the seven round Feistel construction is a strong quantum PRP. 

\begin{theorem*}[Informal version of \Cref{thm:feist-indist}]
    No algorithm can distinguish the seven round Feistel construction from a truly random size $N$ permutation unless it makes $\Omega(\sqrt[12]{N})$ bidirectional quantum queries.
\end{theorem*}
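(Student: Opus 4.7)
The plan is to reduce the Feistel distinguishing problem to an analysis of how seven compressed function oracles (one per round) combine to emulate the compressed permutation oracle guaranteed by \Cref{thm:main-indist}. Concretely, I set up two worlds. In World~A the adversary queries a truly random $2n$-bit permutation, which by the main soundness theorem I may replace with the compressed permutation oracle up to $O(q^{12}/N)$ distinguishing advantage in $q$ queries. In World~B the adversary queries the seven round Feistel construction with random round functions $h_1,\dots,h_7$, which I may replace by Zhandry's compressed function oracles $D_1,\dots,D_7$, one per round, by the standard soundness of compressed function oracles. The goal is then to show that these two purified worlds look the same to the adversary up to $q = \Omega(\sqrt[12]{N})$ queries.

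To compare them, I would couple the seven compressed function databases $D_1,\dots,D_7$ with a \emph{virtual} compressed permutation database $D$, obtained by propagating every partially specified input through as many consecutive $D_i$'s as possible. Whenever the Feistel wiring lets one derive a full input/output pair $(x,y)$ from the seven round databases, that pair is added to $D$; whenever $(x,y)\in D$ can only be consistent with specific middle wires, those wires are added to the appropriate $D_i$. The claim I would prove is that, conditioned on a ``good'' event $\mathsf{Good}$, this correspondence is a well-defined isometry between the purifications in World~A and World~B, and therefore the adversary's reduced state is identical. The event $\mathsf{Good}$ requires that no two distinct Feistel trajectories encoded by $D_1,\dots,D_7$ ever collide on any middle wire, so that the virtual permutation database $D$ stays injective and the inversion-of-the-database step used in the compressed permutation oracle for inverse queries acts consistently with the Feistel wiring read in reverse.

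The main obstacle, and the step that ultimately fixes the round count at seven, is bounding $\Pr[\neg \mathsf{Good}]$. With fewer rounds, either forward or backward adaptive queries can pin enough wires to force a collision on the middle wires — this is precisely the structural source of the known attacks on three and four round Feistel. With seven rounds one has a middle band (rounds $3,4,5$) whose wire values are XORs of outputs of round functions that are not directly pinned by the adversary's chosen input or output, and whose databases grow by at most one entry per query. This should give a bad-event bound of the form $\mathrm{poly}(q)/N$ via a progress-measure argument on $|D_1|+\dots+|D_7|$, analogous to the collision-probability arguments used to analyze sums and XORs of compressed function oracles. The hardest piece of bookkeeping will be to verify that the coupling survives the database-inversion step used to handle inverse queries to the compressed permutation oracle; my plan is to use the fact, established in \Cref{sec:comp-perms}, that inversion followed by a compressed update followed by another inversion is the correct inverse-query action, and to check that the Feistel wiring satisfies the same symmetry under $(x,y)\mapsto(y,x)$ after swapping left/right halves.

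Combining the bad-event bound with the triangle inequality and the soundness bound for the compressed permutation oracle yields the claimed $\Omega(\sqrt[12]{N})$ query threshold. I expect the twelfth root to be inherited entirely from \Cref{thm:main-indist}, with the Feistel-specific bad-event bound contributing only a polynomial factor in $q$ that is absorbed into the constant.
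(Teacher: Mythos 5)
There is a genuine gap, and it sits exactly at the step you state without proof: that ``conditioned on a good event $\mathsf{Good}$, this correspondence is a well-defined isometry between the purifications.'' Compressing all seven round functions and then conditioning on a no-collision event is not a sound operation on coherent states: the adversary queries in superposition, the bad branches cannot simply be discarded, and the good branches need not remain orthogonal or correctly normalized. Worse, the outermost round databases are directly pinned by the adversary --- two forward queries with the same left half $x_L$ (or two inverse queries with the same $y_L$) force the round-1 (resp.\ round-7) compressed database into states entangled with the adversary for which the many-to-one map from $(D_1,\dots,D_7)$ to a ``virtual'' injective database is no longer even approximately an isometry, let alone one that intertwines the query unitaries. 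This is precisely the ``trivialization of norm'' failure mode that sank the earlier four-round proof (Hosoyamada--Iwata, as found by \cite{bhaumik24badnorm}), and no classical-style progress-measure bound on $\Pr[\neg\mathsf{Good}]$ of the form $\mathrm{poly}(q)/N$ repairs it, because the obstruction is about how compression operators act on adversarially entangled database registers, not about the probability of a collision.

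The paper's route is structurally different in the one place that matters: it decomposes $7 = 2+3+2$ and does \emph{not} compress the outer rounds. The two outer two-round Feistels are kept purified in superposition as twirling permutations $\pi \sim \feist^{(2)}$, $\omega \sim (\feist^{(2)})^{-1}$, only the inner three round functions are compressed, and the whole soundness machinery (the intertwiner $\iso$, the sophisticated/elegant subspaces, \Cref{cor:ideal-cmf-int}) is re-used after proving that this twirl distribution is cromulent (\Cref{lem:twirl-feist-prob,lem:twirl-feist-second-mom,lem:twirl-feist-uniform}) --- i.e.\ purely classical statistical statements saying the outer rounds hide the inner wires. Your quantitative expectations are also off in two ways: the Feistel-specific part of the argument is not a negligible $\mathrm{poly}(q)/N$ additive term but contributes error of the same order $O(q^{3}/N^{1/4})$ as the soundness bound, and the comparison to the compressed permutation oracle costs $O(q^{3}/N^{1/4})$ in advantage (not $O(q^{12}/N)$). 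To make your plan work you would essentially have to rediscover why the outer two rounds must be left uncompressed and treated statistically, at which point you are back to the paper's masked-Feistel argument.
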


Additionally, we give a generic quantum lower bound for any predicate search problem in the QRPM in terms of the sparsity of the predicate. We defer to \Cref{sec:search} for a formal statement, but note that this result is analogous to search lower bounds proved in the QROM by standard compressed oracle methods~\cite{Zhandry2018,chung2021compressed}. Given our loose soundness bound, the result is not always tight.

However, essentially all known search lower bounds in the QRPM \cite{hosoyamanda18dm,cojocaru25lifting,carolan24oneway,cpz24precomp,alagic2025sponge,MMW24} can be seen as a special case of this theorem. We re-prove the preimage and collision resistance of the sponge hash function~\cite{BDPvA07,KeccakSub3,KeccakSponge3}, which underlies the current international standard SHA3, and the Davies-Meyer construction~\cite{winternetz84dm}, which underlies SHA1, SHA2, and MD5. Previous proofs are due to \cite{alagic2025sponge} and \cite{cojocaru25lifting}, both of which required the development of new techniques. For the sponge, our bound is tighter than the one in~\cite{alagic2025sponge}. We also prove the first lower bounds for the one-more problem and cycle finding in the QRPM, both of which appear---to the author's knowledge---beyond the scope of existing techniques.

Intriguingly, we observe that the looseness in our bounds is entirely due to the soundness error of our construction. If the soundness analysis were improved, then all of the previously mentioned lower bounds would be tight.\footnote{With the possible exception of the bound for the one-more and cycle finding problem.} This suggests that our compressed permutation oracle accurately captures the difficulty of these problems.

Given the range of applications and simplicity of our construction, we expect it to be useful for many other problems in the QRPM.

\subsection{Technical summary}
To explain the intuition underlying our security proof, let us begin by considering an approach which does not appear to work. A natural starting point for analyzing quantum algorithms which query invertible permutations is to purify the permutation oracle, mirroring the approach of Zhandry for the compressed function oracle. If we represent a permutation/function from $[N]$ to $[N]$ by its truth table, we can contrast the two purifications: \begin{align}
    \underbrace{\frac{1}{\sqrt{N!}} \sum_{\substack{y_1, \dots, y_N, \\
    y_i \neq y_j}} \ket{y_1} \otimes \dots \otimes \ket{y_N}}_{\text{permutation}}, && \text{ vs. } && \underbrace{\frac{1}{\sqrt{N^N}} \sum_{y_1, \dots, y_N} \ket{y_1} \otimes \dots \otimes \ket{y_N}}_{\text{function}}.
\end{align}
The two states look quite similar, but there is a subtle difference. The purification of random functions, in this representation, is a product state: each entry of the truth table is unentangled, reflecting the fact that the oracle outputs are uncorrelated. Then, a simple indicator that a query algorithm has learned about, say, $y_i$ is to check whether the $i$-th register above remains in uniform superposition (no knowledge), or is entangled with an external register (knowledge gained). This observation is what enables standard compressed oracles.

In the permutation case, there is weak entanglement between registers $i$ and $j$, reflecting the condition that $y_i \neq y_j$. While this might appear like a minor difference, it turns out that the intuition above completely breaks. In particular, the $i$-th register in the purification of a permutation is \emph{maximally} entangled with the remaining registers. For every $j \neq i$, $y_j$ eliminates one of the $[N]$ possibilities for $y_i$. Taken together, the complement of $y_i$ suffices to fix $y_i$. To check if $y_i$ is queried, we must do more than check if register $i$ is entangled, because it is maximally entangled before any queries are made.

To get around this, we must devise some other way to understand the entanglement between a query algorithm and the purification. One approach taken by Unruh~\cite{Unruh2021} and Rosmanis~\cite{rosmanis2022tight} is to identify an injective database (a.k.a. an injective partial function) $I$ with the sum of permutations agreeing with $I$ on all well-defined points. Unfortunately, this leads to compressed database states which are non-orthogonal. This significantly complicates the analysis, especially in the case of inverse queries, limiting the scope of applications to simple search problems.

We will take a different approach based on expanding the purifying register. The point of this expansion is that we will be able to define \emph{orthogonal} database states, which nonetheless characterize entanglement between the algorithm and oracle in an intuitive way. In more detail, we will write the full permutation $\varphi$ as a three round Feistel network defined by functions labeled $h, k, f$, sandwiched on either side by random permutations $\pi$ and $\omega$. We refer to this ensemble as \emph{masked Feistel}, and it is depicted in \Cref{fig:masked-feist}.

\begin{figure}
    \centering
    \includegraphics[width=0.72\linewidth]{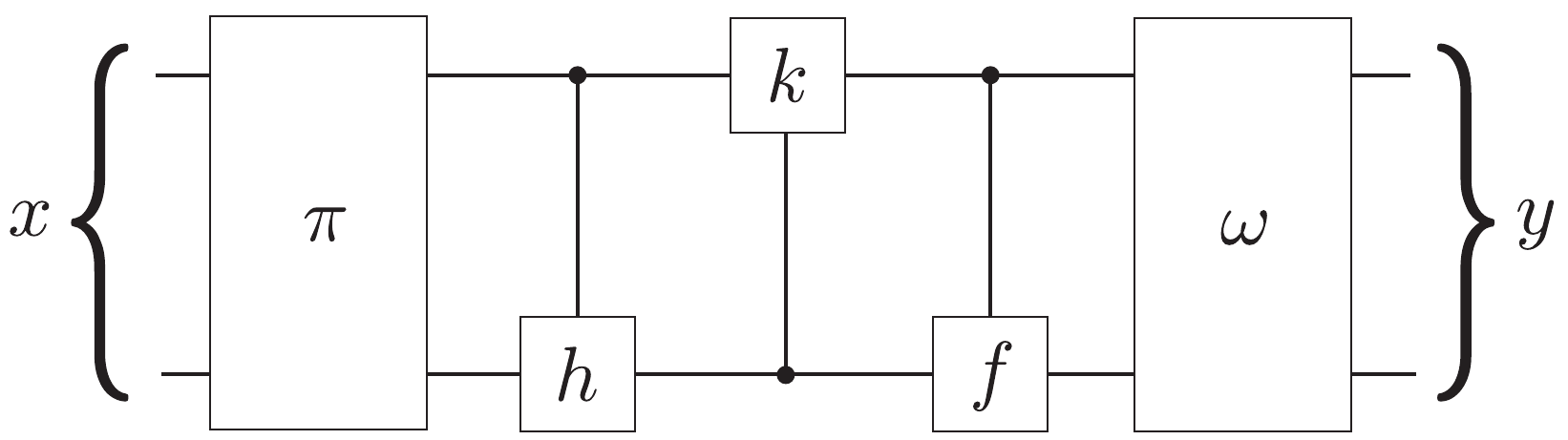}
    \caption{The masked Feistel decomposition.}
    \label{fig:masked-feist}
\end{figure}

We can write the purification of this ensemble, using $N=4^n$ as the permutation domain size, as
\begin{align}
    \frac{1}{\sqrt{(N!)^2 (2^n)^{3 \cdot 2^n}}}\sum_{\pi, \omega \in \mathbf S_N, h,k,f \in \mathbf F_{\sqrt N, \sqrt N}} \ket{\pi, \omega, h, k, f}
\end{align}
where $\mathbf F_{\sqrt N, \sqrt N}$ is the set of functions from $\bit^n$ to $\bit^n$ and $\mathbf S_N$ is the set of permutations on $\bit^{2n}$. We denote the permutation defined by this whole ensemble as $\varphi$. This $\varphi$ is a product of a uniform random permutation (say $\pi$) and some independent permutation, and is therefore uniform random. At this point, it may appear that we have made backwards progress, given the complexity of this purification. However, let us ignore for a moment the outer $\pi, \omega$ permutations, and consider what it means for $\varphi$ to satisfy $\varphi(x)=y$. This fixes three input-output pairs in the $h, k, f$ functions, corresponding to the three values reached along the evaluation of $\varphi(x)$---we call these three pairs a chain. In other words, input-output properties of $\varphi$ correspond in a direct way to input-output properties of the random functions $h, k, f$.

Now, we apply the standard compressed oracle theory to analyze $h, k, f$. In this picture, the purifying registers for these functions are databases which begin empty. Each new query to the ensemble adds at most one new input-output pair to these databases, which we will use to reconstruct the input-output pairs of $\varphi$. Then, so long as each chain has no collisions (i.e. the same function input-output pair appearing in multiple chains), we can uniquely recover an injective database.

Of course, an adversarial algorithm could easily thwart this plan, by exploiting the periodicity of the three-round Feistel to distinguish it from a random permutation; or even making two queries that collide on the first $n$ bits. This latter problem is the ``trivialization of norm'' faced by prior attempts to analyze quantum security of Feistel~\cite{hosoyamada19lr,bhaumik24badnorm}.
The purpose of the twirling permutations $\pi$ and $\omega$ is to prevent the algorithm from detecting this internal structure. We do not compress these permutation, instead keeping them in uniform superposition. Intuitively, with high probability over the twirl, every new query will lead to a distinct first $n$ bits on the internal Feistel. In turn, with high probability over the selection of the round function values, this will lead to a new chain (i.e. sequence of database values defining an input-output pair of $\varphi$) distinct from prior chains. This intuition can be formalized using the standard compressed oracle theory. At a high level, we define an isometry $\iso$ which maps a superposition over tuples $\pi, \omega, D_h, D_k, D_f$, where $D_h, D_k, D_f$ correspond to function databases, to an injective database $I$ . This isometry essentially acts as\footnote{The exact definition of an internal collision can be found in \Cref{sec:Feistel-twirl}, and of the isometry can be found in \Cref{sec:spaces}.} \begin{align}
    \iso \left(\frac{1}{\sqrt{K}}\sum_{\substack{\pi, \omega, D_h, D_k, D_f, \\ \text{supporting only $I$,} \\
    \text{no internal collisions}}} \ket{\pi, \omega, D_h, D_k, D_f}\right) &= \ket{I},
\end{align}
where supporting $I$ means the databases contain information to fix only the input-output pairs in $I$ (and none else), and $K$ is some normalization factor. The key point is that the states on the left hand side are mutually orthogonal, as they are supported on distinct sets of computational basis states.

Then, we show that this isometry \emph{intertwines} the masked Feistel oracle with our compressed permutation oracle. In other words, if $\cmf$ is the compressed masked Feistel oracle and $\cp$ is the compressed permutation oracle, we show $\iso \cdot \cmf \approx \cp \cdot \iso$. Applying this across each query, we show that the final purified permutation (represented here as the tuple $\pi, \omega, h, k, f$) is mapped by $\iso$ to the purification in the compressed permutation oracle experiment (a partial injective database). Then, by Uhlmann's theorem, the mixed state the algorithm sees after tracing out the purifying register is close in trace distance.

We stress that the masked Feistel ensemble is necessary only as a tool for proving soundness, and is not necessary to understand and apply the construction. Indeed, our applications to search lower bounds make no mention of masked Feistel, instead following essentially the same blueprint as compressed function oracle lower bounds. On the other hand, our proof that the seven round Feistel construction is a strong qPRP is quite similar. In particular, we can consider the seven round Feistel to be an instance of masked (three round) Feistel, where the twirling $\pi, \omega$ are drawn from two round Feistel and two round inverse Feistel respectively. We show that this distribution \emph{also} has the requisite statistical properties to hide the internal three rounds, allowing us to reuse many techniques from the soundness proof.

\subsection{Related work}

Prior work on constructing compressed permutation oracles has been met with limited success. Constructions were put forth by Unruh~\cite{Unruh2021} and Czajkowski et al~\cite{czajkowski19lazy}, though both were later found to be incorrect. A notable exception is the recent work of Majenz, Malavolta and Walter~\cite{MMW24}, referred to as the superposition permutation oracle. This construction was successfully employed to show search hardness for sparse relations, and is based on compressed oracle ideas. Unfortunately, this technique differs from standard compressed oracles in that the purification is not a small list of input-output pairs. This added complexity has limited the application of this technique to simple search problems involving only a single input-output pair---in contrast, our search lower bounds apply for predicates on any number of input-output pairs.

In the case where inverse queries are not allowed, a proposal by Rosmanis~\cite{rosmanis2022tight} based on representation theory has been used to re-prove lower bounds for permutation inversion. Furthermore, a number of results from the QROM can be lifted to this setting~\cite{Czajkowski2017,czajkowski19lazy}, as a random one-way permutation is quantum indistinguishable from a random function~\cite{Zhandry15,yuen2013quantumlowerbounddistinguishing}.

\subsection{Reader's guide}

This manuscript is written in order of mathematical dependence. \Cref{sec:prelim,sec:comp-perms} lay out the core ideas and notation, and should be read first. It may be helpful to read \Cref{sec:search}, especially \Cref{sec:search-simple}, afterwards to build intuition about our construction, as the search applications are relatively simple. The application to Feistel in \Cref{sec:feist-sec} should be read after the soundness proof.

For the soundness proof, \Cref{subsec:pieces-main} and the proof of the main theorem provides a high level overview of the pieces, and uses only notation from \Cref{sec:query-pperms,sec:spaces}.  \Cref{sec:Feistel-twirl} defines the purification of permutation oracles whose analysis forms the core of this work, though it is quite technical, and many of the definitions and lemmas may seem unmotivated if read immediately after \Cref{sec:comp-perms}. \Cref{sec:soundness} relates this purification to our compressed permutation oracle.

\subsection{Acknowledgements}
The author is grateful to Gorjan Alagic, Christian Majenz, and Saliha Tokat for insightful discussions and significant feedback on this manuscript.
JC acknowledges support from the U.S. Department of Energy grant DE-SC0020264.

\section{Preliminaries}
\label{sec:prelim}

Throughout this document, we will assume $N=4^n$ is a power of four, and identify $[N]$ with bit strings in the natural way. The choice of $4^n$ reflects that we are constructing a $2n$ bit permutation from $n$ bit to $n$ bit functions.

\subsection{Quantum}

Given a pure state $\ket{\psi} \in \algo H$, we denote the corresponding density matrix by $\rho(\ket{\psi}) = \outerprod{\psi}{\psi}$. We recall the trace norm of a density matrix, $\norm{\rho}_1 = \Tr[\sqrt{\rho\rho^\dagger}]$. The corresponding trace distance of two mixed states upper bounds the probability that they can be distinguished by any algorithm.

\begin{lemma}
    Let $\sigma, \tau$ be density matrices over $\algo H$. Then, given either $\sigma$ or $\tau$ with equal probability, the maximum distinguishing advantage $D(\sigma, \tau)$ is given by \begin{align}
        D(\sigma, \tau) &= \frac{1}{2} \norm{\sigma - \tau}_1.
    \end{align}
    \label{lem:close-states-dis}
\end{lemma}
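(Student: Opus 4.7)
The plan is to establish the standard Holevo--Helstrom bound. A distinguishing algorithm is specified by a two-outcome POVM $\{M_0, M_1\}$ with $M_0 + M_1 = \Id$ and $0 \leq M_i \leq \Id$, where outcome $0$ is interpreted as guessing $\sigma$ and outcome $1$ as guessing $\tau$. The success probability, averaged over the uniform prior, is
\begin{align}
    p_{\mathrm{succ}} = \tfrac{1}{2}\Tr[M_0 \sigma] + \tfrac{1}{2}\Tr[M_1 \tau] = \tfrac{1}{2} + \tfrac{1}{2}\Tr[M_0(\sigma - \tau)],
\end{align}
using $\Tr[M_1 \tau] = \Tr[\tau] - \Tr[M_0 \tau] = 1 - \Tr[M_0\tau]$. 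Thus maximizing the distinguishing advantage $D(\sigma,\tau) := 2(p_{\mathrm{succ}} - 1/2)$ reduces to maximizing $\Tr[M(\sigma-\tau)]$ over $0 \leq M \leq \Id$.

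Next I would diagonalize the Hermitian, traceless operator $\Delta := \sigma - \tau = \sum_i \lambda_i \proj{v_i}$, and split the index set as $S_+ = \{i : \lambda_i \geq 0\}$ and $S_- = \{i : \lambda_i < 0\}$. Setting $M = P_+ := \sum_{i \in S_+} \proj{v_i}$ clearly satisfies $0 \leq M \leq \Id$ and yields $\Tr[P_+ \Delta] = \sum_{i \in S_+} \lambda_i$. To argue this is optimal, observe that for any feasible $M$,
\begin{align}
    \Tr[M \Delta] = \sum_i \lambda_i \bra{v_i} M \ket{v_i} \leq \sum_{i \in S_+} \lambda_i \bra{v_i} M \ket{v_i} \leq \sum_{i \in S_+} \lambda_i,
\end{align}
where the first inequality drops the nonpositive terms and the second uses $\bra{v_i}M\ket{v_i} \leq 1$.

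Finally, I would convert the expression $\sum_{i \in S_+} \lambda_i$ into a trace norm. Since $\Tr[\Delta] = 0$, we have $\sum_{i \in S_+} \lambda_i = -\sum_{i \in S_-} \lambda_i = \tfrac{1}{2}\sum_i |\lambda_i|$. Because $\Delta$ is Hermitian, its trace norm equals $\sum_i |\lambda_i|$, so $\max_M \Tr[M\Delta] = \tfrac{1}{2}\|\sigma-\tau\|_1$, giving $D(\sigma,\tau) = \tfrac{1}{2}\|\sigma-\tau\|_1$. The only place where one might stumble is justifying the restriction to two-outcome POVMs (rather than, say, POVMs followed by classical post-processing), but any such procedure can be folded into a single two-outcome POVM without loss, and the spectral optimization above makes no further assumption on $M$ beyond $0 \leq M \leq \Id$.
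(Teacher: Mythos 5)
Your proof is correct: it is the standard Holevo--Helstrom argument (reduce to maximizing $\Tr[M(\sigma-\tau)]$ over $0 \leq M \leq \Id$, take the projector onto the nonnegative eigenspace of the traceless Hermitian difference, and identify $\sum_{i \in S_+}\lambda_i$ with $\tfrac12\norm{\sigma-\tau}_1$). The paper treats this lemma as a standard preliminary fact and gives no proof of its own, so there is nothing to compare against; note only that your definition of the advantage as $2(p_{\mathrm{succ}}-\tfrac12)$ is the one consistent with the stated equality $D(\sigma,\tau)=\tfrac12\norm{\sigma-\tau}_1$, and your closing remark correctly handles the reduction of arbitrary strategies to a single two-outcome POVM.
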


For a tensor product Hilbert space, we use capital sans serif subscripts to label each factor, e.g.  $\ket{\psi}_{\reg{AB}} \in \algo H_{\reg{AB}} = \algo H_{\reg{A}} \otimes \algo H_{\reg{B}}$. We also refer to the state as having a register $\reg A$ and a register $\reg B$. When an operator $O_{\reg{A}}$ acts on $\algo H_{\reg A}$, we can naturally extend it's action to $\algo H_{\reg{AB}}$ as $O_{\reg{A}} \otimes \Id_{\reg B}$: we will drop this tensor product for convenience, writing $O_{\reg{A}} \ket{\psi}_{\reg{AB}}$ in place of $O_{\reg{A}} \otimes \Id_{\reg B} \ket{\psi}_{\reg{AB}}$.

We will show that two oracles are indistinguishable by showing that the possible views of an adversary making queries to one of the two oracles are described by density matrices which are close in trace distance. Uhlmann's theorem allows us to bound the trace distance of two density matrices by constructing an isometry on their purifications.

\begin{lemma}
    Let $\ket{\psi} \in \algo H_{\reg A} \otimes \algo H_{\reg B}$ and $\ket{\phi} \in \algo H_{\reg A} \otimes \algo H_{\reg C}$ be normalized quantum states, and let $V : \algo H_{\reg C} \ra \algo H_{\reg B}$ be an isometry. Let $\sigma_{\reg A}=\Tr_{\reg B}[\rho(\ket{\psi}_{\reg{AB}})]$ and $\tau_{\reg A}=\Tr_{\reg C}[\rho(\ket{\phi}_{\reg{AC}})]$. Then we have \begin{align}
        \frac{1}{2}\norm{\sigma_{\reg A} - \tau_{\reg A}}_1 &\leq \norm{\ket{\psi}_{\reg{AB}} - V_{\reg C} \ket{\phi}_{\reg{AC}}}.
    \end{align}
    \label{lem:approx-uhlman}
\end{lemma}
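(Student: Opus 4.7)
The plan is to chain two standard facts: trace distance is contractive under the partial trace, and for normalized pure states $\ket{a},\ket{b}$ one has $\tfrac{1}{2}\norm{\proj{a}-\proj{b}}_1 \leq \norm{\ket{a}-\ket{b}}$. The isometry $V$ is precisely the device that lets us compare $\ket{\psi}_{\reg{AB}}$ and $\ket{\phi}_{\reg{AC}}$ on a common Hilbert space without changing the marginal on $\reg A$.

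First I would verify that the reduced state $\tau_{\reg A}$ is unaffected by applying $V$ to register $\reg C$ and then tracing out $\reg B$. Using $V^\dagger V = \Id_{\reg C}$ and the cyclicity of the partial trace, a one-line computation gives $\Tr_{\reg B}[\rho(V_{\reg C}\ket{\phi}_{\reg{AC}})] = \Tr_{\reg C}[\rho(\ket{\phi}_{\reg{AC}})(\Id_{\reg A}\otimes V^\dagger V)] = \tau_{\reg A}$. Thus both $\sigma_{\reg A}$ and $\tau_{\reg A}$ are $\reg B$-marginals of pure states on $\algo H_{\reg A} \otimes \algo H_{\reg B}$, namely $\ket{\psi}_{\reg{AB}}$ and $V_{\reg C}\ket{\phi}_{\reg{AC}}$; in particular, the latter is normalized since $V$ is an isometry.

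Next I would apply monotonicity of trace distance under the CPTP map $\Tr_{\reg B}$ to bound $\tfrac{1}{2}\norm{\sigma_{\reg A}-\tau_{\reg A}}_1$ by $\tfrac{1}{2}\norm{\rho(\ket{\psi}_{\reg{AB}}) - \rho(V_{\reg C}\ket{\phi}_{\reg{AC}})}_1$. To finish, I would recall the pure-state identity $\tfrac{1}{2}\norm{\proj{a}-\proj{b}}_1 = \sqrt{1 - |\innerprod{a}{b}|^2}$, and combine it with $\norm{\ket{a}-\ket{b}}^2 = 2 - 2\,\Re\innerprod{a}{b} \geq 2(1 - |\innerprod{a}{b}|) \geq 1 - |\innerprod{a}{b}|^2$, which yields $\tfrac{1}{2}\norm{\proj{a}-\proj{b}}_1 \leq \norm{\ket{a}-\ket{b}}$. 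Applying this with $\ket{a}=\ket{\psi}_{\reg{AB}}$ and $\ket{b}=V_{\reg C}\ket{\phi}_{\reg{AC}}$ gives exactly the claimed inequality. There is no real obstacle here; the entire content is a repackaging of Uhlmann's theorem plus the fidelity--trace-distance relation for pure states, and the only care needed is in tracking which register each operator acts on.
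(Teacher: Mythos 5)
Your proposal is correct and follows essentially the same route as the paper: show that applying $V$ to register $\reg C$ leaves the $\reg A$-marginal unchanged, invoke monotonicity of the trace norm under the partial trace, and then bound the trace distance of the two pure states by their Euclidean distance. The only cosmetic difference is that you justify the final step via the pure-state fidelity identity $\tfrac{1}{2}\norm{\proj{a}-\proj{b}}_1=\sqrt{1-|\innerprod{a}{b}|^2}$, whereas the paper directly cites the relation $\norm{\rho(\ket{a})-\rho(\ket{b})}_1\leq 2\norm{\ket{a}-\ket{b}}$; both yield the same bound.
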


\begin{proof}
    We have \begin{align}
        \norm{\sigma_{\reg A} - \tau_{\reg A}}_1 &= \norm{\Tr_{\reg B}[\rho(\ket{\psi}_{\reg{AB}})] - \Tr_{\reg C}[\rho(\ket{\phi}_{\reg{AC}})]}_1\\
        &= \norm{\Tr_{\reg B}[\rho(\ket{\psi}_{\reg{AB}})] - \Tr_{\reg B}[\rho(V_{\reg C} \ket{\phi}_{\reg{AC}})]}_1 & \text{($V$ an isometry)} \\
        &\leq \norm{\rho(\ket{\psi}_{\reg{AB}}) - \rho(V_{\reg C} \ket{\phi}_{\reg{AC}})}_1 & \text{($L_1$ trace monotonicity)} \\
        &\leq 2\norm{\ket{\psi}_{\reg{AB}} - V_{\reg C} \ket{\phi}_{\reg{AC}}} & \text{($L_1$ and $L_2$ relation)}
    \end{align}
\end{proof}

Another simple and useful fact is that the norm of an operator given by a direct sum is the maximum of the norm of any of the terms. If an operator $O : \algo \algo H(A) \otimes \algo H(B) \ra \algo H(A) \otimes \algo H(B)$ can be written $O_{\reg{AB}} = \bigoplus_{a \in A} \proj{a}_{\reg A} \otimes O_{a, \reg B}$ for some family of operators $O_a : \algo H(B) \ra \algo H(B)$, we will sometimes say that it is \emph{controlled} on register $\reg A$.

\begin{lemma}
    Let $\algo H = \algo H_1 \oplus \dots \oplus \algo H_k$ be a Hillbert space, and $O : \algo H \ra \algo H$ be an operator of the form $O=O_1 \oplus \dots \oplus O_k$ where $O_i : \algo H_i \ra \algo H_i$. Then we have \begin{align}
        \norm{O} &= \max_{i \in [k]} \norm{O_i}.
    \end{align}
    \label{lem:dirsum-norm}
\end{lemma}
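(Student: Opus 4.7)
The plan is to use the variational definition of the operator norm, $\|O\| = \sup_{\|v\|=1} \|Ov\|$, together with the fact that a vector in a direct sum decomposes uniquely into its components.

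First I would prove the upper bound $\|O\| \leq \max_i \|O_i\|$. Any unit vector $v \in \mathcal H$ decomposes as $v = v_1 \oplus \dots \oplus v_k$ with $v_i \in \mathcal H_i$ and $\sum_i \|v_i\|^2 = 1$. Since $O$ is block-diagonal with respect to this decomposition, $Ov = O_1 v_1 \oplus \dots \oplus O_k v_k$, and by orthogonality of the summands,
\begin{align}
    \|Ov\|^2 = \sum_{i=1}^k \|O_i v_i\|^2 \leq \sum_{i=1}^k \|O_i\|^2 \|v_i\|^2 \leq \bigl(\max_i \|O_i\|\bigr)^2 \sum_{i=1}^k \|v_i\|^2 = \bigl(\max_i \|O_i\|\bigr)^2.
\end{align}
Taking the supremum over unit vectors $v$ yields $\|O\| \leq \max_i \|O_i\|$.

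For the matching lower bound, fix any $i \in [k]$ and any unit vector $v_i \in \mathcal H_i$. Extend $v_i$ to a vector $v \in \mathcal H$ by placing zeros in the other direct summands; then $v$ is a unit vector and $\|Ov\| = \|O_i v_i\|$. Taking the supremum over unit $v_i \in \mathcal H_i$ gives $\|O\| \geq \|O_i\|$, and maximizing over $i$ yields $\|O\| \geq \max_i \|O_i\|$. Combining the two inequalities completes the proof.

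There is no real obstacle here: the result is a standard fact about block-diagonal operators. The only thing one has to be slightly careful about is the orthogonality of the images $O_i v_i$ in different direct summands, which is exactly what allows the squared norms to add, and the observation that extending a vector by zero preserves its norm, which is what lets us recover each $\|O_i\|$ from $\|O\|$.
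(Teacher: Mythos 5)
Your proof is correct: the upper bound via the block-diagonal decomposition and Pythagorean additivity of squared norms, plus the lower bound via zero-extension of unit vectors from each summand, is the standard and complete argument. The paper itself states \Cref{lem:dirsum-norm} without proof, treating it as a standard fact about block-diagonal operators, so there is no alternative approach in the paper to compare against; your write-up supplies exactly the argument one would expect.
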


A projector $\Pi : \algo H \ra \algo H$ is an idempotent operator, where $\Pi^2 = \Pi$. Projectors are orthogonal if they are Hermitian. We will only work with orthogonal projectors. We sometimes say $\ket{\psi} \in \Pi$ to mean that $\Pi \ket{\psi} = \ket{\psi}$.

\subsection{Notation and definitions}
\paragraph{Databases.} We use $\mathbf F_{M, N}$ to denote the set of functions from $[M]$ to $[N]$. We use $\mathbf S_N$ to denote the set of permutations of $[N]$.
We use $\mathbf D_{M, N}$ to denote the set of partial functions from $[M]$ to $[N]$, or equivalently the set of functions from $[M]$ to $[N] \cup \{\bot\}$. We use $\mathbf I_{M, N}$ to denote set of partial \emph{injective} functions from $[M]$ to $[N]$, or equivalently the set of functions from $[M]$ to $[N] \cup \{\bot\}$ which are injective on all elements of the range except $\bot$.\footnote{As in, there may be multiple pre-images of $\bot$, but at most one preimage of any $y \in [N]$.} 
We will sometimes drop the subscript when it is clear from context. For our purposes, we will almost always work with $\mathbf I_{N, N}$ and $\mathbf D_{\sqrt{N}, \sqrt{N}}$, where we suppose $N=4^n$ is a power of four. We take total functions to be a special case of partial functions, meaning \begin{align}
    \mathbf F_{M, N} &\subset \mathbf D_{M, N} & \mathbf S_N &\subset \mathbf I_{N}.
\end{align}

We will interchange between the partial and total function representations, and refer to such objects as \emph{databases}, which may or may not be injective. For a $D \in \mathbf D_{M, N}$, we use $\dom(D)$ to denote the domain of $D$ as a partial function, or equivalently the preimages of $[N]$ as a total function. The image $\im(D)$ refers to the image of $D$ as a partial function, and similarly define $\dom(I)$ and $\im(I)$ for $I \in \mathbf I_{M, N}$. The size of a database can be defined as $|D|=|\dom(D)|$, and similarly for injective databases $|I|=|\dom(I)|$. When $M, N$ are understood from context, we write $\mathbf D_t$ to denote the set of $D \in \mathbf D$ of size $|D|=t$, and similarly $\mathbf D_{\leq t}$ to denote the set of $D \in \mathbf D$ of size $|D|\leq t$. We similarly define $\mathbf I_t$ and $\mathbf I_{\leq t}$.

We use $\botd$ to denote the trivial partial function, or equivalently the total function which maps all inputs to $\bot$.
Given a $D \in \mathbf D_{M, N}$, we use $D[x \ra y]$ to denote the the function \begin{align}
    D[x \ra y](x') &= \begin{cases}
        D(x') & \text{(If $x \neq x'$)} \\
        y & \text{(Otherwise)}
    \end{cases}
\end{align}
We similarly define $I[x \ra y]$, with the subtlety that we may need to discard a conflicting element from the domain of $I$ to ensure injectivity,  \begin{align}
    I[x \ra y](x') &= \begin{cases}
        I(x') & \text{(If $x \neq x'$, and either $y =\bot$ or $I(x') \neq y$)} \\
        \bot & \text{(If $x \neq x'$, $y \neq \bot$, and $I(x')=y$)} \\
        y & \text{(If $x = x'$)}
    \end{cases}
\end{align}
We will almost always use this notation when $x$ is not already in the domain, and in the case of injective functions where $y$ is not already in the range. We can identify databases with relations as well, and write $(x, y) \in D$ when $D(x)=y$, and similarly for $(x, y) \in I$.

Given a $D \in \mathbf D$, we define $D|^x$ as the set of databases which agree with $D$ on all inputs except $x$, in particular \begin{align}
    D|^x &\coloneqq \{D[x \ra y] \, : \, y \in [N] \cup \{\bot\}\}.
\end{align}

We let $\mathbf D |^x$ be the set of databases which are undefined on $x$, i.e., \begin{align}
    \mathbf D|^x &\coloneqq \{D \in \mathbf D \, : \, x\not\in \dom(D)\}\,.
\end{align}
Note that $\{ D|^x \, : \, D \in \mathbf D|^x\}$ is a partition of $\mathbf D$ for any fixed $x$. We define similar notions for injective databases $I \in \mathbf I$, once again taking care to maintain injectivity in the natural way. \begin{align}
    I|^x &\coloneqq \{I[x \ra y] \, : \, y \in ([N] \setminus \im(I)) \cup \{I(x)\} \cup \{\bot\}\} \\
    \mathbf I|^x &\coloneqq \{I \in \mathbf I \, : \, x\not\in \dom(I)\},
\end{align}
where once again $\{I|^x \, : \, I \in \mathbf I|^x\}$ is a partition of $\mathbf I$ for any fixed $x$. 
Given a set $S$, we denote a uniform random element of $S$ by $x \sim S$. Given a string $x \in \bit^{2n}$, we use $x_L$ to refer to the first $n$ bits and $x_R$ to refer to the last $n$ bits.

\paragraph{Spaces.} Given a set $S$, we denote by $\algo H(S)$ the Hilbert space spanned by an orthogonal basis labeled by elements of $S$, i.e., \begin{align}
    \algo H(S) &\coloneqq \spn{\ket{x} \, : \, x \in S}, & \innerprod{x}{x'} &= \delta_{x, x'}.
\end{align}
In such a space, for a set $A \subset S$ we denote the uniform superposition of elements in $A$ by \begin{align}
    \ket{+_{A}} &\coloneqq \frac{1}{\sqrt{|A|}} \sum_{x \in A} \ket{x}.
\end{align}
Given sets $A_1, \dots, A_l \subset S$ such that $\bigcup_{i=1}^{l} A_i = S$ and $A_i \cap A_j = \emptyset$ for all $i \neq j$, we say that $\{A_1, \dots, A_l\}$ is a partition of $S$. In that case, we may express $\algo H(S)$ as the direct sum $\algo H(A_1) \oplus \dots \oplus \algo H(A_l)$.

We will often deal with the Hilbert spaces $\algo H(\mathbf D)$ and $\algo H(\mathbf I) \leq \algo H(\mathbf D)$. We define the projector $\pit$ and $\pilt$ on the former space as \begin{align}
    \pit &\coloneqq \sum_{D \in \mathbf D_t} \outerprod{D}{D}, & \pilt &\coloneqq \sum_{k=0}^t \Pi_k.
\end{align}
We can define these operators on $\algo H(\mathbf I)$ by their restriction to this subspace. 

A norm without a subscript, $\norm{\cdot}$, refers to the standard $L_2$ norm. We use $\norm{\cdot}_1$ for the trace norm. For the Hilbert space $\algo H_{\reg A} \otimes \algo H(\mathbf D)$ or $\algo H_{\reg A} \otimes \algo H(\mathbf I)$, for any Hilbert space $\algo H_{\reg A}$, we use the notation $\norm{\cdot}_{\leq t}$ for the standard $L_2$ norm restricted to databases of size at most $t$. In other words, for a state $\ket{\psi}$ or operator $O$ on $\algo H_{\reg{AI}} = \algo H_{\reg A} \otimes \algo H(\mathbf I)$, we define \begin{align}
    \norm{\ket{\psi}_{\reg{AI}}}_{\leq t} &\coloneqq \norm{\pilt_{\reg I} \cdot \ket{\psi}_{\reg{AI}}}, & \norm{O_{\reg{AI}}}_{\leq t} &\coloneqq \norm{O_{\reg{AI}} \cdot \pilt_{\reg I}}.
\end{align}

We also define consistency projectors for databases, which assert that a given input-output pair appears in the database. For databases over $\algo H(\mathbf F_{M, N})$ and an input-output pair $(x, y) \in [M] \times [N]$, we define \begin{align}
    \xinD{(x, y)} &\coloneqq \sum_{D \in \mathbf D \, : \, D(x)=y} \proj{D},
\end{align}
and similarly for the space of injective databases.

\subsection{Compressed oracles}
\label{sec:comp-fs}
In this section, we briefly recall the standard compressed oracle due to \cite{Zhandry2018}. Our presentation diverges from the original to more closely match the presentation of compressed permutation oracles. Throughout, we assume $M$ and $N$ are powers of two, and identify elements of $[M]$ and $[N]$ with bitstrings by the natural representation.

Let $f \sim \mathbf F_{M, N}$ be a uniform random function, and consider the oracle \begin{align}
    \algo O_f \ket{x, y}_{\reg{XY}} &\coloneqq \ket{x, y \oplus f(x)}_{\reg{XY}}. \label{eqn:std-oracle}
\end{align}
Let $\algo A^{\algo O_f}$ be a quantum algorithm which makes $q$ queries to $\algo O_f$. This algorithm is described by a Hilbert space $\algo H_{\reg A}$ for its internal state and query registers, as well as a sequence $A_0, \dots, A_q$ of unitary operators on $\algo H_{\reg A}$ describing the action in between each query. We refer to registers $\reg X, \reg Y$ as the query registers, corresponding to the two registers in \Cref{eqn:std-oracle}. For simplicity, we consider these registers to be inside of register $\reg A$. We can write the final mixed state of the algorithm as \begin{align}
    \rho_{\algo A}^{(\algo O_f)} &= \E_{f \sim \mathbf F_{M, N}} \left[\rho(A_q \algo O_f \dots A_1 \algo O_f A_0 \ket{0}) \right].
\end{align}

Alternatively, we could purify this experiment, defining a register $\algo H_{\reg F} = \algo H(\mathbf F)$ to hold the purified function, and replacing the standard oracle with the purified oracle \begin{align}
    \pu_{\reg{XYF}} \ket{x, y}_{\reg{XY}} \ket{f}_{\reg F} &\coloneqq \ket{x, y \oplus f(x)}_{\reg{XY}} \ket{f}_{\reg F}.
\end{align}
Considering the purified experiment over $\algo H_{\reg{AF}}$ and tracing out $\reg F$ results in the same view, \begin{align}
    \rho_{\algo A}^{(\algo O_f)} &= \Tr_{\reg F} \left(\rho\left(A_{q, \reg A} \pu_{\reg{AF}} \dots A_{1, \reg A} \pu_{\reg{AF}} A_{0, \reg A} \ket{0}_{\reg A} \ket{+_{\mathbf F}}_{\reg F}\right)\right).
\end{align}

\paragraph{The compressed picture.}

The compressed oracle provides an alternative, powerful picture of quantum accessible random functions. In this picture, the purifying register is a database which begins empty. Each query modifies only the query point in the database, somewhat analogous to classical lazy sampling.

To define this formally, we will first consider the subspace $\algo H(D|^x) \leq \algo H(\mathbf D)$ for some $D \in \mathbf D = \mathbf D_{M,N}$ and $x \in [M] \setminus \dom(D)$. This subspace captures modifying the database on only the $x$-th position.
Let us write $\ket{+_{x, D}} = \frac{1}{\sqrt{N}} \sum_{y \in [N]} \ket{D[x\ra y]}$. We can now define the \emph{function compression} operator $\fc$ on this space, indexed by $x$ and $D$, as \begin{align}
    \fc_{x, D} &\coloneqq \Id - \outerprod{+_{x, D}}{+_{x, D}} - \outerprod{D}{D} + \outerprod{+_{x, D}}{D} + \outerprod{D}{+_{x, D}}.
\end{align}
In other words, this operator swaps the uniform superposition of outputs with the undefined output, for the input $x$. Recalling that $\{D|^x \, : \, D \in \mathbf D|^x\}$ is a partition of $\mathbf D$, we may then define the full compression operator and its controlled analog as \begin{align}
    \fc_x & \coloneqq \bigoplus_{D \in \mathbf D|^x} \fc_{x, D}, & 
    \fc_{\reg{XD}} \ket{x}_{\reg X} = \ket{x}_{\reg X} \otimes \fc_{x, \reg D}.
\end{align} 
The final piece we will need is the extended purified oracle $\pu$, which acts as \begin{align}
    \pu \ket{x, y} \ket{D} &\coloneqq \begin{cases}
        \ket{x, y \oplus D(x)} \ket{D} &\text{(If $x \in \dom(D)$)} \\
        \ket{x, y} \ket{D} &\text{(Otherwise)}
    \end{cases}
\end{align}

Now, we define the full compressed function oracle as the purified oracle conjugated by the compression operator, i.e.,
\begin{align}
    \cf_{\reg{XYP}} &\coloneqq \fc_{\reg{XP}} \cdot \pu_{\reg{XYP}}\cdot  \fc_{\reg{XP}}^\dagger,
\end{align}
noting that $\fc$ is an involution so the dagger is strictly unnecessary.
The compressed oracle experiment is somewhat similar to the purified experiment, except we allow the purifying register to include partial functions, $\algo H_{\reg D} = \algo H(\mathbf D)$. We refer to this register as the \emph{compressed oracle}, and it is initialized to $\ket{\pmb \bot}_{\reg D}$. We additionally replace purified oracle calls with the compressed oracle $\cf$. This corresponds to the final state \begin{align}
    \ket{\psi_{\algo A}^{(\cf)}}_{\reg{AD}} \coloneqq A_{q, \reg A} \cf_{\reg{AD}} \dots A_{1, \reg A} \cf_{\reg{AD}} A_{0, \reg A} \ket{0}_{\reg A} \ket{\botd}_{\reg D}.
\end{align}

The following facts were shown by \cite{Zhandry2018}, and are core to the theory. At a high level, compressed oracles are: \begin{enumerate}
    \item \emph{Sound}, in that they faithfully represent algorithms querying random functions.
    \item \emph{Bounded}, in that the size of the compressed database grows by at most one with each query.
    \item \emph{Meaningful}, in that the points defined by the database correspond to points the algorithm ``knows''.
\end{enumerate}

These key properties are formalized in the following lemmas.

\begin{lemma}[Soundness]
    The view of an algorithm querying the compressed oracle is the same as an algorithm querying a truly random function,\begin{align}
        \rho_{\algo A, \reg A}^{(\algo O_f)} &= \Tr_{\reg D} \left[\outerprod{\psi_{\algo A}^{(\cf)}}{\psi_{\algo A}^{(\cf)}}_{\reg{AD}}\right].
    \end{align}
    \label{lem:cf-sound}
\end{lemma}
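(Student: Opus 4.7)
The plan is to construct an auxiliary unitary $\mathcal F$ on $\algo H(\mathbf D)$ that intertwines the compressed oracle $\cf$ with the extended purified oracle $\pu$, and then propagate this relation across all $q$ queries by induction. Once established, the lemma follows because $\mathcal F$ acts only on $\reg D$, so tracing out $\reg D$ on either side of the intertwining gives the same reduced density matrix on $\reg A$, which by the standard purification argument (already spelled out in the excerpt) equals $\rho_{\algo A}^{(\algo O_f)}$.

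The key structural observation is the tensor decomposition $\algo H(\mathbf D) = \bigotimes_{x \in [M]} \algo H([N] \cup \{\bot\})$, under which each uncontrolled single-position compression $\fc_x$ acts nontrivially only on the $x$-th factor, swapping $\frac{1}{\sqrt N}\sum_{y \in [N]} \ket{y}$ with $\ket{\bot}$. Define $\mathcal F \coloneqq \prod_x \fc_x$; since the factors commute pairwise and each is a Hermitian involution, $\mathcal F$ is a well-defined self-adjoint unitary, and $\mathcal F \ket{\botd} = \bigotimes_x \ket{+}_x = \ket{+_{\mathbf F}}$ under the natural inclusion $\algo H(\mathbf F) \hookrightarrow \algo H(\mathbf D)$. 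The main identity I would prove is $\mathcal F\,\cf = \pu\,\mathcal F$ on $\algo H_{\reg{XYD}}$: projecting onto a fixed $\ket{x}_{\reg X}$ reduces the controlled operator $\fc_{\reg{XD}}$ to $\fc_x$ on position $x$, and $\pu$ likewise acts only on position $x$, so every factor of $\mathcal F$ except the $x$-th commutes with the sandwich $\fc_x \pu \fc_x = \cf$, leaving $\mathcal F \cf \mathcal F^\dagger = \fc_x(\fc_x \pu \fc_x)\fc_x = \pu$ by $\fc_x^2 = \Id$. Combined with commutativity of $\mathcal F$ with each algorithm unitary $A_i$ (which acts on $\reg A$, not $\reg D$) and the base case $\mathcal F \ket{\botd} = \ket{+_{\mathbf F}}$, a straightforward induction on the query index yields $\mathcal F \ket{\psi_{\algo A}^{(\cf)}}_{\reg{AD}} = \ket{\psi_{\algo A}^{(\pu)}}_{\reg{AD}}$.

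The main subtlety, and what I expect to be the trickiest bookkeeping step, is cleanly navigating the distinction between the \emph{controlled} compression $\fc_{\reg{XD}}$ used in the definition of $\cf$ and the \emph{uncontrolled} global compression $\mathcal F$ used in the argument, together with the embedding $\algo H(\mathbf F) \hookrightarrow \algo H(\mathbf D)$ that makes $\ket{+_{\mathbf F}}$ and $\ket{\botd}$ live in the same space. The tensor decomposition of $\algo H(\mathbf D)$ is what makes everything go through: it turns $\mathcal F$ into a product of independent per-position operators, so the non-query factors commute past both $\pu$ and the two copies of $\fc_x$ for free. After the inductive identity is in hand, the final conclusion is automatic: unitary invariance of the partial trace gives $\Tr_{\reg D}[\mathcal F \rho \mathcal F^\dagger] = \Tr_{\reg D}[\rho]$, and composing with the standard purification identity closes the proof with zero error.
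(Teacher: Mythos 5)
Your argument is correct: the paper itself states this lemma as an imported fact from Zhandry (2018) without proof, and your global-decompression intertwining argument ($\mathcal F=\prod_x \fc_x$, $\mathcal F\,\cf=\pu\,\mathcal F$, base case $\mathcal F\ket{\botd}=\ket{+_{\mathbf F}}$, then unitary invariance of the partial trace over $\reg D$) is precisely the standard proof and matches the presentation the paper adopts, including the tensor factorization of $\algo H(\mathbf D)$ that makes the $\fc_x$ commute and the observation that the extended $\pu$ restricted to $\algo H(\mathbf F)$ is the ordinary purified oracle. No gaps.
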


\begin{lemma}[Bounded growth]
    The final state of the compressed oracle after $q$ queries is supported entirely on functions with domain size $q$, \begin{align}
        \ket{\psi_{\algo A}^{(\cf)}}_{\reg{AD}} \in \algo H_{\reg A} \otimes \algo H(\mathbf D_{\leq q}).
    \end{align}
    \label{lem:cf-bounded}
\end{lemma}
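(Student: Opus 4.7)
The plan is to proceed by induction on the number of queries $q$, with the inductive invariant being that the joint state lies in $\algo H_{\reg A} \otimes \algo H(\mathbf D_{\leq q})$. The base case $q = 0$ is immediate: the register $\reg D$ is initialized to $\ket{\botd}$, a size-zero database, and $A_0$ acts only on $\reg A$. For the inductive step, since each adversary unitary $A_{k}$ acts trivially on $\reg D$, it suffices to prove that a single application of $\cf$ maps $\algo H_{\reg A} \otimes \algo H(\mathbf D_{\leq k})$ into $\algo H_{\reg A} \otimes \algo H(\mathbf D_{\leq k+1})$.

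After reducing to computational basis states $\ket{x,y}_{\reg{XY}} \ket{D}_{\reg D}$, the key structural observation I would use is that for any fixed $x$, the compression operator $\fc_x$ preserves the partition $\algo H(\mathbf D) = \bigoplus_{D' \in \mathbf D|^x} \algo H(D'|^x)$; this is built into its definition as a direct sum of operators $\fc_{x,D'}$, each supported on its own block $\algo H(D'|^x)$. The extended purified oracle $\pu$ likewise leaves $\reg D$ untouched (it only XORs $D(x)$ into $\reg Y$), so it respects the same block decomposition for any fixed $x$. Consequently, $\cf = \fc \cdot \pu \cdot \fc$, conditioned on $\reg X = x$, maps each $\algo H(D'|^x)$ into itself.

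It then remains to relate database size to block size. Any $D$ lies in exactly one block $\algo H(D'|^x)$: namely $D' = D$ if $x \notin \dom(D)$ and $D' = D[x \ra \bot]$ otherwise, and in both cases $|D'| \leq |D| \leq k$. That block is spanned by $\{\ket{D'[x \ra z]} : z \in [N] \cup \{\bot\}\}$, whose elements have size $|D'|$ or $|D'|+1$, hence at most $k+1$. Therefore $\cf \ket{x,y,D}$ lies in $\algo H_{\reg A} \otimes \algo H(\mathbf D_{\leq k+1})$, closing the induction. I do not anticipate any substantive obstacle; the only mild subtlety is the case split on whether $x \in \dom(D)$, which only affects how one identifies the enclosing block, and the substance of the argument is just the observation that both $\fc$ and $\pu$ are block-diagonal with respect to the $\{\algo H(D'|^x)\}_{D'}$ decomposition.
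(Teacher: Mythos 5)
Your proof is correct, and it is essentially the argument intended here: the paper states this lemma without proof (importing it from Zhandry), but the same block-preservation reasoning is exactly what the paper uses for the permutation analogue, \Cref{lem:pf-bounded}, where the query operator is observed to preserve the subspace $\algo H(I|^x)$ and hence can grow the database by at most one entry per query. Your only additions are the routine induction scaffolding and the explicit case split on whether $x \in \dom(D)$, both of which are fine.
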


To establish the fundamental lemma (that formalizes the third property above), we will interpret the adversary as outputting a tuple of $l$ input-output pairs. We will compare the probabilities that these pairs appear in the database with the probability that they agree with the oracle.\footnote{Sometimes, this lemma is formulated in the context of search games with a target predicate. Our version implies this application as a special case, as shown in \Cref{lem:search-bound}.} To do this, we define the projector \begin{align}
    \Pi^{(\cf)}_{\reg{AD}} &\coloneqq \sum_{\{(x_1, y_1), \dots, (x_l, y_l)\} \in ([M] \times [N])^{\times l}} \left(\xinD{(x_1, y_1)}\dots \xinD{(x_l, y_l)}\right)_{\reg D} \otimes \\&\quad (\outerprod{x_1,y_1,\dots,x_l, y_l}{x_1,y_1,\dots,x_l, y_l} \otimes \Id)_{\reg A},
\end{align}

which asserts that the output of the adversary corresponds to input-output pairs which agree with the database.
We will also require the analogous operator that decompresses before each check, \begin{align}
    \Pi^{(\algo O_f)}_{\reg{AD}} &\coloneqq \sum_{\{(x_1, y_1), \dots, (x_l, y_l)\} \in ([M] \times [N])^{\times l}} \left(\fc_{x_1}\xinD{(x_1, y_1)}\fc_{x_1}^\dagger\dots \fc_{x_l}\xinD{(x_l, y_l)}\fc_{x_l}^\dagger\right)_{\reg D} \otimes \\&\quad (\outerprod{x_1,y_1,\dots,x_l, y_l}{x_1,y_1,\dots,x_l, y_l} \otimes \Id)_{\reg A}.
\end{align}
Because the compressed oracle is indistinguishable from a random oracle, this projector is equivalent to checking if the adversary output is consistent with the oracle.

\begin{lemma}[Fundamental lemma]
    The probability that an adversary finds $l$ input-output pairs of the oracle is exponentially close to the probability that the corresponding pairs appears in the compressed database, \begin{align}
        \norm{\Pi^{(\algo O_f)}_{\reg{AD}}\ket{\psi_{\algo A}^{(\cf)}}_{\reg{AD}}} &\leq \norm{\Pi^{(\cf)}_{\reg{AD}}\ket{\psi_{\algo A}^{(\cf)}}_{\reg{AD}}} + \sqrt{\frac{l}{N}}.
    \end{align}
    \label{lem:cf-meaningful}
\end{lemma}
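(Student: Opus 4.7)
The plan is to establish the fundamental lemma by unitarily relating the two projectors and then bounding the deviation term.

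First, I would observe that for any fixed tuple $(\vec x, \vec y)$ with distinct $x_i$'s, each $\fc_{x_i}$ commutes with $\xinD{(x_j, y_j)}$ for $j \neq i$, since they act on different database entries. (Tuples with repeated $x_i$ and distinct $y_i$ are automatically inconsistent with both projectors and can be excluded.) This allows me to factor the summand of $\Pi^{(\algo O_f)}$ for a fixed $(\vec x, \vec y)$ as $U_{\vec x} \left(\prod_i \xinD{(x_i, y_i)}\right) U_{\vec x}^\dagger$ with $U_{\vec x} = \prod_i \fc_{x_i}$, giving the unitary relation $\Pi^{(\algo O_f)} = U \Pi^{(\cf)} U^\dagger$ for an operator $U$ controlled on the adversary's claimed input register. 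Using unitary invariance and the triangle inequality,
\begin{align}
\|\Pi^{(\algo O_f)} \ket{\psi}\| = \|\Pi^{(\cf)} U^\dagger \ket{\psi}\| \leq \|\Pi^{(\cf)} \ket{\psi}\| + \|\Pi^{(\cf)} (U^\dagger - \Id) \ket{\psi}\|.
\end{align}

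The remaining task is to bound $\|\Pi^{(\cf)} (U^\dagger - \Id) \ket{\psi}\|$ by $\sqrt{l/N}$. I would decompose the state based on which subset $S \subseteq [l]$ of positions have $x_i$ undefined in the database, writing $\ket{\psi} = \sum_S \ket{\psi_S}$ with mutually orthogonal $\ket{\psi_S}$. A direct computation within the block structure of $\fc_{x_i}$ on $\algo H(D'|^{x_i})$ shows that $\xinD{(x_i, y_i)} \fc_{x_i}^\dagger - \xinD{(x_i, y_i)}$ has operator norm $\Theta(1/\sqrt{N})$ on the $x_i$-undefined subspace (reflecting the amplitude $1/\sqrt{N}$ that the uniform superposition over outputs happens to hit $y_i$) and only $O(1/N)$ on the $x_i$-defined subspace. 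Combining these through the product, the contribution of each $\ket{\psi_S}$ scales as $(1/\sqrt{N})^{|S|}$. Summing via Cauchy--Schwarz (using the orthogonality of the $\ket{\psi_S}$) rather than triangle yields $\|\Pi^{(\cf)} (U^\dagger - \Id) \ket{\psi}\|^2 \leq l/N + O(l^2/N^2)$, and the square root gives the stated bound.

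The main obstacle is obtaining the tight $\sqrt{l/N}$ dependence rather than the weaker $l/\sqrt{N}$ one would get from summing the $l$ individual errors additively via triangle inequality. The essential ingredient is exploiting the orthogonality of database components indexed by distinct patterns of defined/undefined positions, so that squared errors (each of order $(1/N)^{|S|}$) add instead of the errors themselves. This matches the classical union-bound intuition that among $l$ undefined positions, the probability of any one "lucky" match is at most $l \cdot (1/N)$, translating to amplitude $\sqrt{l/N}$. Verifying that the images under $\Pi^{(\cf)} U^\dagger$ of different $\ket{\psi_S}$ combine appropriately (rather than constructively interfering) is the delicate step and requires tracking the block decomposition carefully.
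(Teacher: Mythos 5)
You are reconstructing a proof the paper itself does not give: \Cref{lem:cf-meaningful} is imported from \cite{Zhandry2018}, and the only argument of this kind actually written out in the paper is the permutation analogue (\Cref{lem:pf-meaningful}), which uses a cruder per-pair triangle inequality and correspondingly only achieves $l/\sqrt{N-t-l}$. Your architecture --- the commutation of $\fc_{x_i}$ with checks at $x_j\neq x_i$, the conjugation relation $\Pi^{(\algo O_f)} = U\,\Pi^{(\cf)}U^\dagger$ with $U$ controlled on the output register, and the reduction to bounding $\norm{\Pi^{(\cf)}(U^\dagger-\Id)\ket{\psi}}$ --- is the right skeleton and matches the spirit of Zhandry's original argument.

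However, the quantitative core has a genuine gap. First, the claim that $\xinD{(x_i,y_i)}\fc_{x_i}-\xinD{(x_i,y_i)}$ has norm $O(1/N)$ on the $x_i$-defined subspace is false: restricted to the block $\algo H(D|^{x_i})$, this difference on defined databases equals $-\tfrac{1}{\sqrt N}\outerprod{D[x_i\ra y_i]}{+_{x_i,D}}$, so the uniform superposition of defined outputs is mapped to a vector of norm $1/\sqrt N$; the $O(1/N)$ estimate holds only entrywise on basis states. Consequently the intermediate inequality $\norm{\Pi^{(\cf)}(U^\dagger-\Id)\ket\psi}\le\sqrt{l/N}$ is simply false for general states --- already for $l=1$ the state $\tfrac{1}{\sqrt2}(\ket{\botd}-\ket{+_{x,\botd}})$ gives $\sqrt{2/N}$ --- so your route cannot close without using a property of the actual final state. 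The missing ingredient is validity (\Cref{lem:comp-oracle-valid}): reachable states have no component along $\ket{+_{x,D}}$ in any block, on which subspace the defined-slot error vanishes identically and the undefined-slot error is exactly $1/\sqrt N$; moreover your defined/undefined decomposition preserves validity, so the repair is available but never invoked. Second, the step you yourself flag as ``delicate'' is the crux and is not supplied: orthogonality of the inputs $\ket{\psi_S}$ does not by itself prevent constructive interference, since the images under $\Pi^{(\cf)}U^\dagger$ of different $\ket{\psi_S}$ all lie on databases containing every pair. What does work is a telescoping of $\prod_i \xinD{(x_i,y_i)}\fc_{x_i}-\prod_i\xinD{(x_i,y_i)}$ into $l$ terms with mutually orthogonal supports (indexed by the first slot where a decompression acts nontrivially), each of norm $1/\sqrt N$ on valid states, followed by Cauchy--Schwarz; your subset-$S$ bookkeeping can be recast this way, but as written it proves nothing beyond the plan, and even granting it, $\sqrt{l/N+O(l^2/N^2)}$ is slightly weaker than the stated $\sqrt{l/N}$.
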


\paragraph{Further properties.}
Another important property of compressed oracles is that they are \emph{valid}, in that decompressing on any given input will cause the truth table to have a well-defined output. A simple example of a valid database is $\ket{\botd}_{\reg D}$. If we consider applying decompression $\fc_x^\dagger$ to this state, we obtain \begin{align}
    \fc_{x, \reg D}^\dagger \ket{\botd}_{\reg D} &= \sum_{y \in [N]} \ket{(x, y)}_{\reg D},
\end{align}
where we use relation notation for the database. It is clear from the above expression that the output of $x$ is well defined. It is known that this property is preserved under compressed oracle queries.

To make this formal, we define the \emph{compressed} and \emph{decompressed} subspaces. The projector onto the former will be denoted by $\Xi$, and the latter by $\Gamma$. We define these operators for a fixed input $x$ as \begin{align}
    \Gamma_x &\coloneqq \sum_{D \in \mathbf D \text{ s.t. } x \in \dom(D)} \outerprod{D}{D}, & \Xi_x &\coloneqq \fc_x \Gamma_x \fc_x^\dagger.
\end{align}
Note that the compression operator interchanges compressed and decompressed subspaces. We define the general operators as the products of these over all values of $x$, \begin{align}
    \Gamma &\coloneqq \prod_{x \in [M]} \Gamma_x, & \Xi &\coloneqq \prod_{x \in [M]} \Xi_x.
\end{align}
Each $\Gamma_x$ and $\fc_x$ commutes with $\Gamma_{x'}$ and $\fc_{x'}$ for $x \neq x'$, so this is a well defined projector.
The full decompressed subspace is simply the projector onto all total functions, from which it is clear that the standard purified oracle maintains this subspace. Analogously, the compressed oracle maintains the compressed subspace. This guarantees that, after applying the $\fc_x$ operator, the compressed oracle will have a well-defined output for $x$.

\begin{lemma}
    For any $\ket{\psi} \in \Xi$, we have $\cf \ket{\psi} \in \Xi$.
    \label{lem:comp-oracle-valid}
\end{lemma}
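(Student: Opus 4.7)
The approach is to decompose both the oracle $\cf$ and the target projector $\Xi$ by query input, reduce to a single position $x$, and use the involutive and local structure of $\fc_x$ together with the fact that $\pu$ preserves the defined-domain property.

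First I would condition on the query register: writing $\cf_{\reg{XYD}} = \sum_{x} \proj{x}_{\reg X} \otimes \cf_{x, \reg{YD}}$ with $\cf_x = \fc_x \pu_x \fc_x^\dagger$, where $\pu_x$ is the action of $\pu$ conditioned on query input $x$ (using that $\fc$ is involutive we may in fact take $\fc_x^\dagger = \fc_x$). Because $\Xi = \prod_{x'} \Xi_{x'}$ is a projector not acting on $\reg X$, it suffices to show, for every fixed $x$, that $\cf_x$ maps $\Xi$ to $\Xi$. I would then split the product $\Xi = \Xi_x \cdot \prod_{x' \neq x} \Xi_{x'}$ and handle the two parts separately.

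For the factors $\Xi_{x'}$ with $x' \neq x$, I would argue that $\cf_x$ commutes with each such $\Xi_{x'}$. The ingredients are the commutation facts already recorded in the paper: $\fc_x$ commutes with both $\fc_{x'}$ and $\Gamma_{x'}$ for $x' \neq x$, so $\fc_x$ commutes with $\Xi_{x'} = \fc_{x'} \Gamma_{x'} \fc_{x'}^\dagger$. Moreover $\pu_x$ only reads position $x$ of the database (it acts as $\ket{D}_{\reg D} \ket{y}_{\reg Y} \mapsto \ket{D}_{\reg D} \ket{y \oplus D(x)}_{\reg Y}$ when $x \in \dom(D)$ and as the identity otherwise), so it too commutes with both $\fc_{x'}$ and $\Gamma_{x'}$ for $x' \neq x$, and hence with $\Xi_{x'}$. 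This disposes of all ``off-diagonal'' factors.

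For the factor $\Xi_x$ itself, I would compute directly. Given $\ket{\phi} \in \Xi_x$, we have $\fc_x^\dagger \ket{\phi} \in \Gamma_x$, because $\fc_x^\dagger \Xi_x = \fc_x^\dagger \fc_x \Gamma_x \fc_x^\dagger = \Gamma_x \fc_x^\dagger$ (using $\fc_x^\dagger \fc_x = \Id$). Next, $\pu_x$ preserves $\Gamma_x$: on a computational basis state $\ket{D}$ with $x \in \dom(D)$ its action does not alter $D$, so it stabilizes $\mathrm{span}\{\ket{D} : x \in \dom(D)\}$. Finally, applying $\fc_x$ gives $\fc_x \Gamma_x \fc_x^\dagger \cdot (\fc_x \ket{\cdot}) = \Xi_x \cdot (\text{something})$, more cleanly: $\fc_x \Gamma_x = \Xi_x \fc_x$, so the output lies in $\Xi_x$. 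Combining with the previous paragraph gives $\cf_x \Xi \ket{\psi} \in \Xi$ for every $\ket{\psi}$, and then summing over $x$ with the control register gives $\cf \ket{\psi} \in \Xi$ for $\ket{\psi} \in \Xi$.

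The only real obstacle is careful bookkeeping: one must verify that $\pu_x$ genuinely acts trivially on the database support (so that it commutes with every $\Gamma_{x'}$ and every $\fc_{x'}$ for $x' \neq x$, and preserves $\Gamma_x$), and that the compression operator $\fc_x$, being defined by a direct sum over $\mathbf D|^x$, respects the definedness of every other position $x'$. Both points follow immediately from the definitions, so the lemma should reduce to a one-line commutation argument plus one invariance observation.
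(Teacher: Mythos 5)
Your proof is correct: conditioning on the query input, using that $\fc_x$, $\Gamma_x$ and $\pu$ (which never alters the database register) commute with $\Xi_{x'}$ for $x'\neq x$, and using $\fc_x^\dagger\,\Xi_x=\Gamma_x\,\fc_x^\dagger$ together with $\fc_x\,\Gamma_x=\Xi_x\,\fc_x$ at the queried point, is exactly the standard argument for this fact. The paper itself gives no proof—it imports the lemma as known from the compressed-oracle literature, with only the informal remark that the purified oracle preserves the decompressed subspace—so your write-up is simply the natural formalization of that omitted argument.
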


The following further properties of compressed oracles are not, to our knowledge, shown in the literature. However, they have straightforward proofs, which we include.

\begin{lemma}
    Fix some input $x \in [M]$. For each $D \in \mathbf D|^x$, choose a set $S_{x, D} \subseteq [N]$ such that $|S_{x, D}| \geq N - t$. Let $\ket{S_{x, D}}=\frac{1}{\sqrt{|S_{x, D}|}} \sum_{y \in S_{x, D}} \ket{D[x\ra y]}$. 
    
    Define the operator \begin{align}
        G_{x, D} & \coloneqq \Id - \outerprod{D}{D} - \outerprod{S_{x, D}}{ S_{x, D}} + \outerprod{D}{S_{x, D}} + \outerprod{S_{x, D}}{D}, \\
        G_x & \coloneqq \bigoplus_{D \in \mathbf D|^x} G_{x, D}.
    \end{align}
    Then we have \begin{align}
        \norm{\fc_x - G_x} &= O\left(\sqrt{\frac{t}{N}}\right).
    \end{align}
    \label{lem:comp-no-bad}
\end{lemma}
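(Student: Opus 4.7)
The plan is to reduce to a single block via the direct sum norm lemma, then bound the operator difference on each block using basic two-state geometry. Since $\fc_x = \bigoplus_{D \in \mathbf D|^x} \fc_{x,D}$ and analogously $G_x = \bigoplus_{D \in \mathbf D|^x} G_{x,D}$, \Cref{lem:dirsum-norm} gives $\norm{\fc_x - G_x} = \max_{D \in \mathbf D|^x} \norm{\fc_{x,D} - G_{x,D}}$. So it suffices to fix an arbitrary $D \in \mathbf D|^x$ and bound $\norm{\fc_{x,D} - G_{x,D}}$ by $O(\sqrt{t/N})$.

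The key observation is that $\ket{+_{x,D}}$ and $\ket{S_{x,D}}$ are close whenever $|S_{x,D}|$ is close to $N$. Specifically, since $\ket{S_{x,D}}$ restricts the uniform superposition over the $N$ outputs to the subset $S_{x,D}$, a direct computation gives $\innerprod{+_{x,D}}{S_{x,D}} = \sqrt{|S_{x,D}|/N} \geq \sqrt{1 - t/N}$. Setting $\ket{\delta} \coloneqq \ket{+_{x,D}} - \ket{S_{x,D}}$ and expanding, $\norm{\ket{\delta}}^2 = 2 - 2\sqrt{|S_{x,D}|/N} \leq 2t/N$, so $\norm{\ket{\delta}} \leq \sqrt{2t/N}$.

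Next I would substitute the definitions of $\fc_{x,D}$ and $G_{x,D}$ and collect terms, writing
\begin{align}
    \fc_{x,D} - G_{x,D} &= \bigl(\outerprod{S_{x,D}}{S_{x,D}} - \outerprod{+_{x,D}}{+_{x,D}}\bigr) + \outerprod{\delta}{D} + \outerprod{D}{\delta}.
\end{align}
By the triangle inequality, the first term is the difference of two rank-one projectors onto unit vectors with inner product $\sqrt{|S_{x,D}|/N}$, and has operator norm $\sqrt{1 - |S_{x,D}|/N} \leq \sqrt{t/N}$. The remaining two terms are rank-one operators of norm $\norm{\ket{\delta}} \leq \sqrt{2t/N}$ each. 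Combining, $\norm{\fc_{x,D} - G_{x,D}} \leq \sqrt{t/N} + 2\sqrt{2t/N} = O(\sqrt{t/N})$, which closes the argument.

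There isn't really a serious obstacle here; the only delicate point is being careful that $\ket{D}$, $\ket{+_{x,D}}$, and $\ket{S_{x,D}}$ all live in the subspace $\algo H(D|^x)$ where $\ket{D}$ (which has $D(x) = \bot$) is orthogonal to both $\ket{+_{x,D}}$ and $\ket{S_{x,D}}$ (which are supported on $y \in [N]$), so that $\fc_{x,D}$ and $G_{x,D}$ really are the two-level swap-like unitaries used above. Once that is noted, the bound is a short geometric calculation.
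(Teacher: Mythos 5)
Your proposal is correct and follows essentially the same route as the paper: reduce to a single block $D$ via \Cref{lem:dirsum-norm}, bound the distance between $\ket{+_{x,D}}$ and $\ket{S_{x,D}}$ by $O(\sqrt{t/N})$, and conclude by writing $\fc_{x,D}-G_{x,D}$ as a projector difference plus two rank-one terms and applying the triangle inequality. The only cosmetic difference is that you compute the vector distance via the overlap $\sqrt{|S_{x,D}|/N}$ and quote the $\sqrt{1-|\innerprod{u}{v}|^2}$ norm of a rank-one projector difference, where the paper bounds the distance by direct coefficient comparison and leaves the term-by-term inspection implicit.
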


\begin{proof}
    First, let us fix a database $D$ with $x \not\in D$ and use the shorthand $\ket{+_{x, D}}$ for the even superposition $\sum_{y \in [N]} \frac{1}{\sqrt{N}} D[x\ra y]$. Then we have \begin{align}
        \norm{\ket{+_{x, D}} - \ket{S_{x, D}}}^2 &= \sum_{y \not\in S_{x, D}} \frac{1}{N} + \sum_{y \in S_{x, D}} \left(\frac{1}{\sqrt{N}}-\frac{1}{\sqrt{|S_{x, D}|}}\right)^2 \\
        &\leq \frac{t}{N} + N \cdot \left(\frac{1}{\sqrt{N}}-\frac{1}{\sqrt{N-t}}\right)^2 \\
        &= O\left(\frac{t}{N}\right).
    \end{align}
    
    We may then write \begin{align}
        \fc_{x, D} - G_{x,D} =~&  (\outerprod{+_{x, D}}{+_{x, D}} - \outerprod{S_{x, D}}{ S_{x, D}}) +\\& \ket{D}(\bra{+_{x, D}}-\bra{S_{x, D}}) + (\ket{+_{x, D}}-\ket{S_{x, D}})\bra{D}.
    \end{align}
    By inspecting each term and applying the triangle inequality, one can verify that
    \begin{equation}
        \norm{\fc_{x,D}-G_{x,D}} = O(\sqrt{t/N})\,.    
    \end{equation}
    The claim now follows from \Cref{lem:dirsum-norm}.
\end{proof}

\begin{lemma}
    Fix an input $x\in [N]$ for a compressed database of a random function, and let $\reg R$ denote an ancillary register. Let $\ket{\phi_y}_{\reg R}$ be a family of states indexed by $y \in [N]$ which are pairwise orthogonal, and consider a state of the form \begin{align}
        \ket{\psi}_{\reg{DR}} &= \sum_{D \in \mathbf D, D(x) \neq \bot} \alpha_D \ket{D}_{\reg D} \ket{\phi_{D(x)}}_{\reg R}.
    \end{align}
    
    In other words, there is a record of the output of $x$ outside the database register. Then, for any operator $G_{x, \reg{D}}$ of the form described in \Cref{lem:comp-no-bad}, we have \begin{align}
        \norm{(\Id_{\reg{DR}}-G_{x, \reg{D}}) \ket{\psi}_{\reg{DR}}} &= O\left(\sqrt{\frac{1}{N-t}}\right).
    \end{align}
    \label{lem:recorded-no-comp}
\end{lemma}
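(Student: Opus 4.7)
The plan is to exploit the block-diagonal structure of $G_x$. Recall that $\{\tilde D|^x : \tilde D \in \mathbf D|^x\}$ partitions $\mathbf D$, and $G_x = \bigoplus_{\tilde D \in \mathbf D|^x} G_{x, \tilde D}$, where each $G_{x, \tilde D}$ acts nontrivially only on the two-dimensional subspace $\spn{\ket{\tilde D}, \ket{S_{x, \tilde D}}} \subseteq \algo H(\tilde D|^x)$. Since $\ket{\tilde D}$ and $\ket{S_{x, \tilde D}}$ are orthogonal (the former sets $x \mapsto \bot$, the latter is a superposition over $x \mapsto y$ for $y \in S_{x, \tilde D} \subseteq [N]$), a direct expansion gives the rank-one expression
\begin{align}
    \Id - G_{x, \tilde D} = (\ket{\tilde D} - \ket{S_{x, \tilde D}})(\bra{\tilde D} - \bra{S_{x, \tilde D}}).
\end{align}

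Next I would exploit the hypothesis on $\ket{\psi}$. Because the sum defining $\ket{\psi}$ ranges only over $D$ with $D(x) \neq \bot$, the restriction of $\ket{\psi}_{\reg{DR}}$ to the block $\algo H(\tilde D|^x) \otimes \algo H_{\reg R}$ is orthogonal to $\ket{\tilde D}_{\reg D}$, so $\bra{\tilde D}_{\reg D} \ket{\psi}_{\reg{DR}} = 0$. The only surviving inner product is
\begin{align}
    \bra{S_{x, \tilde D}}_{\reg D} \ket{\psi}_{\reg{DR}} = \frac{1}{\sqrt{|S_{x, \tilde D}|}} \sum_{y \in S_{x, \tilde D}} \alpha_{\tilde D[x \ra y]} \ket{\phi_y}_{\reg R},
\end{align}
and by pairwise orthogonality of the $\ket{\phi_y}$, its squared norm is exactly $\frac{1}{|S_{x, \tilde D}|} \sum_{y \in S_{x, \tilde D}} |\alpha_{\tilde D[x \ra y]}|^2$.

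Putting these pieces together on the block $\tilde D$, and using $\norm{\ket{\tilde D} - \ket{S_{x, \tilde D}}}^2 = 2$ (again by orthogonality),
\begin{align}
    \norm{(\Id_{\reg{DR}} - G_{x, \tilde D}) \ket{\psi}_{\reg{DR}}}^2 = \frac{2}{|S_{x, \tilde D}|} \sum_{y \in S_{x, \tilde D}} |\alpha_{\tilde D[x \ra y]}|^2.
\end{align}
Summing over blocks $\tilde D$, using $|S_{x, \tilde D}| \geq N - t$, and bounding the result by $\frac{2}{N - t} \norm{\ket{\psi}}^2 \leq \frac{2}{N-t}$, yields the claimed $O(\sqrt{1/(N-t)})$ bound.

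No step looks genuinely difficult; the only subtlety is bookkeeping the block decomposition and verifying that the ``recorded-elsewhere'' condition kills the $\bra{\tilde D}$ component. That observation is exactly where the orthogonal family $\{\ket{\phi_y}\}$ is used: it forces $\ket{\psi}$ into the subspace on which $\Id - G_x$ has only the $\ket{S_{x,\tilde D}}$ direction to act, and that direction carries the small $1/\sqrt{|S_{x,\tilde D}|}$ factor.
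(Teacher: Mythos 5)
Your proposal is correct and follows essentially the same route as the paper: both exploit that on each block $\algo H(\tilde D|^x)$ the operator $\Id - G_{x,\tilde D}$ is the rank-one map $(\ket{\tilde D}-\ket{S_{x,\tilde D}})(\bra{\tilde D}-\bra{S_{x,\tilde D}})$, that the hypothesis kills the $\bra{\tilde D}$ component, and that orthogonality of the recorded states $\ket{\phi_y}$ makes the surviving contributions add in squared norm, each carrying a $1/|S_{x,\tilde D}| \le 1/(N-t)$ factor. Your explicit block-by-block bookkeeping is a slightly cleaner way of justifying the orthogonality that the paper argues via an informal isometry remark, but it is the same proof.
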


\begin{proof}
    We can write \begin{align}
        \norm{(\Id_{\reg{DR}} - G_{x, \reg D}) \ket{\psi}_{\reg{DR}}} &= \norm{\sum_{D \in \mathbf D, D(x) \neq \bot} \frac{\alpha_D}{\sqrt{|S_{x, D[x\ra \bot]}|}} \left( \ket{D[x\ra \bot]} - \ket{S_{x, D[x\ra \bot]}}\right)_{\reg D} \ket{\phi_{D(x)}}_{\reg R}}\nonumber \\
        &= \sqrt{\sum_{D \in \mathbf D, D(x) \neq \bot} \frac{\abs{\alpha_D}^2}{|S_{x, D[x\ra \bot]}|}} \nonumber \\
        &= O\left(\sqrt{\frac{1}{N-t}}\right),
    \end{align}
    where the second line follows from the fact that states in the sum corresponding $D \neq D' \in \mathbf D$ will be orthogonal. To see this, observe that we could define an isometry which combined every input-output pair except $D(x)$ from register $\reg D$, and determined $D(x)$ from register $\reg R$, to construct $D$ in a new register.
\end{proof}
\section{Compressed permutation oracles}
\label{sec:comp-perms}

In this section, we present our compressed permutation oracle framework. The construction itself is similar to the function case, though establishing soundness turns out to be much more difficult. Our construction is similar to that of Unruh \cite{Unruh2023}, though we explicitly and unitarily maintain injectivity of the database. We begin with basic definitions of permutation oracles.

Let $\varphi \sim \mathbf S_{N}$ be a uniform random permutation, and consider the oracle \begin{align}
    \algo O_\varphi \ket{b, x}_{\reg X} \ket{y}_{\reg Y} &\coloneqq \ket{b, x}_{\reg X} \ket{y \oplus \varphi^{1-2b}(x)}_{\reg Y},
\end{align}
for a bit $b$ and $x, y \in [N]$. In other words, $\algo O_\varphi$ is a standard oracle for the permutation $\varphi$ with direction controlled by the bit $b$, on query point $x$. We do not explicitly consider controlled queries (i.e., the possibility of not querying the oracle), but we note that all our proofs apply to this case as well.\footnote{The easiest way to see this is to replace controlled queries with a controlled swap of the output register (holding $y$) with the uniform superposition $\ket{+_{[N]}}$, before and after the query. A standard oracle acts as identity when the output register is in uniform superposition.} We will throughout consider algorithms which may query both $\varphi$ and $\varphi^{-1}$, as $\algo O_{\varphi}$ exposes both directions.

Let $\algo A^{\algo O_\varphi}$ be a quantum algorithm that makes $q$ queries to $\algo O_\varphi$. This algorithm is described by a Hilbert space $\algo H_{\reg A}$ for its internal state, as well as a sequence $A_0, \dots, A_q$ of unitary operators on $\algo H_{\reg A}$, describing the action in between each query. We can write the final mixed state of the algorithm as \begin{align}
    \rho_{\algo A}^{(\algo O_{\varphi})} &= \E_{\varphi \sim \mathbf S_N} \left[\rho(A_q \algo O_\varphi \dots A_1 \algo O_\varphi A_0 \ket{0}) \right].
\end{align}

\paragraph{The compressed picture.}
Similar to the compressed function oracle, we will take a purifying database which begins empty. A query on input $x$ will affect only the database entries involving $x$. In contrast to the function case, however, this modification will be \emph{controlled} on the rest of the database to avoid collisions. This results in an operator which cannot be expressed as a direct product over database entries, which reflects the correlations between permutation outputs. Further, compression operators on different input points will no longer commute.

To formalize the action, fix for now some $I \in \mathbf I$ and $x \in [N] \setminus \dom(I)$, and consider the subspace $\algo H(I|^x) \leq \algo H(\mathbf I)$. This subspace reflects modifying only the $x$-th query point, though in this case we will perform this modification controlled on the rest of the database.
Let us define 
\begin{equation}
    \ket{+_{x, I}} = \frac{1}{\sqrt{N - |I|}} \sum_{y \in [N] \setminus \im(I)} \ket{I[x\ra y]}\,.
\end{equation} 
We can now define the \emph{permutation compression} operator $\pc$ on this subspace, indexed by $x$ and $I$, as \begin{align}
    \pc_{x, I} &\coloneqq \Id - \outerprod{+_{x, I}}{+_{x, I}} - \outerprod{I}{I} + \outerprod{+_{x, I}}{I} + \outerprod{I}{+_{x, I}}.
\end{align}
In other words, this operator swaps the uniform superposition of \emph{non-colliding} outputs with the undefined output, for the input $x$. Recalling that $\{I|^x \, : \, I \in \mathbf I|^x\}$ is a partition of $\mathbf I$, we may then define the full compression operator and its controlled variant as \begin{align}
    \pc_x & \coloneqq \bigoplus_{I \in \mathbf I|^x} \pc_{x, I}, & \pc_{\reg{XI}} \ket{x}_{\reg X}  = \ket{x}_{\reg X} \otimes \pc_{x, \reg I}.    
\end{align}
We recall the extended purified oracle $\pu$, which acts as \begin{align}
    \pu \ket{x, y} \ket{I} &\coloneqq \begin{cases}
        \ket{x, y \oplus I(x)} \ket{I} &\text{(If $x \in \dom(I)$)} \\
        \ket{x, y} \ket{I} &\text{(Otherwise)}
    \end{cases}
\end{align}
The final ingredient we need is the flip operator $\flip$, which inverts the database (viewed as a partial function). In other words, $I^{-1}(y)=x \neq \bot$ if and only if $I(x)=y$, else $I^{-1}(y)=\bot$. Here, the database is required to be injective, and so we can define the flip unitary \begin{align}
    \flip \ket{I} &= \ket{I^{-1}}.
\end{align}
The full compressed permutation oracle is similar to the compressed function oracle, except we maintain injectivity of the database. This allows us to invert before answering inverse queries.
In particular, we define the full compressed permutation oracle as
\begin{align}
    \cp \ket{b} &\coloneqq \ket{b} \otimes \begin{cases}
        \pc \cdot \pu\cdot  \pc^\dagger & \text{(If $b=0$)} \\
        \flip \cdot \pc \cdot \pu\cdot  \pc^\dagger \cdot \flip^\dagger & \text{(If $b=1$)}
    \end{cases}
\end{align}

where again the operators $\pc$ and $\flip$ are involutions, and so the dagger may be suppressed.
We take the purifying register to be $\algo H_{\reg I} = \algo H(\mathbf I)$. We refer to this register as the \emph{compressed permutation oracle}, and it is initialized to $\ket{\boti}_{\reg I}$. We additionally replace oracle calls with the compressed oracle $\cp$. This corresponds to the final state \begin{align}
    \ket{\psi_{\algo A}^{(\cp)}}_{\reg{AI}} = A_{q, \reg A} \cp_{\reg{AI}} \dots A_{1, \reg A} \cp_{\reg{AI}} A_{0, \reg A} \ket{0}_{\reg A} \ket{\boti}_{\reg I}.
\end{align}

Throughout the rest of this document, we will establish that this simple construction satisfies properties analogous to those of the compressed random oracle~\cite{Zhandry2018}. 

\subsection{Properties}
The key properties we will show are that our compressed oracle is:
\begin{enumerate}
    \item \emph{Sound}, in that it faithfully represent algorithms querying two-way accessible random permutations.
    \item \emph{Bounded}, in that the size of the compressed database grows at most linearly with the number of queries.
    \item \emph{Meaningful}, in that the points defined by the database correspond to points the algorithm ``knows''.
\end{enumerate}

It is fairly straightforward to establish bounded growth and the fundamental lemma, points (2) and (3) above, which we do in this section. For now, we only state soundness; proving it will be quite technically involved, and is the focus of the next two sections. Note also that soundness is only approximate.

\begin{lemma}[Soundness]
    The view of an efficient algorithm querying the compressed permutation oracle is negligibly close to the view of the same algorithm querying a truly random permutation, i.e., \begin{align}
        \rho_{\algo A, \reg{A}}^{(\algo O_{\varphi})} & \, \stackrel{\mathclap{\negl}}{\,\approx\,} \, \Tr_{\reg I} [\rho(\ket{\psi_{\algo A}^{(\cp)}}_{\reg{AI}})].
    \end{align}
    \label{lem:pf-sound}
\end{lemma}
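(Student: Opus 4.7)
The plan is to avoid working directly with the natural purification of a random permutation, since as the technical summary notes the entries of that purification are maximally entangled and so there is no direct analogue of ``this register is in uniform superposition iff the point is unqueried''. Instead, I would reparametrize the random permutation as the \emph{masked Feistel} ensemble: write $\varphi = \omega \circ \feist(h,k,f) \circ \pi$, where $\pi,\omega \sim \mathbf S_N$ and $h,k,f \sim \mathbf F_{\sqrt N,\sqrt N}$ are independent and uniform. Since a product with an independent uniform permutation is again uniform, the distribution of $\varphi$ is unchanged, so the algorithm's view against $\algo O_\varphi$ is identical to its view against $\algo O_\varphi$ where $\varphi$ is drawn from this expanded ensemble. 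I would purify the whole ensemble, keeping $\pi,\omega$ in uniform superposition (uncompressed) and handling $h,k,f$ using the standard compressed function oracle machinery of \Cref{sec:comp-fs}. Call the resulting oracle $\cmf$ and the purifying register $\reg{P}=(\reg{\Pi},\reg{\Omega},\reg{D_h},\reg{D_k},\reg{D_f})$.

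The core object is then an isometry $\iso : \algo H(\mathbf I) \hookrightarrow \algo H_{\reg P}$ that groups together all tuples $(\pi,\omega,D_h,D_k,D_f)$ which (a) have no internal chain collisions (meaning no function input-output pair used in two distinct chains and no accidental collision of intermediate $n$-bit values across chains), and (b) collectively pin down the same partial injective $I : \dom(I)\to\im(I)$ on the same domain. Concretely, $\iso$ sends the uniform superposition over such good tuples to $\ket{I}$; disjoint $I$'s correspond to disjoint sets of basis tuples, giving orthogonality and hence isometry. The bulk of the work is then to prove an approximate intertwining relation per query:
\begin{equation}
    \bigl\|\,\iso_{\reg{I}\to\reg{P}}\,\cp_{\reg{AI}}\,-\,\cmf_{\reg{AP}}\,\iso_{\reg{I}\to\reg{P}}\bigr\|_{\le t}\;=\;O\!\left(\sqrt{\tfrac{\mathrm{poly}(t)}{N}}\right),
\end{equation}
for any $t$ bounding the current database size. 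The argument is: a single $\cmf$ query first XORs into three function databases along a freshly-determined chain; conditioned on the chain being new, this precisely mirrors inserting a pair $(x,y)$ into an injective database via $\pc \cdot \pu \cdot \pc^\dagger$. The twirling by $\pi,\omega$ ensures that with high probability the new chain lands on fresh first-$n$-bit coordinates at each round, so chain collisions are rare. I would bound the collision events using \Cref{lem:comp-no-bad} and \Cref{lem:recorded-no-comp} (replacing uniform superpositions over $[N]$ by uniform superpositions over ``safe'' subsets that avoid previous chain coordinates), and use \Cref{lem:dirsum-norm} to reduce the global norm to a pointwise estimate on each database basis state.

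Once the per-query intertwining bound is in place, I iterate it: starting from $\ket{0}_{\reg A}\ket{\boti}_{\reg I}$ and $\iso\ket{\boti}=\ket{+_{\mathbf S_N}}_{\reg{\Pi}}\otimes\ket{+_{\mathbf S_N}}_{\reg{\Omega}}\otimes\ket{\botd}_{\reg{D_h}}\otimes\ket{\botd}_{\reg{D_k}}\otimes\ket{\botd}_{\reg{D_f}}$, a triangle-inequality telescoping across the $q$ queries gives a final state difference of $O(q\cdot\sqrt{\mathrm{poly}(q)/N})$. The compressed masked Feistel purification traces out to exactly $\rho_{\algo A}^{(\algo O_\varphi)}$, so \Cref{lem:approx-uhlman} with isometry $\iso$ converts the pure-state closeness into a trace-distance bound between the algorithm's view against $\algo O_\varphi$ and its view against $\cp$. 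Since the latter is produced by $\ket{\psi_{\algo A}^{(\cp)}}$, this is exactly the soundness statement.

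The hard part will be the two-sided case: inverse queries invoke $\flip\cdot\pc\cdot\pu\cdot\pc^\dagger\cdot\flip$, and I must argue $\iso$ also intertwines this with the corresponding backward traversal of a chain in the masked Feistel picture. Doing this requires the isometry's definition to be invariant under simultaneously inverting $I$, swapping $\pi\leftrightarrow\omega^{-1}$, and reversing the chain order, and the ``no internal collision'' condition must be symmetric enough to support both directions. A secondary obstacle is the combinatorial accounting: naively, chain collisions can stack over $q$ queries, and I need to ensure the per-query error stays $\tilde O(\sqrt{q/N})$ so that the final $q$-query bound is nontrivial below $q \ll \sqrt{N}$; the paper's weaker $q = \Omega(\sqrt[12]{N})$ threshold already suggests that these combinatorial bounds are the bottleneck and not easy to tighten. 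These issues are what \Cref{sec:Feistel-twirl} and \Cref{sec:soundness} must handle in detail.
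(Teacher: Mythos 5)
Your overall route is the one the paper actually takes: the paper proves this lemma by deferring entirely to \Cref{thm:main-indist}, whose proof reparametrizes the uniform permutation as masked Feistel with a uniform (cromulent) twirl, compresses only the inner functions, builds an intertwiner that sends the superposition of collision-free purifications supporting $I$ to $\ket{I}$, telescopes a per-query approximate intertwining bound over hybrids, and finishes with \Cref{lem:approx-uhlman}. So the strategy, the use of the standard compressed-oracle machinery for $h,k,f$, the flip symmetry for inverse queries, and the final Uhlmann step all match.

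There is, however, a concrete gap in the execution as you sketch it, caused by the direction in which you define the isometry. You take $\iso : \algo H(\mathbf I) \hookrightarrow \algo H_{\reg P}$, so in your telescoping the per-query difference $(\cmf\,\iso - \iso\,\cp)$ is unavoidably applied to states produced by the $\cp$-evolution. But the per-query intertwining estimate is not available on arbitrary bounded-size states in the image of $\iso$: it is only established on the (approximately) query-valid, i.e.\ elegant, part of the sophisticated subspace, and on the complementary part the decompression behaves pathologically (cf.\ \Cref{lem:cmf-nele-act}), so nothing in your sketch controls that component. For the compressed \emph{function} oracle the needed invariant is the standard validity property, but for the compressed \emph{permutation} oracle no direct analogue is known --- the paper explicitly notes this, and it is precisely why it defines the intertwiner in the opposite direction, $\iso : \algo H(\mathbf P) \ra \algo H(\mathbf I)\oplus\algo H(\mathbf P)$, arranges the hybrids so the per-query bound is applied to the $\cmf$-evolved states, and devotes \Cref{sec:Feistel-twirl,sec:soundness} to showing those states are exactly valid and approximately sophisticated/elegant (\Cref{lem:ele-valid,lem:soph-pres,lem:int-valid}) before invoking the intertwining (\Cref{cor:ideal-cmf-int}). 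Your plan needs either this reversal or an independent proof of a validity-type invariant for $\cp$, which is nontrivial and not supplied. A secondary, quantitative point: the per-query error cannot be $O(\sqrt{\mathrm{poly}(t)/N})$ as you claim; the compression and collision errors scale with the inner function domain $2^n=\sqrt N$, giving per-query error $O(\sqrt{t^4/2^n})$ and the final $O(q^3/N^{1/4})$ of \Cref{thm:main-indist} --- still negligible for efficient algorithms, hence sufficient for this lemma, but far from security up to $q \ll \sqrt N$.
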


\begin{proof}
    Follows from \Cref{thm:main-indist}.
\end{proof}

\begin{lemma}[Bounded growth]
    The final state of the compressed permutation oracle after $q$ queries is supported solely on databases of size $q$, \begin{align}
        \Pi_{\leq q, \reg I} \ket{\psi_{\algo A}^{(\cp)}}_{\reg{AI}} = \ket{\psi_{\algo A}^{(\cp)}}_{\reg{AI}}.
    \end{align}
    \label{lem:pf-bounded}
\end{lemma}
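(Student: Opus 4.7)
The plan is to proceed by induction on the number of queries $q$, with invariant that $\ket{\psi_{\algo A}^{(\cp), q}}_{\reg{AI}} \in \algo H_{\reg A} \otimes \algo H(\mathbf I_{\leq q})$. The base case is immediate since the initial database is $\ket{\boti}$, which has size zero. For the inductive step, assume the claim holds after $k$ queries. Since $A_{k+1}$ acts only on $\reg A$, it suffices to show that $\cp$ maps $\algo H_{\reg A} \otimes \algo H(\mathbf I_{\leq k})$ into $\algo H_{\reg A} \otimes \algo H(\mathbf I_{\leq k+1})$.

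The key structural observation is that, for any fixed $x \in [N]$, we have the decomposition
\begin{align}
    \algo H(\mathbf I) &= \bigoplus_{I \in \mathbf I|^x} \algo H(I|^x),
\end{align}
and every $J \in I|^x$ has size either $|I|$ (the case $J = I$) or $|I|+1$ (the case $J = I[x\ra y]$ for some $y \in ([N]\setminus \im(I))$). Crucially, each of $\pc_x$ and $\pu_{\reg{XY}}$ preserves this decomposition when $x$ is held in $\reg X$: the operator $\pc_x$ is a direct sum of $\pc_{x,I}$ blocks by definition, and $\pu$ only acts on the output register while preserving each database basis vector. Applied one $x$-block at a time, the operator $\pc \cdot \pu \cdot \pc^\dagger$ therefore stays in $\algo H(I|^x)$, whose support lies entirely within $\mathbf I_{\leq |I|+1}$. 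Because the inductive hypothesis restricts us to $|I| \leq k$ (any larger $|I|$ contributes zero to a state in $\Pi_{\leq k}$), this yields support in $\mathbf I_{\leq k+1}$, as required.

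For the inverse branch ($b=1$) we use that $\flip$ preserves database size, since $|I^{-1}| = |I|$ for any $I \in \mathbf I$, so $\flip_{\reg I}$ commutes with $\pilt_{\reg I}$ for every $t$. Thus in the sandwich $\flip \cdot (\pc \cdot \pu \cdot \pc^\dagger) \cdot \flip^\dagger$, the outer $\flip^\dagger$ keeps us in $\Pi_{\leq k}$, the inner block adds at most one to the size, and the outer $\flip$ keeps us in $\Pi_{\leq k+1}$, completing the induction.

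The main potential pitfall, and essentially the only nontrivial observation, is the following: one might naively bound $\pc \cdot \pu \cdot \pc^\dagger$ by noting that $\pc$ alone can take a size-$k$ database to a size-$(k+1)$ one, which would give a loose bound of $k+2$. The resolution is that both $\pc$ and $\pc^\dagger$ act block-diagonally within each $\algo H(I|^x)$, whose sizes span only $\{|I|, |I|+1\}$ regardless of how many compressions are composed, and $\pu$ does not mix different $I$-blocks. The entire query thus stays within a single $\algo H(I|^x)$-block, and the $+1$ bound from the block's size spread dominates.
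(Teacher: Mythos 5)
Your proof is correct and follows essentially the same route as the paper: the paper's argument is exactly that $\cp$ preserves each subspace $\algo H(I|^x)$ (whose databases have size $|I|$ or $|I|+1$), so each query grows the database by at most one, with the inverse case handled by noting $\flip$ preserves size. Your write-up simply makes the induction and the block decomposition explicit.
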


\begin{proof}
    Suppose that the query register $\reg X$ holds the state $\ket{0, x}_{\reg X}$. Then, the $\cp$ operator preserves the subspace $\algo H(I |^x)$ on the purifying $\reg I$ register. It follows that the operator can increase the size of the purifying register by at most $1$. A similar argument follows for the case of an inverse query.
\end{proof}

To establish the fundamental lemma, we will once again interpret the adversary as outputting a tuple of $l$ input-output pairs. We will compare the probabilities that these pairs appear in the database with the probability that they agree with the (compressed) oracle. To do this, we define the projector \begin{align}
    \Pi^{(\cp)}_{\reg{AI}} &\coloneqq \left(\sum_{\{(x_1, y_1), \dots, (x_l, y_l)\} \in [N]^{\times 2l}} \xinD{(x_1, y_1)}\dots \xinD{(x_l, y_l)}\right)_{\reg I} \otimes \\&\quad (\outerprod{x_1,y_1,\dots,x_l, y_l}{x_1,y_1,\dots,x_l, y_l} \otimes \Id)_{\reg A}.
\end{align}

We will also require the the analogous operator after decompressing, \begin{align}
    \Pi^{(\algo O_{\varphi})}_{\reg{AI}} &\coloneqq \sum_{\{(x_1, y_1), \dots, (x_l, y_l)\} \in [N]^{\times 2l}} \left(\pc_{x_1}\xinD{(x_1, y_1)}\pc_{x_1}^\dagger\dots \pc_{x_l}\xinD{(x_l, y_l)}\pc_{x_l}^\dagger\right)_{\reg I} \otimes \\&\quad (\outerprod{x_1,y_1,\dots,x_l, y_l}{x_1,y_1,\dots,x_l, y_l} \otimes \Id)_{\reg A}.
\end{align}

Once soundness of the compressed permutation oracle is established, this check corresponds to verifying that the adversary output is consistent with the oracle.
We can now state and prove the fundamental lemma. Due to the non-commutativity of compression operators, the dependence on $l$ is quadratically worse than in the function case, though this may not be tight.

\begin{lemma}[Fundamental lemma]
    The probability that an adversary finds some list of $l$ input-output pairs of the oracle after $t$ queries is exponentially close to the probability that the pair appears in the compressed database, 
    
    \begin{align}
        \norm{\Pi^{(\algo O_{\varphi})}_{\reg{AI}}\ket{\psi_{\algo A}^{(\cp)}}_{\reg{AI}}} &\leq \norm{\Pi^{(\cp)}_{\reg{AI}}\ket{\psi_{\algo A}^{(\cp)}}_{\reg{AI}}} + \frac{l}{\sqrt{N-t-l}}.
    \end{align}
    \label{lem:pf-meaningful}
\end{lemma}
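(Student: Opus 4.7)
The plan is to bound $\norm{(\Pi^{(\algo O_\varphi)} - \Pi^{(\cp)}) \ket{\psi_{\algo A}^{(\cp)}}}$ and then apply the triangle inequality. By \Cref{lem:pf-bounded} the state $\ket{\psi_{\algo A}^{(\cp)}}$ is supported on $\algo H_{\reg A} \otimes \algo H(\mathbf I_{\leq t})$, which is essential for a nontrivial bound. The core estimate I would establish is the single-factor bound: for any $x, y \in [N]$ and any $t'$,
\begin{align}
    \norm{(\pc_x \xinD{(x,y)} \pc_x - \xinD{(x,y)}) \cdot \Pi^{\leq t'}} \leq \frac{1}{\sqrt{N-t'}}.
\end{align}
To prove this, consider the action on a basis state $\ket{I}$ with $|I| \leq t'$. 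If $x \in \dom(I)$, then $\pc_x$ acts as the identity on $\ket I$ (since $\ket I$ is neither $\ket{I[x\ra\bot]}$ nor $\ket{+_{x, I[x \ra \bot]}}$), so the difference vanishes. If $x \notin \dom(I)$, then $\xinD{(x,y)}\ket I = 0$ while a direct computation gives $\pc_x \xinD{(x,y)}\pc_x \ket I = \frac{1}{\sqrt{N-|I|}}\ket{I[x\ra y]}$ when $y \notin \im(I)$ (and $0$ otherwise). Since distinct $I, I' \in \mathbf I|^x$ map to distinct $I[x\ra y] \neq I'[x\ra y]$, the images across basis states are pairwise orthogonal, so the operator decomposes as a direct sum of rank-one terms each of norm at most $1/\sqrt{N-t'}$.

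Next, for each adversary output tuple $(\vec x, \vec y)$, I would telescope by defining
\begin{align}
    \Pi^{(k)}_{\vec x, \vec y} \coloneqq \prod_{i=1}^{k} \xinD{(x_i, y_i)} \cdot \prod_{i=k+1}^{l} \pc_{x_i} \xinD{(x_i, y_i)} \pc_{x_i},
\end{align}
so that $\Pi^{(0)}$ and $\Pi^{(l)}$ agree with the $\reg I$-parts of $\Pi^{(\algo O_\varphi)}$ and $\Pi^{(\cp)}$, respectively. Each $\pc_{x_i}\xinD{(x_i,y_i)}\pc_{x_i}$ is a projector (the conjugate of a projector by the involution $\pc_{x_i}$) that may grow the database support by at most one. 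Writing $\Pi^{(l)} - \Pi^{(0)} = \sum_{k=1}^l B_k$, the term $B_k$ sandwiches the central difference $\xinD{(x_k,y_k)} - \pc_{x_k}\xinD{(x_k,y_k)}\pc_{x_k}$ between $k-1$ leftmost $\xinD{}$ projectors (operator norm $1$) and $l-k$ rightmost conjugated projectors (also operator norm $1$). Applied to a size-$\leq t$ state, the rightmost $l-k$ projectors produce a state on $\mathbf I_{\leq t+l-k}$; the single-factor bound then contributes $1/\sqrt{N-t-l+k}$, and the leftmost projectors do not increase norm. Summing via triangle inequality yields $\norm{(\Pi^{(l)}-\Pi^{(0)})_{\vec x, \vec y}\ket\eta} \leq l\norm{\ket\eta}/\sqrt{N-t-l}$ for any $\ket\eta$ supported on $\mathbf I_{\leq t}$.

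Finally, the tuples $(\vec x, \vec y)$ index orthogonal computational basis components in register $\reg A$, so the per-tuple contributions combine via Pythagoras, giving $\norm{(\Pi^{(\algo O_\varphi)} - \Pi^{(\cp)}) \ket{\psi_{\algo A}^{(\cp)}}} \leq l / \sqrt{N-t-l}$; the claim then follows by triangle inequality. The main obstacle is the non-commutativity of the $\pc_{x_i}$'s: unlike the random-function case of \Cref{lem:cf-meaningful}, the conjugations cannot be factored out of the product to reduce to a single-factor estimate, which is what forces the telescoping and yields the extra factor of $l$ relative to the function oracle bound.
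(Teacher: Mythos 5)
Your proof skeleton is the same as the paper's: bound the difference of norms by $\norm{(\Pi^{(\algo O_\varphi)}_{\reg{AI}}-\Pi^{(\cp)}_{\reg{AI}})\ket{\psi_{\algo A}^{(\cp)}}}$, use that both projectors are controlled on $\reg A$ (\Cref{lem:dirsum-norm}) to fix a single tuple, telescope over the $l$ factors with the database-size cutoff relaxed to $t+l$, and bound a single-factor difference $\xinD{(x,y)}-\pc_x\xinD{(x,y)}\pc_x^\dagger$ by $O(1/\sqrt{N-t-l})$. The gap is in your proof of that single-factor estimate. The claim that $\pc_x$ acts as the identity on $\ket{I}$ when $x\in\dom(I)$ is false: writing $I'=I[x\ra\bot]$, the state $\ket{I}$ is distinct from but \emph{not orthogonal to} $\ket{+_{x,I'}}$ (overlap $1/\sqrt{N-|I'|}$), so $\pc_x\ket{I}=\ket{I}-\tfrac{1}{\sqrt{N-|I'|}}\ket{+_{x,I'}}+\tfrac{1}{\sqrt{N-|I'|}}\ket{I'}$. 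In particular, when $I(x)=y$ the difference $(\pc_x\xinD{(x,y)}\pc_x-\xinD{(x,y)})\ket{I}$ does not vanish; it has norm $\Theta(1/\sqrt{N-|I|})$. Similarly, in the undefined sector $\pc_x\xinD{(x,y)}\pc_x\ket{I}$ is not exactly $\tfrac{1}{\sqrt{N-|I|}}\ket{I[x\ra y]}$; there are additional components along $\ket{+_{x,I}}$ and $\ket{I}$. Consequently the exact rank-one, orthogonal-image structure you invoke is not present, and without it a per-basis-state bound does not yield an operator-norm bound (nor does Pythagoras apply as stated).

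The repair is exactly the computation the paper does: observe that $\xinD{(x,y)}-\pc_x\xinD{(x,y)}\pc_x^\dagger$ preserves each subspace $\algo H(I|^x)$ for $I\in\mathbf I|^x$, and bound its action on an arbitrary superposition $\sum_{y'\in([N]\setminus\im(I))\cup\{\bot\}}\alpha_{y'}\ket{I[x\ra y']}$ within a block of size at most $t+l$; tracking the correction terms (which involve $\alpha_\bot$ and the mean of the $\alpha_{y'}$) gives a blockwise norm bound of $O(1/\sqrt{N-t-l})$, and \Cref{lem:dirsum-norm} then gives the per-factor operator-norm bound. With that substitution, your telescoping and the final triangle inequality go through and give the stated bound, so the flaw is local to the single-factor estimate rather than to the overall strategy.
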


\begin{proof}
    We have \begin{align}
        \norm{\Pi^{(\algo O_{\varphi})}_{\reg{AI}}\ket{\psi_{\algo A}^{(\cp)}}_{\reg{AI}}} - \norm{\Pi^{(\cp)}_{\reg{AI}}\ket{\psi_{\algo A}^{(\cp)}}_{\reg{AI}}} &\leq \norm{(\Pi^{(\algo O_{\varphi})}_{\reg{AI}} - \Pi^{(\cp)}_{\reg{AI}})\ket{\psi_{\algo A}^{(\cp)}}_{\reg{AI}}} \\
        &\leq \norm{\Pi^{(\algo O_{\varphi})}_{\reg{AI}} - \Pi^{(\cp)}_{\reg{AI}}}_{\leq t}.
    \end{align}
    
    Note that both these operators are controlled on $\reg A$, so by \Cref{lem:dirsum-norm} it suffices to analyze the difference for some fixed set $\mathbf x = \{(x_1, y_1), \dots, (x_l, y_l)\}$. We then obtain \begin{align}
        \norm{\Pi^{(\algo O_{\varphi})}_{\mathbf x, \reg{I}} - \Pi^{(\cp)}_{\mathbf x, \reg{I}}}_{\leq t} &\leq \sum_{i=1}^l \norm{\Pi_{(x_i, y_i)\in\mathsf{db}, \reg I} - \pc_{x_i, \reg I}\Pi_{(x_i, y_i)\in\mathsf{db}, \reg I}\pc_{x_i, \reg I}^\dagger}_{\leq t+l}
        \label{eqn:triangle-fund-perm}
    \end{align}
    by repeated applications of triangle inequality. We can analyze each term individually. Note that the operator $\Pi_{(x, y)\in\mathsf{db}, \reg I} - \pc_{x, \reg I}\Pi_{(x, y)\in\mathsf{db}, \reg I}\pc_{x, \reg I}^\dagger$ preserves the subspace $\algo H(I|^x)$ for any $I \in \mathbf I|^x$. Therefore, it suffices to consider the action on a state \begin{align}
        \ket{\phi}_{\reg I} &= \sum_{y' \in [N] \setminus \im(I) \cup \{\bot\}} \alpha_{y'} \ket{I[x\ra y']}_{\reg I},
    \end{align}
    for some $I \in \mathbf I|^x$ of size at most $t+l$. We write $s = \frac{1}{\sqrt{N-|\im(I)|}}\sum_{y' \in [N] \setminus \im(I)} \alpha_{y'}$. We can then compute \begin{align}
        \Pi_{(x, y)\in\mathsf{db}, \reg I} \ket{\phi}_{\reg I} &= \alpha_y \ket{I[x \ra y]} \\
        \pc_{x, \reg I}\Pi_{(x, y)\in\mathsf{db}, \reg I}\pc_{x, \reg I}^\dagger\ket{\phi}_{\reg I} &= \pc_{x, \reg I} \left(\alpha_y + \frac{\alpha_\bot - s}{\sqrt{N-|\im(I)|}}\right) \ket{I[x \ra y]} \\
        &= \left(\alpha_y + \frac{\alpha_\bot -s}{\sqrt{N-|\im(I)|}}\right) \cdot \nonumber\\&\quad\left(\ket{I[x \ra y]} - \frac{1}{\sqrt{N-|\im(I)|}}\ket{+_{x, I}} + \frac{1}{\sqrt{N-|\im(I)|}}\ket{I}\right) \\
        &= \alpha_y \ket{I[x \ra y]}_{\reg I} + O\left(\frac{1}{\sqrt{N-|\im(I)|}}\right).
    \end{align}
    where the last line follows from $\abs{s}, \abs{\alpha_\bot} \leq 1$. Recalling that $N-|\im(I)| \geq N-l-t$, \begin{align}
        \norm{\Pi_{(x, y)\in\mathsf{db}, \reg I} \ket{\phi}_{\reg I} - \pc_{x, \reg I}\Pi_{(x, y)\in\mathsf{db}, \reg I}\pc_{x, \reg I}^\dagger\ket{\phi}_{\reg I}} &= O\left(\frac{1}{\sqrt{N-l-t}}\right),
    \end{align}
    which combined with \Cref{eqn:triangle-fund-perm} proves the claim.
\end{proof}

\section{Purified permutations}

\label{sec:Feistel-twirl}

In this section, we introduce the ensemble of permutations which we will analyze to prove both soundness of our construction and of Feistel, and prove properties about the purified ensemble. We use $\varphi$ to denote the permutation which is being queried. We will write $\varphi$ as a product of constituent permutations. Some of the constituent permutations will be highly structured, which will enable us to apply the standard compressed oracle theory.

In particular, we consider an ensemble defined by two random permutations $\pi, \omega \in \mathbf S_{N}$, which we refer to as the outer or twirling permutations, and three random functions $h, k, f : \bit^{n}\rightarrow \bit^n$, which we refer to as the internal or inner Feistel. We will always take $h, k, f$ to be uniform random functions, but we do not yet commit to a particular distribution for $\pi$ and $\omega$. This tuple defines the permutation $\varphi_{(\pi, \omega, h, k, f)}$. The formula for this permutation is $\varphi_{(\pi, \omega, h, k, f)}(x) = y$, where \begin{align}
    u &= \pi(x),  &v &= \omega^{-1}(y), \label{eqn:uv} \\
    m \oplus u_R &= h(u_L), &v_L \oplus u_L &= k(m), & m \oplus v_R &= f(v_L). \label{eqn:muv}  
\end{align}
This is depicted in \Cref{fig:masked-feist-vars}, along with the variables above. For a generic value on the left wire, we will usually use $w$ or $l$; on the right wire, usually $z$ or $r$.

\begin{figure}
    \centering
    \includegraphics[width=0.72\linewidth]{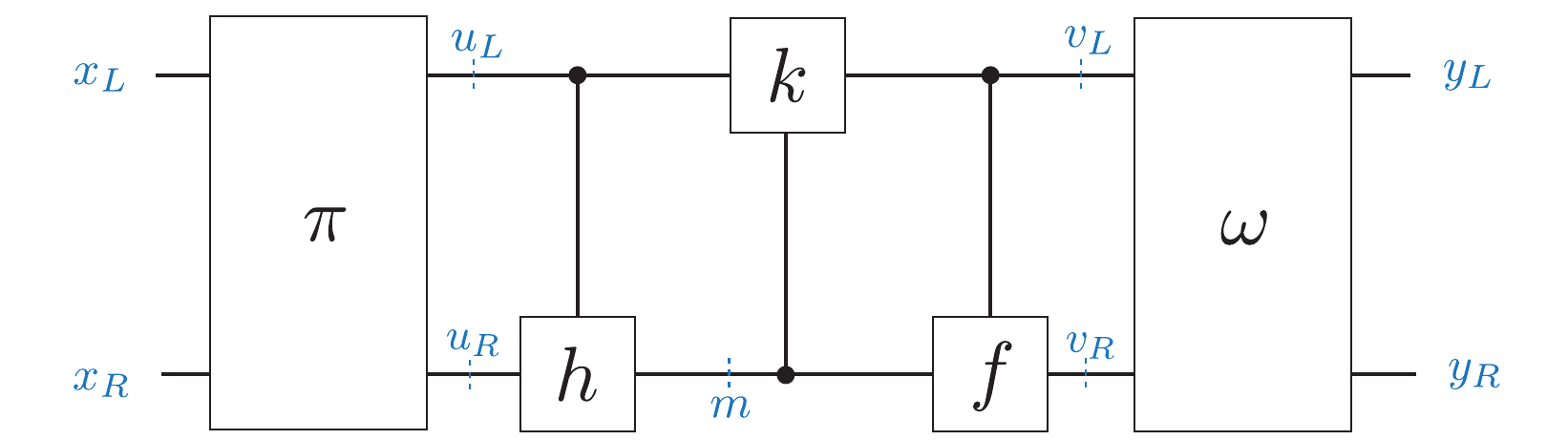}
    \caption{The masked feistel construction, with commonly chosen variable names in blue.}
    \label{fig:masked-feist-vars}
\end{figure}

We call this construction \emph{masked Feistel}, as the outer permutations will hide the structure of the internal Feistel construction. We will analyze this ensemble by purifying.
The purification register lives in the space $\algo H_{\reg P} = \algo H(\mathbf S_N)^{\otimes 2} \otimes \algo H(\mathbf F_{\sqrt{N}, \sqrt{N}})^{\otimes 3}$ for the permutations $\pi, \omega$ and functions $h, k, f$ respectively. We label registers holding the parts of a purification by $\reg{\Pi,\Omega,H,K,F}$ respectively. For concreteness, it may help to imagine $\pi, \omega$ being chosen at uniform random; this is the case we will analyze in our soundness proof. In this case, the permutation $\varphi$ is uniform random, as it is the product of a uniform random permutation (say $\pi$) and an independently chosen permutation.

\subsection{Query analysis}
\label{sec:query-pperms}

Let us define the purified query operator $\mfpu$ on this space to act as \begin{align}
    \mfpu \ket{x, y}_{\reg{XY}} \ket{\pi, \omega, h, k, f}_{\reg P} &\coloneqq \ket{x, y \oplus \varphi_{(\pi, \omega, h, k, f)}}_{\reg{XY}} \ket{\pi, \omega, h, k, f}_{\reg P}.
\end{align}
Looking ahead, we will answer inverse queries by conjugating this operator by a flip of the truth table; for now we focus on forwards queries for simplicity. Let $\algo D$ be the distribution on $\mathbf S_N^{\times 2}$ from which the twirling permutations are drawn. We can write the initial purified state \begin{align}
    \ket{\psi_0}_{\reg P} &= \sum_{\pi, \omega \in \mathbf S_N} \sqrt{p_{\algo D}(\pi, \omega)} \ket{\pi, \omega}_{\reg{\Pi\Omega}} \otimes \frac{1}{\sqrt{(2^n)^{3 \cdot 2^n}}} \sum_{h, k, f \in \mathbf F_{2^n, 2^n}} \ket{h, k, f}_{\reg{HKF}}.
\end{align}
Then, answering queries using the $\mfpu$ operator and tracing out the purifying register at the end of the experiment produces the same view as randomly selecting $\varphi$ from the appropriate distribution.

Now, we may use compressed oracle representations for the internal functions $\reg{HKF}$. Answering a query on input $x$ in this compressed picture can be thought of as the following out-of-place circuit: \begin{enumerate}
    \item Compute $u=\pi(x)$ into a fresh register
    \item Call the compressed oracle on $\reg H$ with input $u_L$, and write the output $z$ to a fresh register.
    \item Call the compressed oracle on $\reg K$ with input $u_R \oplus z$, and write the output $w$ to a fresh register.
    \item Call the compressed oracle on $\reg F$ with input $u_L \oplus w$, write the output $z'$ to a fresh register.
    \item Compute $v=u \oplus (w \Vert (z \oplus z'))$, and output $\omega(v)$.
    \item Uncompute all intermediate values.
\end{enumerate} 
However, this circuit has many redundant steps. For instance, at the end of step (2) the database in register $\reg H$ is compressed on input $u_L$. The next operation which acts on this register is a decompression on $u_L$, in step (6). These operations are inverses of one another, and so they cancel out. Similarly, step (2) writes $z$ to a fresh wire, only for the next step to be a decompression conditioned on $z$. It will instead be easier to imagine performing the decompression in steps (2-4) without writing the result to a new register, nor performing the final compression in these steps. Then, the query can be answered in step (5) using information in the databases contain in registers $\reg{HKF}$. Finally, we undo by compressing steps (2-4), in the reverse order.

We will call this sequence of decompressions the \emph{masked Feistel decompression}, writing $\mfc^\dagger$, and $\mfc$ for the corresponding masked Feistel compression. At an intuitive level, this decompression operator performs the following, where we use $D_h, D_k, D_f$ to denote the databases stored in $\reg{H,K, F}$ respectively. We introduce some auxiliary variables only for notational convenience; one should not think of these as being stored in any register.

\begin{enumerate}
    \item Decompress $\reg H$ on input $u_L$, where $u=\pi(x)$.
    \item Decompress $\reg K$ on input $u_R \oplus D_h(u_L)$.
    \item Decompress $\reg F$ on input $u_L \oplus D_k(u_R \oplus D_h(u_L))$.
\end{enumerate}

Note that the database entries referenced are guaranteed to exist here, as they were just decompressed, and the compressed oracle maintains validity. After this operator is performed, the value $\varphi(x)$ is guaranteed to be determined, and can be XORed into the target query register $\reg{Y}$. Following this, undoing the above steps in reverse order will re-compress the database. This method of answering queries is equivalent to the out-of-place circuit described above.

We will now define the compression operator $\mfc$ more formally. Let \begin{align}
    U^{(H)}_{u, \reg H} &\coloneqq \fc_{u_L, \reg H} \\
    U^{(K)}_{u, \reg{HK}} \ket{D_h}_{\reg H} &\coloneqq \ket{D_h}_{\reg H} \otimes \begin{cases}
        \Id_{\reg K} & \text{(If $u_L \not\in \dom(D_h)$)} \\
        \fc_{u_R \oplus D_h(u_L), \reg K} & \text{(Otherwise)}
    \end{cases} \\
    U^{(F)}_{u, \reg{HKF}} \ket{D_h, D_k}_{\reg{HK}} &\coloneqq \ket{D_h, D_k}_{\reg{HK}} \otimes \begin{cases}
        \Id_{\reg F} & \text{(If $u_L \not\in \dom(D_h)$)} \\
        \Id_{\reg F} & \text{(If $u_R \oplus D_h(u_L) \not\in \dom(D_k)$)} \\
        \fc_{u_L \oplus D_k(u_R \oplus D_h(u_L)), \reg F} & \text{(Otherwise)}
    \end{cases} \\
    U_{u, \reg{HKF}} &\coloneqq U^{(H)}_{u, \reg H} U^{(K)}_{u, \reg{HK}} U^{(F)}_{u, \reg{HKF}},
\end{align}
where $U_u$ compresses the internal databases on input $u$. Note that the choice of $\Id$ in the above cases is arbitrary; as described above, when these operators are called in the context of the masked Feistel construction the values are guaranteed to exist, and so these cases do not occur.

We can now define $\mfc$ by applying the twirl, to obtain \begin{align}
    \mfc_{x, \reg{\Pi HKF}} \ket{\pi}_{\reg \Pi} &\coloneqq \ket{\pi}_{\reg \Pi} \otimes  U_{\pi(x), \reg{HKF}}, \\
    \mfc_{\reg{X\Pi HKF}} \ket{x}_{\reg X}  &\coloneqq \ket{x}_{\reg X}\otimes \mfc_{x, \reg{\Pi HKF}}.
\end{align}

We also define an operator analogous to the flip operator, which we denote by $\mflip$, which acts as \begin{align}
    \mflip_{\reg{P}} \ket{\pi}_{\reg \Pi}\ket{\omega}_{\reg \Omega} \ket{D_h}_{\reg H} \ket{D_k}_{\reg K} \ket{D_f}_{\reg F} &\coloneqq \ket{\omega^{-1}}_{\reg \Pi}\ket{\pi^{-1}}_{\reg \Omega} \ket{D_f}_{\reg H} \ket{D_k}_{\reg K} \ket{D_k}_{\reg F},
\end{align}
in other words this operator inverts and exchanges both twirl permutations, and interchanges the databases in $\reg F$ and $\reg H$. This operator is an involution, and if a purification $P$ represents permutation $\varphi_P$, then the flipped purification $P'$ where $\mflip \ket{P} = \ket{P'}$ represents the permutation $\varphi^{-1}_P=\varphi_{P'}$. Now, we can define the \emph{compressed masked Feistel oracle} $\cmf$ as \begin{align}
    \cmf \ket{b} &\coloneqq \ket{b} \otimes \begin{cases}
        \mfc \cdot \mfpu \cdot \mfc^\dagger & \text{(If $b=0$)} \\
        \mflip \cdot \mfc \cdot \mfpu \cdot \mfc^\dagger \cdot \mflip^\dagger & \text{(If $b=1$)} 
    \end{cases}
\end{align}
Observe that in this case we have $\mfc \neq \mfc^\dagger$, as the order of compression/decompressions on the $\reg{HKF}$ databases are flipped. This is in contrast with the compressed function and compressed permutation oracles, where compression is an involution.

We use notation $D=(D_h, D_k, D_f) \in \mathbf D_{2^n, 2^n}^{\times 3}$ for the compressed databases of functions $h, k, f$, and think of $P = (\pi, \omega, D)$ as a possible state of the purification registers. We use $\mathbf P$ to denote set of possible purifications (i.e. $P \in \mathbf P$). The size of a purification is the size of its largest database, $|P|=\max \{|D_h|,|D_k|,|D_f|\}$. We will define a number of desirable properties of both the twirling permutations $(\pi, \omega)$ and the internal Feistel databases $(D_h, D_k, D_f)$, beginning with the latter.





\subsection{Internal Feistel}
\label{sec:combin-Feistel}

In this section, we will analyze the internal three rounds of Feistel. In particular, our main object of study will be triples $D=(D_h, D_k, D_f)$ of databases. A core concept is a chain, which is a set of database points which suffice to define an input-output pair of the Feistel permutation.

\begin{definition}
    Fix a $D=(D_h, D_k, D_f)$. We say there is a \emph{chain} from $u$ to $v$ in $D$ if $D_h(u_L)\oplus u_R = D_f(v_L) \oplus v_R = m$, and $u_L \oplus v_L = D_k(m)$. The values $(u, m, v)$ define the chain, and $(u_L, m, v_L)$ are the input points along the chain.
    \label{def:chain}
\end{definition}
We also say that the \emph{rightward semi-chain} from $u$ is length $0$ if $D_h(u_L)=\bot$, is length $1$ if $D_k(D_h(u_L) \oplus u_R) = \bot$, is of length $2$ if $D_f(u_L\oplus D_k(D_h(u_L) \oplus u_R))=\bot$, and is of length $3$ (and therefore a full chain to some $v$) otherwise.

We say that $D$ \emph{supports} the pair $(u, v)$ if there is a chain from $u$ to $v$.
We use $\supp(D)$ to denote the injective database supported by the $D$ tuple, and $\dom(D)$ and $\im(D)$ to denote the domain and range of this database (i.e. we write $\dom(D)$ in place of $\dom(\supp(D))$). We write $D(u)$ in place of $\supp(D)(u)$, and similar for $D^{-1}(v)$. Finally, for $u \in \dom(D)$ we write $D[u\ra \bot]$ to denote $D$ with the chain beginning at $u$ removed.

Two chains \emph{collide} if any of the function inputs along the chain collide.

\begin{definition}
    If $(u, m, v) \neq (u', m', v')$ are distinct chains in $D$, then they \emph{collide} if $m=m'$, $u_L = u_L'$, or $v_L = v_L'$.
    \label{def:chain-col}
\end{definition}

A database triple $D=(D_h, D_k, D_f)$ is \emph{canonical} if no two chains in $D$ collide, and every input output pair in each database participates in exactly one chain. The set of all canonical database triples is denoted $\mathbf C \subset \mathbf D_{2^n, 2^n}^{\times 3}$, and $\mathbf C_t$ denotes the set of canonical database where each $D_h$, $D_k$, and $D_f$ are of size $t$.

\begin{lemma}
    Let $D=(D_h, D_k, D_f) \in \mathbf C_t$ be a canonical triple of databases of size $t$ each. Then there are $t$ full chains $u_1 \ra v_1, \dots, u_t \ra v_t$ with distinct $u_i$ and distinct $v_i$. There are $t(t-1)$ length $2$ rightward semi-chains, $t\cdot (2^n-t)$ of length $1$, and $(2^n-t) \cdot 2^n$ of length $0$. There are the same number of leftward semi-chains of the appropriate length.
\end{lemma}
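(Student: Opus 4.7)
The plan is to extract everything from the canonical structure of $D$, which forces a clean one-to-one correspondence between chains and entries in each of $D_h, D_k, D_f$. First I would establish the full-chain count. Every chain $(u,m,v)$ uses exactly the three database entries $(u_L, D_h(u_L))$, $(m, D_k(m))$, $(v_L, D_f(v_L))$, and by canonicality each entry of each database participates in exactly one chain. Since $|D_h|=|D_k|=|D_f|=t$, the set of chains has size $t$. The distinctness of the $u_i$ (and of the $v_i$) is immediate from \Cref{def:chain-col}: two chains sharing $u_L$ (or $v_L$) would collide, which is forbidden.

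Next I would count rightward semi-chains by stratifying on where the forward evaluation stalls. The length-$0$ semi-chains are those $u$ with $u_L \notin \dom(D_h)$; there are $2^n - t$ choices for $u_L$ and $2^n$ free choices for $u_R$, giving $(2^n - t)\cdot 2^n$. Length-$1$ semi-chains require $u_L \in \dom(D_h)$ but $m \coloneqq u_R \oplus D_h(u_L) \notin \dom(D_k)$; for each of the $t$ choices of $u_L$, the map $u_R \mapsto m$ is a bijection on $\{0,1\}^n$, so exactly $2^n - t$ values of $u_R$ land outside $\dom(D_k)$, yielding $t(2^n-t)$.

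The main content is the length-$2$ count, and this is the step I expect to be the genuine obstacle since it is the only place where canonicality (as opposed to mere consistency) is used. I would parametrize length-$\geq 2$ rightward semi-chains by pairs $(u_L, m) \in \dom(D_h) \times \dom(D_k)$, noting that $u_R$ is then determined. Such a semi-chain has length $3$ iff $v_L \coloneqq u_L \oplus D_k(m)$ lies in $\dom(D_f)$, and length $2$ otherwise. The key observation is that if $u_L = u_L^{(i)}$ is the left half of canonical chain $i$ and $m = m^{(j)}$ is the middle of canonical chain $j$, then $v_L \in \dom(D_f)$ would exhibit a chain $(u, m^{(j)}, v)$ in $D$ passing through the single database entry $(m^{(j)}, D_k(m^{(j)}))$; since by canonicality that entry already participates in the $j$th canonical chain and no other, this new chain must coincide with the $j$th, forcing $u_L = u_L^{(j)}$ and hence $i = j$ by the distinctness of the $u_L^{(\cdot)}$. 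Thus length $3$ occurs exactly on the $t$ diagonal pairs, leaving $t^2 - t = t(t-1)$ length-$2$ semi-chains. As a sanity check, $(2^n-t)\cdot 2^n + t(2^n-t) + t(t-1) + t = 2^{2n}$, the total number of possible $u$.

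Finally, leftward semi-chains from $v$ are counted by the identical argument, traversing backwards with the roles of $D_h$ and $D_f$ swapped. The definition of canonicality is symmetric under this swap, and every bijectivity and counting step in the rightward argument transfers verbatim, yielding the same counts in each length class.
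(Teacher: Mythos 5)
Your proposal is correct and follows essentially the same route as the paper: stratify the inputs $u$ by where the forward evaluation stalls, use canonicality (no colliding chains, every database entry in exactly one chain) to pin the length-$3$ cases to the $t$ canonical chains, and verify the counts sum to $2^{2n}$. The only cosmetic difference is in the length-$2$ step, where you rule out off-diagonal completions via the shared $D_k$-entry $m$ while the paper rules them out via a collision on $u_L$; these are interchangeable uses of the same canonicality property.
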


\begin{proof}
    The set of chains being size $t$ is the definition of a canonical database. To compute the number of length $2$ semi-chains, we choose any $w \in D_h$, and $z$ such that $z \oplus D_h(w) \in D_k$, but which does not lead to an already queried value in $D_f$. We know that $w = (u_i)_L, z=(u_i)_R$ satisfy the first two conditions but not the third, but if a different $z' \neq z$ lead to a queried value in $D_f$ then we would have a colliding chain. This would violate canonicity, so we may choose any $z' \neq z$, of which there are $t-1$ satisfying the remaining constraints.

    For the number of semi-chains of length $1$, we choose $w \in D_h$ and any $z$ s.t. $D_h(w) \oplus z \not\in D_k$. There are $t$ choices for $w$, and $2^n-t$ choices for $z$ given $w$. Similarly, for semi-chains of length $0$ we may pick any $w \not\in D_h$ of which there are $2^n-t$, and any $z$ of which there are $2^n$. We can verify that
    \begin{align}
        t + t(t-1) + t \cdot (2^n-t) + (2^n-t)\cdot 2^n &= t^2 + (t + 2^n)\cdot (2^n-t) \\
        &= 2^{2n}.
    \end{align}
\end{proof}

Given some injective database $I \in \mathbf I_{t}$, let $\mathbf D(I)$ denote the set of canonical databases which support $I$. Note that this set is empty if any one of the following are true.

\begin{definition} An $I \in \mathbf I_t$ is \emph{non-allowable} if any of the following predicates are true. 
    \begin{enumerate}
        \item There is an \emph{input left collision} in $I$ if there are distinct $x, x' \in \dom(I)$ such that $x_L =x_L'$.
        \item There is an \emph{output left collision}  in $I$ if there are distinct $y, y' \in \im(I)$ such that $y_L = y_L'$.
        \item There is an \emph{internal left collision} in $I$ if there is an $(x, y)\in I, x' \in \dom(I), y' \in \im(I)$ with $(x, y) \neq (x', y')$ and $x_L \oplus y_L = x'_L \oplus y'_L$.
    \end{enumerate}
    \label{def:non-allowable-dbs}
\end{definition}

We say $I$ is allowable if none of these events occur, i.e. $\mathbf D(I) \neq \emptyset$, and denote by $\mathbf A_t \subset \mathbf I_t$ the set of allowable databases.\footnote{In fact, condition (3) entails conditions (1) and (2), though we explicitly state all three for clarity.}

\begin{lemma}
    \Cref{def:non-allowable-dbs} characterizes all non-allowable databases.
    \label{lem:allowable-dbs-characterization}
\end{lemma}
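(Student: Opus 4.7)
The plan is to prove the biconditional that $\mathbf{D}(I) \neq \emptyset$ iff none of the three predicates in \Cref{def:non-allowable-dbs} holds. The forward direction is mostly mechanical and I would dispatch it first: an input left collision forces two planned chains to share a $D_h$-input, and an output left collision forces two planned chains to share a $D_f$-input, each immediately violating canonicity; the genuinely new content of the internal left collision reduces to the same spurious-chain analysis used for the converse, so I would set it aside and return to it.

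For the converse, I would give an explicit construction. Enumerate $I = \{(u_i, v_i)\}_{i=1}^t$; pick any $t$ distinct middle values $m_1, \dots, m_t \in \bit^n$ (possible since absence of an input left collision forces $t \leq 2^n$); and set $D_h(u_{i,L}) = m_i \oplus u_{i,R}$, $D_k(m_i) = u_{i,L} \oplus v_{i,L}$, and $D_f(v_{i,L}) = m_i \oplus v_{i,R}$. Absence of input and output left collisions, together with distinctness of the $m_i$, makes each of $D_h, D_k, D_f$ a well-defined partial function of size $t$, and by design $(u_i, m_i, v_i)$ is a chain for each $i$.

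The substantive step — and the reason the internal left collision is stated in its cross-indexed form — is verifying that no \emph{spurious} chain arises in $D$ beyond the $t$ planned ones. Since each of $D_h, D_k, D_f$ has exactly $t$ entries, any additional chain would share a database input with a planned chain and violate canonicity. Tracing the chain equations, any candidate chain starting at $u'$ must have $u'_L = u_{i,L}$ for some $i$ (else $D_h(u'_L) = \bot$), must then pass through some $m' = m_l$ (else $D_k(m') = \bot$), and must end with $v'_L = v_{k,L}$ for some $k$ (else $D_f(v'_L) = \bot$); the three assignments together force $u_{l,L} \oplus v_{l,L} = u_{i,L} \oplus v_{k,L}$. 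A short check that spuriousness entails $(u_l, v_l) \neq (u_i, v_k)$ identifies this with an internal left collision, so absence of the internal left collision rules out spurious chains and certifies canonicity. The same calculation, run in reverse, supplies the remaining case of the forward implication.
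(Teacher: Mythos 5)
Your proposal is correct and follows essentially the same route as the paper: the same dispatch of input/output left collisions, the same colliding-chain construction from an internal-collision witness for the forward direction, and the same explicit database construction for the converse (your middle values $m_i$ are exactly the paper's $(u_i)_R \oplus z_i$ from its extension algorithm). The only difference is presentational: the paper certifies canonicity iteratively by tracking which semi-chains are created and arguing they never complete, whereas you enumerate candidate chains in the finished database and identify each spurious one with the relation $u_{l,L} \oplus v_{l,L} = u_{i,L} \oplus v_{k,L}$ — the same identity the paper invokes at each step.
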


\begin{proof}
    Suppose that $I$ does not satisfy \Cref{def:non-allowable-dbs}. If there is an input or output left collision, then there cannot be disjoint chains for the left colliding inputs or outputs. Suppose then that there are not input or output left collisions, yet there is an internal collision. Let $(u, v) \in I, u' \in \dom(I), v' \in \im(I)$ be such a collision. Suppose that $D$ were a canonical database supporting $I$, such that there is a chain from $u$ to $v$. We write this chain as \begin{align}
        \underbrace{(u_L, z_h)}_{\in D_h} \mapsto \underbrace{(u_R \oplus z_h, w_k)}_{\in D_k} \mapsto \underbrace{(u_L \oplus w_k, z_f)}_{\in D_f},
    \end{align}
    where we observe that $w_k = u_L \oplus v_L$. Now observe that $u_L' \in \dom(D_h)$ and $v_L' \in \dom(D_f)$, so let us define \begin{align}
        z_h' &= D_h(u_L'), & z_f' &= D_f(v_L').
    \end{align}
    Consider the inputs $u'' = u_L' \Vert (z_h' \oplus u_R \oplus z_h)$ and $v'' = v_L' \Vert (z_f' \oplus u_R \oplus z_h)$. There is then a chain from $u''$ to $v''$ of the form \begin{align}
        \underbrace{(u_L'', z_h')}_{\in D_h} \mapsto \underbrace{(u_R \oplus z_h, w_k)}_{\in D_k} \mapsto \underbrace{(v_L'', z_f')}_{\in D_f},
    \end{align}
    where we use the fact that $u_L' \oplus w_k = v_L'=v_L''$. This chain collides with the chain found from $u$ to $v$. We know either $u'' \neq u$ (if $u' \neq u$, and assumption of no input left collisions) or $v'' \neq v$ (if $v' \neq v$, and assumption of no output left collisions), so there is a colliding chain in $D$, a contradiction.

    Now let us consider the other direction. Suppose that $I$ does satisfy \Cref{def:non-allowable-dbs}, we will construct a database $D$ which supports exactly $I$ and is canonical. Fix some ordering on the input output pairs as $I=(u_1, v_1), (u_2, v_2), \dots, (u_t, v_t)$. Then, beginning with an empty $D = (D_h, D_k, D_f)$, iteratively construct the database as in \Cref{alg:db-extension}.
    \begin{algorithm}[H]
        \caption{Database extension}
        \label{alg:db-extension}
        For every $j \in [t]$:
        
        \qquad Choose some $z_j$ such that $(u_j)_R \oplus z_j \not\in \dom(D_k)$, and add $((u_j)_L, z_j)$ to $D_h$.
        
        \qquad Add $((u_j)_R \oplus z_j, (u_j)_L \oplus (v_j)_L)$ to $D_k$.
        
        \qquad Add $((v_j)_L, (v_j)_R \oplus z_j \oplus (u_j)_R)$ to $D_f$.
    \end{algorithm}
    After $j-1$ iterations, step (2) has $2^n-j+1$ choices, and hence will succeed (unless $t \geq 2^n$, in which case database states are anyways not defined). If step (2) succeeds, then it will create new length $1$ right facing semi-chains beginning at $(u_j)_L \Vert z_R$ for every $z_R \in \bit^{n}$. Of these, $j-1$ will be extended immediately to length $2$, corresponding to the values already in $D_k$. On the left wire, these length $2$ right facing semi-chains lead to the values $(u_j)_L \oplus (u_i)_L \oplus (v_i)_L$ for $i \in [j-1]$. From the allowability of $I$, we know that these values will never be extended to full length $3$ chains, as $(u_j)_L \oplus (u_i)_L \oplus (v_i)_L \neq (v_{p})_L$ for any $p \in [t]$.
    
    Step (3) will add a new pair to $D_k$ if step (2) succeeds, and of the previously described length $1$ right facing semi-chains, will extend only the one beginning at $u_j = (u_j)_L \Vert (u_j)_R$, as desired.
    However, it will also extend length $1$ right facing semi-chains of the form $(u_i)_L \Vert (z_i \oplus (u_j)_R)$, for every $i \in [j-1]$. These now-length-$2$ right facing semi-chains will lead to a value $(u_i)_L \oplus (u_j)_L \oplus (v_j)_L$ on the left wire after the application of $h$ and $k$. However, we know that this value is distinct from all $(v_p)_L$ for any $p \in [t]$, by the fact that $I$ has no internal left collisions. Hence, these chains are never extended to length $3$ chains. Therefore, each iterations adds a chain from $u_j$ to $v_j$, disjoint from all prior chains, and adds no other chains. The databases are clearly minimal at the end of this procedure, so $D$ is a canonical database tuple supporting $I$.
\end{proof}

\begin{lemma}
     For any $I \in \mathbf A_t$, we have the equation $|\mathbf D(I)| = \binom{2^n}{t}$.
    \label{lem:counting-allowables}
\end{lemma}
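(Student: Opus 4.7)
The plan is to construct an explicit bijection between $\mathbf D(I)$ and a combinatorial parametrization whose size is the claimed count, using as the invariant the tuple of ``middle values'' $m$ along the $t$ chains supported by $D$. Given canonical $D = (D_h, D_k, D_f) \in \mathbf D(I)$, the forward map sends $D$ to $\mu_D : I \to [2^n]$ defined by $\mu_D(u, v) = D_h(u_L) \oplus u_R$, i.e., the middle value of the unique chain from $u$ to $v$; these values are pairwise distinct by canonicity (no two chains share a $D_k$-input).

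For the inverse direction, given such a $\mu$, I define $D_\mu$ entry-by-entry: for each $(u, v) \in I$, set $D_h(u_L) = u_R \oplus \mu(u, v)$, $D_k(\mu(u, v)) = u_L \oplus v_L$, and $D_f(v_L) = v_R \oplus \mu(u, v)$. Well-definedness is immediate from allowability: the $u_L$'s are distinct in $I$ (no input left collision), the $v_L$'s are distinct (no output left collision), and the $\mu$-values are distinct by assumption, so $D_h$, $D_k$, $D_f$ are unambiguous partial functions. Each $(u, v) \in I$ has its chain present in $D_\mu$ by construction, and $\mu_{D_\mu} = \mu$, so the two maps compose to the identity on the combinatorial side.

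The main obstacle --- and the only place condition (3) of \Cref{def:non-allowable-dbs} is invoked --- is ruling out \emph{extraneous} chains in $D_\mu$ to confirm it is canonical (and hence that $\mu_D \circ \mu = \mathrm{id}$ on $\mathbf D(I)$ as well). Any hypothetical extra chain must start at some $u' = u_{i,L} \Vert u_R'$ (since $\dom(D_h) = \{u_{i,L}\}$), traverse a middle value $\mu(u_j, v_j) \in \dom(D_k)$, and end at $v_L' = u_{i,L} \oplus u_{j,L} \oplus v_{j,L}$, which must lie in $\dom(D_f) = \{v_{p,L}\}$. The case $i = j$ simply recovers the intended chain $u_i \to v_i$; otherwise the resulting equation $u_{j,L} \oplus v_{j,L} = u_{i,L} \oplus v_{p,L}$ applied to the pair $(u_j, v_j) \ne (u_i, v_p)$ is precisely a forbidden internal left collision, with the degenerate subcases $p = j$ or $p = i$ handled by conditions (1) and (2) of \Cref{def:non-allowable-dbs} respectively. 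Since none of these is allowed in $\mathbf A_t$, no extra chain exists, $D_\mu$ is canonical, and the bijection is complete; the count then follows by enumerating the combinatorial parametrization.
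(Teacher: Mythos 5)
Your bijection is sound as far as it goes, and it is essentially the paper's own argument repackaged: your middle values $\mu(u,v)$ are exactly the free choices $z_j$ of \Cref{alg:db-extension} (up to XOR with $(u_j)_R$), and your use of condition (3) of \Cref{def:non-allowable-dbs} to rule out extraneous chains mirrors the proof of \Cref{lem:allowable-dbs-characterization}. The genuine gap is the final step, which you assert rather than carry out: your parametrizing set is the collection of \emph{injective} maps from the $t$ pairs of $I$ to middle values in $[2^n]$ (your inverse map must know which chain receives which middle value, so you cannot pass to unordered $t$-subsets), and its cardinality is $2^n(2^n-1)\cdots(2^n-t+1)=\binom{2^n}{t}\,t!$, not the claimed $\binom{2^n}{t}$. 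So your construction, completed honestly, yields $|\mathbf D(I)|=\binom{2^n}{t}\,t!$ and therefore does not establish the equality as stated; writing ``the count then follows by enumerating the combinatorial parametrization'' without doing the enumeration is precisely where the proof of the stated lemma fails. (A minor additional point: to get $D_{\mu_D}=D$ you should note explicitly that a canonical $D\in\mathbf D(I)$ contains no entries beyond the $3t$ chain entries, which follows from the ``every pair participates in exactly one chain'' clause of canonicity.)

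For comparison, the paper reaches $\binom{2^n}{t}$ from the same construction by counting $2^n(2^n-1)\cdots(2^n-t+1)$ choice sequences for a fixed ordering of the pairs of $I$ and then dividing by a purported $t!$ redundancy over orderings. Your bijection actually puts pressure on that division: for a fixed ordering the choice sequences already produce pairwise distinct databases (the values $D_h((u_j)_L)$ recover all the choices), and reordering the pairs regenerates the same set of databases, so the falling-factorial count is the one your map supports; it is also the count consistent with the normalization implicitly used in \Cref{lem:canonical-decomp}, which requires $|\mathbf D(A[u\ra v])|=(2^n-t)\,|\mathbf D(A)|$. In short, your route is clean and arguably identifies a $t!$ discrepancy in the stated formula, but as a proof of the lemma as written it is incomplete, and you should have flagged the mismatch instead of asserting agreement with the claimed count.
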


\begin{proof}
    Consider the algorithm described in the proof of \Cref{lem:allowable-dbs-characterization}. In the first iteration of step (1), there are $2^n$ choices, in the second, $2^n-1$, and etc. up to $2^n-t+1$ in the final. Any choice which is not of the form described leads immediately to a chain collision, and hence is not allowable. Steps (2) and (3) do not allow any choices, as the values in $D_k$ and $D_f$ are constrained by the value of $v$. However, the choice of ordering of input-output pairs is arbitrary, leading to a factor $t!$ redundancy, and therefore a total of $\binom{2^n}{t}$ databases consistent with $I$.
\end{proof}

\begin{definition}
    Let $A \in \mathbf A_t$, and $u \not\in \dom(I)$. We say that $A$ \emph{allows} $u$ if there exists a $v \in \bit^{2n}$ such that $A[u\ra v] \in \mathbf A_{t+1}$. We denote this as $A \heart u$, and if $A[u \ra v]$ remains allowable then we write $A \heart (u, v)$ and say $(u, v)$ is allowed.
    \label{def:allows}
\end{definition}

\begin{remark}
    If $A$ allows $u$, then $u'$ with $u'_L=u_L$ is allowed as well. There are at least $2^n-t^2-t$ allowable left substrings, by a direct counting argument.
    \label{rem:allowable-xs}
\end{remark}

\begin{definition}
    Fix some $A \in \mathbf A_t$ and $u \in \bit^n$ which $A$ allows. Let $\mathbf V_{u, A}$ denote the set of values $v \in \bit^{2n}$ such that $A[u\ra v]$ is allowable, and let $\mathbf L_{u, A}$ denote the set of left substrings in $\mathbf V_{u, A}$. Formally: \begin{align}
        \mathbf V_{u, A} &\coloneqq \{v \, : \, A[u\ra v] \in \mathbf A_{t+1}\}, \\
        \mathbf L_{u, A} &\coloneqq \{v_L \, : \, v \in V_{u, A}\}.
    \end{align}
    \label{def:allowable-lefts}
\end{definition}

We will sometimes use a length $n$ bitstring $l \in \bit^n$ as the index $\mathbf L_{l, A}$ or $\mathbf V_{l, A}$. This is well defined, as these sets only depend on the left $n$ bits of $u$.

\begin{lemma}
    For any $A \in \mathbf A_t$ and $u \in [N] \setminus \dom(A)$ which $A$ allows, we have \begin{align}
        |\mathbf L_{u, A}| \geq 2^n - 2t^2-2t,
    \end{align}
    and further for any $l \in \mathbf L_{u, A}$ and $r \in \bit^n$, we have $l \Vert r \in \mathbf V_{x, A}$.
    \label{lem:allow-prefixes}
\end{lemma}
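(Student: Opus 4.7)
The key observation is that all three conditions defining non-allowability in \Cref{def:non-allowable-dbs} (input left collision, output left collision, internal left collision) are predicates on the left $n$-bit substrings of the domain and range elements of the database only. Consequently, whether $A[u \ra v]$ is allowable is determined solely by $u_L$ and $v_L$, with no dependence on $v_R$. (One must also ensure $v \notin \im(A)$ so that $|A[u \ra v]| = t+1$; but this is already entailed by the output-left-collision condition, since $v \in \im(A)$ would force $v_L = y_L$ for $y = v \in \im(A)$.) This immediately yields the second claim: if $l \in \mathbf L_{u,A}$, then some $l \Vert r^* \in \mathbf V_{u,A}$, and by the observation above $A[u \ra l \Vert r]$ remains allowable for every $r \in \bit^n$, so $l \Vert r \in \mathbf V_{u, A}$.

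For the first claim, I would enumerate the constraints on $l = v_L$ that must hold in order for $A[u \ra l \Vert r] \in \mathbf A_{t+1}$. Since $A \in \mathbf A_t$ already, and since $A$ allows $u$ (so no input-left collision arises and no internal-left collision of the form $x_L \oplus y_L = u_L \oplus y'_L$ for $(x,y) \in A, y' \in \im(A)$ exists---otherwise $A$ would not allow $u$ for any $v$), the only new forbidden values of $l$ arise from:
\begin{enumerate}
    \item the output-left-collision condition: $l \neq y_L$ for all $y \in \im(A)$ ($\leq t$ values);
    \item internal-left collisions with $(x,y) = (u,v)$ and $(x', y') \in \dom(A) \times \im(A)$: $l \neq u_L \oplus x'_L \oplus y'_L$ ($\leq t^2$ values);
    \item internal-left collisions with $(x,y) \in A$, $x' = u$, and $y' = v$: $l \neq u_L \oplus x_L \oplus y_L$ ($\leq t$ values);
    \item internal-left collisions with $(x,y) \in A$, $x' \in \dom(A)$, and $y' = v$: $l \neq x'_L \oplus x_L \oplus y_L$ ($\leq t^2$ values).
\end{enumerate}
The remaining sub-cases (e.g., $(x,y) = (u,v)$ with $x' = u$, or $x'$ coinciding with the $x$ of the paired element) are either already captured by the above or forced to be excluded by $v \notin \im(A)$, the input-left-collision freedom, or the pair-distinctness requirement $(x,y) \neq (x',y')$. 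Summing the bounds gives at most $2t^2 + 2t$ forbidden values of $l$, so $|\mathbf L_{u, A}| \geq 2^n - 2t^2 - 2t$.

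The main obstacle is the case analysis in the enumeration above: one must carefully track which role each of $u$, $v$, and the existing database entries play in a potential internal-left-collision triple $((x,y), x', y')$, and rule out every combination that either (i) reduces to a condition already satisfied by the allowability of $A$ and the hypothesis that $A$ allows $u$, or (ii) would require $v$ to equal an element of $\im(A)$, contradicting the output-left-collision freedom. Once this bookkeeping is done the counts are elementary.
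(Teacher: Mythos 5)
Your proposal is correct and follows essentially the same route as the paper: the second claim from the observation that allowability only constrains left substrings, and the first by enumerating the linear constraints on $v_L$ coming from new output-left and internal-left collisions, totaling at most $2t^2+2t$ forbidden values (your items 1--4 are just a regrouping of the paper's two cases, each of size $t^2+t$). The minor imprecision about $v \in \im(A)$ being ``entailed by the output-left-collision condition'' is harmless, since $v\in\im(A)$ forces $v_L \in \{y_L : y\in\im(A)\}$, which is already among the excluded values.
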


\begin{proof}
    The second statement is straightforward, as allowability does not constrain the right substring in any way. For the first, observe that an assignment $A[u\ra v]$ can cause an internal left collision in one of two ways: \begin{enumerate}
        \item There exists $u' \in \dom(A) \cup \{u\}$, $v' \in \im(A) \cup \{v\}$ with $(u', v') \neq (u, v)$ such that \begin{align}
            u_L' \oplus v_L' &= u_L \oplus v_L
        \end{align}
        \item There exists $u' \in \dom(A) \cup \{u\}$, $(u'', v'') \in A$ such that \begin{align}
            u_L' \oplus v_L &= u_L'' \oplus v_L''
        \end{align}
    \end{enumerate}
    Note that in case (1), when $v=v'$ the equation reduces to the equality $u_L' = u_L$, which is not true by the fact that $A$ allows $u$. Hence, there are $t^2+t$ equations for the first ($t+1$ for $u'$, $t$ for $v'$), which therefore eliminate at most $t^2+t$ possible values for $v_L$. Similarly, there are $t+1$ choices for $u'$ and $t$ choices for $(u'', v'')$ in case (2), giving at most $2t^2+2t$ eliminated choices of $v_L$. Recall that the lack of internal left collisions implies no output left collisions (in particular, case (1) with $u=u'$ gives $v_L \neq v_L'$ for all $v \in \im(A)$), so any choice not excluded leads to an allowable database.
\end{proof}

\paragraph{Database extensions.}

Now we turn to studying subsets of three round databases which can occur as intermediate states between canonical databases. We will refer to these as extensions. At a high level, these sets represent databases which would be canonical, except for one chain which is only partially defined.

\begin{definition}
    We say that a tuple $D$ is a \emph{one-extension} if there exists: \begin{enumerate}
        \item some canonical $D' = (D_h', D_k', D_f')$,
        \item some $l \in \bit^n$ such that $D'$ allows $l \Vert r$ for some $r$, yet $l \Vert r \not\in \dom(D')$ for any $r$,
        \item some $z \in \bit^n$,
    \end{enumerate} such that $D=(D_h'[l \ra z], D_k', D_f')$.
    \label{def:one-extension}
\end{definition}
We would say that $D$ above is a one-extension of $D'$ with $(l, z)$. We use $\mathbf E (\mathbf C_t)$ to denote the set of one-extensions of size $t$ canonical databases, and $\mathbf E(D)$ to denote the set of one-extensions of $D$. We define two-extensions similarly.
\begin{definition}
    We say that a tuple $D$ is a \emph{two-extension} if there exists: \begin{enumerate}
        \item some $D'$ which is a one-extension of $D''$ with $(l, z)$,
        \item some $m \in \bit^n$ such that $m \not\in \dom(D_k'$),
        \item some $w \in \bit^n$ such that $l \oplus w \in \mathbf L_{l, A(D'')}$,
    \end{enumerate} such that $D=(D_h', D_k'[m \ra w], D_f')$.
    \label{def:two-extension}
\end{definition}
We would say that $D$ above is a two-extension of $D''$ with $(l, z), (m, w)$, and a one-extension of $D'$ with $(m, w)$. We use $\mathbf E^2 (\mathbf C_t)$ to denote the set of two-extensions of size $t$ canonical databases, $\mathbf E^2(D)$ to denote the set of two-extensions of a $D \in \mathbf C$, and $\mathbf E(D')$ to denote the set of two-extensions obtained as one-extensions of $D' \in \mathbf E(\mathbf C)$. Finally, we define a three extension,
\begin{definition}
    We say that a tuple $D$ is a \emph{three-extension} if there exists: \begin{enumerate}
        \item some $D'$ which is a two-extension of $D''$ with $(l, z), (m, w)$,
        \item some $z' \in \bit^n$,
    \end{enumerate} such that $D=(D_h', D_k', D_f'[l \oplus w \ra z'])$.
    \label{def:three-extension}
\end{definition}
Similarly, we use $\mathbf E^3 (\mathbf C_t)$ to denote three-extensions of canonical databases, $\mathbf E^3 (D)$ for $D \in \mathbf \mathbf C_t$ to denote three-extensions of $D$, $\mathbf E^2 (D)$ for $D \in \mathbf E(\mathbf C_t)$ to denote three-extensions obtained as two-extensions of a fixed one-extension, and $\mathbf E(D)$ for $D \in \mathbf E^2(\mathbf C_t)$ to denote three-extensions obtained as one-extensions of a fixed two-extension.

\begin{lemma}
    Fix some $u \in [N]$. For any $t \geq 0$, canonical $D \in \mathbf C_{t+1}$ with $u \in \dom(D)$, there exists a unique $D' \in \mathbf C_t$ with $u \not\in \dom(D)$ such that $D \in \mathbf E^3(D')$. Additionally, all three-extensions of canonical databases are themselves canonical.
    \label{lem:ext-unique}
\end{lemma}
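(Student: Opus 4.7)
The plan is to follow the unique chain through $u$ in the canonical database $D$, take $D'$ to be $D$ with that chain's three entries removed, and then prove canonicity of $D'$, the required three-extension relationship, and the second statement (canonicity of arbitrary three-extensions) by a case analysis that uses the allowability side conditions.

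For existence and uniqueness of $D'$: canonicity of $D \in \mathbf C_{t+1}$ with $u \in \dom(D)$ yields a single chain $(u_L, z_h) \in D_h$, $(u_R \oplus z_h, w_k) \in D_k$, $(u_L \oplus w_k, z_f) \in D_f$ through $u$ ending at $v = D(u)$. Let $D'$ be $D$ with those three entries deleted. Canonicity of $D'$ is immediate: every other chain of $D$ is undisturbed, no collisions can be introduced by deletion, each remaining entry still sits on exactly one chain, and the table sizes drop uniformly by one to $t$. For uniqueness, any $D'' \in \mathbf C_t$ with $u \notin \dom(D'')$ and $D \in \mathbf E^3(D'')$ must, by \Cref{def:one-extension}--\Cref{def:three-extension}, single out three entries of $D$ forming a chain whose input lies in $\dom(D) \setminus \dom(D'')$; canonicity of $D$ forces this chain to be the unique one through $u$, pinning down $D''$. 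Showing $D \in \mathbf E^3(D')$ is then a matter of instantiating $(l, z, m, w, z') = (u_L, z_h, u_R \oplus z_h, w_k, z_f)$ and checking the side conditions: the freshness conditions $l \notin \dom(D_h')$ and $m \notin \dom(D_k')$ are clear, and the allowability requirements---that $D'$ allows $l \Vert r$ and that $l \oplus w = v_L \in \mathbf L_{l, A(D')}$---follow from \Cref{lem:allowable-dbs-characterization} applied to $\supp(D) = \supp(D')[u \ra v] \in \mathbf A_{t+1}$, with $(u, v)$ itself as the witness.

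For the second statement, let $D \in \mathbf E^3(D')$ with $D' \in \mathbf C_t$ and extension data $(l, z), (m, w), z'$. The three added entries obviously form one new chain from $u = l \Vert (m \oplus z)$ to $v = (l \oplus w) \Vert (z' \oplus m \oplus z)$, giving $D$ at least $t+1$ chains with the correct size profile. What remains is to exclude any mixed chain in $D$ that uses a proper, nonempty subset of the new entries. Unpacking $l \oplus w \in \mathbf L_{l, A(D')}$ yields the derived consequences $w \notin \im(D_k')$ and $l \oplus w \notin \dom(D_f')$. Every mixed case with at least two new entries is killed immediately by one of $l \notin \dom(D_h')$, $m \notin \dom(D_k')$, $l \oplus w \notin \dom(D_f')$, or $w \notin \im(D_k')$. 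The three single-new-entry cases each reduce, after chasing the chain equations, to a forbidden equation of the form $l = u_L^{(i)} \oplus w^{(i)} \oplus w^{(j)}$, $w = u_L^{(i)} \oplus u_L^{(j)} \oplus w^{(j)}$, or $l \oplus w = u_L^{(i)} \oplus w^{(j)}$ for pre-existing chains $i \neq j$ of $D'$, each of which is exactly the negation of an internal-left-collision clause in the allowability of $(l \Vert r, v)$. The main obstacle is organizing this seven-case analysis cleanly and matching each case to the correct allowability or freshness clause; the template is the forward direction of the insertion algorithm in the proof of \Cref{lem:allowable-dbs-characterization}, specialized to adding a single chain.
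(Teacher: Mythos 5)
Your proposal is correct and takes essentially the same route as the paper: you obtain $D'$ by deleting the unique chain through $u$, get uniqueness from the fact that chains in a canonical database are entry-disjoint, and reduce canonicity of three-extensions to the freshness and allowability clauses behind the forward direction of \Cref{lem:allowable-dbs-characterization}. The only difference is presentational: the paper disposes of the second statement by citing one iteration of \Cref{alg:db-extension}, whereas you spell out the corresponding mixed-chain case analysis explicitly.
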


\begin{proof}
    For the first statement, consider the database $D'$ obtained by removing from $D$ the chain beginning at $u$. Canonicity is preserved under removing a chain (this cannot add a chain collision, or make the database non-minimal), so $D \in \mathbf E^3(D')$. Further, every $D'' \neq D'$ will have either a chain beginning at $u' \in \dom(D)$ and ending at a $v' \neq D(u')$, or a chain beginning at some $u'' \neq u$ and $u'' \not\in \dom(D)$. In either case, no extension can remove this chain, and such a chain does not appear in $D$, so $D'' \not\in \mathbf C^3(D')$.

    For the second statement, recall the correctness of \Cref{alg:db-extension}, proved in \Cref{lem:allowable-dbs-characterization}. In particular, an extension can be seen as one iteration of said algorithm, over all possible allowable $(u, v)$ pairs.
\end{proof}

Now we prove some disjointness properties for extension sets. These properties will be useful later on to argue that the span of various extension sets will be orthogonal, and hence certain operators can be written in a direct sum manner. This in turn will be key to analyzing the complicated effect of compression operators on the databases when they are entangled with the other registers. Towards this, we introduce the pipe notation for database triples. We use an upwards arrow $\ua$ for subsets that will occur while evaluating forwards and/or adding output points, and a downwards arrow $\da$ for subsets that occur while evaluating backwards and/or removing output points.

\begin{definition}
    Fix some $D \in \mathbf C_t$ and $u \in [N] \setminus \dom(D)$ which $D$ allows. Then we define \begin{align}
        D\ua^u_h &\coloneqq \{D\} \cup \{(D_h[u_L \ra z], D_k, D_f) \, : \, z \in \bit^n \setminus \{u_R \oplus \dom(D_k)\}\}.
    \end{align}
    For some $D' \in D\ua^u_h \subset \mathbf E(D)$ with extension $(u_L, z)$ and $z \in \bit^n \setminus \{u_R \oplus \dom(D_k)\}$, define \begin{align}
        D'\ua^u_k &\coloneqq \{D'\} \cup \{(D_h', D_k'[u_R\oplus z \ra w], D_f') \, : \, w \in \{u_L \oplus \mathbf L_{u, \supp(D)}\}\}.
    \end{align}
    For some $D'' \in D'\ua^u_k \subset \mathbf E^2(D)$ with extension $(u_L, z), (u_R \oplus z, w)$ as above, we define
    \begin{align}
        D''\ua^u_f &\coloneqq \{D''\} \cup \{(D_h'', D_k'', D_f''[u_L \oplus w \ra z']) \, : \, z \in \bit^n\}.
    \end{align}
    \label{def:up-pipe}
\end{definition}
When this notation is applied to sets of databases, it is the union of the application to each individual one. We will collapse the notation $(D\ua^u_h)\ua^u_k = D\ua^u_{h,k}$, and similarly $((D\ua^u_h)\ua^u_k)\ua^u_f = D\ua^u_{h,k,f}$.

\begin{lemma}
    For any $D \neq D' \in \mathbf C$ and $u \in [N]$ which is allowed by $D$ and $D'$ but supported by neither, meaning $u \not\in \dom(D) \cup \dom(D')$, we have \begin{align}
        D\ua^u_h \cap D'\ua^u_h &= \emptyset, & D\ua^u_{h, k} \cap D'\ua^u_{h, k} &= \emptyset, & D\ua^u_{h, k, f} \cap D'\ua^u_{h, k, f} &= \emptyset.
    \end{align}
    \label{lem:disjoint-uparrows}
\end{lemma}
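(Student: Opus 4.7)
The plan is to exhibit, for each of the three extension sets, a canonicalization map that takes any member $\bar D$ (together with the input $u$) back to the unique canonical $D$ from which it could have arisen. Once established, the lemma is immediate: if $\bar D$ lies in both $D\ua^u_*$ and $D'\ua^u_*$, canonicalization returns both $D$ and $D'$, forcing $D = D'$ and contradicting the hypothesis.

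The enabling facts all come from allowability. If $D \in \mathbf C$ allows $u$ with $u \notin \dom(D)$, then $u_L \notin \dom(D_h)$: by canonicity $\dom(D_h)$ equals $\{x_L : x \in \dom(\supp(D))\}$, and if $u_L$ coincided with some $x^*_L$ for an $x^* \neq u$ in $\dom(\supp(D))$, then $\supp(D)[u \ra v]$ would contain an input left collision for every $v$, contradicting $D \heart u$. Analogously, the definition of the $h$-extension explicitly requires $z \notin u_R \oplus \dom(D_k)$, so any point added to the $k$-register by a subsequent $k$-extension lies outside $\dom(D_k)$. And the $k$-extension requires $w \in u_L \oplus \mathbf L_{u, \supp(D)}$, meaning $u_L \oplus w = v_L$ for some $v$ making $\supp(D)[u \ra v]$ allowable; by the no-output-left-collision clause and the canonical identification $\dom(D_f) = \{y_L : y \in \im(\supp(D))\}$, this forces $u_L \oplus w \notin \dom(D_f)$.

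With these three observations, canonicalization proceeds step by step, given $\bar D$ and $u$ alone. If $u_L \notin \dom(\bar D_h)$, no $h$-extension took place and we return $D = \bar D$. Otherwise set $z = \bar D_h(u_L)$ and strip $(u_L, z)$ from the $h$-register. Next, if $u_R \oplus z \notin \dom(\bar D_k)$ then no $k$-extension took place; otherwise set $w = \bar D_k(u_R \oplus z)$ and strip the pair $(u_R \oplus z, w)$. Finally apply the analogous membership-in-domain test at $u_L \oplus w$ to decide whether to strip an $f$-register pair. The correctness of each ``did an extension happen?'' test reduces to a single membership check precisely because the three observations above guarantee that each extension point cannot coincide with any pre-existing domain element of the original $D$.

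I do not anticipate any substantive obstacle; the argument is bookkeeping once the recovery map is pinned down. The only subtlety requiring care is that the unioned definitions of $D\ua^u_h$, $D'\ua^u_k$, and $D''\ua^u_f$ each include the ``no-op'' element (the un-extended database itself), so the canonicalization must correctly detect ``no extension'' at each of the three stages; this is exactly what the domain-membership tests accomplish, and the same procedure therefore handles all three disjointness statements uniformly.
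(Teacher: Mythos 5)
Your proposal is correct, but it takes a genuinely different route from the paper's. The paper argues via an invariant witness: since $D \neq D'$, there is (WLOG) a chain present in $D$ but absent from $D'$ that does not involve $u$; extensions originating from $u$ can neither remove such a chain from extensions of $D$ nor create it in extensions of $D'$, so the two extension sets cannot intersect. You instead construct an explicit parent-recovery map: given only $\bar D$ and $u$, the domain-membership tests at $u_L$ (in the $h$-register), at $u_R \oplus \bar D_h(u_L)$ (in $k$), and at $u_L \oplus w$ (in $f$) deterministically detect which extensions occurred and strip them, returning the unique canonical base, which forces $D = D'$ if the sets intersect. The three freshness facts you isolate are exactly the right ones: $u_L \notin \dom(D_h)$ follows from $D \heart u$ and the absence of input left collisions, the $k$-position is fresh by the constraint $z \notin u_R \oplus \dom(D_k)$ in the definition of the $h$-extension, and $u_L \oplus w \notin \dom(D_f)$ follows from $u_L \oplus w \in \mathbf L_{u, \supp(D)}$ together with the no-output-left-collision clause; the canonical identifications $\dom(D_h) = \{x_L : x \in \dom(\supp(D))\}$ and $\dom(D_f) = \{y_L : y \in \im(\supp(D))\}$ hold because every entry participates in exactly one chain. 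Your argument proves slightly more than disjointness (unique recoverability of the parent for partial extensions, in the spirit of \Cref{lem:ext-unique}), and it makes explicit the freshness reasoning that the paper's one-line proof leaves to the reader; the paper's version is shorter but implicitly relies on the same allowability exclusions to rule out an extension of $D'$ completing the distinguishing chain.
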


\begin{proof}
    There is WLOG a chain in $D$ but not $D'$ that does not involve $u$, and this chain cannot be removed by any extension, nor can it be added by any extension that does not originate from $u$.
\end{proof}

Now we turn to defining and analyzing the downwards arrow notation. In contrast to the upwards arrow, the downwards arrow notation should be thought of as changing or removing an existing input-output point.

\begin{definition}
    Let $D \in \mathbf C$ and $u \in \dom(D)$. Then we define \begin{align}
        D\da^u_h &\coloneqq \{(D_h[u_L \ra z], D_k, D_f) \, : \, z \in \bit^n \cup \{\bot\} \setminus \{u_R \oplus \dom(D_k)\}\} \\
        D\da^u_k &\coloneqq \{(D_h, D_k[u_R \oplus D_h(u_L) \ra w], D_f) \, : \, w \in \{u_L \oplus \mathbf L_{u, D[u\ra\bot]}\} \cup \{\bot\} \} \\
        D\da^u_f &\coloneqq \{(D_h, D_k, D_f[D(u)_L \ra z]) \, : \, z \in \bit^n \cup \{\bot\}\}
    \end{align}
    \label{def:down-pipe}
\end{definition}

\begin{lemma}
    For any $D \neq D' \in \mathbf C$ and $u \in \dom(D) \cap \dom(D')$, we have the disjointness relations \begin{align}
        D\da^u_h \cap D'\da^u_h &= \emptyset, & D\da^u_k \cap D'\da^u_k &= \emptyset.
    \end{align}
    The identity $D\da^u_f \cap D'\da^u_f = \emptyset$ holds if $D[u \ra \bot] \neq D'[u \ra \bot]$.
    \label{lem:disjoint-downarrows}
\end{lemma}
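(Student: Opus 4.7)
The strategy is to assume $\tilde D$ lies in the intersection, read off from the arrow's definition which two of the three constituent databases must already coincide between $D$ and $D'$, and then use canonicity to force the third to coincide as well---except in the $f$-case, where I contradict the hypothesis $D[u\ra\bot]\neq D'[u\ra\bot]$ directly. The common setup is that each arrow in \Cref{def:down-pipe} fixes two of the three databases and varies the third at a single input, so any $\tilde D\in D\da^u_X\cap D'\da^u_X$ forces the two fixed components to agree and forces the varied component to agree off a single coordinate.

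For the $h$-arrow, the setup gives $D_k=D_k'$, $D_f=D_f'$, and $D_h,D_h'$ agreeing off $u_L$. Suppose $D_h(u_L)=z\neq z'=D_h'(u_L)$. The $u$-chains in $D$ and $D'$ force both $u_R\oplus z$ and $u_R\oplus z'$ to lie in $\dom(D_k)$. By canonicity of $D$, the $k$-entry at $u_R\oplus z'$ participates in some chain of $D$, with $h$-input $u^*_L$ satisfying $u^*_R\oplus D_h(u^*_L)=u_R\oplus z'$. If $u^*_L=u_L$ then $u^*\neq u$ yet shares its left input with $u$, an input left collision in $D$. Otherwise $D_h(u^*_L)=D_h'(u^*_L)$, so the same chain lives in $D'$ and collides with $D'$'s $u$-chain at the $k$-position $u_R\oplus z'$. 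Both cases violate canonicity, so $D_h(u_L)=D_h'(u_L)$ and $D=D'$.

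The $k$-arrow is analogous: we get $D_h=D_h'$, $D_f=D_f'$, and $D_k,D_k'$ agreeing off the common position $m:=u_R\oplus D_h(u_L)$. If $D_k(m)=w\neq w'=D_k'(m)$, then the $u$-chain endpoints $v_L=u_L\oplus w$ and $v_L'=u_L\oplus w'$ differ but both lie in $\dom(D_f)$. Canonicity of $D$ supplies a chain of $D$ ending at $v_L'$ whose $k$-position must differ from $m$ (else two chains in $D$ share $m$); since $D_h=D_h'$ and $D_k,D_k'$ agree off $m$, the same chain sits in $D'$, yielding two chains in $D'$ ending at $v_L'$---an output left collision. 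Hence $D_k(m)=D_k'(m)$ and $D=D'$.

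For the $f$-arrow, the setup gives $D_h=D_h'$ and $D_k=D_k'$, which forces $D(u)_L=D'(u)_L=:l^*$, and thus $D_f,D_f'$ agree off $l^*$. Removing the $u$-chain erases exactly the entries at $u_L,m,l^*$, and the remaining entries coincide, so $D[u\ra\bot]=D'[u\ra\bot]$, contradicting the hypothesis. I expect the main obstacle to be the canonicity case analysis in the $h$- and $k$-arrows: the essential trick each time is that canonicity forces every database entry to belong to some chain, so the ``extra'' entry in the supposedly differing database cannot sit idle---it conjures a chain that, because the other two databases agree, must persist as a second colliding chain in the other database.
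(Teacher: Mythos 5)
Your proof is correct, and it takes a somewhat different route from the paper's. The paper argues directly that arbitrary elements of $D\da^u_h$ (resp.\ $D\da^u_k$, $D\da^u_f$) and the corresponding set for $D'$ remain distinct, case-splitting on how $D$ and $D'$ differ (a chain away from $u$ present in one but not the other, versus $D(u)\neq D'(u)$) and, in the harder sub-cases, recovering the endpoint $D(u)$ from the orphaned pairs left behind (or from the uniquely re-formed chain) after the reassignment. You instead assume a common element, read off that the two untouched components must coincide and the touched one agrees off a single input, and rule out disagreement at that input by producing a chain collision: the extra $k$-entry (resp.\ $f$-entry) forced into $\dom(D_k)$ (resp.\ $\dom(D_f)$) must, by canonicity, lie on a chain of $D$, and since the other components agree that chain persists in $D'$ and collides with $D'$'s $u$-chain at the $m$-position (resp.\ at $v_L$). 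This buys a uniform argument with no sub-cases on $\bot$ reassignments, and it transparently covers the configuration where $D\neq D'$ share all chains away from $u$ and satisfy $D(u)=D'(u)$ but differ in internal routing (there your forced equalities $D_k=D_k'$, $D_f=D_f'$ are already impossible), a case the paper's stated dichotomy does not explicitly enumerate; the paper's approach, in exchange, makes explicit the useful picture of orphaned pairs encoding the removed endpoint, which recurs elsewhere in the soundness analysis. Your $f$-case matches the paper's. One small terminological correction: ``input/output left collision'' is vocabulary for injective databases (\Cref{def:non-allowable-dbs}); what you actually exhibit is two distinct chains sharing $u_L$, $m$, or $v_L$, i.e.\ a chain collision in the sense of \Cref{def:chain-col}, which is exactly what canonicity forbids---the substance of your argument is unaffected.
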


\begin{proof}
    Let us begin with the first equation. Observe that such a pair $D \neq D'$ has either disjoint chains not involving $u$, or $D(u) \neq D'(u)$. Let us begin with the first case. In this case, changing the image of $u_L$ in $D_h$ as described (or making it undefined) will modify the chain from $u$ to $D(u)$ and/or $u$ to $D'(u)$, changing the last $n$ bits of the preimage. However, it will not create any further chains, as $D_h(u_L)$ is chosen such that it does not connect to any other pair in $D_k$. Hence, the disjoint chain not involving $u$ will remain, and the databases will be distinct.

    In the other case, consider reassignments of the form $D_h[u_L \ra z]$ and $D_h'[u_L \ra z']$. In such a case, there may be no chain from $u$ to either $D(u)$ or $D'(u)$ in the redefined $D$ or $D'$ respectively. If one of $z$ or $z'$ is $\bot$, observe that there are exactly two input-output pairs, $(r, w) \in D_k$ and $(u_L \oplus w, z_f) \in D_f$ which do not participate in a chain. We have $u_L \oplus w \Vert r \oplus z_f = D(u)$. Similarly, under the redefinition $D_h'[u_L \ra z]$ there are exactly two input-output pairs, $(r', w') \in D_k'$ and $(u_L \oplus w', z_f') \in D_f'$ which do not participate in a chain. We have $u_L \oplus w' \Vert r' \oplus z_f' = D'(u) \neq D(u)$. Therefore, these databases must be distinct. Similarly, if neither $z$ nor $z'$ is $\bot$, there will remain some chain to $D(u)$ in the first set and to $D'(u) \neq D(u)$ in the second.

    Now let us turn to the middle statement. Similar to before, the claim is easily handled if there is disjoint chains in $D, D'$ not involving $u$, or if the reassignment is to $\bot$ in either case. In the other case, we perform the non-trivial reassignment $D_k[u_R \oplus D_h(u_L) \ra w'']$ and $D_k'[u_R \oplus D_h'(u_L) \ra w''']$, and have $D(u) \neq D'(u)$. In this case, there are exactly two input-output pairs, $(u_L, z) \in D_h$ and $(u_L \oplus w, z_f) \in D_f$ which do not participate in a chain for the former. Observe that these pairs would uniquely be completed to a chain under undoing the reassignment, i.e. $D_k[u_R\oplus z \ra w] = D_k$. Under this unique completion, we would naturally obtain endpoint $D(u)$ for the chain beginning at $u$. Similarly, there is a unique completion of the modified $D'$, whose endpoint would be $D'(u)$. These unique completions lead to distinct chains, and so the databases cannot be equal.

    The final case follows similarly to the previous two: by the assumption that $D[u \ra \bot] \neq D'[u \ra \bot]$ there are distinct chains not involving $u$ in $D$ and $D'$, which will not be modified by the described reassignment in $D_f$. It is worth observing that if $D(u)_L = D'(u)_L$, on the other hand, then the claim fails to hold.
\end{proof}

With these in place, orthogonality of the spaces spanned by these sets follows. In the following lemma, for some $S \subset \mathbf D$ we take $\algo H(S) \leq \algo H(\mathbf D)$ as a subspace.

\begin{corollary}[of \Cref{lem:disjoint-uparrows,lem:disjoint-downarrows}]
    For any $D \neq D' \in \mathbf C$ and $u \in [N]$ which is allowed by $D$ and $D'$ but supported by neither, we have \begin{align}
        \algo H(D\ua^u_h) &\perp \algo H(D'\ua^u_h), & \algo H(D\ua^u_{h, k}) &\perp \algo H(D'\ua^u_{h, k}), & \algo H(D\ua^u_{h, k, f}) &\perp \algo H(D'\ua^u_{h, k, f}).
    \end{align}
    Further, for any $D \neq D' \in \mathbf C$ and $u \in \dom(D) \cap \dom(D')$, we have \begin{align}
        \algo H(D\da^u_h) &\perp \algo H(D'\da^u_h), & \algo H(D\da^u_k) &\perp \algo H(D'\da^u_k),
    \end{align}
    and $\algo H(D\da^u_f) \perp \algo H(D'\da^u_f)$ when $D[u \ra \bot] \neq D'[u \ra \bot]$.
    \label{cor:ortho-arrows}
\end{corollary}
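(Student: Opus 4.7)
The plan is to observe that this corollary follows almost immediately from the two preceding disjointness lemmas combined with the basic fact that $\algo H(\mathbf D)$ has a distinguished orthonormal computational basis $\{\ket{D} \, : \, D \in \mathbf D\}$. Concretely, for any two subsets $S, T \subset \mathbf D$, the subspaces $\algo H(S), \algo H(T) \leq \algo H(\mathbf D)$ spanned by their respective computational basis elements are orthogonal whenever $S \cap T = \emptyset$, since every vector in $\algo H(S)$ is supported on basis states indexed by $S$ and every vector in $\algo H(T)$ is supported on basis states indexed by $T$, and these basis states are mutually orthogonal.

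Given this, each of the six orthogonality statements reduces directly to the disjointness of the corresponding sets. For the three ``up-arrow'' statements, I would apply \Cref{lem:disjoint-uparrows} under the stated hypothesis that $D \neq D' \in \mathbf C$ and $u \in [N]$ is allowed by but not supported by either $D$ or $D'$; this gives $D\ua^u_h \cap D'\ua^u_h = \emptyset$, $D\ua^u_{h,k} \cap D'\ua^u_{h,k} = \emptyset$, and $D\ua^u_{h,k,f} \cap D'\ua^u_{h,k,f} = \emptyset$, from which the three orthogonality claims follow by the general principle above. For the ``down-arrow'' statements with $D\da^u_h$ and $D\da^u_k$, I would invoke \Cref{lem:disjoint-downarrows} under the hypothesis $u \in \dom(D) \cap \dom(D')$. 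For $D\da^u_f$, the extra hypothesis $D[u\ra \bot] \neq D'[u\ra \bot]$ is required exactly because \Cref{lem:disjoint-downarrows} only guarantees $D\da^u_f \cap D'\da^u_f = \emptyset$ under that condition; I would state this restriction explicitly to match the lemma.

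There is essentially no obstacle here: the content of the corollary is entirely captured by the two lemmas, and the passage from disjointness of basis-indexing sets to orthogonality of spans is a one-line observation. The only thing to be careful about is to not overstate the $D\da^u_f$ case, since without the hypothesis $D[u\ra \bot] \neq D'[u\ra \bot]$ it can happen that $D(u)_L = D'(u)_L$ and then reassignments on the left half of the $D_f$ output coincide, so the sets need not be disjoint. The full proof is therefore just: the sets on the two sides are disjoint by \Cref{lem:disjoint-uparrows} and \Cref{lem:disjoint-downarrows} respectively, and disjoint sets of computational basis labels span orthogonal subspaces, so the result follows.
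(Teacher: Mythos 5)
Your proposal is correct and matches the paper's (implicit) argument exactly: the paper states this as a corollary of \Cref{lem:disjoint-uparrows,lem:disjoint-downarrows} with no further proof, relying precisely on the observation that disjoint sets of computational basis labels span orthogonal subspaces. Your care with the $D\da^u_f$ case and the hypothesis $D[u\ra\bot]\neq D'[u\ra\bot]$ is also consistent with the paper's statement.
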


We can now introduce the notation of assignment for database triples. For $u \not\in \dom(D)$ and $v \not\in \im(D)$ with $(u, v)$ allowable, we define $D[u \ra v]$ as the subset of $\mathbf E^3(D)$ which contains a chain from $u$ to $v$. In other words, \begin{align}
    D[u \ra v] &\coloneqq \{D' \, : \, D'(u)=v, D'[u \ra \bot] = D\}.
\end{align}

\begin{corollary}[of \Cref{lem:ext-unique}]
    If $D \neq D'$ and both allow $(u, v)$, then $D[u\ra v] \cap D'[u \ra v] = \emptyset$.
    \label{cor:specific-ext-unique}
\end{corollary}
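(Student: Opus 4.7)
The plan is to derive this directly as a logical corollary of \Cref{lem:ext-unique}, with essentially no additional work beyond unfolding the definition of $D[u \ra v]$.

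First I would recall the definition: by construction $D[u \ra v] = \{D'' \, : \, D''(u) = v, \, D''[u \ra \bot] = D\}$, and by \Cref{def:three-extension} together with the second half of \Cref{lem:ext-unique} any such $D''$ is canonical (and of size one greater than $D$, with $u \in \dom(D'')$). So membership in $D[u \ra v]$ pins down the ``removal of the chain from $u$'' to be exactly $D$.

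Next I would argue by contradiction. Suppose some $D''$ lies in both $D[u \ra v]$ and $D'[u \ra v]$. Then the defining condition applied to each side gives $D''[u \ra \bot] = D$ and $D''[u \ra \bot] = D'$ simultaneously. Since $D''$ is canonical with $u \in \dom(D'')$, \Cref{lem:ext-unique} asserts that the canonical predecessor $D_0 \in \mathbf C$ with $u \notin \dom(D_0)$ and $D'' \in \mathbf E^3(D_0)$ is unique. Both $D$ and $D'$ are such predecessors, hence $D = D'$, contradicting the hypothesis. Therefore the intersection is empty.

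The only potential obstacle I see is making sure $D''[u \ra \bot]$ really is the unique canonical predecessor referenced in \Cref{lem:ext-unique}, but this is immediate from the proof of that lemma (the unique predecessor is obtained by removing the chain through $u$, which is precisely the operation $D'' \mapsto D''[u \ra \bot]$). No combinatorial case analysis, norm bound, or extension bookkeeping is needed, so the proof should fit in a few lines.
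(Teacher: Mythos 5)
Your proof is correct and follows the same route the paper intends: membership in $D[u\ra v]$ forces $D''[u\ra\bot]=D$ (and likewise $=D'$), so uniqueness of the canonical predecessor from \Cref{lem:ext-unique} (or indeed just the definitional condition itself) gives $D=D'$, a contradiction. The paper gives no separate proof precisely because this is the immediate consequence you describe.
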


These extension sets will allow us to define a modified version of masked Feistel compression, which is close in operator norm to the original but easier to analyze.

\subsection{Canonical compression}
\label{sec:canon-comp}
In this section, we describe an alternative to the $U_u$ compression operator, which we will label by $G_u$ and refer to as \emph{canonical compression}. This operator is constructed such that it maintains canonicity of the database, as long as the input $u$ is allowed. In order to define it, we will require notation for the set of ouput values which we can insert without resulting in an internal collision. Let us fix some databases $D= (D_h,D_k, D_f)$ with $D(u)=\bot$, and write the sets \begin{align}
    H(u, D) &\coloneqq \{z \, : u_R \oplus z \not\in \dom(D_k)\} \\
    K(u, D) &\coloneqq \{w \, : u_L \oplus w \in \mathbf L_{u, \supp(D)}\},
\end{align}
where $\mathbf L_{u, \supp(D)}$ is as defined in \Cref{sec:combin-Feistel}. When we use $\mathbf L_{u, \supp(D')}$ for $D' \in \mathbf E(D)$, this is equivalent to $\mathbf L_{u, \supp(D)}$.
We can define compression operators which swap the uniform superposition over values in this set with the undefined label, rather than the full superposition. To formalize this, fix some $D_h \in \mathbf D|^{u_L}$ and some $D_k \in \mathbf D$, to define \begin{align}
    \ket{+_{u_L, H(u, D_k)}} &\coloneqq \frac{1}{\sqrt{|H(u, D_k)|}} \sum_{z \in H(u, D_k)} \ket{D_h[u_L \ra z]} \\
    G^{(H)}_{u, D_h, D_k, \reg{H}} &\coloneqq \big(\Id - \outerprod{+_{u_L, H(u, D_k)}}{+_{u_L, H(u, D_k)}} - \outerprod{D_h}{D_h} + \\&\qquad  \outerprod{+_{u_L, H(u, D_k)}}{D_h} + \outerprod{D_h}{+_{u_L, H(u, D_k)}}\big)_{\reg H} \nonumber\\
    G^{(H)}_{u, \reg{HK}} \ket{D_k}_{\reg K} &\coloneqq \bigoplus_{D_h \in \mathbf D|^{u_L}, D_k \in \mathbf D} \outerprod{D_k}{D_k}_{\reg K} \otimes G^{(H)}_{u, D_h, D_k, \reg{H}}.
\end{align}
We similarly define a canonical compression operator for the $K$ register, first by fixing some $D=(D_h, D_k, D_f) \in \mathbf E(\mathbf C)$ with $D_k(u_R \oplus D_h(u_L)) = \bot$; we use $m=u_R \oplus D_h(u_L)$ going forward, and defining \begin{align}
    \ket{+_{z, K(u, D)}} &\coloneqq \frac{1}{\sqrt{|K(u, D)|}} \sum_{w \in K(u, D)} \ket{D_h[z \ra w]} \\
    G^{(K)}_{u, D, \reg{K}} &\coloneqq \begin{cases}
        \big(\Id - \proj{+_{m, K(u, D)}} - \outerprod{D_k}{D_k} + \\  \outerprod{+_{m, K(u, D)}}{D_k} + \outerprod{D_k}{+_{m, K(u, D)}}\big)_{\reg K} & \text{(If $D_h(u_L) \neq \bot$)} \\
        \Id_{\reg K} & \text{(Otherwise)}
    \end{cases}\\
    G^{(K)}_{u, \reg{HKF}} \ket{D_k}_{\reg K} &\coloneqq \Id \oplus \bigoplus_{\substack{D \in \mathbf E(\mathbf C) \\ \text{ s.t. } D_k(u_R \oplus D_h(u_L)) = \bot}} \outerprod{D_h, D_f}{D_h, D_f}_{\reg{HF}} \otimes G^{(K)}_{u, D_h, D_f, \reg{K}}.
\end{align}
Note that this is a unitary operator, as $K(u, D)$ does not depend on $D_k(m)$, which cannot participate in a chain as it is $\bot$.
We take $G^{(F)}_{u, \reg{HKF}} = U^{(F)}_{u, \reg{HKF}}$, and define the full canonical compression operator \begin{align}
    G_{u, \reg{HKF}} &= G^{(H)}_{u, \reg{HK}} G^{(K)}_{u, \reg{HKF}} G^{(F)}_{u, \reg{HKF}}.
\end{align}

\begin{remark}
    The canonical compression operators are exponentially close to the standard compression operators after a bounded number of queries. In particular, we have \begin{align}
        \norm{G^{(H)}_{u, \reg{HK}} - U^{(H)}_{u, \reg{H}}}_{\leq t} &= O\left(\sqrt{\frac{t}{2^n}}\right), & \norm{G^{(K)}_{u, \reg{HKF}} - U^{(K)}_{u, \reg{HK}}}_{\leq t} &= O\left(\sqrt{\frac{t^2}{2^n}}\right), \\
        \norm{G^{(F)}_{u, \reg{HKF}} - U^{(F)}_{u, \reg{HKF}}}_{\leq t} &= 0, & \norm{G_{u, \reg{HKF}} - U_{u, \reg{HKF}}}_{\leq t} &= O\left(\sqrt{\frac{t^2}{2^n}}\right).
    \end{align}
    \label{rem:canonical-comp-close}
\end{remark}

\begin{proof}
    The two equations on the top line follow from \Cref{lem:comp-no-bad}, and the bottom left by definition. Then, the bottom right follows from the other three equations by a simple sequence of triangle inequalities.
\end{proof}

\paragraph{Properties.} We can work out the action of $G_u$ and its components in some examples. These facts will be useful in proving soundness of our compressed permutation oracle.

\begin{lemma}
    Let $A \in \mathbf A_t$ be an allowable, size $t$ injective database and let $u \in [N] \setminus \dom(A)$ be an input which $A$ allows. Then we have \begin{align}
        G_{u, \reg{HKF}} \ket{+_{\mathbf D(A)}}_{\reg{HKF}} &= \frac{1}{\sqrt{|\mathbf V_{u, A}|}} \sum_{v \in \mathbf V_{u, A}} \ket{+_{\mathbf D(A[u\ra v])}}_{\reg{HKF}}.
    \end{align}
    \label{lem:canonical-decomp}
\end{lemma}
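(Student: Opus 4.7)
The plan is to apply the three factors of $G_u = G_u^{(H)} G_u^{(K)} G_u^{(F)}$ in sequence to the initial superposition $\ket{+_{\mathbf D(A)}}$, tracking the state at each stage. Expand
\begin{align*}
\ket{+_{\mathbf D(A)}} = \tfrac{1}{\sqrt{|\mathbf D(A)|}} \sum_{D \in \mathbf D(A)} \ket{D_h, D_k, D_f}.
\end{align*}
For any $D \in \mathbf D(A)$, since $A \in \mathbf A_t$ has no input left collisions and $A \heart u$, the input $u_L$ cannot appear in $\dom(D_h) = \{x_L : x \in \dom(A)\}$. Hence $G_u^{(H)}$ implements a clean swap $\ket{D_h} \leftrightarrow \ket{+_{u_L, H(u, D_k)}}$, where $|H(u, D_k)| = 2^n - t$.

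Next, for each resulting one-extension, the choice $z \in H(u, D_k)$ ensures $m := u_R \oplus z \notin \dom(D_k)$, so $G_u^{(K)}$ swaps $\ket{D_k}$ with $\ket{+_{m, K(u, D')}}$; crucially $K(u, D')$ depends only on $A$ through $\mathbf L_{u, A}$, so its size is uniformly $|\mathbf L_{u, A}|$. For $G_u^{(F)} = U_u^{(F)}$ to act as the clean swap $\fc_{u_L \oplus w}$, I need $u_L \oplus w \notin \dom(D_f) = \im(A)_L$. This follows because $w \in K(u, D)$ places $u_L \oplus w \in \mathbf L_{u, A}$, and $\mathbf L_{u, A}$ is disjoint from $\im(A)_L$ by the no-output-left-collision clause of allowability (otherwise $A[u \to (u_L \oplus w)\Vert r]$ would fail to be allowable). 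So $G_u^{(F)}$ produces the uniform superposition over $\ket{D_f[u_L \oplus w \to z']}$ for $z' \in \bit^n$.

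Putting the three swaps together, the state is a uniformly-weighted sum with normalization $(|\mathbf D(A)|(2^n - t)|\mathbf L_{u, A}|\,2^n)^{-1/2}$ over tuples $(D, z, w, z')$, each mapped to the three-extension $D'(D,z,w,z') := (D_h[u_L \to z], D_k[m \to w], D_f[u_L \oplus w \to z'])$. By \Cref{lem:ext-unique} each such $D'$ is canonical, and by construction it supports $A[u\to v]$ for $v := (u_L \oplus w) \Vert (u_R \oplus z \oplus z')$. Conversely, any $D' \in \mathbf D(A[u\to v])$ with $v \in \mathbf V_{u, A}$ determines the tuple uniquely via $D = D'[u \to \bot]$, $z = D'_h(u_L)$, $w = D'_k(u_R \oplus z)$, $z' = D'_f(u_L \oplus w)$. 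Hence the map $(D, z, w, z') \mapsto D'$ is a bijection onto $\bigsqcup_{v \in \mathbf V_{u, A}} \mathbf D(A[u \to v])$.

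Regrouping the sum by $v \in \mathbf V_{u, A}$ and using \Cref{lem:counting-allowables} (which tells us $|\mathbf D(A[u\to v])|$ depends only on the size $t+1$ of $A[u\to v]$, so the bijection forces the normalization constants to match up through $|\mathbf V_{u,A}| = 2^n |\mathbf L_{u,A}|$) produces exactly $\tfrac{1}{\sqrt{|\mathbf V_{u, A}|}} \sum_v \ket{+_{\mathbf D(A[u\to v])}}$. The main obstacle is verifying, at each of the three stages, that the allowability hypothesis and the chosen index sets $H(u,D_k)$ and $K(u,D)$ guarantee the compression is a clean swap (i.e.\ the relevant database entry is $\bot$ and has no colliding extension), so that each step can be written exactly as a uniform superposition, and then matching the normalization via the counting lemma and the bijection.
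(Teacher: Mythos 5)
Your proposal is correct and follows essentially the same route as the paper's proof: apply the $H$, $K$, $F$ compression steps in sequence to each canonical $\ket{D}$ (each a clean swap since $u_L \notin \dom(D_h)$, $u_R \oplus z \notin \dom(D_k)$, and $u_L \oplus w \notin \dom(D_f)$ by allowability of $A[u\ra v]$), obtain the uniform superposition over three-extensions, and regroup using the unique-restriction property of \Cref{lem:ext-unique}. Your explicit bijection and normalization bookkeeping (which only needs that $|\mathbf D(A[u\ra v])|$ is independent of $v$ and that $|\mathbf V_{u,A}| = 2^n|\mathbf L_{u,A}|$) is just a more detailed rendering of the paper's ``agrees with one and only one canonical $D$'' step, and like the paper you treat the product $G^{(H)}_u G^{(K)}_u G^{(F)}_u$ as acting $H$-first, so the two arguments coincide.
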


\begin{proof}
    Let us fix some $D=(D_h, D_k, D_f)$ which is canonical with $\supp(D)=A$, and compute the action on such a state. We have \begin{align}
        G_{u, \reg{HKF}} \ket{D}_{\reg{HKF}} &= \frac{1}{\sqrt{(2^n-t) \cdot |\mathbf L_{u, \supp(D)}| \cdot 2^n}} \sum_{\substack{
        z_1 \in \bit^n \setminus \{\dom(D_k) \oplus u_R\}, \\
        w \in \{\mathbf L_{u, \supp(D)} \oplus u_L\}, \\
        z_2 \in \bit^n
        }} \ket{D_h[u_L \ra z_1]}_{\reg H} \otimes \\&\quad \ket{D_k[u_R\oplus z_1 \ra w]}_{\reg K} \ket{D_f[u_L\oplus w \ra z_2]}_{\reg F}.
    \end{align}
    This state can be seen to be the uniform superposition of canonical databases which support $A[u \ra v]$ for some $v$, and agree with $D$\footnote{Here we say database triples $D, D'$ agree if for every $w \in \dom(D_h) \cap \dom(D_h')$, we have $D_h(w)=D_h'(w)$, and similarly for $D_f$ and $D_k$.}, i.e. the set $D\ua_{h,k,f}^u$. Now observe that every canonical $D'$ supporting $A[u \ra v]$ for some $v$ will agree with one, and only one, canonical $D$ supporting $I$, from \Cref{lem:ext-unique}. Hence, when we extend the above calculation to the uniform superposition of all canonical $D$ supporting $I$, we find \begin{align}
        G_{u, \reg{HKF}} \ket{+_{\mathbf D(A)}}_{\reg{HKF}} &= \frac{1}{\sqrt{|\mathbf V_{u, A}|}} \sum_{v \in \mathbf V_{u, A}} \ket{+_{\mathbf D(A[u\ra v])}}_{\reg{HKF}}.
    \end{align}
\end{proof}

\begin{lemma}
    Fix some $A \in \mathbf A_t$ and $u \in [N] \setminus \dom(A)$ which $A$ allows, to write the state \begin{align}
        \ket{\psi}_{\reg{HKF}} &\coloneqq \sum_{v \in \mathbf V_{u, A}} \alpha_v \ket{+_{\mathbf D(A[u\ra v])}}_{\reg{HKF}}.
    \end{align}
    Then we have \begin{align}
        \norm{\ket{\psi}_{\reg{HKF}} - G_{u, \reg{HK}}^{(H)}\ket{\psi}_{\reg{HKF}}} &= O\left(\frac{1}{\sqrt{2^n-t}}\right).
    \end{align}
    \label{lem:canonical-H-close}
\end{lemma}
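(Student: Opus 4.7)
The plan is to prove the stronger exact equality $G^{(H)}_{u, \reg{HK}} \ket{\psi}_{\reg{HKF}} = \ket{\psi}_{\reg{HKF}}$, from which the claimed $O(1/\sqrt{2^n-t})$ bound follows trivially. The argument is basis-state-by-basis-state, relying on the fact that canonical databases already encode the ``decompression'' information that $G^{(H)}$ would otherwise supply.

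Every basis state $\ket{D_h, D_k, D_f}$ in the support of $\ket{\psi}$ is a canonical database in $\mathbf D(A[u\ra v])$ for some $v \in \mathbf V_{u, A}$. Because $u \in \dom(A[u\ra v])$, canonicity forces a full chain: entries $(u_L, z) \in D_h$, $(u_R \oplus z, u_L \oplus v_L) \in D_k$, and $(v_L, \cdot) \in D_f$, with $z \coloneqq D_h(u_L) \neq \bot$. The middle link then gives $u_R \oplus z \in \dom(D_k)$, i.e., $z \notin H(u, D_k)$ by the definition of that set.

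Setting $D_h^{(0)} \coloneqq D_h[u_L \ra \bot] \in \mathbf D|^{u_L}$, the basis state $\ket{D_h, D_k, D_f}$ lies in the direct-sum summand of $G^{(H)}_{u, \reg{HK}}$ indexed by $(D_h^{(0)}, D_k)$. On this summand the operator is the two-dimensional swap of $\ket{D_h^{(0)}}$ with $\ket{+_{u_L, H(u, D_k)}}$, and identity on the orthogonal complement. The vector $\ket{D_h^{(0)}[u_L \ra z]}$ has zero overlap with $\ket{D_h^{(0)}}$ (since $z \neq \bot$) and zero overlap with $\ket{+_{u_L, H(u, D_k)}}$ (since $z \notin H(u, D_k)$), so it is fixed by $G^{(H)}_{u, D_h^{(0)}, D_k, \reg H}$. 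Summing over the support of $\ket{\psi}$ then gives $G^{(H)} \ket{\psi} = \ket{\psi}$, hence a norm difference of exactly zero, which is a fortiori $O(1/\sqrt{2^n-t})$. The only nontrivial step is the canonicity deduction $u_R \oplus z \in \dom(D_k)$, which is immediate from the requirement that every entry of each database participate in a chain; the slack in the stated bound presumably matches the form of companion lemmas for $G^{(K)}$ and $G^{(F)}$, where analogous inclusions fail and genuine error terms arise.
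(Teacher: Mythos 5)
Your proof is correct, and it actually establishes something stronger than the lemma: exact invariance. The key point you exploit — and which the paper's own proof does not — is that $G^{(H)}_{u,\reg{HK}}$ is \emph{controlled on the current contents of} $\reg K$, and in every basis state of $\ket{\psi}$ the chain from $u$ to $v$ has already placed $u_R \oplus z$ in $\dom(D_k)$, so the swap set $H(u, D_k)$ excludes precisely the occupied value $z$. Hence each basis state is orthogonal to both $\ket{D_h[u_L\to\bot]}$ and $\ket{+_{u_L, H(u,D_k)}}$ in its block and is fixed exactly, giving error $0$. The paper instead decomposes $\ket{\psi}$ into components $\ket{\psi_D}$ indexed by $D \in \mathbf D(A)$ (using \Cref{lem:ext-unique} and the disjointness machinery of \Cref{cor:specific-ext-unique,lem:disjoint-downarrows} to get orthogonality), and then invokes the generic ``recorded value'' argument of \Cref{lem:recorded-no-comp}, treating the $\reg{KF}$ registers as an external record of $z$; since that lemma ignores the $\reg K$-control of the compression operator, it only yields the stated $O(1/\sqrt{2^n-t})$ rather than zero. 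Your route is shorter, avoids the extension machinery entirely, and explains why the slack in the statement is an artifact of matching the companion lemmas: your closing guess is right that the exactness is special to $G^{(H)}$ — for $G^{(K)}$ the chain value $w=u_L\oplus v_L$ \emph{does} lie in $K(u,D)$ (since $v_L \in \mathbf L_{u,A}$), and $G^{(F)}$ uses the unrestricted $\fc$, so genuine error terms arise there. One trivial attribution quibble: the inclusion $u_R \oplus z \in \dom(D_k)$ follows directly from the existence of the chain from $u$ to $v$ (the definition of supporting $(u,v)$ in \Cref{def:chain}), not from the minimality clause of canonicity; this does not affect your argument.
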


\begin{proof}
    Let us consider some $D \in \mathbf D(A)$, and take $|A|=t$. We can write the component of $\ket{\psi}$ which is an extension of $D$ as \begin{align}
            \ket{\psi_D}_{\reg{HKF}} &=\Pi^{\algo H(\mathbf E^3(D))}_{\reg{HKF}} \ket{\psi}_{\reg{HKF}} \\
            &= \frac{1}{\sqrt{|\mathbf D(A)| \cdot (2^n-t)}} \sum_{z \in \bit^n \setminus \{u_R \oplus \dom(D_k)\}} \ket{D_h[u_L \ra z]}_{\reg H} \otimes \\
            &\quad \sum_{v \in \mathbf V_{u, A}} \alpha_v \ket{D_k[u_R \oplus z \ra u_L \oplus v_L], D_f[v_L \ra u_R \oplus z \oplus v_R]}_{\reg{KF}}. \label{eqn:psid-second-line}
        \end{align}
    Further, recall from \Cref{lem:ext-unique} we can write \begin{align}
        \ket{\psi}_{\reg{HKF}} &= \sum_{D \in \mathbf D(A)} \ket{\psi_D}_{\reg{HKF}},
    \end{align}
    so by linearity it suffices to analyze each component separately. Let us define \begin{align}
        \algo H_D = \bigoplus_{v \in \mathbf V_{u, A}} \algo H(D[u \ra v] \da^u_h),
    \end{align}
    where we have $\algo H_D \perp \algo H_{D'}$ for $D \neq D' \in \mathbf D(A)$ from \Cref{cor:specific-ext-unique,lem:disjoint-downarrows}. Observe that $\ket{\psi_D} \in \algo H_D$, and furthermore the image under $G^{(H)}_{u, \reg{HK}}$ will remain in this space. Therefore, we can write \begin{align}
        \norm{\ket{\psi}_{\reg{HKF}} - G_{u, \reg{HK}}^{(H)} \ket{\psi}_{\reg{HKF}}} &= \sqrt{\sum_{D \in \mathbf D(A)} \norm{\ket{\psi_D}_{\reg{HKF}} - G^{(H)}_{u, \reg{HK}} \ket{\psi_D}_{\reg{HKF}}}^2}.
    \end{align}
    Now, we have $\innerprod{\psi_D}{\psi_D} = \frac{1}{|\mathbf D(A)|}$. Furthermore, observe that the $\reg{KF}$ registers, i.e. \Cref{eqn:psid-second-line}, will have a single set of input-output registers which participate in the chain from $u$ to $v$. If $D_k[u_L \ra z']$ is re-assigned to $z'\neq z$, then this chain will not begin at $u$, rather at $u_L \Vert (u_R \oplus z \oplus z')$. Hence, there is a record of $z$ external to $D_k(u_L)$. We can therefore apply \Cref{lem:recorded-no-comp}, obtaining \begin{align}
        \norm{\ket{\psi_D}_{\reg{HKF}} - G^{(H)}_{u, \reg{HK}} \ket{\psi_D}_{\reg{HKF}}}^2 &= O\left(\frac{1}{|\mathbf D(A)| \cdot (2^n-t)}\right), \\
        \norm{\ket{\psi}_{\reg{HKF}} - G^{(H)}_{u, \reg{HK}} \ket{\psi}_{\reg{HKF}}} &= O\left(\frac{1}{\sqrt{2^n-t}}\right).
    \end{align}
\end{proof}

\begin{lemma}
    Fix some $A \in \mathbf A_t$ and $u \in [N] \setminus \dom(A)$ which $A$ allows, to write the state \begin{align}
        \ket{\psi}_{\reg{HKF}} &\coloneqq \sum_{v \in \mathbf V_{u, A}} \alpha_v \ket{+_{\mathbf D(A[u\ra v])}}_{\reg{HKF}}.
    \end{align}
    Then we have \begin{align}
        \norm{\ket{\psi}_{\reg{HKF}} - G_{u, \reg{HKF}}^{(K)} \ket{\psi}_{\reg{HKF}}} &= O\left(\frac{1}{\sqrt{2^n-t}}\right).
    \end{align}
    \label{lem:canonical-K-close}
\end{lemma}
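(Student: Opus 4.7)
The plan is to mirror almost exactly the proof of \Cref{lem:canonical-H-close}, swapping the roles of the $H$-compression and $K$-compression analyses. First, I would decompose $\ket{\psi}_{\reg{HKF}} = \sum_{D \in \mathbf D(A)} \ket{\psi_D}_{\reg{HKF}}$, where $\ket{\psi_D}$ is the projection of $\ket{\psi}$ onto $\algo H(\mathbf E^3(D))$. By \Cref{lem:ext-unique} this is a valid orthogonal decomposition, and by \Cref{cor:specific-ext-unique} together with \Cref{lem:disjoint-downarrows} applied to $\da^u_k$, each $\ket{\psi_D}$ lies in the subspace $\bigoplus_{v \in \mathbf V_{u,A}} \algo H(D[u \ra v]\da^u_k)$. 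These subspaces are pairwise orthogonal across $D \in \mathbf D(A)$ and are preserved by $G^{(K)}_{u, \reg{HKF}}$. Hence it suffices to bound $\norm{(\Id - G^{(K)}_{u, \reg{HKF}}) \ket{\psi_D}}$ for each $D$ and recombine in quadrature.

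For each fixed $D \in \mathbf D(A)$, the component $\ket{\psi_D}$ admits the same explicit form already computed in the previous lemma: a uniform superposition over $z \in \bit^n \setminus \{u_R \oplus \dom(D_k)\}$ in $\reg H$, tensored for each $z$ with $\sum_{v \in \mathbf V_{u,A}} \alpha_v \ket{D_k[u_R \oplus z \ra u_L \oplus v_L]}_{\reg K} \ket{D_f[v_L \ra u_R \oplus z \oplus v_R]}_{\reg F}$. The canonical compression $G^{(K)}_{u, \reg{HKF}}$ acts at position $m = u_R \oplus z$ of $D_k$, controlled on $\reg{HF}$. The key observation is that the value stored at $D_k(m)$, namely $u_L \oplus v_L$, is redundantly recorded in $\reg F$: the domain of $D_f$ contains an entry at the point $v_L$ itself, so distinct values of $D_k(m)$ correspond to distinct (and hence orthogonal) $\reg F$-contents. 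Grouping terms in $\ket{\psi_D}$ by their common value of $v_L$ (summing over the free parameter $v_R$ within each group) puts $\ket{\psi_D}$ into precisely the form required by \Cref{lem:recorded-no-comp}, with the ``external register'' being $\reg F$ together with $\reg H$ and the remaining positions of $\reg K$.

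Applying \Cref{lem:recorded-no-comp} to each $(D, z)$-slice and using $|K(u, D)| \geq 2^n - 2t^2 - 2t$ from \Cref{lem:allow-prefixes} (so that $|K(u, D)|$ is at least $2^n - O(t^2)$), I would obtain the per-component bound $\norm{(\Id - G^{(K)}_{u, \reg{HKF}}) \ket{\psi_D}}^2 = O(1/(|\mathbf D(A)| \cdot (2^n - t)))$. Summing in quadrature over $D \in \mathbf D(A)$ via orthogonality then yields the claimed bound $O(1/\sqrt{2^n - t})$.

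The principal technical point that requires care is verifying that $G^{(K)}_{u, \reg{HKF}}$ really decomposes into the controlled direct-sum form required by \Cref{lem:dirsum-norm} and \Cref{lem:recorded-no-comp}: because $G^{(K)}$ is controlled on $\reg{HF}$ and the ``bad set'' $K(u, D')$ defining the compression depends only on the canonical part $\supp(D'[u \ra \bot])$ (not on the particular extending chain), the controlled block structure is compatible with grouping by $v_L$. Once this bookkeeping is settled, each block reduces exactly to the single-input compression analyzed in \Cref{lem:recorded-no-comp}, and the rest is routine.
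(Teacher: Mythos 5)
Your proposal is correct and follows essentially the same route as the paper, whose proof of this lemma is simply the statement that it ``proceeds similarly to that of \Cref{lem:canonical-H-close}'': you carry out exactly that adaptation, with the right substitutions (the orthogonal decomposition over $\mathbf D(A)$ via \Cref{lem:ext-unique}, disjointness of the $\da^u_k$ sets from \Cref{lem:disjoint-downarrows}, and the external record of $D_k(m)=u_L\oplus v_L$ now living in the domain point $v_L$ of register $\reg F$, so that \Cref{lem:recorded-no-comp} applies). The only cosmetic point is that using $|K(u,D)|\geq 2^n-O(t^2)$ from \Cref{lem:allow-prefixes} technically yields $O(1/\sqrt{2^n-O(t^2)})$ rather than $O(1/\sqrt{2^n-t})$, which coincides with the stated bound in the regime where the lemma is applied.
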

\begin{proof}
    This proof proceeds similarly to that of \Cref{lem:canonical-H-close}.
\end{proof}

\begin{lemma}
    Fix some $A \in \mathbf A_t$ and $u \in [N] \setminus \dom(A)$ which $A$ allows, to write the state \begin{align}
        \ket{\psi}_{\reg{HKF}} &\coloneqq \sum_{v \in \mathbf V_{u, A}} \alpha_v \ket{+_{\mathbf D(A[u\ra v])}}_{\reg{HKF}}.
    \end{align}
    We take $\alpha_v=0$ for $v\not\in \mathbf V_{u, A}$.
    Then we have \begin{align}
        \norm{\ket{\psi}_{\reg{HKF}} - G_{u, \reg{F}}^{(F)} \ket{\psi}_{\reg{HKF}}} &= O\left(\sqrt{\frac{T}{2^n-t}}\right),
    \end{align}
    where $T=\sum_{l \in \bit^n} T_l$, and $T_l=\abs{\sum_{r \in \bit^n} \alpha_{l \Vert r}}^2$.
    \label{lem:canonical-F-close}
\end{lemma}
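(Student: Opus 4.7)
The plan is to slice $\ket{\psi}$ by $l = v_L$ and analyze $(\Id - G^{(F)}_u)$ on each slice separately, exploiting that $G^{(F)}_u = U^{(F)}_u$ acts on $\reg{F}$ as $\fc_l$ (controlled to confirm the computed input is indeed $l$, which for any chain from $u$ to $v = l\Vert r$ is exactly $v_L$).

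\emph{Slicing.} Write $\ket{\psi} = \sum_{l \in \bit^n} \ket{\psi_l}$ where
\begin{equation*}
\ket{\psi_l} \coloneqq \sum_{r \in \bit^n} \alpha_{l \Vert r}\,\ket{+_{\mathbf D(A[u \ra l \Vert r])}}.
\end{equation*}
I would first verify that $\ket{\psi_l} \perp \ket{\psi_{l'}}$ for $l \neq l'$: every canonical database in the support of $\ket{\psi_l}$ has $l \in \dom(D_f)$, while every database in the support of $\ket{\psi_{l'}}$ has $l \not\in \dom(D_f)$ (only chain endpoints lie in $\dom(D_f)$, and $l$ is the left-half of the new chain from $u$, which is absent from the second slice). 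So it suffices to bound $\norm{(\Id - \fc_l)\ket{\psi_l}}$ for each $l$ and sum.

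\emph{Expanding per slice.} Using \Cref{lem:ext-unique}, each element of $\mathbf D(A[u \ra l \Vert r])$ arises uniquely as some $D \in \mathbf D(A)$ augmented by an extension indexed by $z_h \in H(u, D)$, placing a new chain through $(u_L, z_h),\,(u_R\oplus z_h,\, u_L\oplus l),\,(l,\, r\oplus u_R\oplus z_h)$. Factoring the $\reg{HK}$-content out and substituting $s = r \oplus u_R \oplus z_h$ for the value of $D_f(l)$, the $\reg{F}$-component on each $(D, z_h)$ slice becomes $\sum_s \alpha_{l\Vert(s\oplus u_R\oplus z_h)}\,\ket{D_f[l\ra s]}$.

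\emph{Applying the compression.} The crucial identity is
\begin{equation*}
(\Id - \fc_l)\,\ket{D_f[l \ra s]} \;=\; \tfrac{1}{\sqrt{2^n}}\bigl(\ket{+_{l, D_f}} - \ket{D_f}\bigr),
\end{equation*}
whose right-hand side is independent of $s$. Pulling the coefficient outside yields $\sum_s \alpha_{l\Vert(s \oplus u_R \oplus z_h)} = \sum_r \alpha_{l \Vert r}$ (change of variable in $r$), which is independent of $z_h$ and whose squared magnitude is exactly $T_l$.

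\emph{Combining.} Summing errors across $(D, z_h)$ pairs uses their mutual orthogonality: distinct $D$ produce distinct $D_f$-components (so distinct $\ket{+_{l, D_f}} - \ket{D_f}$ vectors), and a fixed $D$ with distinct $z_h$ produces distinct $\reg{HK}$-components. Combined with the count $|\mathbf D(A[u \ra l\Vert r])| = |\mathbf D(A)| \cdot (2^n - t)$ (from the bijection above together with $|H(u, D)| = 2^n - t$), a short calculation yields $\norm{(\Id - \fc_l) \ket{\psi_l}}^2 = O(T_l / (2^n - t))$. Summing over $l$ gives the claimed $O(\sqrt{T/(2^n - t)})$.

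The main obstacle will be careful bookkeeping of normalizations --- ensuring the factor of $(2^n - t)$ in the number of $(D, z_h)$ pairs cancels correctly against the normalization of $\ket{+_{\mathbf D(A[u \ra l\Vert r])}}$ --- and confirming the orthogonality of error vectors across $(D, z_h)$ pairs so contributions add in squared norm rather than accumulating constructively.
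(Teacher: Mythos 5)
Your proposal is correct and follows essentially the same route as the paper's proof: the paper decomposes $\ket{\psi}$ by the base canonical database $D \in \mathbf D(A)$ (via \Cref{lem:ext-unique} and the orthogonality corollaries) and then, within each component, uses exactly your observation that $(\Id-\fc_l)$ applied to the sum over $r$ produces an $s$-independent vector of squared norm proportional to $T_l/2^n$, with the same change of variables and the same normalization $1/\sqrt{|\mathbf D(A)|(2^n-t)}$. Your reorganization (slicing by $l$ first, then by $(D,z_h)$) is only a cosmetic difference, and the remaining bookkeeping you flag (orthogonality of the error components so they add in squared norm) is handled in the paper by the distinctness of the $\reg{H},\reg{K}$ register contents across terms, exactly as you suggest.
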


\begin{proof}
    We will use the same setup as the proof for \Cref{lem:canonical-H-close}. Let us consider some $D \in \mathbf D(A)$, and take $|A|=t$. We can write the component of $\ket{\psi}$ which is an extension of this state as \begin{align}
            \ket{\psi_D}_{\reg{HKF}} &=\Pi^{\algo H(\mathbf E^3(D))}_{\reg{HKF}} \ket{\psi}_{\reg{HKF}} \\
            &= \frac{1}{\sqrt{|\mathbf D(A)| \cdot (2^n-t)}} \sum_{z \in \bit^n \setminus \{u_R \oplus \dom(D_k)\}} \ket{D_h[u_L \ra z]}_{\reg H} \otimes \\
            &\quad \sum_{v \in \mathbf V_{u, A}} \alpha_v \ket{D_k[u_R \oplus z \ra u_L \oplus v_L], D_f[v_L \ra u_R \oplus z \oplus v_R]}_{\reg{KF}}. \label{eqn:psid-second-line-v2}
        \end{align}
    Further, recall from \Cref{lem:ext-unique} that these states are orthogonal and span the relevant space, so we can write \begin{align}
        \ket{\psi}_{\reg{HKF}} &= \sum_{D \in \mathbf D(A)} \ket{\psi_D}_{\reg{HKF}},
    \end{align}
    so by linearity it suffices to analyze each component separately. Let us define \begin{align}
        \algo H_D = \bigcup_{v \in \mathbf V_{u, A}} \algo H(D[u \ra v]),
    \end{align}
    where we have $\algo H_D \perp \algo H_{D'}$ for $D \neq D' \in \mathbf D(A)$ from \Cref{cor:specific-ext-unique,lem:disjoint-downarrows}. Observe that $\ket{\psi_D} \in \algo H_D$, and furthermore the image under $G^{(F)}_{u, \reg{HKF}}$ will remain in this space. Therefore, we can write \begin{align}
        \norm{\ket{\psi}_{\reg{HKF}} - G_{u, \reg{HKF}}^{(F)} \ket{\psi}_{\reg{HKF}}} &= \sqrt{\sum_{D \in \mathbf D(A)} \norm{\ket{\psi_D}_{\reg{HKF}} - G^{(F)}_{u, \reg{HKF}} \ket{\psi_D}_{\reg{HKF}}}^2}. \label{eqn:split-G-sum}
    \end{align}
    We can analyze the action of $G_{u, \reg{HKF}}^{(F)}$ on this state as \begin{align}
        G_{u, \reg{HKF}}^{(F)} \ket{\psi_D}_{\reg{HKF}} &= \frac{1}{\sqrt{|\mathbf D(A)| \cdot (2^n-t)}} \sum_{\substack{z \in \bit^n \setminus \{u_R \oplus \dom(D_k)\} \nonumber\\
        l \in \mathbf L_{u, A}}} \ket{D_h[u_L \ra z], D_k[u_R \oplus z \ra u_L \oplus l]}_{\reg{HK}} \otimes \\
        &\quad \fc_{l, \reg{F}} \sum_{r \in \bit^n} \alpha_{l \Vert r} \ket{D_f[l \ra u_R \oplus z \oplus r]}_{\reg{F}},
    \end{align}
    from which it is clear to see that \begin{align}
        \norm{\ket{\psi_D} - G^{(F)}_{u} \ket{\psi_D}} &= \frac{1}{\sqrt{|\mathbf D(A)| \cdot (2^n-t)}} \cdot \\&\quad  \sqrt{\sum_{\substack{z \in \bit^n \setminus \{u_R \oplus \dom(D_k)\} \nonumber\\
        l \in \mathbf L_{u, A}}}\norm{\sum_{
        r \in \bit^n} \left(\Id_{\reg F} - \fc_{l, \reg{F}}\right) \alpha_{l\Vert r} \ket{D_f[l \ra u_R \oplus z \oplus r]}_{\reg{F}}}^2} \\
        &= \frac{1}{\sqrt{|\mathbf D(A)| \cdot (2^n-t)}} \cdot O\left(\sqrt{\sum_{l \in \mathbf L_{u, A}} \abs{\sum_{r \in \bit^n} \alpha_{l \Vert r}}^2}\right) \\
        &= O\left(\sqrt{\frac{T}{|\mathbf D(A)| \cdot (2^n-t)}}\right),
    \end{align}
    which combined with \Cref{eqn:split-G-sum} shows the claim.
\end{proof}

\subsection{Cromulent twirling}
\label{sec:twirling}

We now turn to working out properties of the twirling permutations. A number of statistical properties are required of the distribution for our analysis to hold. First, define the left and right action of permutations on databases as
\begin{align}
    \pi * I &\coloneqq \{(\pi(x), y) \, : \, (x, y) \in I\} \\
    I * \omega &\coloneqq \{(x, \omega^{-1}(y)) \, : (x, y) \in I\}.
\end{align}
This action corresponds to the internal database which is supported; if the full masked Feistel ensemble supports $I$, then the internal three rounds of Feistel will support $\pi * I * \omega$. In this way, we will require certain properties of $\pi$ and $\omega$ to ensure that this internal database is allowable, as well as to understand the action of a query.

\begin{definition}
    We say $\pi, \omega$ resolve $I \in \mathbf I_t$ if $\pi * I * \omega \in \mathbf A_t$. We let $\mathbf R_I$ be the set of permutation pairs $(\pi, \omega)$ which resolve $I$. For some fixed $A \in \mathbf A_t$, we let $\mathbf R_{I\ra A}$ be the set of permutation pairs $(\pi, \omega)$ such that $\pi * I * \omega = A$. For some fixed $x \in [N]$, we let $\mathbf R_{x, I}$ be the set of $\pi, \omega$ which resolve $I$ and for which $\pi * I * \omega \heart \pi(x)$, meaning the internal database allows $\pi(x)$.
\end{definition}

Let $\algo D$ be a distribution on $\mathbf S_N^{\times 2}$, and $p_{\algo D}$ be the corresponding probability function. We use the notation $\algo D_I$ to denote the induced distribution of $\algo D$ conditioned on resolving $I$. In other words, \begin{align}
    p_{\algo D_I}(\pi, \omega) &= p_{\algo D}(\pi, \omega \, | \, (\pi, \omega) \in \mathbf R_I).
\end{align}
Similarly, $\algo D_{I \ra A}$ for $A \in \mathbf A$ is the distribution conditioned on satisfying $\pi * I * \omega = A$, and $\algo D_{x, I}$ is the distribution conditioned on the inner distribution allowing $\pi(x)$. Formally, \begin{align}
    p_{\algo D_{I \ra A}}(\pi, \omega) &= p_{\algo D}(\pi, \omega \, | \, (\pi, \omega) \in \mathbf R_{I \ra A}), \\
    p_{\algo D_{x, I}}(\pi, \omega) &= p_{\algo D}(\pi, \omega \, | \, (\pi, \omega) \in \mathbf R_{x, I}).
\end{align}

We say that such a distribution is \emph{cromulent} if it satisfies the following three properties. These are statistical properties about the distribution which will suffice to show that the twirl hides the internal three Feistel rounds.

\begin{definition}
    Distribution $\algo D$ is \emph{cromulent} if, for all $I \in \mathbf I_t$, $x \neq x' \in [N] \setminus \dom(I)$, $l \in \bit^n$, and $y \neq y' \in [N] \setminus \im(I)$, we have \begin{align}
        \Pr_{(\pi, \omega) \sim \algo D_I}[(\pi, \omega) \in \mathbf R_{x, I}] &= 1-O(t^2/2^n), \\
        \Pr_{(\pi, \omega) \sim \algo D_{x, I}}[\omega^{-1}(y)_L = l] &= \frac{1}{2^n},\\
        \Pr_{(\pi, \omega) \sim \algo D_{x, I}}[\omega^{-1}(y)_L =\omega^{-1}(y')_L = l] &= \frac{1}{2^{2n}} \pm O\left(\frac{t^2}{2^{3n}}\right),
    \end{align}
    and for all $(\pi, \omega) \in \mathbf R_{I[x \ra y]}$ and $A=\pi * I * \omega$ we have \begin{align}
        \sqrt{\frac{p_{\algo D_{I[x\ra y]}}(\pi, \omega)}{N-t}} - \sqrt{\frac{p_{\algo D_{x, I}}(\pi, \omega)}{|\mathbf V_{\pi(x), A}|}} &= \pm O\left(\sqrt{\frac{t^2}{2^n}} \cdot \sqrt{\frac{p_{\algo D_{x, I}}(\pi, \omega)}{|\mathbf V_{\pi(x), A}|}}\right)
    \end{align}
    Further, all equations hold in the inverse direction, if we replace $(\pi, \omega)$ with $(\omega^{-1}, \pi^{-1})$, $I$ with $I^{-1}$, $x$ with $y$ and $x'$ with $y'$.
    \label{def:cromulent}
\end{definition}

The first condition captures the intuition that each new query should, with high probability, lead to a non-colliding left wire at the start of the internal Feistel. The second and third point capture properties necessary for compression to work as desired: compression should only remove a chain when the algorithm uncomputes an input-output pair. For plain three-round Feistel, uncomputing only the last $n$ bits of an input would suffice to remove the last element of its chain. These conditions capture that the last $n$ bits are sufficiently scrambled by $\omega$ to render this attack, and anything like it, impossible. The last equation is a technical requirement related to normalization, and is due to small deviations in the number of allowable databases.

Note that the final remark concerning inverses is implied by the previous equations if $\algo D$ is invariant under the flip operation, $(\pi, \omega) \mapsto (\omega^{-1}, \pi^{-1})$. In all of the settings we consider, this will be the case. It therefore suffices to show the properties for the listed cases only.
The following lemmas establish that the uniform distribution $p_{\algo D}(\pi, \omega) = 1/(N!)^2$ is cromulent. We will in turn show that, so long as the twirling permutations are cromulent, the masked Feistel ensemble is indistinguishable from our compressed permutation oracle.

\begin{lemma}
    Fix some $I \in \mathbf I_t$ and $x \in [N] \setminus \dom(I)$. Then we have \begin{align}
        \Pr_{(\pi, \omega) \sim \mathbf R_{I}} [\pi * I * \omega \text{ allows } \pi(x)] &= 1 - O(t^2 / 2^n).
    \end{align}
    \label{lem:twirl-perm-prob}
\end{lemma}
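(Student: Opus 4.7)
The plan is to decompose the probability by conditioning on the resulting database $A = \pi * I * \omega$, and then reduce to a simple counting argument about $\pi(x)$.

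First I would appeal to Remark 3.8: whether $A$ allows some $u \notin \dom(A)$ depends only on $u_L$ and $A$, and the set $B_A \subseteq \bit^n$ of ``bad'' left substrings $l$ (those for which $A$ allows no $u$ with $u_L = l$) has size at most $t^2 + t$. Since $x \notin \dom(I)$ forces $\pi(x) \notin \dom(A)$, the event ``$A$ does not allow $\pi(x)$'' coincides with $\{\pi(x)_L \in B_A\}$.

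Next, I would partition $\mathbf R_I = \bigsqcup_{A \in \mathbf A_t} \mathbf R_{I \to A}$ and fix some $A$ with $\mathbf R_{I \to A}$ nonempty. A uniform $(\pi, \omega) \in \mathbf R_{I \to A}$ is obtained by choosing a matching between the $t$ pairs of $I$ and those of $A$, then independently completing $\pi$ to a permutation sending $\dom(I)$ onto $\dom(A)$ (consistent with the matching) and $\omega$ to a permutation sending $\im(A)$ onto $\im(I)$ (analogously). In particular, $\pi(x)$ for $x \notin \dom(I)$ is uniform on $[N] \setminus \dom(A)$, a set of size $N - t$.

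Since $A$ is allowable it has no input left collisions, so every left substring $l \in \bit^n$ labels at most one element of $\dom(A)$ and therefore at most $2^n$ elements of $[N] \setminus \dom(A)$. Summing over $l \in B_A$,
\begin{equation}
\Pr_{(\pi,\omega) \sim \mathbf R_{I \to A}}\!\bigl[\pi(x)_L \in B_A\bigr] \;\leq\; \frac{|B_A|\cdot 2^n}{N-t} \;=\; O\!\left(\frac{t^2 \cdot 2^n}{4^n}\right) \;=\; O(t^2/2^n).
\end{equation}
This bound is uniform in $A$, so it passes to the mixture over $A$ induced by $(\pi, \omega) \sim \mathbf R_I$, yielding the claim. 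The only substantive input is the combinatorial bound $|B_A| = O(t^2)$ from Remark 3.8; the remaining obstacle is just verifying the conditional distribution of $\pi(x)$, which is routine once one observes that the constraint $\pi * I * \omega = A$ independently restricts $\pi$ and $\omega$ to extensions of a fixed partial bijection.
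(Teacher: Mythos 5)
Your proposal is correct and follows essentially the same route as the paper: fix the part of $(\pi,\omega)$ that determines the internal database $A$, observe that $\pi(x)$ is then uniform over the remaining $N-t$ values, and bound the bad set by the $O(t^2)$ disallowed left substrings (Remark~\ref{rem:allowable-xs}) times $2^n$ right substrings. Your conditioning on $\mathbf R_{I\to A}$ and the matching argument is simply a more explicit justification of the sampling step the paper states informally.
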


\begin{proof}
    Consider sampling all values $\pi(x)$ for $x \in \dom(I)$, and all values of $\omega$ such that $\pi * I * \omega \in \mathbf A_t$. The remaining values of $\pi$ may be sampled uniformly at random, and so $\pi(x)$ comes from a set of size $2^{2n}-t$. There are $O(t^2 2^n)$ values which are not allowed (from the $O(t^2)$ left substrings times $2^n$ right substrings), so we have \begin{align}
        \Pr_{(\pi, \omega) \sim \mathbf R_{I}} [\pi * I * \omega \text{ allows } \pi(x)] &= \frac{2^{2n}-O(t^22^n)}{2^{2n}-t} \\
        &= 1 - O\left(\frac{t^2}{2^n}\right).
    \end{align}
\end{proof}

\begin{lemma}
    Fix some $I \in \mathbf I_t$, some $x \in [N] \setminus \dom(I)$, and $l \in \bit^n$. Then, for all $y \neq y' \in [N] \setminus \im(I)$, we have \begin{align}
        \Pr_{\substack{\pi, \omega \sim \mathbf R_{x, I}}}[\omega^{-1}(y)_L = l] &= \frac{1}{2^n}, \\
        \Pr_{\substack{\pi, \omega \sim \mathbf R_{x, I}}}[\omega^{-1}(y)_L = \omega^{-1}(y')_L = l] &= \frac{1}{2^{2n}} \pm O\left(\frac{t^2}{2^{3n}}\right).
    \end{align}
    \label{lem:twirl-perm-second-mom}
\end{lemma}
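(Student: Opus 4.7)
The plan is to first re-parameterize the uniform distribution on $\mathbf R_{x, I}$ by grouping pairs $(\pi, \omega)$ according to the resulting database $A = \pi * I * \omega \in \mathbf A_t$. A direct count shows that for each fixed $A \in \mathbf A_t$, the number of $(\pi, \omega) \in \mathbf R_{x, I}$ with $\pi * I * \omega = A$ is $t! \cdot (2^n - t) \cdot 2^n \cdot (N-t-1)! \cdot (N-t)!$, independent of $A$: we pick a bijection from $\dom(I)$ to $\dom(A)$ ($t!$ ways, which forces $\omega^{-1}|_{\im(I)}$), we pick $\pi(x) \in [N] \setminus \dom(A)$ with an allowable left half ($(2^n - t) \cdot 2^n$ ways, using that the $t$ left bits of $\dom(A)$ are distinct), and then extend $\pi$ and $\omega^{-1}$ freely on the remaining coordinates. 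Consequently the induced distribution on $A$ is uniform on $\mathbf A_t$, and, conditional on $A$, the restriction $\omega^{-1}|_{[N] \setminus \im(I)}$ is a uniform bijection onto $[N] \setminus \im(A)$.

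Next I introduce $s_l^A = |\im(A) \cap (l \cdot \bit^n)|$ and observe that, since $A \in \mathbf A_t$ has no output left collisions, $s_l^A \in \{0, 1\}$. I then exploit an XOR symmetry: XORing a fixed constant $c$ into the left $n$ bits of every output of $A$ preserves all three non-allowability conditions (the input collision condition is unchanged, while the output and internal left collision conditions involve only differences of left bits of outputs, which are invariant under a common shift). Hence this map is a bijection on $\mathbf A_t$, so the distribution of $s_l^A$ under uniform $A$ is the same for every $l$. Combined with $\sum_l s_l^A = t$, this yields $\Pr_A[s_l^A = 1] = \E_A[s_l^A] = t/2^n$.

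For the first equation, conditional on $A$ the probability that $\omega^{-1}(y)_L = l$ equals $(2^n - s_l^A)/(N - t)$, since exactly $2^n - s_l^A$ of the $N - t$ admissible values have left half $l$. Averaging over $A$ gives $(2^n - t/2^n)/(N-t) = (N-t)/(2^n (N-t)) = 1/2^n$, which is exact. For the second equation, conditional on $A$ the pair $(\omega^{-1}(y), \omega^{-1}(y'))$ is uniform over ordered distinct pairs in $[N] \setminus \im(A)$, so
\begin{align}
\Pr[\omega^{-1}(y)_L = \omega^{-1}(y')_L = l \mid A] = \frac{(2^n - s_l^A)(2^n - s_l^A - 1)}{(N-t)(N-t-1)}.
\end{align}
Since $s_l^A \in \{0, 1\}$, the numerator is $2^{2n} - 2^n$ or $(2^n - 1)(2^n - 2)$ with probabilities $1 - t/2^n$ and $t/2^n$ respectively, yielding $\E_A[(2^n - s_l^A)(2^n - s_l^A - 1)] = 2^{2n} - 2^n - 2t + 2t/2^n$. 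Dividing by $(N-t)(N-t-1) = 2^{4n}(1 - O(t/2^{2n}))$ and expanding gives $1/2^{2n} \pm O((t+1)/2^{3n})$, which is absorbed by the stated $O(t^2/2^{3n})$ tolerance.

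The main step requiring care is verifying that the weight $t! \cdot (2^n - t) \cdot 2^n \cdot (N-t-1)! \cdot (N-t)!$ is indeed $A$-independent; once this is in hand, everything reduces to the two elementary facts $s_l^A \in \{0, 1\}$ (from allowability) and $\E_A[s_l^A] = t/2^n$ (from the XOR symmetry on $\mathbf A_t$), and the rest is algebra.
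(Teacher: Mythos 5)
There is a genuine gap at the step you yourself flag as the crux: the weight you assign to each $A \in \mathbf A_t$ is not $A$-independent, and the induced distribution on $A$ is not uniform. The issue is the factor $(2^n-t)\cdot 2^n$ counting choices of $\pi(x)$. Membership in $\mathbf R_{x,I}$ requires $A \heart \pi(x)$, and by \Cref{def:non-allowable-dbs} allowability of $u=\pi(x)$ excludes more left halves than just the $t$ values $\{x'_L : x' \in \dom(A)\}$: taking $(x',y')\in A$ as the pair and $u$ as the colliding domain element, the assignment $A[u\to v]$ has an internal left collision whenever $u_L = x'_L \oplus y'_L \oplus y''_L$ for some $(x',y')\in A$ and $y''\in \im(A)$, and this condition does not involve $v$ at all, so no choice of image can repair it. Hence the number of admissible left halves is $2^n - t - c_A$ where $c_A$ counts the (up to $t(t-1)$) additional excluded values, and $c_A$ genuinely depends on $A$: for $t\ge 4$ the coincidences among the values $x'_L\oplus y'_L\oplus y''_L$ are governed by six-term linear relations that allowability of $A$ does not forbid, so different allowable $A$ give different counts. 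Consequently the induced distribution on $A$ is only uniform up to multiplicative $1\pm O(t^2/2^n)$, and your derivation of the \emph{exact} equality $\Pr[\omega^{-1}(y)_L=l]=2^{-n}$ (which the lemma asserts, and which the cancellation in \Cref{lem:twirl-perms-JL} relies on) does not follow as written; the error analysis for the second equation is likewise affected, though there the slack is large enough to survive an approximate version.

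The gap is repairable, and it is instructive to compare with the paper's route. The paper never conditions on $A$; instead it uses a shift symmetry acting directly on the pair $(\pi,\omega)$: replacing $\omega$ by $\omega'(v)=\omega(v\oplus(s\Vert 0^n))$ and $\pi$ by $\pi'(\cdot)=\pi(\cdot)\oplus(s\Vert 0^n)$ preserves membership in $\mathbf R_{x,I}$ (allowability depends only on XOR-differences of internal left halves) and translates $\omega^{-1}(y)_L$ by $s$, which gives the exact uniformity in the first equation; the second equation is then handled by a conditional sampling argument plus this shift. Your conditional-on-$A$ strategy can be saved by the same observation one level up: the correct weight (proportional to the number of allowed left halves for $\pi(x)$) is determined by the sets $\{x'_L\}$ and $\{x'_L\oplus y'_L\oplus y''_L\}$, both of which are invariant under XOR-ing a constant into the left halves of $\im(A)$, so your symmetry argument for $\E[s_l^A]=t/2^n$ still goes through under the true (non-uniform) distribution on $A$ — but this needs to be argued explicitly, and the claimed exact count $t!\,(2^n-t)\,2^n\,(N-t-1)!\,(N-t)!$ should be dropped. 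As a minor point, your closing claim that a deviation of $O((t+1)/2^{3n})$ is absorbed by $O(t^2/2^{3n})$ fails at $t=0$; this is a harmless edge case (the paper's own proof has the same looseness) but worth stating carefully.
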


\begin{proof}
    For the first equation, observe that allowability in this sense is translation invariant on the internal left wire: if $\pi, \omega \in \mathbf R_I$, then we can construct $\omega'(v) = \omega(v \oplus (s \Vert 0^n))$ and $\pi'(x) = \pi(x) \oplus (s \Vert 0^n)$. Then, for any $s \in \bit^n$ the newly constructed permutations will also be in $\mathbf R_I$. Therefore, the distribution induced by $\omega^{-1}(y)_L$ is uniform.

    For the second equation, let us analyze the probability that that $\omega^{-1}(y)_L = \omega^{-1}(y')_L$. Consider fixing all of the other input-output points of $\pi$ and $\omega$ such that the internal database is allowable. Then, both $\omega^{-1}(y)_L$ and $\omega^{-1}(y')_L$ are chosen uniformly at random from a universe of size $2^{2n}-t$. Therefore, there will be a collision on the left wire with probability $2^{-n} \pm O(t2^{-2n})$. Combining this with the previous translation invariance, observe that we can randomly shift the constructed $\pi, \omega$ as described above after this procedure. This gives a $2^{-n}$ probability that $\omega(y)=l$, independent of the prior event, leaving us with the result  \begin{align}
        \Pr_{\substack{\pi, \omega \sim \mathbf R_{x, I}}}[\omega^{-1}(y)_L = \omega^{-1}(y')_L = l] &= \frac{1}{2^{2n}} \pm O\left(\frac{t}{2^{3n}}\right).
    \end{align}
    This is tighter than the bound stated, showing the claim.
\end{proof}

\begin{lemma}
    For any $I \in \mathbf I_t$, $x \in [N] \setminus \dom(I)$, $y \in [N] \setminus \im(I)$, and $\pi, \omega \in \mathbf R_{I[x\ra y]}$ with $A=\pi * I * \omega$, we have \begin{align}
        \frac{1}{\sqrt{|\mathbf R_{I[x \ra y]}| \cdot (N-t)}} - \frac{1}{\sqrt{|\mathbf R_{x, I}| \cdot |\mathbf V_{\pi(x), A}|}} &= \pm O\left(\frac{t^2}{2^n \sqrt{|\mathbf R_{I[x\ra y]}| \cdot (N-t)}}\right).
    \end{align}
    \label{lem:twirl-perm-uniform}
\end{lemma}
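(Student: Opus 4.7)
The plan is to exploit a double-counting identity relating $|\mathbf R_{I[x\ra y]}|$ summed over $y$ with $|\mathbf R_{x, I}|$ weighted by $|\mathbf V_{\pi(x), \pi * I * \omega}|$, and then upgrade this into the desired relative-error bound using the tight control on $|\mathbf V_{u, A}|$ from \Cref{lem:allow-prefixes}.

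First, I would show that $|\mathbf R_{I[x\ra y]}|$ is actually independent of the choice of $y \in [N] \setminus \im(I)$. Given any $y \neq y'$ in $[N] \setminus \im(I)$, the transposition $\sigma$ swapping them fixes $\im(I)$ pointwise, so the map $(\pi, \omega) \mapsto (\pi, \sigma \circ \omega)$ preserves the internal database, in the sense that $\pi * I[x\ra y] * \omega$ equals $\pi * I[x\ra y'] * (\sigma \circ \omega)$. In particular one database is allowable iff the other is, so this yields a bijection $\mathbf R_{I[x\ra y]} \to \mathbf R_{I[x\ra y']}$. Write $R$ for this common value.

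Next, I would observe that for any fixed $(\pi, \omega) \in \mathbf R_{x, I}$ with $A = \pi * I * \omega$, the $y$'s for which $(\pi, \omega) \in \mathbf R_{I[x\ra y]}$ are exactly $\omega(\mathbf V_{\pi(x), A})$, and these $y$'s automatically lie in $[N] \setminus \im(I)$ since $\im(A) = \omega^{-1}(\im(I))$. Counting $\{(\pi, \omega, y) : (\pi, \omega) \in \mathbf R_{I[x\ra y]},\, y \in [N]\setminus\im(I)\}$ in two ways gives
\[
(N-t)\,R \;=\; \sum_{(\pi, \omega) \in \mathbf R_{x, I}} |\mathbf V_{\pi(x),\,\pi * I * \omega}|.
\]

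Finally, I would combine this with the bound $|\mathbf V_{u, A}| = 2^n |\mathbf L_{u, A}| \in [N(1 - O(t^2/2^n)),\, N]$ from \Cref{lem:allow-prefixes}. Letting $\bar V$ denote the average of $|\mathbf V_{\pi(x), \pi*I*\omega}|$ over $(\pi, \omega) \in \mathbf R_{x, I}$, the above identity rewrites as $|\mathbf R_{x, I}| \cdot \bar V = R(N-t)$, so the ratio
\[
\frac{|\mathbf R_{x, I}| \cdot |\mathbf V_{\pi(x), A}|}{R\cdot (N-t)} \;=\; \frac{|\mathbf V_{\pi(x), A}|}{\bar V}
\]
has both numerator and denominator in $[N(1 - O(t^2/2^n)),\, N]$, hence equals $1 \pm O(t^2/2^n)$. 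Applying $\sqrt{a} - \sqrt{b} = \sqrt{a}\,(1 - \sqrt{b/a})$ with $b/a = 1 \pm O(t^2/2^n)$ yields the claim. The only mild subtlety is getting the symmetry argument in the first step right; the rest is bookkeeping with the inequalities on $|\mathbf V_{u, A}|$.
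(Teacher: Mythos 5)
Your proposal is correct, but it takes a genuinely different route from the paper. The paper's proof establishes two separate multiplicative estimates, $\abs{\mathbf R_{I[x\ra y]}}/\abs{\mathbf R_{x,I}} = 1 \pm O(t^2/2^n)$ (argued "analogously to" \Cref{lem:twirl-perm-prob}, i.e.\ by a conditional-sampling argument, and only sketched) and $(N-t)/\abs{\mathbf V_{\pi(x),A}} = 1 \pm O(t^2/2^n)$ (from \Cref{lem:allow-prefixes}), and then finishes with the Taylor expansion of $1/\sqrt{a+x}$. You instead replace the first, sketched estimate by two exact combinatorial facts: the transposition symmetry showing $\abs{\mathbf R_{I[x\ra y]}}$ is independent of $y \in [N]\setminus\im(I)$ (your check that $\pi * I[x\ra y'] * (\sigma\circ\omega) = \pi * I[x\ra y]*\omega$ is right, since $\sigma$ fixes $\im(I)$ pointwise and the extension point $v\in\mathbf V_{\pi(x),A}$ never lies in $\im(A)=\omega^{-1}(\im(I))$), and the double-counting identity $(N-t)\,R = \sum_{(\pi,\omega)\in\mathbf R_{x,I}} \abs{\mathbf V_{\pi(x),\,\pi*I*\omega}}$. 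After that, only the two-sided bound $\abs{\mathbf V_{u,A}} \in [N(1-O(t^2/2^n)),\,N]$ from \Cref{lem:allow-prefixes} is needed, and the final square-root manipulation is the same as the paper's. What your route buys is self-containedness and an exact identity in place of the paper's by-analogy ratio estimate; what it costs is that the symmetry step is specific to the uniform twirl on $\mathbf S_N^{\times 2}$ (fine here, but it would not transfer to the Feistel-twirl analogue, \Cref{lem:twirl-feist-uniform}, which the paper's probabilistic style of argument does). Both proofs share the implicit restriction to the regime $t^2 \ll 2^n$, outside of which the claimed error term is not meaningful, so that is not a defect of yours relative to the paper's.
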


\begin{proof}
    Observe that we have \begin{align}
        \frac{|\mathbf R_{I[x \ra y]}|}{|\mathbf R_{x, I}|} &= 1 \pm O\left(\frac{t^2}{2^n}\right), & \frac{N-t}{|\mathbf V_{\pi(x), \mathbf A}|} &= 1 \pm O\left(\frac{t^2}{2^n}\right),
    \end{align}
    the first by an argument analogous to \Cref{lem:twirl-perm-prob} and the second by \Cref{lem:allow-prefixes}. If we let $a = |\mathbf R_{I[x \ra y]}| \cdot (N-t)$, then we can write \begin{align}
        \frac{1}{\sqrt{|\mathbf R_{x, I}| \cdot |\mathbf V_{\pi(x), A}|}} &= \frac{1}{\sqrt{a(1 \pm O(t^2/2^n)}} \\
        &= \frac{1}{\sqrt{a}} \pm O\left(\frac{t^2}{2^n \cdot \sqrt{a}}\right),
    \end{align}
    where the last line follows from the taylor expansion of $\frac{1}{\sqrt{a+x}}$ about $x=0$.
\end{proof}

Going forward, we will say that $P = (\pi, \omega, D)$ supports $I$ if $D$ supports $\pi * I * \omega$.

\section{Soundness}
\label{sec:soundness}

In this section, we will establish that our compressed permutation oracle is indistinguishable from a random permutation oracle. We will do this by showing that an oracle for the masked Feistel construction with uniform twirl (which is therefore a uniform permutation oracle) is indistinguishable from our compressed permutation oracle.

In fact, all of the machinery we develop prior to the main theorem will assume only that the masked Feistel has a cromulent twirling distributions. We take $\algo D$ to be the distribution on $\mathbf S_N^{\times 2}$ from which the twirling permutations are drawn. The only part of this section where we assume that $\algo D$ is uniform is in the main \Cref{thm:main-indist}. In \Cref{sec:feist-sec}, we will use a different cromulent distribution to establish the quantum security of the Feistel construction, re-using much of the machinery developed here.

\subsection{Valid, elegant, and sophisticated subspaces}
\label{sec:spaces}

We will begin with the \emph{sophisticated} subspace of $\algo H_{\reg P}$, which will capture purifications corresponding to compressed permutation databases. Towards this, we first define the notation $\ket{\mathbf P(I)}$ for $I \in \mathbf I$, where dependence on $\algo D$ is suppressed, as

\begin{align}
    \ket{\mathbf P(I)} &\coloneqq \sum_{(\pi, \omega) \in \mathbf R_I} \sqrt{p_{\algo D_I}(\pi, \omega)}\ket{\pi, \omega}_{\reg{\Pi\Omega}} \otimes \ket{+_{\mathbf D(\pi * I * \omega)}}_{\reg{HKF}}.
\end{align}

Observe that these states satisfy $\innerprod{\mathbf P(I)}{ \mathbf P(I')} = \delta_{I, I'}$, as for any $I \neq I'$ the two states have distinct supports. The span of these states form a subspace, which we call the \emph{sophisticated} subspace, and refer to by \begin{align}
    \hsoph &\coloneqq \spn{\ket{\mathbf P(I)} \, : \, I \in \mathbf I_{\leq 2^n}}, & \pisoph &\coloneqq \sum_{I \in \mathbf I_{\leq 2^n}} \outerprod{\mathbf P(I)}{\mathbf P(I)}.
\end{align}
Then, we can takes these states to define our intertwining isometry.

\begin{definition}
    We define the \emph{intertwiner} $\iso : \algo H(\mathbf P) \ra \algo H(\mathbf I) \oplus \algo H(\mathbf P)$ as the isometry \begin{align}
        \iso = \Id + \sum_{I \in \mathbf I_{\leq 2^n}} \outerprod{I}{\mathbf P(I)} - \outerprod{\mathbf P(I)}{\mathbf P(I)}.
    \end{align}
\end{definition}

In other words, $\iso$ is an isometry which maps $\ket{\mathbf P(I)}$ to $\ket{I}$, and is the identity on all states in $\hnsoph$. 

As a technical note, we will take the compressed permutation oracle purifying register to be $\algo H(\mathbf I) \oplus \algo H(\mathbf P)$ in this section. We will consider the permutation compression operator $\pc$, the purified query operation $\pu$, and flip operation $\flip$, to act as the identity on the $\algo H(\mathbf P)$ component. We will still label this entire Hilbert space as register $\reg I$, with the expectation that only a small portion of the state will lie in the $\algo H(\mathbf P)$ component. Observe that all of the aforementioned operators (and therefore $\cp$) preserve the two subspaces, in that they can be written as a direct sum $O_1 \oplus O_2$ for operator $O_1$ on $\algo H(\mathbf I)$ and $O_2$ on $\algo H(\mathbf P)$. Therefore, the compressed oracle experiment with this extended purification is perfectly indistinguishable from the standard compressed oracle experiment.\footnote{The reason we do this instead of constructing the isometry from $\algo H(\mathbf I)$ to $\algo H(\mathbf P)$ is that it is unclear how to define validity for the compressed permutation oracle. In the isometric subspace, we will see that it is easy to define, which in turn will be used to conclude validity of our construction.}

Let us observe the simple fact that $\iso$ is an intertwiner between the flip operators $\flip$ and $\mflip$ on the sophisticated subspace, as well as between the purified oracles $\pu$ and $\mfpu$. Observe that the the sophisticated subspace is maintained by the masked Feistel flip operator $\mflip$ and masked Feistel purified query operator $\mfpu$.

\begin{remark}
    We have \begin{align}
        \flip_{\reg I} \cdot \iso_{\reg P} \cdot \pisoph_{\reg P} &= \iso_{\reg P} \cdot \mflip_{\reg{P}} \cdot \pisoph_{\reg P}, & \pu_{\reg{XYI}} \cdot \iso_{\reg P} \cdot \pisoph_{\reg P} &= \iso_{\reg P}\cdot \mfpu_{\reg{XYP}} \cdot \pisoph_{\reg P}.
    \end{align}
    \label{rem:intertwiner-oracle-flip}
\end{remark}

\begin{proof}
    For the first equation, observe that if $P=(\pi, \omega, D_h, D_k, D_f) \in \mathbf P$ supports injective database $I$, then the flipped $(\omega^{-1}, \pi^{-1}, D_f, D_k, D_h)$ supports $I^{-1}$. This holds even outside the sophisticated subspace, so it holds on it as well.

    For the second equation, recall that $\iso_{\reg P}$ sends $\ket{\mathbf P(I)}$ to $\ket{I}$, which determines it's action entirely on the sophisticated subspace. The support of every standard-basis state in the sum defining $\ket{\mathbf P(I)}$ is $I$, and so $\mfpu$ answers queries identically to $\pu$.
\end{proof}

We will define the \emph{valid} subspace of $\algo H_{\reg P} = \algo H(\mathbf P)$ as the span of states where the $\reg{HKF}$ registers lie purely in the valid subspace as defined in \Cref{sec:comp-fs}. We will use the notation $\hval_{\reg P}$ to denote this subspace, and $\piv$ to denote the projector onto it, which can be defined as \begin{align}
    \piv_{\reg P} &\coloneqq \Xi_{\reg H} \Xi_{\reg K} \Xi_{\reg F}.
\end{align}
From \Cref{lem:comp-oracle-valid}, queries to the internal functions preserves this subspace. It follows that $\cmf$ preserves it as well, meaning $\cmf$ can be written of the form $\cmf=O_1 \oplus O_2$ where $O_1 : \hval_{\reg P} \ra \hval_{\reg P}$ and $O_2 : \algo H^{\mathsf{val}, \perp}_{\reg P} \ra \algo H^{\mathsf{val}, \perp}_{\reg P}$ are unitary operators. 

A related subspace is the \emph{query valid} subspace $\algo H^{\mathsf{qval}}_{\reg{XP}}$, this time a subspace of $\algo H_{\reg X} \otimes \algo H_{\reg P}$. The projector onto this subspace is denoted $\piqv$, and the subspace captures the property that, after decompression, the query point will have a well-defined output. To define this, we will first extend the definition of the decompressed projector $\Gamma_x$ to $\algo H(\mathbf P)$, writing \begin{align}
    \Gamma_{x, \reg P} \ket{P}_{\reg P} &\coloneqq \begin{cases}
        0 & \text{(If $\supp(P)(x) = \bot$)} \\
        \ket{P}_{\reg P} & \text{(Otherwise)}
    \end{cases} \\
    \inD_{\reg{XP}} \ket{x}_{\reg X} &\coloneqq \ket{x}_{\reg X} \otimes \Gamma_{x, \reg P}.
\end{align}
We also need the notation $\mathsf{ctrl\mhyphen}\mflip_{\reg{XP}}$ denote the $\mflip_{\reg{P}}$ operator controlled on the $b$ bit of the $\reg X$ register, which can be written \begin{align}
    \mathsf{ctrl\mhyphen}\mflip_{\reg{XP}} \ket{b, x}_{\reg X} &= \ket{b, x}_{\reg X} \otimes \begin{cases}
        \Id_{\reg P} & \text{(If $b=0$)} \\
        \mflip_{\reg P}. & \text{(If $b=1$)}
    \end{cases}
\end{align}
With this in hand, we can define the query valid projector as
\begin{align}
    \piqv_{\reg{XP}} &\coloneqq  (\mathsf{ctrl\mhyphen}\mflip \cdot \mfc \cdot \inD \cdot \mfc^\dagger \cdot \mathsf{ctrl\mhyphen}\mflip^\dagger )_{\reg {XP}}
\end{align}

Observe that $\hqval \leq \algo H_{\reg X} \otimes \hval$, or in words, if the purification register is valid then decompressing on any query point will result in a well defined answer. This has the simple corollary that every state reached by an algorithm lies in the query valid subspace.

\begin{corollary}
    Let $\algo A$ be a quantum algorithm making $q$ queries with intermediate operations $A_0, \dots, A_q$, and define \begin{align}
        \ket{\psi^{(\cmf)}}_{\reg{AXYP}} &\coloneqq A_{q, \reg{AXY}} \cmf_{\reg{XYP}} \dots A_{1, \reg{AXY}} \cmf_{\reg{XYP}} A_{0, \reg{AXY}} \ket{0}_{\reg{AXY}} \ket{\mathbf P(\boti)}_{\reg P}.
    \end{align}
    Then we have $\ket{\psi^{(\cmf)}}_{\reg{AXYP}} \in \piv_{\reg P}$, and thus $\ket{\psi^{(\cmf)}}_{\reg{AXYP}} \in \piqv_{\reg{XP}}$.
    \label{cor:mfeist-qval}
\end{corollary}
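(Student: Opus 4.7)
The plan is to prove the first inclusion by induction over the number of queries and to derive the second from a subspace inclusion.

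For the base case, I would verify $\ket{\mathbf P(\boti)}_{\reg P} \in \hval$. Since $\boti$ is empty, the unique canonical triple supporting $\pi * \boti * \omega = \boti$ is $(\botd, \botd, \botd)$, so the $\reg{HKF}$ component of the initial state is $\ket{\botd,\botd,\botd}$. A direct check shows $\ket{\botd} \in \Xi$: $\fc_x^\dagger \ket{\botd}$ equals the uniform superposition over $\{\ket{\botd[x \ra y]} : y \in [N]\}$, which lies in $\Gamma_x$, and so $\Xi_x \ket{\botd} = \ket{\botd}$ for every $x$. For the inductive step, the algorithm unitaries $A_i$ act only on $\reg{AXY}$ and so trivially preserve $\piv_{\reg P}$, reducing the task to showing that $\cmf_{\reg{XYP}}$ preserves $\hval_{\reg P}$.

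Decomposing $\cmf$ as $\mathsf{ctrl\mhyphen}\mflip \cdot \mfc \cdot \mfpu \cdot \mfc^\dagger \cdot \mathsf{ctrl\mhyphen}\mflip^\dagger$, the flip factors permute $\reg H \leftrightarrow \reg F$ and invert $\reg \Pi, \reg \Omega$, both of which leave $\piv = \Xi_{\reg H}\Xi_{\reg K}\Xi_{\reg F}$ invariant. For the middle factor, I would argue that $\mfc^\dagger$ maps $\piv$ into the subspace $\piv'$ where each of $\reg H, \reg K, \reg F$ lies in $\Gamma_{\cdot}$ at the relevant chain input (namely $u_L$, $u_R \oplus D_h(u_L)$, and $u_L \oplus D_k(\cdot)$, with $u = \pi(x)$) and in $\Xi_{\cdot}$ at all other inputs, and that $\mfc$ is its inverse on this subspace. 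The key observation is that in a basis fixing the controlling registers, each of $U^{(H)\dagger}, U^{(K)\dagger}, U^{(F)\dagger}$ reduces to a standard $\fc^\dagger$ applied at a specific input of the target register, which maps $\Xi_{\cdot}$ into $\Gamma_{\cdot}$ there while commuting with $\Xi_{x'}$ at all other inputs. The inner $\mfpu$ reads the three chain values (all well-defined in $\piv'$) to XOR $\varphi(x)$ into $\reg Y$ without modifying the databases, so it preserves $\piv'$.

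For the second assertion, the claim reduces to the inclusion $\algo H_{\reg X} \otimes \hval_{\reg P} \subseteq \hqval_{\reg{XP}}$. Fix $\ket{b,x}_{\reg X}\ket{\psi}_{\reg P}$ with $\piv\ket{\psi} = \ket{\psi}$. Applying $\mathsf{ctrl\mhyphen}\mflip^\dagger$ preserves validity, and applying $\mfc^\dagger$ on input $x$ decompresses $\reg H$, $\reg K$, and $\reg F$ at the three chain inputs, taking each target sub-register from $\Xi_{\cdot}$ into $\Gamma_{\cdot}$ by the same reasoning as above. Hence all three chain values are defined and the supported injective database contains $x$ (resp.\ the flipped input for $b=1$) in its domain, so $\inD$ fixes this intermediate state and conjugating back gives $\piqv\ket{b,x,\psi} = \ket{b,x,\psi}$. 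The main technical hurdle throughout is the controlled compressions $U^{(K)}, U^{(F)}$: their dependence on $\reg H$ (and $\reg K$) is what necessitates the basis-expansion step that reduces each branch to an ordinary $\fc$, after which validity preservation follows from \Cref{lem:comp-oracle-valid}.
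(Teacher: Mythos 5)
Your proposal is correct and follows essentially the route the paper intends: the paper gives no explicit proof of this corollary, relying on the preceding remarks that $\cmf$ preserves $\piv$ (via \Cref{lem:comp-oracle-valid}, i.e.\ exactly the per-branch reduction of the controlled compressions $U^{(K)}, U^{(F)}$ to ordinary $\fc$'s on a single cell, with controls only on cells whose condition is diagonal, that you spell out) together with the implication ``valid $\Rightarrow$ query valid'' and the validity of the initial state $\ket{\mathbf P(\boti)}$. One small remark: you use the inclusion $\algo H_{\reg X} \otimes \hval \subseteq \hqval$, which is indeed the direction the corollary's ``thus'' requires and matches the paper's prose (``if the purification register is valid then decompressing on any query point will result in a well defined answer''), even though the displayed relation just before the corollary is typeset with the inclusion reversed.
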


Note that the projectors onto valid or query valid do not commute with projectors onto the sophisticated subspace. We would like to focus our attention on states which are both query valid and sophisticated, as these are the ones which we will have tools to analyze. Unfortunately, given the preceding fact, it is difficult to understand this intersection space.

Instead, we will work with a subspace of the sophisticated space which we call the \emph{elegant} subspace, denoted $\hele$. This space will act as a proxy for the intersection of sophisticated and query valid. It is a subspace of $\algo H_{\reg X} \otimes \hsoph$. We write $\piele$ for the projector onto $\hele$, and can define for any $b \in \bit, x \in [N]$

\begin{align}
    \ket{+_{x, \mathbf P(I)}} &\coloneqq \frac{1}{\sqrt{N-|I|}} \sum_{y \in [N] \setminus \im(I)} \ket{{\mathbf P(I[x \ra y])}}, \\
    \piele_{\reg{XP}} \ket{b, x}_{\reg{X}} &\coloneqq \ket{b, x}_{\reg{X}} \otimes \sum_{I \in \mathbf I|^x \,: \, |I| < 2^m} \left(\pisoph - \outerprod{+_{x, \mathbf P(I)}}{+_{x, \mathbf P(I)}}\right)_{\reg P}
\end{align}
To understand this definition, consider a state of the form \begin{align}
    \ket{\psi}_{\reg{XP}} = \ket{0, x}_{\reg X} \left(\alpha_{\bot} \ket{\mathbf P(I)}  +\sum_{y \in [N] \setminus \im(I)} \alpha_y \ket{{\mathbf P(I[x \ra y])}}\right)_{\reg P},
\end{align}
for some $x \in [N]$ and $I \in \mathbf I|^x$ of size less then $2^n$. This state will lie in the elegant subspace if and only if $\sum_{y \in [N] \setminus \im(I)} \alpha_y = 0$. If we replace the $\mathbf P(I)$ and ${\mathbf P(I[x\ra y])}$ with $I$ and $I[x\ra y]$ above (considering a purification register $\algo H(\mathbf I)$), then this is precisely the condition that we will obtain a valid output after applying decompression $\pc^\dagger_{\reg{XP}}$. In this way, the elegant subspace captures states which, after applying the intertwiner $\iso$, will have a query valid answer on decompression $\pc^\dagger_{\reg{XP}}$. 

Note also that we do not consider the value of $b$ in the $\reg X$ register; this subspace will be most relevant when the decompression operator is applied during the compressed masked Feistel. In this scenario, the database will have already been flipped in the case $b=1$. As a result, this subspace is defined assuming a forwards query.
It will also be worth defining a version of the elegant subspace conjugated by the controlled flip, called \emph{flip elegant}. We will define it by its projector. Intuitively, this projector capture the same structure as the elegant projector; the state before applying $\mathsf{ctrl\mhyphen}\mflip$ will be flip elegant (i.e. the input state to $\cmf$), and the state after applying the $\mathsf{ctrl\mhyphen}\mflip$ will be elegant (i.e. the input state to decompression $\mfc^\dagger$). We write \begin{align}
    \pifele_{\reg{XP}} &\coloneqq (\mathsf{ctrl\mhyphen}\mflip \cdot \piele \cdot \mathsf{ctrl\mhyphen}\mflip^\dagger)_{\reg{XP}}.
\end{align}

While this subspace is not exactly the same as the intersection of sophisticated and query valid, it can act as a proxy. We can formalize this using the following lemma.

\begin{restatable}{lemma}{validele}
    Let $\ket{\psi}_{\reg{XP}} \in \algo H_{\reg X} \otimes \hsoph$ be a sophisticated quantum state satisfying $\ket{\psi}_{\reg{XP}} \in  \pisoph_{\reg P}$, and supported on purifications of size at most $t$. If we define \begin{align}
        \ket{\phi}_{\reg{XP}} &\coloneqq \piqv_{\reg{XP}} \ket{\psi}_{\reg{XP}},& 
        \ket{\mu}_{\reg{XP}} &\coloneqq \pifele_{\reg{XP}} \ket{\psi}_{\reg{XP}},
    \end{align}
    then we have \begin{align}
        \norm{\ket{\phi}_{\reg{XP}}-\ket{\mu}_{\reg{XP}}} &= O\left(\sqrt{\frac{t^2}{2^n}}\right).
    \end{align}
    \label{lem:ele-valid}
\end{restatable}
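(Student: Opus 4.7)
Since $\mathsf{ctrl\mhyphen}\mflip$ is unitary and preserves $\hsoph$ (because $\mflip\ket{\mathbf P(I)}=\ket{\mathbf P(I^{-1})}$), and since $\pifele$ and $\piqv$ are the conjugations of $\piele$ and $\mfc\,\inD\,\mfc^\dagger$ respectively by $\mathsf{ctrl\mhyphen}\mflip$, the plan begins by reducing the statement to bounding $\|(\mfc\,\inD\,\mfc^\dagger-\piele)\ket{\psi'}\|$ where $\ket{\psi'}\coloneqq\mathsf{ctrl\mhyphen}\mflip^\dagger\ket{\psi}$ remains sophisticated and of purification size at most $t$. This reduction also collapses the two query-direction cases $b=0,1$ into a single case.

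Fixing $\reg{X}=\ket{b,x}$, I would partition $\hsoph$ by $\mathbf I|^x$ and write
\begin{equation*}
\ket{\psi'_{b,x}}=\sum_{I\in\mathbf I|^x}\bigl(\alpha_I\ket{\mathbf P(I)}+\gamma_I\ket{+_{x,\mathbf P(I)}}+\ket{\mathrm{orth}_I}\bigr),
\end{equation*}
where $\ket{\mathrm{orth}_I}\in\spn{\ket{\mathbf P(I[x\ra y])}:y\in[N]\setminus\im(I)}$ is orthogonal to $\ket{+_{x,\mathbf P(I)}}$; within each $I$-sector these three vectors are pairwise orthogonal, and different $I$-sectors are themselves orthogonal. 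Both $\piele$ and $\mfc\,\inD\,\mfc^\dagger$ respect this block decomposition, so the bound follows by summing per-sector errors in quadrature. The definition of $\piele$ immediately gives $\piele\ket{\psi'_{b,x}}=\sum_I(\alpha_I\ket{\mathbf P(I)}+\ket{\mathrm{orth}_I})$.

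The technical heart is to show that $\mfc\,\inD\,\mfc^\dagger$ produces the same output up to $O(\sqrt{t^2/2^n})$ on each of the three components. On $\ket{\mathbf P(I)}$, \Cref{lem:canonical-decomp} with \Cref{rem:canonical-comp-close} and the normalization estimate of \Cref{lem:twirl-perm-uniform} give $\mfc^\dagger\ket{\mathbf P(I)}=\ket{+_{x,\mathbf P(I)}}+O(\sqrt{t^2/2^n})$; since $\ket{+_{x,\mathbf P(I)}}$ lies entirely in the image of $\inD$ and $\mfc\,\mfc^\dagger=\Id$, this yields $\mfc\,\inD\,\mfc^\dagger\ket{\mathbf P(I)}\approx\ket{\mathbf P(I)}$. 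On $\ket{\mathrm{orth}_I}$, cromulence makes the state take the form $\sum_v\alpha_v\ket{+_{\mathbf D(A[u\ra v])}}$ in each $(\pi,\omega)$-slice with $\sum_v\alpha_v\approx 0$, so \Cref{lem:canonical-H-close,lem:canonical-K-close,lem:canonical-F-close} give $\mfc^\dagger\ket{\mathrm{orth}_I}\approx\ket{\mathrm{orth}_I}$ (the zero-sum condition controls $T$ in \Cref{lem:canonical-F-close}), and subsequently $\inD$ and $\mfc$ each act as approximate identity. On $\ket{+_{x,\mathbf P(I)}}$, we show $\|\inD\mfc^\dagger\ket{+_{x,\mathbf P(I)}}\|=O(\sqrt{t^2/2^n})$ by expanding in $(\pi,\omega)$-slices so that the internal state is approximately $\frac{1}{\sqrt{|\mathbf V|}}\sum_v\ket{+_{\mathbf D(A[u\ra v])}}$; the adjoint direction of \Cref{lem:canonical-decomp} sends this back approximately to $\ket{+_{\mathbf D(A)}}$, so globally $\mfc^\dagger\ket{+_{x,\mathbf P(I)}}\approx\ket{\mathbf P(I)}$, which is annihilated by $\inD$.

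The hard step is the analysis on $\ket{+_{x,\mathbf P(I)}}$: \Cref{lem:canonical-H-close,lem:canonical-K-close,lem:canonical-F-close} do not yield approximate-identity behavior there (the uniform superposition in fact maximizes $T$ in \Cref{lem:canonical-F-close}). Instead one invokes \Cref{lem:canonical-decomp} in its adjoint form, treating $\ket{+_{x,\mathbf P(I)}}$ as the decompressed image of $\ket{\mathbf P(I)}$, and combines this with the tight cromulence normalization of \Cref{def:cromulent} to preserve the $O(\sqrt{t^2/2^n})$ error bound throughout and avoid the $\sqrt[4]{t^2/2^n}$ loss that would arise from naive unitarity-only arguments.
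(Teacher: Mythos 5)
Your overall plan (strip the controlled flip, decompose each $I$-sector of $\hsoph$ into $\ket{\mathbf P(I)}$, $\ket{+_{x,\mathbf P(I)}}$ and their orthogonal complement, and track $\mfc\,\inD\,\mfc^\dagger$ on each piece) amounts to re-deriving the machinery of \Cref{lem:cmf-undef-act,lem:cmf-def-act,lem:cmf-nele-act} inline, but the step you yourself flag as the hard one is handled incorrectly. In $\piqv$ the operator applied first is the \emph{decompression} $\mfc^\dagger$, and masked Feistel compression is not an involution ($\mfc \neq \mfc^\dagger$, as the paper stresses), so you cannot infer the action of $\mfc^\dagger$ on $\ket{+_{x,\mathbf P(I)}}$ from its action on $\ket{\mathbf P(I)}$. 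The ``adjoint direction'' of \Cref{lem:canonical-decomp} tells you that the \emph{compression} $\mfc$ sends the chain-extended uniform superposition back to $\ket{\mathbf P(I)}$; it says nothing about $\mfc^\dagger\ket{+_{x,\mathbf P(I)}}$. In fact your intermediate claim $\mfc^\dagger\ket{+_{x,\mathbf P(I)}}\approx\ket{\mathbf P(I)}$ is false and directly contradicts \Cref{lem:cmf-nele-act}: per $(\pi,\omega)$-slice the $\reg H$ and $\reg K$ decompression steps act near the identity (their outputs are recorded downstream in the chain, cf.\ \Cref{lem:recorded-no-comp}), but the final $\reg F$ step meets an output value $u_R\oplus z\oplus v_R$ that sits in an unrecorded uniform superposition over $v_R$ and is therefore swapped to $\bot$, leaving a dangling length-two semi-chain. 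That image is (nearly) orthogonal to $\ket{\mathbf P(I)}$ and to all of $\hsoph$ — which is exactly why \Cref{lem:cmf-nele-act} can assert it is annihilated by $\pisoph$. Your desired conclusion for this component, $\norm{\inD\,\mfc^\dagger\ket{+_{x,\mathbf P(I)}}}=O(\sqrt{t^2/2^n})$, happens to be true, but only because the dangling semi-chain leaves the query point undefined; proving it requires the computation of \Cref{lem:cmf-nele-act}, not the argument you give, so as written this step fails.

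Two smaller points. On the orthogonal component, ``the zero-sum condition controls $T$ in \Cref{lem:canonical-F-close}'' is not correct pointwise in $\omega$: with $\sum_y\alpha_y=0$ one can still have $T=\Theta(2^n)$ for a fixed bad $\omega$; what controls it is the expectation of $T$ over a cromulent twirl, via \Cref{lem:twirl-perms-JL} together with \Cref{lem:twirl-perm-second-mom}, which is how \Cref{lem:cmf-def-act} argues (also note the per-slice coefficients restrict to $v\in\mathbf V_{u,A}$, so their sum is only approximately related to the global zero sum). Finally, your quadrature step needs a justification that $\mfc\,\inD\,\mfc^\dagger$ (approximately) respects the sector decomposition, which in the paper comes from the disjointness statements culminating in \Cref{cor:ortho-arrows}. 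For comparison, the paper's own proof is a short splitting $\ket{\phi}-\ket{\mu}$ into $\pinqv\ket{\mu}$ and $\piqv(\ket{\psi}-\ket{\mu})$, bounding the first via \Cref{cor:close-mfc,cor:ideal-cmf-act,rem:ideal-action} and the second via \Cref{cor:close-mfc,lem:cmf-nele-act}; adopting that route, or importing \Cref{lem:cmf-nele-act} to repair your third component, would close the gap.
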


We defer the proof of this lemma until we have more technical machinery.

\subsection{A twirling lemma}

The following lemma will be key in analyzing masked Feistel. It allows us to formalize the intuition that the twirling permutations hide the internal Feistel structure. For intuition, consider the lemma statement without the condition $\sum_{y \in [N]} \alpha_y=0$. Then, we could set $\alpha_y = 2^{-n}$, and we would obtain $\E_{\omega \sim D}[T] = 2^n$. What we show is that, for any orthogonal vector, this quantity is comparatively small. This, in turn, will allow us to show that the masked Feistel compression operator is indeed close to the identity on the subspace where the corresponding compressed permutation oracle acts as the identity.

\begin{lemma}
    Fix some vector $\ket{\alpha} = \sum_{y \in [N]} \alpha_y \ket{y}$ of unit norm, and such that $\sum_{y \in [N]} \alpha_y = 0$. Let $\algo D$ be a distribution such that 
    for all $y \neq y' \in [N]$ with $\alpha_y \neq 0$ and $\alpha_{y'} \neq 0$, and $l \in \bit^n$, \begin{align}
        \Pr_{(\pi, \omega) \sim \algo D}[\omega^{-1}(y)_L = l] &= 2^{-n}, & \Pr_{(\pi, \omega) \sim \algo D}[\omega^{-1}(y)_L = \omega^{-1}(y')_L = l] &= 2^{-2n} \pm O(t^22^{-3n}).
    \end{align}
    Then we have \begin{align}
        \E_{(\pi, \omega) \sim \algo D} [T] &=
        \E_{(\pi, \omega) \sim \algo D} \left[\sum_{l \in \bit^n} \abs{\sum_{r \in \bit^n} \alpha_{\omega(l \Vert r)}}^2 \right] = O(t^2).
    \end{align}
    \label{lem:twirl-perms-JL}
\end{lemma}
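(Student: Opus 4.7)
The plan is to open up the quadratic form defining $T$, re-index by the output $y = \omega(l \Vert r)$, take the expectation, and then exploit the two hypothesized moments on $\omega^{-1}(\cdot)_L$ together with the zero-sum condition $\sum_y \alpha_y = 0$.

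First I would write
\begin{align}
    T &= \sum_{l \in \bit^n} \sum_{r, r' \in \bit^n} \alpha_{\omega(l\Vert r)} \overline{\alpha_{\omega(l\Vert r')}} = \sum_{y, y' \in [N]} \alpha_y \overline{\alpha_{y'}} \cdot \ind[\omega^{-1}(y)_L = \omega^{-1}(y')_L],
\end{align}
so that taking expectations gives $\E_{\algo D}[T] = \sum_{y, y'} \alpha_y \overline{\alpha_{y'}} \cdot \sum_{l} \Pr[\omega^{-1}(y)_L = \omega^{-1}(y')_L = l]$. The diagonal terms ($y = y'$) contribute $\sum_y \abs{\alpha_y}^2 \cdot 1 = 1$ using the first hypothesis summed over $l$. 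For the off-diagonal terms, the second hypothesis gives $\sum_l \Pr[\cdots] = 2^{-n} \pm O(t^2 2^{-2n})$ after summing $2^n$ copies of the per-$l$ estimate.

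Next I would split the off-diagonal contribution into its main term and an error. The main term is $2^{-n} \sum_{y \neq y'} \alpha_y \overline{\alpha_{y'}}$, and here the zero-sum condition saves us: $\sum_{y \neq y'} \alpha_y \overline{\alpha_{y'}} = \abs{\sum_y \alpha_y}^2 - \sum_y \abs{\alpha_y}^2 = 0 - 1 = -1$, so this contributes $-2^{-n}$. For the error term, Cauchy--Schwarz gives $\sum_{y \neq y'} \abs{\alpha_y \overline{\alpha_{y'}}} \leq (\sum_y \abs{\alpha_y})^2 \leq N \cdot \sum_y \abs{\alpha_y}^2 = N = 2^{2n}$, so the total error is at most $O(t^2 2^{-2n}) \cdot 2^{2n} = O(t^2)$. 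Assembling: $\E[T] = 1 - 2^{-n} \pm O(t^2) = O(t^2)$.

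There is not really a technical obstacle beyond bookkeeping; the only subtle step is making sure that the per-$(y,y')$ error $O(t^2 2^{-3n})$ is summed correctly both over the $2^n$ values of $l$ and over the $\leq 2^{2n}$ off-diagonal pairs, weighted by $\abs{\alpha_y \overline{\alpha_{y'}}}$. One should apply Cauchy--Schwarz to the $\ell_1$-norm $\sum_y \abs{\alpha_y}$ rather than trying to factor the sum, since the $\alpha_y$ need not be nonnegative. The hypotheses on $\omega$ being stated only for indices where $\alpha_y \neq 0$ and $\alpha_{y'} \neq 0$ is automatic: terms with $\alpha_y = 0$ contribute nothing to either the main or the error sums.
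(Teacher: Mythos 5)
Your proposal is correct and follows essentially the same argument as the paper: expand the square into a double sum over output pairs, use the zero-sum condition to kill the off-diagonal main term, and bound the varying-sign error terms via the triangle inequality together with $\left(\sum_y \abs{\alpha_y}\right)^2 \leq N$. The only difference is that the paper computes $\E[T_l]$ for each fixed $l$ and sums at the end, while you sum over $l$ first — an immaterial reordering.
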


\begin{proof}
    Let us compute the expectation of $T_l$, defined as \begin{align}
        T_l &= \abs{\sum_{r \in \bit^n} \alpha_{\omega(l \Vert r)}}^2.
    \end{align}
    We can write $I_y$ to be the indicator variable that is $1$ if $\omega^{-1}(y)_L = l$. If $\alpha_y=0$, then we take $I_y$ to be $1$ with probability $2^{-n}$ independent of the other $I_{y'}$, which we can do without changing the following equation. Then we can write \begin{align}
        \E[T_l] &= \E\left[\abs{\sum_{y \in [N]} I_y\alpha_{y}}^2\right] \\
        &= \E\left[\sum_{y \in [N]} I_y^2 \abs{\alpha_y}^2 + \sum_{y \neq y' \in [N]} I_yI_{y'} \abs{\alpha_y \alpha_{y'}}\right] \\
        &= \sum_{y \in [N]} \E[I_y] \alpha_y\alpha_y^* + \sum_{y \neq y' \in [N]} \E[I_yI_{y'}] \alpha_y \alpha_{y'}^* \\
        &= 2^{-n} \underbrace{\sum_{y \in [N]} \alpha_y \alpha_y^*}_{S_1} + 2^{-2n} \underbrace{\sum_{y \neq y' \in [N]} \alpha_y \alpha_{y'}^*}_{S_2} + \underbrace{\sum_{y \neq y' \in [N]} \pm O(t^22^{-3n})\alpha_y \alpha_{y'}^*}_{S_3},
    \end{align}
    where the last line follows from cromulence of $\algo D$.
    It is clear that $S_1=1$. We can compute $S_2$ as \begin{align}
        S_2 &= \sum_{y \neq y' \in [N]} \alpha_y \alpha_{y'}^* \\
        &= \sum_{y \in [N]} \alpha_y \left(\sum_{y' \in [N] \setminus \{y\}} \alpha_{y'}^*\right) \\
        &= \sum_{y \in [N]} - \alpha_y \alpha_y^* & \text{(By $\sum_{y\in[N]} \alpha_y=0$)} \\
        &= -1.
    \end{align}
    It remains to compute $S_3$. Note that we cannot factor out the $O(t^22^{-3n})$, as the constant pre-factor and sign may be different for different values of $y$ and $y'$. Instead we apply triangle inequality, obtaining \begin{align}
        S_3 &= \sum_{y \neq y' \in [N]} \pm O(t^22^{-3n})\alpha_y \alpha_{y'}^* \\
        &\leq O(t^22^{-3n}) \sum_{y \neq y' \in [N]} |\alpha_y| |\alpha_{y'}^*| &\text{(Triangle inequality)} \\
        &\leq O(t^22^{-3n}) 
        \left(\sum_{y\in [N]} |\alpha_y|\right)^2 \\
        &\leq O(t^22^{-n}).
    \end{align}
    We therefore have $0 \leq \E[T_l] = O(t^22^{-n})$ for all $l \in \bit^n$, and so the claim follows by linearity of expectation, and the identity $T = \sum_{l \in \bit^n} T_l$.
\end{proof}

\subsection{Sanitized and ideal masked Feistel}

We will here define two modification to the masked Feistel oracle, called the \emph{ideal} masked Feistel denoted $\overline \mfc$, and the \emph{sanitized} masked Feistel denoted $\widetilde \mfc$, both of which will be helpful in our analysis. For this section, we will imagine the $\reg X$ register to contain only the query point $\reg X$, i.e. the forward query case. Recall that we answer inverse queries by flipping the truth table, answering as a forward query, and flipping back. In this way, due to the symmetry of both masked Feistel and our compressed permutation oracle, it will suffice to analyze only forwards queries here. 

Let us begin by defining the ideal masked Feistel compression operator $\overline\mfc$. This operator is the identity on the non-elegant subspace. On the elegant subspace, we will first define it for some fixed $I \in \mathbf I|^x$ of size $|I| < 2^n$, defining $\mfc_{x, I}$ as an operator on $\algo H(\{\mathbf P(I') \, : \, I' \in I|^x\})$.
\begin{align}
    \overline \mfc_{x, I} &\coloneqq \Id-\proj{\mathbf P(I)}-\proj{+_{x, \mathbf P(I)}} +\outerprod{\mathbf P(I)}{+_{x, \mathbf P(I)}} + \outerprod{+_{x, \mathbf P(I)}}{\mathbf P(I)}\end{align}
The masked Feistel decompression and its coherently controlled counterpart are then naturally defined as
\begin{align}
    \overline\mfc_{x, \reg{P}} &\coloneqq  \bigoplus_{I \in \mathbf I|^x, |I|<2^n} \overline \mfc_{x, I, \reg P}, &
    \overline \mfc_{\reg{XP}} \ket{x}_{\reg X} &\coloneqq \ket{x}_{\reg X} \otimes \overline \mfc_{x, \reg P}.
\end{align}

Note the clear similarity to the definition of the permutation compression operator $\pc$; the only difference being the use of $\mathbf P(I)$ in place of $I$, reflecting the distinct---but isomorphic---Hilbert spaces on which these operators act. This can be formalized via the following two remarks.

\begin{remark}
    The operator $\iso$ intertwines ideal masked Feistel decompression $\overline \mfc^\dagger$ on the elegant subspace with decompression $\pc^\dagger$ on the corresponding subspace. \begin{align}
        \iso_{\reg P} \cdot\overline \mfc^\dagger_{\reg P}\cdot\piele_{\reg P} = \pc^\dagger_{\reg P} \cdot \iso_{\reg P}\cdot \piele_{\reg P}
    \end{align}
    \label{rem:intertwiner-comp}
\end{remark}

\begin{remark}
    The action of $\overline \mfc_{\reg{XP}}$ interchanges the elegant space and the intersection of $\inD$ and the sophisticated space, \begin{align}
        \piele_{\reg{XP}} \overline \mfc_{\reg{XP}}  = \overline \mfc_{\reg{XP}}  (\inD\pisoph)_{\reg{XP}}.
    \end{align}
    \label{rem:ideal-action}
\end{remark}

We use this to define the ideal masked feistel oracle $\overline \cmf$, which acts as \begin{align}
    \overline \cmf_{\reg{XYP}} &\coloneqq \mathsf{ctrl\mhyphen}\mflip_{\reg{XP}} \cdot \overline \mfc_{\reg{XP}} \cdot \mfpu_{\reg{XYP}} \cdot \overline \mfc_{\reg{XP}}^\dagger \cdot \mathsf{ctrl\mhyphen}\mflip_{\reg{XP}}^\dagger.
\end{align}

The following remark is based on similar reasoning to the prior.

\begin{remark}
    The ideal compressed mask Feistel operator is exactly intertwined by $\iso$ with the compressed permutation oracle, on the flip elegant subspace.
    \begin{align}
        \iso_{\reg P} \cdot \overline \cmf_{\reg{XYP}}\cdot \pifele_{\reg P} &= \cp_{\reg{XYP}} \cdot \iso_{\reg P} \cdot \pifele_{\reg P}.
    \end{align}
    \label{rem:ele-ideal-int}
\end{remark}

The ideal operator is what we are aiming for, in that its action is precisely analogous to the compressed permutation oracle up to a change of basis. To show that the action of the ideal $\overline \cmf$ is close the actual $\cmf$, we will define the sanitized compression as an intermediate operation. This operator is similar to standard compression, except it will maintain canonicity of the internal $D_h, D_k, D_f$ tuple of databases. We define \begin{align}
    \widetilde\mfc_{x, \reg{P}} \ket{\pi}_{\reg \Pi} &\coloneqq \ket{\pi}_{\reg \Pi} \otimes  G_{\pi(x), \reg{HKF}}.
\end{align}

Observe that this operator is close the masked Feistel compression operator.

\begin{corollary}[of \Cref{lem:canonical-decomp}]
    The sanitized and unsanitized masked Feistel compression operators are close: $\norm{\widetilde\mfc - \mfc}_{\leq t} = O\left(\sqrt{\frac{t^2}{2^n}}\right)$.
    \label{cor:close-mfc}
\end{corollary}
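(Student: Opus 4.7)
The plan is to reduce this statement to the per-internal-query-point bound $\norm{G_{u, \reg{HKF}} - U_{u, \reg{HKF}}}_{\leq t} = O(\sqrt{t^2/2^n})$ already established in \Cref{rem:canonical-comp-close}. (I note in passing that the attribution to \Cref{lem:canonical-decomp} in the corollary statement appears to be a slight misreference; the closeness of $G_u$ and $U_u$ is properly the content of \Cref{rem:canonical-comp-close}, and that is the lemma I would cite.)

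First I would unfold the definitions $\mfc_{x,\reg{\Pi HKF}}\ket{\pi}_{\reg \Pi} = \ket{\pi}_{\reg \Pi} \otimes U_{\pi(x), \reg{HKF}}$ and $\widetilde\mfc_{x, \reg P}\ket{\pi}_{\reg \Pi} = \ket{\pi}_{\reg \Pi} \otimes G_{\pi(x), \reg{HKF}}$ to observe that both operators are block-diagonal in the computational basis of $\reg X$ and $\reg \Pi$, and act trivially on $\reg \Omega$ and the query value register $\reg Y$. The difference therefore decomposes as
\begin{align}
    \widetilde\mfc_{\reg{X\Pi HKF}} - \mfc_{\reg{X\Pi HKF}} &= \bigoplus_{x \in [N],\, \pi \in \mathbf S_N} \proj{x}_{\reg X} \otimes \proj{\pi}_{\reg \Pi} \otimes \left(G_{\pi(x), \reg{HKF}} - U_{\pi(x), \reg{HKF}}\right).
\end{align}
The database-size projector $\pilt$ acts on each of $\reg H, \reg K, \reg F$ only, so it commutes with this decomposition and the restricted operator retains the same direct-sum form.

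Next, applying \Cref{lem:dirsum-norm} gives
\begin{align}
    \norm{\widetilde\mfc - \mfc}_{\leq t} &= \max_{x \in [N],\, \pi \in \mathbf S_N} \norm{G_{\pi(x), \reg{HKF}} - U_{\pi(x), \reg{HKF}}}_{\leq t}.
\end{align}
Since $\pi(x)$ ranges over all of $[N]$ as $x, \pi$ vary, this supremum is simply $\max_{u \in [N]} \norm{G_{u, \reg{HKF}} - U_{u, \reg{HKF}}}_{\leq t}$, and the claimed $O(\sqrt{t^2/2^n})$ bound follows immediately from \Cref{rem:canonical-comp-close}.

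There is essentially no obstacle here: the entire argument is bookkeeping. All of the genuine analytic work---namely, bounding the component-wise discrepancies $\norm{G^{(H)}_u - U^{(H)}_u}_{\leq t}$, $\norm{G^{(K)}_u - U^{(K)}_u}_{\leq t}$, and $\norm{G^{(F)}_u - U^{(F)}_u}_{\leq t}$ via \Cref{lem:comp-no-bad} and then combining them by triangle inequality---has already been absorbed into \Cref{rem:canonical-comp-close}. The only mild subtlety is verifying that the control structure on $\reg{X\Pi}$ is preserved when we restrict to bounded-size databases on $\reg{HKF}$, which is immediate because the controlling registers and the bounded registers are disjoint.
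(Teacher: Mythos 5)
Your proof is correct and follows what the paper intends: the paper states \Cref{cor:close-mfc} without an explicit argument, and the intended derivation is precisely your bookkeeping—both $\mfc$ and $\widetilde\mfc$ are controlled on $\reg X$ and $\reg \Pi$ with blocks $U_{\pi(x)}$ and $G_{\pi(x)}$, so \Cref{lem:dirsum-norm} reduces the claim to the uniform-in-$u$ bound $\norm{G_{u,\reg{HKF}} - U_{u,\reg{HKF}}}_{\leq t} = O(\sqrt{t^2/2^n})$ of \Cref{rem:canonical-comp-close}. Your observation that \Cref{rem:canonical-comp-close} (rather than \Cref{lem:canonical-decomp}) is the correct source is also well taken.
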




We will therefore be able to replace the unsanitized $\mfc$ with its sanitized variant $\widetilde \mfc$.
It will further be helpful to define the \emph{allowability} projector $\Pi^{\heart}_{x, \reg{P}}$. This projects onto the subspace in which the query point $x$ is allowed by the purification, i.e. \begin{align}
    \Pi^{\heart}_{x, \reg{P}} &= \sum_{\pi, \omega, D \, : \, D \heart \pi(x)} \outerprod{\pi, \omega, D}{\pi, \omega, D}_{\reg P}.
\end{align}
Naturally, the coherently controlled version is defined as $\Pi^{\heart}_{\reg{XP}} \ket{x}_{\reg X} = \ket{x}_{\reg X} \otimes \Pi^{\heart}_{x, \reg{P}}$. A useful fact about this projector is that it acts gently on the sophisticated subspace when the input point is undefined, which follows from the cromulence of $\algo D$.

\begin{lemma}
    We have \begin{align}
        \norm{(\Id - \Pi^\heart_{x}) \xninD{x} \pisoph }_{\leq t} &= O\left(\sqrt{\frac{t^2}{2^n}}\right).
    \end{align}
    \label{lem:heart-norm}
\end{lemma}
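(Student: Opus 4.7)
The plan is to unfold the operator norm by choosing a worst-case input state and then decompose that state over the natural basis $\{\ket{\mathbf P(I)}\}$ of the sophisticated subspace. Concretely, it suffices to bound $\norm{(\Id - \Pi^\heart_x)\ket{\phi}}$ for an arbitrary unit vector $\ket{\phi}$ lying in the image of $\xninD{x} \pisoph \pilt$. Any such $\ket{\phi}$ has the form
\[
\ket{\phi} = \sum_{\substack{I \in \mathbf I, \, x \notin \dom(I) \\ |I| \leq t}} \alpha_I \ket{\mathbf P(I)}, \qquad \sum_I |\alpha_I|^2 \leq 1,
\]
since $\xninD{x}$ kills every $\ket{\mathbf P(I')}$ with $x \in \dom(I')$, and $\pilt$ restricts to $|I'| \leq t$ because the canonical databases in the support of $\ket{+_{\mathbf D(\pi * I' * \omega)}}$ all have size exactly $|I'|$.

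Next I would observe a crucial orthogonality: distinct $\ket{\mathbf P(I)}$ are supported on disjoint sets of computational basis states of $\reg P$. Indeed, from any basis state $\ket{\pi, \omega, D}$ appearing in the support of some $\ket{\mathbf P(I)}$ one recovers $I = \pi^{-1} * \supp(D) * \omega^{-1}$ uniquely, since canonical $D$ supports a unique injective database. Because $\Pi^\heart_x$ is diagonal in the computational basis, the vectors $(\Id-\Pi^\heart_x)\ket{\mathbf P(I)}$ are pairwise orthogonal, which gives
\[
\norm{(\Id - \Pi^\heart_x)\ket{\phi}}^2 \;=\; \sum_I |\alpha_I|^2 \cdot \norm{(\Id - \Pi^\heart_x)\ket{\mathbf P(I)}}^2.
\]

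The main (and only) substantive step is the per-$I$ bound. For a fixed $I$ with $x \notin \dom(I)$, observe that on a basis state $\ket{\pi, \omega, D}$ in the support of $\ket{\mathbf P(I)}$, the database $D$ is canonical and supports $A := \pi * I * \omega$, so $D$ allows $\pi(x)$ iff $A$ does, iff $(\pi, \omega) \in \mathbf R_{x, I}$. Consequently
\[
\norm{(\Id-\Pi^\heart_x)\ket{\mathbf P(I)}}^2 = \sum_{(\pi, \omega) \in \mathbf R_I \setminus \mathbf R_{x, I}} p_{\algo D_I}(\pi,\omega) \cdot \norm{\ket{+_{\mathbf D(A)}}}^2 = \Pr_{(\pi,\omega) \sim \algo D_I}[(\pi,\omega) \notin \mathbf R_{x, I}],
\]
using that $\norm{\ket{+_{\mathbf D(A)}}}=1$ whenever $A$ is allowable (which it is, by definition of $\mathbf R_I$). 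The first cromulence property of $\algo D$ then bounds this by $O(|I|^2/2^n) \leq O(t^2/2^n)$.

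Combining the last two displays with $\sum_I |\alpha_I|^2 \leq 1$ yields $\norm{(\Id-\Pi^\heart_x)\ket{\phi}}^2 = O(t^2/2^n)$, and taking the square root and supremum over $\ket{\phi}$ gives the claim. I do not expect any obstacle beyond the bookkeeping above; the only place that anything nontrivial enters is the invocation of cromulence, and the decomposition and orthogonality are bought directly from the definition of $\ket{\mathbf P(I)}$.
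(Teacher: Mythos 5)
Your proposal is correct and follows essentially the same route as the paper: decompose the input over the orthogonal family $\ket{\mathbf P(I)}$ with $x \notin \dom(I)$, bound each component by $\Pr_{(\pi,\omega)\sim\algo D_I}[(\pi,\omega)\notin\mathbf R_{x,I}] = O(t^2/2^n)$ via the first cromulence property (\Cref{lem:twirl-perm-prob}), and combine using the disjoint supports of the different-$I$ components (the paper invokes \Cref{lem:dirsum-norm} where you apply Pythagoras directly, which is the same argument).
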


\begin{proof}
    First, let us consider some fixed $I \in \mathbf I_{\leq 2^n}$ where $x \not\in \dom(I)$ (any state orthogonal to this family is annihilated by $\xninD{x}\pisoph$). We can write \begin{align}
        \ket{\mathbf P(I)}_{\reg P} &=  \sum_{(\pi, \omega) \in \mathbf R_I} \sqrt{p_{\algo D_I}(\pi, \omega)}\ket{\pi, \omega}_{\reg{\Pi\Omega}} \otimes \ket{\mathbf D(\pi * I * \omega)}_{\reg{HKF}}, \\
        \norm{\Pi^\heart_{x, \reg P}\ket{\mathbf P(I)}_{\reg P} - \ket{\mathbf P(I)}_{\reg P}} &= \norm{ \sum_{\substack{(\pi, \omega) \in \mathbf R_I, \\
        (\pi, \omega) \not\in \mathbf R_{x, I}
        }} \sqrt{p_{\algo D_I}(\pi, \omega)} \ket{\pi, \omega}_{\reg{\Pi\Omega}} \otimes \ket{\mathbf D(\pi * I * \omega)}_{\reg{HKF}}} \label{eqn:heart-norm}\\
        &= \sqrt{\Pr_{(\pi, \omega) \sim \algo D_{I}} [(\pi, \omega) \not\in \mathbf R_{x, I}]} \\
        &= O\left(\sqrt{\frac{t^2}{2^n}}\right), \qquad\qquad\qquad \text{(By cromulence)}
    \end{align}
    in particular the property established in \Cref{lem:twirl-perm-prob}.
    Further, observe that the state in \Cref{eqn:heart-norm} is supported entirely on purifications $P$ with the property $\supp(P)=I$. Therefore, if we had considered a different database $I$, we would have an orthogonal state. The claim then follows from \Cref{lem:dirsum-norm}.
\end{proof}

Now, we will show closeness of ideal and real in the case where the purification does not support the input point.

\begin{lemma}
    We have \begin{align}
        \norm{(\overline \mfc^\dagger - \mfc^\dagger)_{\reg{XP}} \cdot(\ninD \piele)_{\reg{XP}}}_{\leq t} &= O\left(\sqrt{\frac{t^2}{2^n}}\right).
    \end{align}
    \label{lem:cmf-undef-act}
\end{lemma}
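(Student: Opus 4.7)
The plan is to sanitize $\mfc^\dagger$ using \Cref{cor:close-mfc} and then match the sanitized version to $\overline\mfc^\dagger$ via the explicit formula from \Cref{lem:canonical-decomp}, with the residual coefficient mismatches controlled by cromulence of $\algo D$. By the triangle inequality,
\[
    \norm{(\overline\mfc^\dagger - \mfc^\dagger)(\ninD\piele)}_{\leq t} \leq \norm{(\overline\mfc^\dagger - \widetilde\mfc^\dagger)(\ninD\piele)}_{\leq t} + \norm{\widetilde\mfc^\dagger - \mfc^\dagger}_{\leq t},
\]
and the second term is $O\bigl(\sqrt{t^2/2^n}\bigr)$ by \Cref{cor:close-mfc} (operator norm is invariant under adjoints). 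It remains to bound the first term by the same quantity.

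To analyze the first term, first note that $\ninD\piele = \ninD\pisoph$: each state $\ket{+_{x,\mathbf P(I)}}$ subtracted in the definition of $\piele$ is supported purely on purifications with $x$ defined, so it is annihilated by $\ninD$. A state in $\ninD\pisoph$ with $\reg X$ in $\ket{b,x}$ therefore decomposes along the orthogonal basis $\{\ket{b,x,\mathbf P(I_0)}\}_{I_0 \in \mathbf I|^x}$. Both operators are controlled on $\reg X$, and for each fixed $b,x$ their images on $\ket{b,x,\mathbf P(I_0)}$ for different $I_0$ lie in orthogonal subspaces: $\overline\mfc^\dagger$ by construction produces vectors supported on $\{\ket{\mathbf P(I_0[x\ra y])}\}_y$, while $\widetilde\mfc^\dagger$ acts unitarily on the $\reg{HKF}$ register controlled by $\ket{\pi,\omega}$ and so carries the mutually orthogonal states $\ket{+_{\mathbf D(\pi * I_0 * \omega)}}$ to mutually orthogonal images. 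By \Cref{lem:dirsum-norm} it suffices to bound $\norm{(\overline\mfc^\dagger-\widetilde\mfc^\dagger)\ket{x,\mathbf P(I_0)}}$ uniformly in $x$ and $I_0 \in \mathbf I|^x$ with $|I_0|\leq t$. At an additional additive cost of $O(\sqrt{t^2/2^n})$, via \Cref{lem:heart-norm} combined with the fact that both operators have norm at most $2$, we may restrict to $\Pi^\heart_x\ket{x,\mathbf P(I_0)}$, so that only $(\pi,\omega)\in\mathbf R_{x,I_0}$ survive; correspondingly $A_0 := \pi * I_0 * \omega$ genuinely allows $\pi(x) \notin \dom(A_0)$.

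On this restricted portion, \Cref{lem:canonical-decomp} (together with the unitarity of the canonical compression $G_{\pi(x)}$) yields the exact expression
\[
    \widetilde\mfc^\dagger \Pi^\heart_x\ket{x,\mathbf P(I_0)} = \ket{x}\sum_{(\pi,\omega)\in\mathbf R_{x,I_0}}\sqrt{p_{\algo D_{I_0}}(\pi,\omega)}\,\ket{\pi,\omega}\otimes\frac{1}{\sqrt{|\mathbf V_{\pi(x),A_0}|}}\sum_{v\in\mathbf V_{\pi(x),A_0}}\ket{+_{\mathbf D(A_0[\pi(x)\ra v])}},
\]
while reindexing $\overline\mfc^\dagger\ket{x,\mathbf P(I_0)} = \ket{x,+_{x,\mathbf P(I_0)}}$ by $y \mapsto v = \omega^{-1}(y)$ writes it as the same sum over $(\pi,\omega,v)$ of the same basis states, with coefficient $\sqrt{p_{\algo D_{I_0[x\ra\omega(v)]}}(\pi,\omega)/(N-|I_0|)}$. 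The main (and essentially only) technical step is to compare these two coefficient families. Combining the conditional-probability identity $p_{\algo D_{I_0}}(\pi,\omega)/p_{\algo D_{x,I_0}}(\pi,\omega) = |\mathbf R_{x,I_0}|/|\mathbf R_{I_0}| = 1 - O(t^2/2^n)$ (from \Cref{lem:twirl-perm-prob}) with the last clause of \Cref{def:cromulent} shows that both coefficients equal $\sqrt{p_{\algo D_{x,I_0}}(\pi,\omega)/|\mathbf V_{\pi(x),A_0}|}$ up to a $(1\pm O(\sqrt{t^2/2^n}))$ multiplicative factor. Summing the squared differences across the orthogonal basis telescopes via $\sum_{v\in\mathbf V_{\pi(x),A_0}} 1 = |\mathbf V_{\pi(x),A_0}|$ followed by $\sum_{(\pi,\omega)\in\mathbf R_{x,I_0}} p_{\algo D_{x,I_0}}(\pi,\omega) = 1$ to an overall $O(t^2/2^n)$, whose square root gives the desired $O(\sqrt{t^2/2^n})$ bound and completes the proof.
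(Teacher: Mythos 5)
Your proposal is correct and follows essentially the same route as the paper's proof: replace $\mfc^\dagger$ by the sanitized $\widetilde\mfc^\dagger$ via \Cref{cor:close-mfc}, restrict to the allowed part via $\Pi^\heart$ using \Cref{lem:heart-norm}, compute $\widetilde\mfc^\dagger\Pi^\heart_x\ket{\mathbf P(I)}$ with \Cref{lem:canonical-decomp}, match coefficients to $\ket{+_{x,\mathbf P(I)}}$ by cromulence (\Cref{lem:twirl-perm-prob,lem:twirl-perm-uniform}), and extend to superpositions by orthogonality across the blocks indexed by $I\in\mathbf I|^x$. The only place you are slightly more casual than the paper is in applying the direct-sum reduction \emph{before} the heart projection: your unitarity remark handles $\langle\widetilde\mfc^\dagger\ket{x,\mathbf P(I_0)},\widetilde\mfc^\dagger\ket{x,\mathbf P(I_0')}\rangle$ and the $\overline\mfc^\dagger$ pair, but not the mixed cross terms $\langle\overline\mfc^\dagger\ket{x,\mathbf P(I_0)},\widetilde\mfc^\dagger\ket{x,\mathbf P(I_0')}\rangle$, which require a support argument for the non-allowed components (or, as the paper does, bounding $\norm{(\widetilde\mfc^\dagger\Pi^\heart-\widetilde\mfc^\dagger)\ninD\piele}$ globally first and only then using support-disjointness of the images of $\overline\mfc^\dagger$ and $\widetilde\mfc^\dagger\Pi^\heart$ across blocks).
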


\begin{proof}
We begin with a simpler claim.
\begin{claim*}
    Let $x \in [N]$ and $I \in \mathbf I|^x$ be of size $t=|I| < 2^n$. Then we have \begin{align}
        \norm{\widetilde \mfc_{x}^\dagger \cdot \Pi^{\heart}_{x} \ket{\mathbf P(I)}- \ket{+_{x, \mathbf P(I)}}} &= O\left(\sqrt{\frac{t^2}{2^n}}\right).
    \end{align}
    Further, $\widetilde \mfc_{x} \cdot \Pi^{\heart}_{x} \ket{\mathbf P(I)}$ is supported entirely on the span of $\ket{\mathbf P(I[x \ra y])}$ for $y \in [N] \setminus \im(I)$.
\end{claim*}

    To prove this claim, observe that the state after the projection is
    \begin{align}
            \Pi^\heart_{x} \ket{\mathbf P(I)} &=  \sum_{(\pi, \omega) \in \mathbf R_{x, I}} \sqrt{p_{\algo D_I}(\pi, \omega)} \ket{\mathbf D(\pi * I * \omega)}_{\reg{HKF}} \otimes \ket{\pi, \omega}_{\reg{\Pi\Omega}} \\
            &= \left(1-O\left(\frac{t^2}{2^n}\right)\right) \cdot \sum_{(\pi, \omega) \in \mathbf R_{x, I}} \sqrt{p_{\algo D_{x, I}}(\pi, \omega)} \ket{\mathbf D(\pi * I * \omega)}_{\reg{HKF}} \otimes \ket{\pi, \omega}_{\reg{\Pi\Omega}},
        \end{align}
        by cromulence.
        Applying the $\widetilde\mfc_x$, we obtain \begin{align}
            \widetilde\mfc_x & \cdot \Pi_{\heart x} \ket{\mathbf P(I)} = \left(1-O\left(\frac{t^2}{2^n}\right)\right) \cdot\sum_{(\pi, \omega) \in \mathbf R_{x, I}} \sqrt{p_{\algo D_{x, I}}(\pi, \omega)} \left(G_{\pi(x)}^\dagger\ket{\mathbf D(\pi * I * \omega)}\right)_{\reg{HKF}} \otimes \ket{\pi, \omega}_{\reg{\Pi\Omega}} \\
            &= \left(1-O\left(\frac{t^2}{2^n}\right)\right) \cdot \sum_{\substack{(\pi, \omega) \in \mathbf R_{x, I}, \\ A \coloneqq \pi * I * \omega
            }} \left(\sum_{v \in \mathbf V_{\pi(x), A}} \sqrt{\frac{{p_{\algo D_{x, I}}(\pi, \omega)}}{{|\mathbf V_{\pi(x), A}|}}} \ket{\mathbf D(A[\pi(x) \ra v])}\right)_{\reg{HKF}} \otimes \ket{\pi, \omega}_{\reg{\Pi\Omega}} \\
            &=\sum_{\substack{y \in [N]\setminus \dom(I), \\ (\pi, \omega) \in \mathbf R_{I[x \ra y]}, \\ A \coloneqq \pi * I * \omega
            }} \left(1\pm O\left(\sqrt{\frac{t^2}{2^n}}\right)\right) \cdot \sqrt{\frac{{p_{\algo D_{I[x \ra y]}}(\pi, \omega)}}{{N-t}}} \cdot \ket{\mathbf D(A[\pi(x) \ra \omega^{-1}(y)])}_{\reg{HKF}} \otimes \ket{\pi, \omega}_{\reg{\Pi\Omega}},
        \end{align}
        where the second line follows from \Cref{lem:canonical-decomp}, and the last line from cromulence, in particular the property established in \Cref{lem:twirl-perm-uniform}.
        Observe that the above state would, without the $1\pm O\left(\frac{t^2}{2^n}\right)$ term, be precisely $\ket{+_{x, \mathbf P(I)}}$. The two claims follow.

    With this in hand, recall that any state orthogonal to $\ket{\mathbf P(I)}$ for $I \in \mathbf I|^x$ will be annihilated by $\ninD\piele$. Further, by definition we have \begin{align}
        \overline \mfc_{x, \reg{P}} \ket{\mathbf P(I)} &\coloneqq \ket{+_{x, \mathbf P(I)}}.
    \end{align}
    Thus, for any state in the family $\ket{x}_{\reg X} \ket{\mathbf P(I)}_{\reg P}$, we have that the images of $\widetilde\mfc^\dagger_x$ and of $\ol\mfc^\dagger$ lie in the span of $\ket{x}_{\reg X} \ket{\mathbf P(I[x \ra y])}_{\reg P}$. We would obtain an orthogonal subspace for any different choice if $x \in [N]$ and $I \in \mathbf I|^x$, so these operators are close for any state in the linear combination. Putting everything together, we have \begin{align}
        &\norm{(\overline \mfc^\dagger - \mfc^\dagger)_{\reg XP} \cdot(\ninD \piele)_{\reg{XP}}}_{\leq t} \leq \underbrace{\norm{(\widetilde \mfc^\dagger - \mfc^\dagger)_{\reg XP} \cdot(\ninD \piele)_{\reg{XP}}}_{\leq t}}_{O(\sqrt{t^2/2^n})} + \nonumber\\
        &\qquad\qquad \underbrace{\norm{(\widetilde \mfc^\dagger \Pi^\heart - \widetilde \mfc^\dagger)_{\reg XP} \cdot(\ninD \piele)_{\reg{XP}}}_{\leq t}}_{O(\sqrt{t^2/2^n})} + \underbrace{\norm{(\overline \mfc^\dagger - \widetilde \mfc^\dagger\Pi^\heart)_{\reg XP} \cdot(\ninD \piele)_{\reg{XP}}}_{\leq t}}_{O(\sqrt{t^2/2^n})} \\
        &\qquad\qquad=O\left(\sqrt{\frac{t^2}{2^n}}\right).
    \end{align}
\end{proof}

Another important case is the portion of the elegant subspace where the query point is already defined prior to decompression.

\begin{lemma}
    We have \begin{align}
        \norm{(\overline \mfc^\dagger - \mfc^\dagger)_{\reg{XP}} \cdot (\inD\piele)_{\reg{XP}}}_{\leq t} &= O\left(\sqrt{\frac{t^2}{2^n}}\right).
    \end{align}
    \label{lem:cmf-def-act}
\end{lemma}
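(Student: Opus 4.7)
The plan is to mirror Lemma~\ref{lem:cmf-undef-act} but exploit the structure of the defined-elegant subspace in place of the undefined-elegant one. By Corollary~\ref{cor:close-mfc} and the triangle inequality, it suffices to bound $\norm{(\overline\mfc^\dagger - \widetilde\mfc^\dagger)\cdot(\inD\piele)}_{\leq t}$ up to an $O(\sqrt{t^2/2^n})$ additive term. Using control on $\reg X$ (Lemma~\ref{lem:dirsum-norm}), fix $x \in [N]$ and analyze a generic state $\ket{\psi} = \sum_{I, y} \alpha_{I, y} \ket{\mathbf P(I[x \ra y])}$ in the image, where $I \in \mathbf I|^x$ of size $s \leq t$, $y \in [N]\setminus\im(I)$, and the elegant condition $\sum_y \alpha_{I, y} = 0$ holds for each $I$. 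Since $\overline\mfc^\dagger_x$ is the involution that swaps $\ket{\mathbf P(I)} \leftrightarrow \ket{+_{x, \mathbf P(I)}}$ on each block $\algo H(\{\mathbf P(I') : I' \in I|^x\})$ and fixes their orthogonal complement, $\ket\psi$ has no $y=\bot$ component (by $\inD$) and its coefficient vector sums to zero for each $I$ (by $\piele$), so $\ket\psi$ is orthogonal to both swapped vectors for every $I$, yielding $\overline\mfc^\dagger\ket\psi = \ket\psi$ exactly. The problem reduces to bounding $\norm{(\Id - \widetilde\mfc^\dagger_x)\ket\psi}$.

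Expand $\ket{\mathbf P(I[x\ra y])}$ over $(\pi, \omega) \in \mathbf R_{x, I}$ and reindex the output via $y = \omega(v)$ with $v \in \mathbf V_{\pi(x), A}$ and $A = \pi * I * \omega$; the state factors as $\sum_{\pi, \omega} \ket{\pi, \omega}_{\reg{\Pi\Omega}} \otimes \ket{\eta_{\pi, \omega}}_{\reg{HKF}}$. For each $(\pi, \omega)$, $\ket{\eta_{\pi, \omega}}$ is an orthogonal sum over distinct $A$'s (coming from distinct $I$'s, since $I \mapsto \pi*I*\omega$ is a bijection) of terms of the form $\sum_v \beta_v^{(\pi, \omega)} \ket{+_{\mathbf D(A[\pi(x) \ra v])}}$, where $\beta_v^{(\pi, \omega)}$ equals $\alpha_{I, \omega(v)}$ up to a multiplicative $(1 \pm O(\sqrt{t^2/2^n}))$ correction supplied by the final cromulence condition of Definition~\ref{def:cromulent}. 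Each inner summand is precisely of the form required by the canonical compression lemmas. Writing $G^\dagger_u = G^{(F)}_u G^{(K)}_u G^{(H)}_u$ and applying the telescoping identity
\begin{align}
    \Id - G^\dagger_u = (\Id - G^{(F)}_u) + G^{(F)}_u(\Id - G^{(K)}_u) + G^{(F)}_u G^{(K)}_u(\Id - G^{(H)}_u),
\end{align}
the triangle inequality splits $(\Id - \widetilde\mfc^\dagger_x)\ket\psi$ into three contributions corresponding to $H$, $K$, and $F$.

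The $H$- and $K$-contributions are controlled directly by Lemmas~\ref{lem:canonical-H-close} and \ref{lem:canonical-K-close}, each contributing $O(1/\sqrt{2^n - t})\cdot\norm{\ket{\eta_{\pi, \omega}}}$ per $(\pi, \omega)$; summing in quadrature using $\sum_{\pi, \omega}\norm{\ket{\eta_{\pi, \omega}}}^2 = \norm{\ket\psi}^2$ gives a total of $O(1/\sqrt{2^n - t})$, which is absorbed into the target bound. The main obstacle is the $F$-contribution, where Lemma~\ref{lem:canonical-F-close} yields an error proportional to $\sqrt{T_{\pi, \omega}/(2^n - t)}$ with $T_{\pi, \omega} = \sum_l \abs{\sum_r \beta_{l\Vert r}^{(\pi, \omega)}}^2$. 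After replacing $\beta$ by $\alpha\cdot c(\pi,\omega)$ and absorbing the $(1 \pm O(\sqrt{t^2/2^n}))$ cromulence correction into a lower-order error (justified by an $L_2$ approximation $\ket{\tilde\psi} \approx \ket\psi$), $\sum_{\pi, \omega}T_{\pi, \omega}$ becomes the expectation $\E_{(\pi, \omega) \sim \algo D_{x, I}}\bigl[\sum_l \abs{\sum_r \alpha_{I, \omega(l\Vert r)}}^2\bigr]$, which is precisely the quantity controlled by the twirling Lemma~\ref{lem:twirl-perms-JL}. The elegant constraint $\sum_y \alpha_{I, y} = 0$ supplies the zero-sum hypothesis and the first- and second-moment properties of cromulent $\algo D$ (Definition~\ref{def:cromulent}, cf.\ Lemma~\ref{lem:twirl-perm-second-mom}) supply the remaining hypotheses, giving $\E[T_{\pi, \omega}] = O(t^2)$ and hence an $F$-contribution of $O(\sqrt{t^2/2^n})$. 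Combining all three pieces yields the claim.
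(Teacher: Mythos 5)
Your proposal is correct and follows essentially the same route as the paper's proof: replace $\mfc^\dagger$ by the sanitized $\widetilde\mfc^\dagger$ via \Cref{cor:close-mfc}, observe that $\overline\mfc^\dagger$ acts as the identity on defined-elegant states, and then bound $\Id-\widetilde\mfc^\dagger$ factor by factor, using \Cref{lem:canonical-H-close,lem:canonical-K-close} for the $H$ and $K$ parts and \Cref{lem:canonical-F-close} together with the twirling \Cref{lem:twirl-perms-JL} (zero-sum from elegance, moments from cromulence) for the $F$ part. Your explicit telescoping identity and your explicit treatment of the $(1\pm O(\sqrt{t^2/2^n}))$ normalization correction via the last cromulence condition are only presentational refinements of the step the paper phrases as ``we can equivalently write.''
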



\begin{proof}
    Let $\ket{\psi}_{\reg{XP}} \in \algo H_{\reg X} \otimes \hele$ be an elegant quantum state, where the $\reg P$ register is supported on databases of size at most $t$, and the query point is defined: $\ket{\psi}_{\reg{XP}} \in \inD_{\reg{XP}}$.
    Note that the entire operation is controlled on the $\reg X$ register, so by \Cref{lem:dirsum-norm} it suffices to do the analysis for some fixed $x \in [N]$. Further, we can replace $\mfc^\dagger$ by $\widetilde \mfc^\dagger$ by \Cref{cor:close-mfc}.  The action of ideal decompression $\ol \mfc^\dagger$ on this subspace is the identity, which we would like to show is close the action of $\widetilde \mfc^\dagger$. We will drop the dagger ${}^\dagger$ from $\widetilde \mfc$, observing that if $U^\dagger$ is close to $\Id$ on some subspace then by linearity so is $U$. In other words, we will show for any fixed $x\in[N]$ that $\norm{\widetilde \mfc_{x} \ket{\psi}- \ket{\psi}}_{\leq t} = O\left(\sqrt{\frac{t^2}{2^n}}\right)$ for any elegant $\ket{\psi} \in \hele$ with $\ket{\psi} \in \xinD{x}$.

    Let us expand $\widetilde \mfc$ into it's constituent parts. We will begin with the $G^{(F)}$ application, considering \begin{align}
        \norm{\left(\sum_{\pi \in \mathbf S_N} \outerprod{\pi}{\pi}_{\reg \Pi} \otimes G^{(F)}_{\pi(x), \reg{HKF}} - \Id_{\reg{P}}\right) \ket{\psi}_{\reg{P}}}.
    \end{align}
    Observe that both states (the one after applying the left or the right operator in the above parenthesis) will be supported entirely on purifications $P$ such that $\supp(P)|^x = I|^x$. This is because a re-assignment to the final element of the chain beginning at $\pi(x)$ (in $D_f$) can only modify the chain beginning at $\pi(x)$. 
    Let us make then the assumption that the state $\ket{\psi}$ is supported purely on states of the form $\ket{\mathbf P(I[x \ra y])}$ for $I \in \mathbf I |^x$ of size $|I|=t-1$. The other components of $\ket{\psi}$ will remain orthogonal, so by \Cref{lem:dirsum-norm} it suffices to do the analysis on this component only. We will write \begin{align}
        \ket{\psi} = \sum_{y \in [N] \setminus \im(I)} \alpha_y \ket{\mathbf P(I[x \ra y]}_{\reg P},
    \end{align}
    where $\sum_{y \in [N] \setminus \im(I)} \alpha_y = 0$. We can equivalently write \begin{align}
        \ket{\psi}_{\reg P} &= \sum_{\substack{\pi, \omega \in \mathbf R_{x, I}, \\
            A \coloneqq \pi * I * \omega, \\
            u \coloneqq \pi(x)}}\sqrt{p_{\algo D_{x, I}}(\pi, \omega)} \ket{\pi, \omega}_{\reg{\Pi\Omega}}\otimes \left(\sum_{\substack{v \in \mathbf V_{u,A}}} \alpha_{\omega(v)} \ket{+_{\mathbf D(A[u\ra v])}}_{\reg{HKF}}\right). 
    \end{align}
    Applying the $G^{(F)}$ operator, we find
    \begin{align}
        \left(\sum_{\pi \in \mathbf S_N} \outerprod{\pi}{\pi}_{\reg \Pi} \otimes G^{(F)}_{\pi(x), \reg{HKF}}\right)\ket{\psi}_{\reg P} &= \sum_{\substack{\pi, \omega \in \mathbf R_{x, I}, \\
            A \coloneqq \pi * I * \omega, \\
            u \coloneqq \pi(x)}}\sqrt{p_{\algo D_{x, I}}(\pi, \omega)} \ket{\pi, \omega}_{\reg{\Pi\Omega}}\otimes\nonumber\\&\quad \left( G_{u, \reg{HKF}}^{(F)}\sum_{\substack{v \in \mathbf V_{u,A}}} \alpha_{\omega(v)} \ket{+_{\mathbf D(A[u\ra v])}}_{\reg{HKF}}\right),
    \end{align}
    giving us difference \begin{align}
        \norm{\left(\sum_{\pi \in \mathbf S_N} \outerprod{\pi}{\pi}_{\reg \Pi} \otimes G^{(F)}_{\pi(x), \reg{HKF}} - \Id_{\reg{P}}\right) \ket{\psi}_{\reg{P}}} &= \Bigg \Vert \sum_{\substack{\pi, \omega \in \mathbf R_{x, I}, \\
            A \coloneqq \pi * I * \omega, \\
            u \coloneqq \pi(x)}}\sqrt{p_{\algo D_{x, I}}(\pi, \omega)} \ket{\pi, \omega}_{\reg{\Pi\Omega}}\otimes\nonumber\\&\quad \left( \left(G_{u, \reg{HKF}}^{(F)} - \Id_{\reg{HKF}}\right)\sum_{\substack{v \in \mathbf V_{u,A}}} \alpha_{\omega(v)} \ket{+_{\mathbf D(A[u\ra v])}}_{\reg{HKF}}\right) \Bigg\Vert \\
            &= \sqrt{\sum_{\pi, \omega \in \mathbf R_{x, I}}{p_{\algo D_{x, I}}(\pi, \omega)} \cdot T_{\omega, \alpha}}
    \end{align}
    where $T_{\omega, \alpha} = \sum_{l \in \bit^n} T_{l, \omega, \alpha}$ and $T_{l, \omega, \alpha} = \left(\sum_{r \in \bit^n} \abs{\alpha_{\omega(l \Vert r)}}\right)^2$. This last line follows from \Cref{lem:canonical-F-close}. This can be rewritten as \begin{align}
        \norm{\left(\sum_{\pi \in \mathbf S_N} \outerprod{\pi}{\pi}_{\reg \Pi} \otimes G^{(F)}_{\pi(x), \reg{HKF}} - \Id_{\reg{P}}\right) \ket{\psi}_{\reg{P}}}
        &= \sqrt{\E_{(\pi, \omega) \sim \algo D_{x, I}}\left[\frac{T_{\omega, \alpha}}{2^n-t}\right]} \label{eqn:last-F-step1} \\
        &= O\left(\sqrt{\frac{t^2}{2^n}}\right), \label{eqn:last-F-step2}
    \end{align}
    where the last line follows from \Cref{lem:twirl-perms-JL} combined with cromulence, and in particular \Cref{lem:twirl-perm-second-mom}. Therefore, we may replace the application of $G^{(F)}$ in $\widetilde \mfc$ with the identity $\Id$ at the above cost. A similar argument shows that the application of $G^{(K)}$ and $G^{(H)}$ can be replaced with the identity $\Id$ at a cost $O(\sqrt{t^2/2^n})$, using \Cref{lem:canonical-K-close} and \cref{lem:canonical-H-close} repsectively in place of \Cref{lem:canonical-F-close}. Indeed, for these cases the final step corresponding to \Cref{eqn:last-F-step1,eqn:last-F-step2} can be ommitted, due to the simpler form of  \Cref{lem:canonical-H-close,lem:canonical-K-close}.
\end{proof}

\begin{corollary}[of \Cref{lem:cmf-undef-act,lem:cmf-def-act}]
    We have \begin{align}
        \norm{(\overline\mfc^\dagger - \mfc^\dagger)\piele}_{\leq t} = O\left(\sqrt{\frac{t^2}{2^n}}\right), \\
        \norm{(\overline\mfc - \mfc)\inD\pisoph}_{\leq t} = O\left(\sqrt{\frac{t^2}{2^n}}\right).
    \end{align}
    \label{cor:ideal-cmf-act}
\end{corollary}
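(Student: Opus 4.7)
The plan is to derive both bounds from \Cref{lem:cmf-undef-act,lem:cmf-def-act} by elementary projector manipulations. For the first inequality, I would first observe that $\piele$ commutes with $\inD$ (and hence with $\ninD$). To see this, note that each vector $\ket{\mathbf P(I)}$ spanning $\hsoph$ has a definite value of $\supp(P)(x)$ on its support (namely $I(x)$), so $\Gamma_x$ acts diagonally in the sophisticated basis; moreover, each $\ket{+_{x, \mathbf P(I)}}$ subtracted from $\pisoph$ to form $\piele$ lies in $\inD$. Consequently $\piele = \inD\piele + \ninD\piele$, and the triangle inequality yields
\begin{align}
\norm{(\overline{\mfc}^\dagger - \mfc^\dagger)\piele}_{\leq t} \leq \norm{(\overline{\mfc}^\dagger - \mfc^\dagger)\inD\piele}_{\leq t} + \norm{(\overline{\mfc}^\dagger - \mfc^\dagger)\ninD\piele}_{\leq t},
\end{align}
with the two summands bounded via \Cref{lem:cmf-def-act} and \Cref{lem:cmf-undef-act} respectively.

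For the second equation, I would reduce it to the first via a change of variables. Fix a unit vector $\ket{\psi} \in \inD\pisoph$ supported on databases of size at most $t$, and set $\ket{\chi} := \overline{\mfc}\ket{\psi}$. By \Cref{rem:ideal-action} we have $\ket{\chi} \in \piele$. To invoke the first equation I also need $\ket{\chi}$ to be supported on size at most $t$; this follows from a short case analysis of $\overline{\mfc}$ restricted to $\inD\pisoph$, whose action decomposes as $\ket{+_{x, \mathbf P(I)}} \mapsto \ket{\mathbf P(I)}$ (a decrease in size by one) together with the identity on the orthogonal complement inside $\inD\pisoph$. Using the fact that $\overline{\mfc}$ is a Hermitian involution, so $\overline{\mfc}^2 = \Id$ and $\overline{\mfc}^\dagger = \overline{\mfc}$, one can write $\ket{\psi} = \overline{\mfc}\ket{\chi}$ and compute
\begin{align}
(\overline{\mfc} - \mfc)\ket{\psi} = (\Id - \mfc\overline{\mfc})\ket{\chi} = \mfc(\mfc^\dagger - \overline{\mfc}^\dagger)\ket{\chi}.
\end{align}
Unitarity of $\mfc$ then gives $\norm{(\overline{\mfc} - \mfc)\ket{\psi}} = \norm{(\overline{\mfc}^\dagger - \mfc^\dagger)\ket{\chi}}$, which is controlled by the first equation applied at size $t$. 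The only slightly delicate point is verifying the size bound for $\ket{\chi}$, since $\overline{\mfc}$ can in general change database sizes by one and one must rule out the increasing direction on $\inD\pisoph$.
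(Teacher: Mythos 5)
Your proof is correct, and its overall strategy matches the paper's: the first bound is obtained by splitting $\piele = \inD\piele + \ninD\piele$ and invoking \Cref{lem:cmf-def-act,lem:cmf-undef-act} (your commutation argument for $\inD$ and $\piele$ is true but unnecessary, since $\inD + \ninD = \Id$ already yields this decomposition), and the second bound is reduced to the first by conjugating with the ideal compression and using \Cref{rem:ideal-action} to land in the elegant subspace. Where you differ is in how this reduction is executed: the paper first replaces $\mfc$ by the sanitized operator $\widetilde\mfc$ via \Cref{cor:close-mfc}, writes $\inD\pisoph = \overline\mfc^\dagger \cdot \piele \cdot \overline\mfc$, and then uses the first bound to replace $\widetilde\mfc \cdot \overline\mfc^\dagger \cdot \piele$ by $\widetilde\mfc\widetilde\mfc^\dagger \cdot \piele = \piele$; you instead exploit that $\overline\mfc$ is a Hermitian involution together with unitarity of $\mfc$ to write $(\overline\mfc - \mfc)\ket{\psi} = \mfc(\mfc^\dagger - \overline\mfc^\dagger)\overline\mfc\ket{\psi}$ and apply the first bound directly, avoiding $\widetilde\mfc$ and \Cref{cor:close-mfc} altogether and saving one approximation step. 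Your explicit verification that $\overline\mfc$ does not increase the database size on $\inD\pisoph$ (so that the first bound can indeed be applied at level $t$) addresses bookkeeping that the paper leaves implicit in its use of $\norm{\cdot}_{\leq t}$, and is the right point to be careful about.
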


\begin{proof}
    The first equation is immediate from the referenced lemmas. 
    To conclude the second, by \Cref{cor:close-mfc} we may substitute $\widetilde \mfc$ for $\mfc$. Then, observe that \begin{align}
        \norm{(\overline\mfc - \widetilde\mfc)\inD\pisoph}_{\leq t} &= \norm{(\overline\mfc - \widetilde\mfc)\overline\mfc^\dagger \cdot \piele \cdot \overline\mfc}_{\leq t} \\
        &= \norm{\piele \cdot \overline\mfc - \widetilde\mfc \widetilde \mfc^\dagger \cdot \piele \cdot \overline\mfc}_{\leq t} + O\left(\sqrt{\frac{t^2}{2^n}}\right) \\
        &= O\left(\sqrt{\frac{t^2}{2^n}}\right).
    \end{align}
\end{proof}

We can now prove the main result for this section, establishing the soundness of our intertwiner on the elegant subspace.

\begin{lemma}
    On the flip elegant subspace, $\iso$ is an approximate intertwiner between the compressed permutation oracle and the compressed masked Feistel oracle,
    \begin{align}
        \norm{(\iso_{\reg P} \cdot \cmf_{\reg{XYP}} - \cp_{\reg{XYI}} \cdot \iso_{\reg P}) \pifele_{\reg P}}_{\leq t} &= O\left(\sqrt{\frac{t^2}{2^n}}\right).
    \end{align}
    \label{cor:ideal-cmf-int}
\end{lemma}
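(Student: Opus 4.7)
My plan is to use the ideal-operator intertwining (Remark~\ref{rem:ele-ideal-int}) to replace $\cp \cdot \iso$ on the flip elegant subspace by $\iso \cdot \overline\cmf$, then bound the remaining difference $\cmf - \overline\cmf$ via Corollary~\ref{cor:ideal-cmf-act}. Applying Remark~\ref{rem:ele-ideal-int} gives $\cp \cdot \iso \cdot \pifele = \iso \cdot \overline\cmf \cdot \pifele$, so
\begin{equation*}
(\iso \cdot \cmf - \cp \cdot \iso)\pifele = \iso \cdot (\cmf - \overline\cmf) \cdot \pifele,
\end{equation*}
and since $\iso$ is an isometry this has the same $L_2$ norm as $\norm{(\cmf - \overline\cmf)\pifele}_{\leq t}$. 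Both $\cmf$ and $\overline\cmf$ share the same outer $\mathsf{ctrl\mhyphen}\mflip$ factors, which are unitary, preserve database size, and satisfy $\mathsf{ctrl\mhyphen}\mflip^\dagger \cdot \pifele = \piele \cdot \mathsf{ctrl\mhyphen}\mflip^\dagger$ by definition of $\pifele$. Stripping them reduces the bound to $\norm{(\mfc \cdot \mfpu \cdot \mfc^\dagger - \overline\mfc \cdot \mfpu \cdot \overline\mfc^\dagger)\piele}_{\leq t}$.

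I would then telescope
\begin{equation*}
\mfc \cdot \mfpu \cdot \mfc^\dagger - \overline\mfc \cdot \mfpu \cdot \overline\mfc^\dagger = \mfc \cdot \mfpu \cdot (\mfc^\dagger - \overline\mfc^\dagger) + (\mfc - \overline\mfc) \cdot \mfpu \cdot \overline\mfc^\dagger
\end{equation*}
and bound each piece separately. For the first piece, $\mfc \cdot \mfpu$ is unitary, so its norm on size-$\leq t$ elegant inputs is controlled directly by $\norm{(\mfc^\dagger - \overline\mfc^\dagger)\piele}_{\leq t} = O(\sqrt{t^2/2^n})$, which is the first half of Corollary~\ref{cor:ideal-cmf-act}. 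For the second piece, note that $\overline\mfc$ is an involution (each block is a real rotation exchanging two orthogonal unit vectors), so $\overline\mfc^\dagger = \overline\mfc$, and by taking adjoints of Remark~\ref{rem:ideal-action} the image of $\piele$ under $\overline\mfc^\dagger$ lies in $\inD \pisoph$. The purified oracle $\mfpu$ modifies only register $\reg Y$ (the chain output is well-defined on $\inD$), so it preserves $\inD$, preserves $\pisoph$, and does not change the database size. Consequently $\mfpu \cdot \overline\mfc^\dagger$ maps size-$\leq t$ elegant states into $\inD \pisoph$ states of size at most $t+1$ (absorbing the $\pm 1$ size change from decompression), and the second half of Corollary~\ref{cor:ideal-cmf-act} applied at size $t+1$ yields $O(\sqrt{(t+1)^2/2^n}) = O(\sqrt{t^2/2^n})$.

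Triangle inequality across the two telescope pieces closes the bound. The main technical care is ensuring each compression discrepancy $\mfc^{(\dagger)} - \overline\mfc^{(\dagger)}$ is paired with a state lying in the subspace for which Corollary~\ref{cor:ideal-cmf-act} gives a bound: the first term inherits $\piele$ directly, while the second term hinges on the two observations that $\overline\mfc^\dagger$ transports $\piele$ into $\inD \pisoph$ and that $\mfpu$ acts trivially on the purifying register and so leaves $\inD \pisoph$ invariant. Everything else is bookkeeping, including verifying that the size bump from $t$ to $t+1$ is absorbed into the asymptotic notation.
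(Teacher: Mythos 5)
Your proposal is correct and follows essentially the same route as the paper: both replace $\mfc^\dagger$ by $\overline\mfc^\dagger$ on the elegant subspace and $\mfc$ by $\overline\mfc$ on $\inD\pisoph$ via \Cref{cor:ideal-cmf-act}, using \Cref{rem:ideal-action} and the fact that $\mfpu$ preserves $\inD\pisoph$, and finish with \Cref{rem:ele-ideal-int}. Your only differences are cosmetic — invoking \Cref{rem:ele-ideal-int} at the start and writing the comparison as an explicit telescoping sum rather than a chain of approximate equalities.
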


\begin{proof}
    We can write \begin{align}
        &\hspace{-0.5in}\cmf_{\reg{XYP}} \pifele_{\reg{XP}} = \mathsf{ctrl\mhyphen}\mflip_{\reg{XP}} \cdot \mfc_{\reg{XP}} \cdot \mfpu_{\reg{XYP}} \cdot \mfc^\dagger_{\reg{XP}} \cdot \mathsf{ctrl\mhyphen}\mflip_{\reg{XP}}^\dagger \cdot \pifele_{\reg{XP}} \\
        &\hspace{-0.5in}= \mathsf{ctrl\mhyphen}\mflip_{\reg{XP}} \cdot \mfc_{\reg{XP}} \cdot \mfpu_{\reg{XYP}} \cdot \mfc^\dagger_{\reg{XP}}\cdot \piele_{\reg{XP}} \cdot \mathsf{ctrl\mhyphen} \mflip_{\reg{XP}}^\dagger\cdot \pifele_{\reg{XP}} \\
        &\hspace{-0.5in}= \mathsf{ctrl\mhyphen}\mflip_{\reg{XP}} \cdot \mfc_{\reg{XP}} \cdot \mfpu_{\reg{XYP}} \cdot \overline \mfc^\dagger_{\reg{XP}}\cdot \piele_{\reg{XP}} \cdot \mathsf{ctrl\mhyphen} \mflip_{\reg{XP}}^\dagger\cdot \pifele_{\reg{XP}} + O\left(\sqrt{\frac{t^2}{2^n}}\right) & \text{(\Cref{cor:ideal-cmf-act})} \nonumber  \\
        &\hspace{-0.5in}= \mathsf{ctrl\mhyphen}\mflip_{\reg{XP}} \cdot \mfc_{\reg{XP}} \cdot  (\inD\pisoph)_{\reg{XP}} \cdot\mfpu_{\reg{XYP}} \cdot \overline \mfc^\dagger_{\reg{XP}}\cdot \mathsf{ctrl\mhyphen} \mflip_{\reg{XP}}^\dagger\cdot \pifele_{\reg{XP}} + O\left(\sqrt{\frac{t^2}{2^n}}\right) & \text{(\Cref{rem:ideal-action})} \nonumber \\
        &\hspace{-0.5in}= \mathsf{ctrl\mhyphen}\mflip_{\reg{XP}} \cdot \overline\mfc_{\reg{XP}} \cdot  (\inD\pisoph)_{\reg{XP}} \cdot\mfpu_{\reg{XYP}} \cdot \overline \mfc^\dagger_{\reg{XP}}\cdot \mathsf{ctrl\mhyphen} \mflip_{\reg{XP}}^\dagger\cdot \pifele_{\reg{XP}} + O\left(\sqrt{\frac{t^2}{2^n}}\right) & \text{(\Cref{cor:ideal-cmf-act})} \nonumber \\
        &\hspace{-0.5in}= \mathsf{ctrl\mhyphen}\mflip_{\reg{XP}} \cdot \overline\mfc_{\reg{XP}}  \cdot\mfpu_{\reg{XYP}} \cdot \overline \mfc^\dagger_{\reg{XP}}\cdot \mathsf{ctrl\mhyphen} \mflip_{\reg{XP}}^\dagger\cdot \pifele_{\reg{XP}} + O\left(\sqrt{\frac{t^2}{2^n}}\right) \nonumber \\
        &\hspace{-0.5in}= \overline \cmf_{\reg{XYP}} \pifele_{\reg{XP}} + O\left(\sqrt{\frac{t^2}{2^n}}\right).
    \end{align}
    The claim now follows from \Cref{rem:ele-ideal-int}.
\end{proof}

The final case is decompression on states which are sophisticated but not elegant. For a technical reason, we will show that the decompression operator results in an invalid output and non-sophisticated purification. This will be useful in bounding the component of the state on this subspace. We use $\algo H_A / \algo H_B$ for spaces $\algo H_B \leq \algo H_A$ to denote the orthogonal complement of $\algo H_B$ in $\algo H_A$.

\begin{lemma}
    Let $\ket{\psi}_{\reg{XP}} \in \algo H_{\reg X} \otimes (\hsoph / \hele)$ be a sophisticated, but not elegant, quantum state supported on purifications of size at most $t$. Then we have \begin{align}
        \norm{\pisoph_{\reg{XP}} \cdot \widetilde \mfc_{\reg{XP}}^\dagger \cdot \ket{\psi}_{\reg{XP}}} &= O\left(\sqrt{\frac{1}{2^n}}\right), & \norm{\inD_{\reg{XP}} \cdot \widetilde \mfc_{\reg{XP}}^\dagger \cdot \ket{\psi}_{\reg{XP}}} &= O\left(\sqrt{\frac{1}{2^n}}\right).
    \end{align}
    \label{lem:cmf-nele-act}
\end{lemma}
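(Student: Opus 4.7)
The plan is to work at a fixed $x$ via \Cref{lem:dirsum-norm} and decompose $\ket{\psi}$ as $\ket{x}\otimes\sum_{I\in\mathbf I|^x,\,|I|<2^n}\alpha_I\ket{+_{x,\mathbf P(I)}}$, which spans $\hsoph/\hele$ at this fixed $x$. The heart of the analysis is to re-express each $\ket{+_{x,\mathbf P(I)}}$ in a canonical form suitable for applying the sanitized compression lemmas: substitute the definitions of $\ket{\mathbf P(I[x\ra y])}$, swap the order of summation from $y$ to $(\pi,\omega,v)$ with $v=\omega^{-1}(y)\in\mathbf V_{\pi(x),\pi * I * \omega}$, apply the last equation of \Cref{def:cromulent} to convert the $\algo D_{I[x\ra y]}$-weights into $\algo D_{x,I}$-weights with a relative $O(\sqrt{t^2/2^n})$ error, and invoke \Cref{lem:canonical-decomp} to collapse the internal sum over $v$ into $G_{\pi(x)}\ket{+_{\mathbf D(A)}}$ where $A=\pi * I * \omega$. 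This yields
\begin{align}
\ket{+_{x,\mathbf P(I)}}=\widetilde\mfc_x\,\Pi^\heart_x\ket{\mathbf P(I)}+\delta,\qquad\norm{\delta}=O(\sqrt{t^2/2^n}),
\end{align}
and by unitarity of $\widetilde\mfc_x$ we get $\widetilde\mfc_x^\dagger\ket{+_{x,\mathbf P(I)}}=\Pi^\heart_x\ket{\mathbf P(I)}+\delta'$ with $\norm{\delta'}=O(\sqrt{t^2/2^n})$.

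The $\inD$ bound follows immediately from this representation: since $I\in\mathbf I|^x$ forces $x\notin\dom(I)$, the state $\Pi^\heart_x\ket{\mathbf P(I)}$ is annihilated by $\inD_x$, so only $\delta'$ contributes, and the per-$(x,I)$ contributions assemble via \Cref{lem:dirsum-norm} using orthogonality in the $\ket{x}$ register and in the blocks $\algo H(I|^x)$.

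The $\pisoph$ bound is more delicate, because the leading term $\Pi^\heart_x\ket{\mathbf P(I)}$ itself lies inside $\pisoph$, so the naive expansion above does not suffice. The plan is to open up $G_u^\dagger=G^{(F)}G^{(K)}G^{(H)}$ (in the leftmost-first convention that matches \Cref{lem:canonical-decomp}) and track, via \Cref{lem:canonical-F-close,lem:canonical-K-close,lem:canonical-H-close} applied in sequence, the amount each non-commuting swap deposits outside the canonical-database subspace---and therefore outside $\pisoph$. The orthogonality of extension sets from \Cref{cor:ortho-arrows} ensures these per-branch contributions combine block-diagonally, and the second-moment bound of \Cref{lem:twirl-perms-JL}---applied to the uniform-in-$v$ amplitudes, whose mean-zero hypothesis arises from the orthogonality of the non-elegant component to $|\mathbf P(I)\rangle$ within each $\algo H(I|^x)$---completes the argument. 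The main obstacle is carrying out this bookkeeping cleanly: tracking which pieces stay sophisticated, which leak into the non-sophisticated complement, and ensuring that the accumulated leakage across the three canonical swaps meets the $O(\sqrt{1/2^n})$ target rather than a weaker $t$-dependent bound.
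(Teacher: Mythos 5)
Your proposal has a genuine gap, and it is centered on the pivot identity $\widetilde\mfc_x^\dagger\ket{+_{x,\mathbf P(I)}}=\Pi^{\heart}_x\ket{\mathbf P(I)}+\delta'$ with $\norm{\delta'}=O(\sqrt{t^2/2^n})$. If that identity held for the operator the lemma is about, the lemma's first bound would be false: $\Pi^{\heart}_x\ket{\mathbf P(I)}$ is within $O(\sqrt{t^2/2^n})$ of the sophisticated unit vector $\ket{\mathbf P(I)}$ (\Cref{lem:heart-norm}), so $\norm{\pisoph\,\widetilde\mfc^\dagger\ket{\psi}}$ would be close to $1$. You noticed this tension ("the leading term lies inside $\pisoph$"), but it signals that your identity, not the lemma, is off: you have effectively analyzed $\widetilde\mfc$ rather than $\widetilde\mfc^\dagger$ (an understandable slip given the paper's own dagger ambiguity around \Cref{lem:canonical-decomp}, but the operator appearing in $\cmf$ and $\piqv$, and hence here, acts $G^{(H)}$, then $G^{(K)}$, then $G^{(F)}$ on kets). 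Applied to $\ket{+_{x,\mathbf P(I)}}$ this operator does \emph{not} undo the chain at $u=\pi(x)$: since the $H$- and $K$-outputs of the chain are recorded elsewhere along it, $G^{(H)}$ and $G^{(K)}$ act as identity up to $O(1/\sqrt{2^n-t})$ (\Cref{lem:canonical-H-close,lem:canonical-K-close}), while $G^{(F)}$ meets an exactly uniform superposition of $F$-outputs and swaps it to $\bot$. The paper's entire proof is this one computation: every branch of the resulting state carries a dangling length-two semi-chain, so its databases are non-canonical (orthogonal to all $\ket{\mathbf P(I')}$, hence annihilated by $\pisoph$) and undefined at $x$ (hence annihilated by $\inD$); both bounds follow simultaneously from this single structural fact, which is absent from your write-up.

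The repair you sketch for the $\pisoph$ bound does not close this gap. First, it bounds the wrong quantity: tracking how much each swap "deposits outside the canonical-database subspace" controls $\norm{\pinsoph\widetilde\mfc^\dagger\ket{\psi}}$, whereas the lemma requires the component \emph{inside} $\pisoph$ to be small. Second, \Cref{lem:twirl-perms-JL} is inapplicable here: its hypothesis $\sum_y\alpha_y=0$ is the defining property of the \emph{elegant} subspace (orthogonality to $\ket{+_{x,\mathbf P(I)}}$, not to $\ket{\mathbf P(I)}$), and it fails exactly for the uniform-in-$v$ amplitudes of the non-elegant states treated in this lemma; that lemma is the engine of \Cref{lem:cmf-def-act}, and it is unnecessary here because the uniform case is handled exactly by the $G^{(F)}$ swap. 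You also leave the bookkeeping explicitly unresolved, and the assembly over $I$ of the per-block errors needs a support-disjointness argument on $\algo H(\mathbf P)$ (the chains away from $\pi(x)$ determine $I$) rather than \Cref{lem:dirsum-norm} applied to the blocks $\algo H(I|^x)$, which live on the compressed-permutation register. (A minor further point: even your $\inD$ bound, via the cromulence reweighting, yields $O(\sqrt{t^2/2^n})$ rather than the stated $O(\sqrt{1/2^n})$, though that looseness would be harmless downstream.)
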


\begin{proof}
    Observe that that $\ket{\psi}_{\reg{XP}} \in \inD_{\reg{XP}}$, the orthogonal complement of $\inD_{\reg{XP}}$ in the sophisticated space $\hsoph$ lies in the elegant subspace $\hele$. By definition, we can write the sanitized masked Feistel decompression as \begin{align}
        \widetilde \mfc_{\reg{XP}}^\dagger &= \sum_{\pi \in \mathbf S_N} \outerprod{\pi}{\pi}_{\reg \Pi} \otimes \left(G^{(F)}_{\pi(x), \reg{HKF}} G^{(K)}_{\pi(x), \reg{HKF}} G^{(H)}_{\pi(x), \reg{HK}}\right),
    \end{align}
    where we recall that the $G^{(H)},G^{(K)}, G^{(F)}$ unitaries are involutions. By an argument analogous to that given in the proof of \Cref{lem:cmf-def-act}, the first two operations applied to the state, $G^{(H)},G^{(K)}$, perturb the state a distance at most $O(\sqrt{1/2^n})$. By triangle inequality, we may therefore consider only the action of of the final $G^{(F)}$ operator. For now, let us fix some $x \in [N]$ to be the value in the $\reg{X}$ register, and $I \in \mathbf I|^x$ of size $|I|=t-1$, and work out the action on the $\ket{+_{x, \mathbf P(I)}}_{\reg P}$ state. We find 
    {
    \allowdisplaybreaks
    \begin{align}
        \left(\sum_{\pi \in \mathbf S_N} \outerprod{\pi}{\pi}_{\reg \Pi} \otimes G^{(F)}_{\pi(x), \reg{HKF}}\right) \ket{+_{x, \mathbf P(I)}}_{\reg P} &= \sum_{\substack{\pi, \omega \in \mathbf R_{x, I}, \\
            A \coloneqq \pi * I * \omega, \\
            u \coloneqq \pi(x)}}\sqrt{p_{\algo D_{x, I}}(\pi, \omega)} \ket{\pi, \omega}_{\reg{\Pi\Omega}}\otimes\nonumber\\&\quad \left(G_{u, \reg{HKF}}^{(F)}\sum_{\substack{v \in \mathbf V_{u,A}}} \frac{1}{\sqrt{|\mathbf V_{u, A}|}} \ket{+_{\mathbf D(A[u\ra v])}}_{\reg{HKF}}\right)
    \end{align} \vspace{-0.3in}\begin{align}
        &= \sum_{\substack{\pi, \omega \in \mathbf R_{x, I}, \\
            A \coloneqq \pi * I * \omega, \\
            u \coloneqq \pi(x)}}\sqrt{p_{\algo D_{x, I}}(\pi, \omega)} \ket{\pi, \omega}_{\reg{\Pi\Omega}}\otimes\nonumber\\&\quad G_{u, \reg{HKF}}^{(F)}\Bigg(\sum_{\substack{D \in \mathbf D(A), \\
            z \in H(u, D), \\
            w \in K(u, D), \\
            r \in \bit^n}} \frac{1}{\sqrt{|\mathbf D(A)| \cdot |H(u, D)| \cdot |K(u, D)| \cdot 2^n}}\cdot \nonumber\\&\quad \ket{D_h[u_L \ra z]}_{\reg{H}} \ket{D_k[u_R\oplus z \ra w]}_{\reg K} \ket{D_f[u_L \oplus w \ra r]}_{\reg F}\Bigg) \\
        &= \sum_{\substack{\pi, \omega \in \mathbf R_{x, I}, \\
            A \coloneqq \pi * I * \omega, \\
            u \coloneqq \pi(x)}}\sqrt{p_{\algo D_{x, I}}(\pi, \omega)} \ket{\pi, \omega}_{\reg{\Pi\Omega}}\otimes\nonumber\\&\quad \Bigg(\sum_{\substack{D \in \mathbf D(A), \\
            z \in H(u, D), \\
            w \in K(u, D)}} \frac{1}{\sqrt{|\mathbf D(A)| \cdot |H(u, D)| \cdot |K(u, D)|}}\cdot \nonumber\\&\quad \ket{D_h[u_L \ra z]}_{\reg{H}} \ket{D_k[u_R\oplus z \ra w]}_{\reg K} \ket{D_f}_{\reg F}\Bigg).
    \end{align}
    }
    The key properties of this state is that it is fully undefined on input $x$, as can be seen by the lack of assignment of $u_L \oplus w$ in $D_f$; it is thus annihilated by $\inD_{\reg{XP}}$. Further, there is a length two rightward semi-chain beginning at $u$ which is not completed, in every term in the superposition. The databases are non-minimal, and the state is therefore annihilated by $\pisoph_{\reg{XP}}$. Further, the actual state $\ket{\psi}$ is a linear combination of states of this form, so we have \begin{align}
        \norm{\inD_{\reg{XP}} \cdot \left(\sum_{\pi \in \mathbf S_N} \outerprod{\pi}{\pi}_{\reg \Pi} \otimes G^{(F)}_{\pi(x), \reg{HKF}}\right) \ket{\psi}_{\reg{XP}}} &= 0,
    \end{align}
    and similarly for the sophisticated projector.
    The claim now follows from triangle inequality.
\end{proof}

\subsection{Preserved subspaces}

We are now ready to show that the elegant subspace of sophisticated provides a faithful proxy of the query-valid subspace, proving \Cref{lem:ele-valid}. We restate it below.

\validele*

\begin{proof}
    We can write \begin{align}
        \norm{\ket{\phi} - \ket{\mu}} &= \norm{\piqv\ket{\mu} + \piqv (\ket{\psi} - \ket{\mu}) -\ket{\mu}} \\
        &\leq \norm{\piqv\ket{\mu} -\ket{\mu}} + \norm{\piqv (\ket{\psi} - \ket{\mu})} \\
        &= \norm{\pinqv\ket{\mu}} + \norm{\piqv (\ket{\psi} - \ket{\mu})}.
    \end{align}
    Therefore, it suffices to prove that \begin{align}
        \norm{\pinqv \ket{\mu}} &= O\left(\sqrt{\frac{t^2}{2^n}}\right), & \norm{\piqv (\ket{\psi} - \ket{\mu})} &= O\left(\sqrt{\frac{t^2}{2^n}}\right).
    \end{align}
    We perform the analysis for the case where the flip bit is $b=0$, the other case follows similarly.
    Let us begin with the right hand equation. By definition, we have that $\ket{\psi} - \ket{\mu} \in \pinfele$ is non-flip-elegant. Then we have \begin{align}
        \norm{\piqv (\ket{\psi} - \ket{\mu})} &= \norm{\mfc_{\reg{XP}} \inD_{\reg{XP}}\mfc_{\reg{XP}}^\dagger (\ket{\psi}_{\reg{XP}} - \ket{\mu}_{\reg{XP}})} \\
        &=\norm{\inD_{\reg{XP}} \widetilde \mfc_{\reg{XP}} ^\dagger(\ket{\psi}_{\reg{XP}} - \ket{\mu}_{\reg{XP}})} + O\left(\sqrt{\frac{t^2}{2^n}}\right) & \text{(By \Cref{cor:close-mfc})}\nonumber \\
        &= O\left(\sqrt{\frac{t^2}{2^n}}\right). & \text{(By \Cref{lem:cmf-nele-act})} \nonumber
    \end{align}
    In the other case, where $\ket{\mu} \in \pifele$, we have \begin{align}
        \norm{\pinqv \ket{\mu}} &= \norm{\mfc_{\reg{XP}} \ninD_{\reg{XP}}\mfc_{\reg{XP}}^\dagger \ket{\mu}_{\reg{XP}}} \\
        &= \norm{ \ninD_{\reg{XP}} \overline \mfc_{\reg{XP}}^\dagger \ket{\mu}_{\reg{XP}}} + O\left(\sqrt{\frac{t^2}{2^n}}\right) & \text{(By \Cref{cor:close-mfc,cor:ideal-cmf-act})} \nonumber \\
        &= O\left(\sqrt{\frac{t^2}{2^n}}\right). & \text{(By \Cref{rem:ideal-action})} \nonumber
    \end{align}
\end{proof}

Another important property which we can now prove is that the $\cmf$ operator approximately preserves the elegant subspace.

\begin{lemma}
    Let $\ket{\psi}_{\reg{XYP}} \in \algo H_{\reg{XY}} \otimes \hfele$ be a flip elegant quantum state supported on purifications of size at most $t$. Then we have \begin{align}
        \norm{\pinfele_{\reg P} \cdot \cmf_{\reg{XYP}} \ket{\psi}_{\reg{XYP}}} = O\left(\sqrt{\frac{t^2}{2^n}}\right)
    \end{align}
    \label{lem:ele-preserved}
\end{lemma}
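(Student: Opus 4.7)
The plan is to reduce the claim to a statement that does not involve the flip operator, and then to use the ideal masked Feistel operator $\ol\mfc$ as a bridge between the elegant subspace and $\inD\pisoph$. The desired bound is equivalent, after noting the unitarity of $\mathsf{ctrl\mhyphen}\mflip$ and the definitions $\pifele = \mathsf{ctrl\mhyphen}\mflip \cdot \piele \cdot \mathsf{ctrl\mhyphen}\mflip^\dagger$ and $\cmf = \mathsf{ctrl\mhyphen}\mflip \cdot \mfc \cdot \mfpu \cdot \mfc^\dagger \cdot \mathsf{ctrl\mhyphen}\mflip^\dagger$, to the inequality
\begin{align}
    \norm{\pinele_{\reg P} \cdot \mfc_{\reg{XP}} \cdot \mfpu_{\reg{XYP}} \cdot \mfc^\dagger_{\reg{XP}} \cdot \piele_{\reg P}}_{\leq t+1} &= O\left(\sqrt{\tfrac{t^2}{2^n}}\right),
\end{align}
since conjugating by $\mathsf{ctrl\mhyphen}\mflip$ preserves norms. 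So I would first carry out that cancellation as the opening observation.

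Next, I would swap the real decompression $\mfc^\dagger$ in the middle with the ideal decompression $\ol\mfc^\dagger$ at the cost $O(\sqrt{t^2/2^n})$, using \Cref{cor:ideal-cmf-act}. The taking-adjoint form of \Cref{rem:ideal-action} says $\ol\mfc^\dagger$ sends the elegant subspace into $\inD \cap \hsoph$, so after the substitution the intermediate state lies in $\inD \pisoph$. Since $\mfpu$ acts as the identity on register $\reg P$, it leaves $\inD\pisoph$ invariant (both the query point and the purification are unchanged). Then I would apply \Cref{cor:ideal-cmf-act} a second time to swap the outer $\mfc$ for $\ol\mfc$, again at cost $O(\sqrt{t^2/2^n})$; this is legitimate because the state on which it acts is in $\inD\pisoph$, which is where that approximation is valid.

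At this point \Cref{rem:ideal-action} itself (not its adjoint) says $\ol\mfc$ sends $\inD\pisoph$ exactly into the elegant subspace. Hence $\ol\mfc \cdot \mfpu \cdot \ol\mfc^\dagger \cdot \piele$ is annihilated by $\pinele$, and the two substitutions account for the full error bound. Summing the two $O(\sqrt{t^2/2^n})$ contributions by the triangle inequality yields the claimed bound.

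The whole argument is really bookkeeping on top of \Cref{rem:ideal-action} and \Cref{cor:ideal-cmf-act}; the main nontrivial ingredients (closeness of sanitized, ideal and real compression, and the fact that ideal compression exactly intertwines elegant with $\inD\pisoph$) were done earlier. The one step to watch carefully is that the second application of \Cref{cor:ideal-cmf-act} must be applied at size $\leq t+1$ rather than $\leq t$, because $\ol\mfc^\dagger$ may increase the purification size by one; this is harmless since the bound is polynomial in the size parameter and $(t+1)^2 = O(t^2)$. I expect no genuine obstacle beyond confirming that $\mfpu$ really does preserve $\inD\pisoph$, which is immediate from the fact that $\mfpu$ acts as identity on the $\reg P$ register.
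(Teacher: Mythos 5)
Your proposal is correct and follows essentially the same route as the paper: conjugating away $\mathsf{ctrl\mhyphen}\mflip$ is the paper's ``forward queries; inverse similar'' step, and the rest is exactly the paper's argument---two applications of \Cref{cor:ideal-cmf-act} (first to swap $\mfc^\dagger$ for $\ol\mfc^\dagger$ on the elegant subspace, then to swap $\mfc$ for $\ol\mfc$ on $\inD\pisoph$) followed by \Cref{rem:ideal-action} to kill the ideal term. The only point to state more carefully is why $\mfpu$ preserves $\inD\pisoph$: it is not merely that $\mfpu$ leaves $\reg P$ unchanged (a $\reg{P}$-controlled operation could still decohere a sophisticated superposition), but that on each $\ket{x}\ket{\mathbf P(I)}$ with $x \in \dom(I)$ every purification in the support answers the query with the same value $I(x)$, so no entanglement with $\reg Y$ is created and the state stays in $\inD\pisoph$.
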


\begin{proof}
    Let us analyze the case for forward queries, the proof for inverse queries follows similarly.
    \begin{align}
        \norm{\pinele_{\reg P} \cdot \cmf_{\reg{XYP}} \ket{\psi}_{\reg{XYP}}} &= \norm{\pinele_{\reg P} \cdot \mfc_{\reg{XP}} \cdot \pu_{\reg{XYP}} \cdot \mfc_{\reg{XP}}^\dagger \ket{\psi}_{\reg{XYP}}} \\
        &= \norm{\pinele_{\reg P} \cdot \overline\mfc_{\reg{XP}} \cdot \pu_{\reg{XYP}} \cdot \overline\mfc_{\reg{XP}}^\dagger \ket{\psi}_{\reg{XYP}}} + O\left(\sqrt{\frac{t^2}{2^n}}\right),  \nonumber 
    \end{align}
    by \Cref{cor:ideal-cmf-act}, and the fact that $\pu$ preserves the subspace $\inD\pisoph$. Applying \Cref{rem:ideal-action}, we find that the norm term is $0$, showing the claim.
\end{proof}

Using this lemma, we can show that the compressed masked Feistel oracle approximately preserves the sophisticated subspace.

\begin{lemma}
    Let $\algo A$ be a $q$ query quantum algorithm with intermediate operations $A_0, \dots, A_q$, and denote the final state by \begin{align}
        \ket{\phi_q}_{\reg{AP}} &= A_{q, \reg A} \cmf_{\reg{AP}} \dots A_{1, \reg A} \cmf_{\reg{AP}} A_{0, \reg A} \ket{0}_{\reg A} \ket{\boti}_{\reg P}.
    \end{align}
    Then we have \begin{align}
        \norm{\pinsoph_{\reg P} \ket{\phi_q}_{\reg{AP}}} &= O\left(\sqrt{\frac{q^4}{2^n}}\right).
    \end{align}
    \label{lem:soph-pres}
\end{lemma}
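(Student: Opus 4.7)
The plan is to prove by induction on $k$ that $E_k \coloneqq \norm{\pinsoph_{\reg P} \ket{\phi_k}_{\reg{AP}}}$ satisfies the recurrence $E_k \leq E_{k-1} + O(\sqrt{k^2/2^n})$, so that summing from $k=1$ to $q$ yields $E_q = O(q^2/\sqrt{2^n}) = O(\sqrt{q^4/2^n})$. For the base case, $\ket{\phi_0} = A_0 \ket{0}_{\reg A} \ket{\mathbf P(\boti)}_{\reg P}$ lies in the sophisticated subspace by construction, so $E_0 = 0$. Throughout, I will use two structural facts: (i) each $\ket{\phi_k}$ is supported on purifications of size at most $k$, since each $\cmf$ application modifies each of $\reg{HKF}$ by at most one $\fc$ step; and (ii) each intermediate $\ket{\phi_k}$ lies in $\piqv_{\reg{XP}}$, because $\ket{\mathbf P(\boti)}$ is valid and both $\cmf$ (by \Cref{lem:comp-oracle-valid} applied to $\reg{HKF}$) and $A_k$ (which acts only on $\reg A$) preserve validity on $\reg P$.

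For the inductive step, $\pinsoph$ acts on $\reg P$ and $A_k$ on $\reg A$, so $E_k = \norm{\pinsoph \cmf \ket{\phi_{k-1}}}$. Using $\pinsoph \leq \pinfele$ (which follows from $\pifele \leq \pisoph$) and splitting at $\pifele$,
\begin{align}
E_k \;\leq\; \norm{\pinfele \cmf \pifele \ket{\phi_{k-1}}} + \norm{\pinfele \ket{\phi_{k-1}}}.
\end{align}
The first term is $O(\sqrt{k^2/2^n})$ by direct application of \Cref{lem:ele-preserved}. The remaining task is to show $\norm{\pinfele \ket{\phi_{k-1}}} \leq E_{k-1} + O(\sqrt{k^2/2^n})$.

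To bound $\pinfele \ket{\phi_{k-1}}$, write $\ket{\phi_{k-1}} = \ket{s} + \ket{n}$ with $\ket{s} = \pisoph \ket{\phi_{k-1}}$, $\ket{n} = \pinsoph \ket{\phi_{k-1}}$, and $\norm{\ket{n}} \leq E_{k-1}$. Since $\pifele \leq \pisoph$, $\pifele \ket{n} = 0$, so $\pifele \ket{\phi_{k-1}} = \pifele \ket{s}$. Query-validity $\piqv \ket{\phi_{k-1}} = \ket{\phi_{k-1}}$ forces $\pinqv \ket{s} = -\pinqv \ket{n}$, hence $\piqv \ket{s} = \ket{s} + \pinqv \ket{n}$. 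Applying \Cref{lem:ele-valid} to the sophisticated state $\ket{s}$ of size at most $k-1$ gives $\pifele \ket{s} = \piqv \ket{s} + \delta$ with $\norm{\delta} = O(\sqrt{k^2/2^n})$. Chaining,
\begin{align}
\pinfele \ket{\phi_{k-1}} \;=\; \ket{\phi_{k-1}} - \pifele \ket{s} \;=\; \ket{n} - \pinqv \ket{n} - \delta \;=\; \piqv \ket{n} - \delta,
\end{align}
whose norm is at most $\norm{\ket{n}} + \norm{\delta} \leq E_{k-1} + O(\sqrt{k^2/2^n})$.

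The main obstacle is avoiding exponential growth in $k$. The naive approach would bound $\norm{\pinfele \ket{\phi_{k-1}}}$ by $\norm{\pinfele \ket{s}} + \norm{\ket{n}}$ and then estimate $\norm{\pinfele \ket{s}}$ via \Cref{lem:ele-valid} together with a triangle inequality using $\norm{\pinqv \ket{s}} \leq \norm{\ket{n}}$, yielding $E_k \leq 2 E_{k-1} + O(\sqrt{k^2/2^n})$ and a hopeless $2^q$ blowup. The crucial trick is to substitute $\pinqv \ket{s} = -\pinqv \ket{n}$ at the vector level before any triangle inequalities, so that $\ket{n} - \pinqv \ket{n}$ telescopes into $\piqv \ket{n}$ and the tall tower of $E_{k-1}$ errors collapses into a single copy. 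Summing $E_k \leq E_{k-1} + O(\sqrt{k^2/2^n})$ from $k=1$ to $q$ then yields the claimed bound.
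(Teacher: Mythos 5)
Your proof is correct and follows essentially the same route as the paper's: an induction over queries with the recurrence $E_k \leq E_{k-1} + O(\sqrt{k^2/2^n})$, driven by \Cref{lem:ele-valid}, \Cref{lem:ele-preserved}, and the query-validity of the intermediate states. The paper gets the coefficient-one recurrence by decomposing $\ket{\phi_t} = \piqv\pisoph\ket{\phi_t} + \piqv\pinsoph\ket{\phi_t}$ (the $\pinqv$ component vanishing by validity), which is just a repackaging of your vector-level cancellation $\pinqv\ket{s} = -\pinqv\ket{n}$; both avoid the factor-two blowup in the same way.
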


\begin{proof}
    We will prove this statement inductively. Clearly the $A_{t, \reg A}$ operations cannot break validity nor increase the projector onto the non-sophisticated subspace, as operators on different registers commute. Therefore, we need only analyze the action of each query.
    
    Suppose that the statement holds for the first $t$ queries, and denote the state after these queries as \begin{align}
        \ket{\phi_t}_{\reg{AP}} &= A_{t, \reg A} \cmf_{\reg{AP}} \dots A_{1, \reg A} \cmf_{\reg{AP}} A_{0, \reg A} \ket{0}_{\reg A} \ket{\boti}_{\reg P}.
    \end{align}
    We will split the $\reg A$ register into a further $\reg{XY}$ component for the query registers.
    We can write this state as \begin{align}
        \ket{\phi_t}_{\reg{AXYP}} &= \underbrace{\piqv_{\reg{XP}}\pisoph_{\reg P} \ket{\phi_t}_{\reg{AXYP}}}_{\ket{\mu}} + \underbrace{\piqv_{\reg{XP}}\pinsoph_{\reg P} \ket{\phi_t}_{\reg{AXYP}}}_{\ket{\delta}} + \underbrace{\pinqv_{\reg{XP}} \ket{\phi_t}_{\reg{AXYP}}}_{=0}. \label{eqn:phit-sophpres-comps}
    \end{align}
    We have $\norm{\ket{\delta}} \leq \norm{\pinsoph_{\reg P} \ket{\phi_t}_{\reg{AXYP}}}$. Turning to the first term, we have \begin{align}
        \cmf_{\reg{XYP}} \ket{\mu}_{\reg{XYP}} &= \cmf_{\reg{XYP}}\left(\pifele_{\reg{XP}} \ket{\phi_t}_{\reg{AXYP}} + O\left(\sqrt{\frac{t^2}{2^n}}\right)\right) & \text{(By \Cref{lem:ele-valid})} \\
        &= \cmf_{\reg{XYP}}\pifele_{\reg{XP}} \ket{\phi_t}_{\reg{AXYP}} + O\left(\sqrt{\frac{t^2}{2^n}}\right).
    \end{align}
    Recalling from \Cref{lem:ele-preserved} that $\cmf$ approximately preserves the flip elegant subspace, and that the flip elegant subspace is a subspace of sophisticed $\hele \leq \algo H_{\reg X} \otimes \hsoph$, we have $\norm{\pinele\cmf_{\reg{XYP}} \ket{\mu}_{\reg{XYP}}} = O(\sqrt{t^2/2^n})$. Plugging into \Cref{eqn:phit-sophpres-comps} and applying triangle inequality, we find that \begin{align}
        \norm{\pinsoph_{\reg P}\cmf_{\reg{XYP}} \ket{\phi_t}_{\reg{AXYP}}} &= \norm{\pinsoph_{\reg P} \ket{\phi_t}_{\reg{AXYP}}} + O\left(\sqrt{\frac{t^2}{2^n}}\right).
    \end{align}
    The claim now follows by induction.
\end{proof}

Finally, we can show that the $\iso$ operator acts as an approximate intertwiner between the two compressed permutation oracles, so long as the state is valid and close to sophisticated. 

\begin{lemma}
    Let $\ket{\psi}_{\reg{AXYP}}$ be a quantum state supported on purifications of size at most $t$, and which is valid. Let $\delta = \norm{\pinsoph_{\reg{P}} \ket{\psi}_{\reg{AXYP}}}$ be the size of the unsophisticated component.
    Then we have \begin{align}
        \norm{(\iso_{\reg P} \cdot \cmf_{\reg{XYP}} - \cp_{\reg{XYI}} \cdot \iso_{\reg P}) \ket{\psi}_{\reg{AXYP}}} &= O\left(\delta + \sqrt{\frac{t^2}{2^n}}\right).
    \end{align}
    \label{lem:int-valid}
\end{lemma}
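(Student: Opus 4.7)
The plan is to split $\ket{\psi}_{\reg{AXYP}}$ along the sophisticated/non-sophisticated decomposition and handle each piece separately. Let $\ket{\psi_1} \coloneqq \pisoph_{\reg P} \ket{\psi}$ and $\ket{\psi_2} \coloneqq \pinsoph_{\reg P} \ket{\psi}$, so $\norm{\ket{\psi_2}} = \delta$. Since $\iso$ is an isometry and both $\cmf, \cp$ are unitary, $\iso_{\reg P} \cmf_{\reg{XYP}} \ket{\psi_2}$ and $\cp_{\reg{XYI}} \iso_{\reg P} \ket{\psi_2}$ each have norm at most $\delta$, contributing at most $2\delta$ to the desired bound by triangle inequality. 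It then suffices to bound $\norm{(\iso_{\reg P} \cdot \cmf_{\reg{XYP}} - \cp_{\reg{XYI}} \cdot \iso_{\reg P}) \ket{\psi_1}}$ by $O(\delta + \sqrt{t^2/2^n})$.

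For the sophisticated piece, the strategy is to reduce to \Cref{cor:ideal-cmf-int} via \Cref{lem:ele-valid}. First, I would argue that validity of $\ket{\psi}$ forces query-validity, i.e.\ $\piqv_{\reg{XP}} \ket{\psi} = \ket{\psi}$: starting from a valid state and applying $\mathsf{ctrl\mhyphen}\mflip^\dagger$ (which preserves validity, as $\mflip$ merely exchanges the valid registers $\reg H \leftrightarrow \reg F$) followed by $\mfc^\dagger$ leaves all three chain entries $D_h(u_L), D_k(u_R \oplus D_h(u_L)), D_f(u_L \oplus D_k(\cdot))$ in the decompressed subspace with definite values, so the query point has a well-defined output and $\inD_{\reg{XP}}$ acts as the identity. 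From $\piqv \ket{\psi} = \ket{\psi}$ and $\ket{\psi} = \ket{\psi_1} + \ket{\psi_2}$ we get $\piqv \ket{\psi_1} - \ket{\psi_1} = -\pinqv \ket{\psi_2}$, hence $\norm{\piqv \ket{\psi_1} - \ket{\psi_1}} \leq \delta$.

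Since $\ket{\psi_1}$ is sophisticated and supported on purifications of size at most $t$, \Cref{lem:ele-valid} gives $\norm{\piqv \ket{\psi_1} - \pifele \ket{\psi_1}} = O(\sqrt{t^2/2^n})$; the triangle inequality then yields $\norm{\pifele \ket{\psi_1} - \ket{\psi_1}} = O(\delta + \sqrt{t^2/2^n})$. Noting that $\pifele \ket{\psi_1}$ remains supported on sizes at most $t$ (as $\pisoph$, $\piele$, and $\mathsf{ctrl\mhyphen}\mflip$ each preserve the purification size of sophisticated components), \Cref{cor:ideal-cmf-int} gives $\norm{(\iso \cmf - \cp \iso) \pifele \ket{\psi_1}} = O(\sqrt{t^2/2^n})$. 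Since $\iso \cmf - \cp \iso$ has operator norm at most $2$, its contribution on the remainder $\ket{\psi_1} - \pifele \ket{\psi_1}$ is $O(\delta + \sqrt{t^2/2^n})$. Combining all three pieces via triangle inequality yields the claimed bound.

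The only point requiring care is the validity-to-query-validity bridge and tracking that each successive projection preserves the size bound $t$ (so that the $\norm{\cdot}_{\leq t}$-based estimate in \Cref{cor:ideal-cmf-int} applies); the remainder is a direct composition of previously established results, so this should be a short concluding proof rather than one presenting any serious obstacle.
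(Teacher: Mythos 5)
Your proposal is correct and follows essentially the same route as the paper's proof: isolate the non-sophisticated component (bounded by $\delta$, with the intertwining operator of norm at most $2$), use validity to force query-validity so $\pinqv$ annihilates the state, pass to the flip-elegant subspace via \Cref{lem:ele-valid}, and conclude with \Cref{cor:ideal-cmf-int}. The only difference is bookkeeping---the paper decomposes directly as $\piqv\pisoph + \piqv\pinsoph + \pinqv$ while you split along $\pisoph/\pinsoph$ and propagate an extra $O(\delta)$ through the query-validity step---which does not change the argument or the bound.
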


\begin{proof}
    We can write $\ket{\psi}_{\reg{AXYP}}$ as \begin{align}
        \ket{\psi}_{\reg{AXYP}} &= \underbrace{\piqv_{\reg{XP}}\pisoph_{\reg{P}} \ket{\psi}_{\reg{AXYP}}}_{\ket{\mu}} + \underbrace{\piqv_{\reg{XP}}\pinsoph_{\reg{P}} \ket{\psi}_{\reg{AXYP}}}_{\ket{\delta}} + \underbrace{\pinqv_{\reg{XP}}\ket{\psi}_{\reg{AXYP}}}_{=0}. \label{eqn:psi-inter-parts}
    \end{align}
    The final term being $0$ follows from the fact that the state is valid.
    Observe that $\norm{\ket{\delta}} \leq \delta$, and the operator acting on the state is norm at most $2$, so this incurs an additive $O(\delta)$. Turning to the remaining component, we can write \begin{align}
        \iso_{\reg P} \cdot \cmf_{\reg{XYP}} \ket{\mu}_{\reg{AXYP}} &= \iso_{\reg P} \cdot \cmf_{\reg{XYP}} \cdot \pifele_{\reg{XP}} \ket{\psi}_{\reg{AXYP}} + O\left(\sqrt{\frac{t^2}{2^n}}\right) & \text{(By \Cref{lem:ele-valid})} \nonumber\\
        &= \cp_{\reg{XYI}} \cdot \iso_{\reg P} \cdot \pifele_{\reg{XP}} \ket{\psi}_{\reg{AXYP}} + O\left(\sqrt{\frac{t^2}{2^n}}\right)& \text{(By \Cref{cor:ideal-cmf-int})}
    \end{align}
    Plugging into \Cref{eqn:psi-inter-parts} and applying triangle inequality shows the claim.
\end{proof}

We are now ready to show the main result of this manuscript, that the compressed permutation oracle construction is sound. We remind the reader that up to this point, we assumed only that the twirling distributions were cromulent. In the next section, we will make an explicit choice for these distributions.

\subsection{Putting the pieces together}
\label{subsec:pieces-main}

Let $\algo A$ be a $q$ query quantum algorithm which queries a random permutation oracle. Suppose that the algorithm has a workspace register $\reg{A}$, and input/output registers $\reg{X}$ and $\reg{Y}$ of dimension $2^{2n+1}$ and $2^{2n}$. The extra input bit indicates the direction of the query. We describe such an algorithm by a sequence of unitary operators $A_0, \dots, A_q$  acting on the $\reg{AXY}$ system, interleaved with oracles $\algo O_{\varphi, \reg{XY}}$ which acts $\forall b \in \bit, x,y \in \bit^{2n}$ as \begin{align}
    \algo O_{\varphi, \reg{XY}} \ket{b, x}_\reg{X} \ket{y}_\reg{Y} &= 
        \ket{b, x}_\reg{X} \ket{y \oplus  \varphi^{1-2b}(x)}_\reg{Y}.
\end{align}
For simplicity, we will treat the registers $\reg{XY}$ as subregisters of the adversary state $\reg{A}$. We can then write the view of such an algorithm as the density matrix \begin{align}
    \rho^{(\algo O)}_\reg{A} &= \mathop{\mathbb{E}}_{\varphi \sim \mathbf S_{2^{2n}}} \left[\rho(A_{q, \reg{A}} \dots \algo O_{\varphi, \reg{A}} A_{0, \reg{A}} \ket{0}_{\reg A})\right].
\end{align}
Let us compare to the compressed oracle experiment, where we answer queries according to our compressed permutation oracle. In this experiment, we have a purifying register $\reg I$ in the Hilbert space $\algo H(\mathbf I)$, spanned by all injective partial functions from $\bit^{2n} \ra \bit^{2n}$. In this case, we answer queries using the unitary $\cp$, as described in \Cref{sec:comp-perms}. We think of the $\reg I$ register as being outside the algorithm's view, so it is traced out. We can write the final state in this experiment as \begin{align}
    \rho^{(\cp)}_\reg{A} &= \Tr_\reg{I} \left[\rho(A_{q, \reg{A}} \dots \cp_\reg{AI} A_{0, \reg{A}} \ket{0}_{\reg{A}} \ket{\boti}_{\reg{I}})\right].
\end{align}
Our main result is that these two views are indistinguishable unless the algorithm has made a number of queries comparable to the twelveth-root of the permutation size.

\begin{theorem}
    Any $q$ query quantum algorithm cannot distinguish the compressed permutation oracle from a uniform random permutation oracle on $[N] \simeq \bit^{2n}$ except with advantage \begin{align}
        \frac{1}{2}\norm{\rho^{(\algo O)}_\reg{A} - \rho^{(\cp)}_\reg{A}}_1 &= O\left(\frac{q^{3}}{N^{1/4}}\right).
    \end{align}
    In particular, constant advantage requires $\Omega(\sqrt[12]{N})$ queries.
    \label{thm:main-indist}
\end{theorem}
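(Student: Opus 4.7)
With the intertwiner $\iso$ and the approximate intertwining lemmas already in hand, the plan is to reduce the theorem to an application of Uhlmann's theorem followed by a straightforward hybrid argument across the $q$ queries. Since the uniform distribution on $\mathbf S_N^{\times 2}$ is cromulent (verified in \Cref{sec:twirling}) and $\varphi_{(\pi,\omega,h,k,f)}$ is a uniform random permutation whenever $\pi$ is uniform, tracing out the purifying register in the compressed masked Feistel experiment recovers exactly $\rho^{(\algo O)}_{\reg A}$: compression is a basis change on the purification, and $\mfpu$ acting on the uniform purified state realizes a uniform permutation oracle. By \Cref{lem:approx-uhlman} applied with isometry $\iso$, it then suffices to bound $\norm{\iso_{\reg P} \ket{\phi_q}_{\reg{AP}} - \ket{\psi_q}_{\reg{AI}}}$, where $\ket{\phi_t}$ and $\ket{\psi_t}$ denote the states after $t$ queries under $\cmf$ and $\cp$ respectively, with initializations $\ket{\mathbf P(\boti)}$ and $\ket{\boti}$ satisfying $\iso \ket{\mathbf P(\boti)} = \ket{\boti}$.

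I would prove by induction on $t$ that $\norm{\iso \ket{\phi_t} - \ket{\psi_t}} = O\!\left(\sum_{s=1}^{t} s^2 / \sqrt{2^n}\right)$. The base case $t=0$ is immediate. For the inductive step, decompose
\begin{align}
    \iso \ket{\phi_t} - \ket{\psi_t} \;=\; A_t\!\left( (\iso\cdot \cmf - \cp\cdot \iso)\ket{\phi_{t-1}} \;+\; \cp\cdot (\iso \ket{\phi_{t-1}} - \ket{\psi_{t-1}}) \right),
\end{align}
noting that $A_t$ acts on $\reg{A}$ and therefore commutes with $\iso$. The second summand is controlled by the inductive hypothesis since $A_t$ and $\cp$ are unitary. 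For the first summand, \Cref{cor:mfeist-qval} ensures $\ket{\phi_{t-1}}$ is valid; the standard bounded growth of the internal compressed function oracles (each query touches each of $\reg{H,K,F}$ at most once) bounds its purification size by $t-1$; and \Cref{lem:soph-pres} bounds its non-sophisticated component by $\delta_{t-1} = O((t-1)^2/\sqrt{2^n})$. Applying \Cref{lem:int-valid} then yields a per-step error of $O(\delta_{t-1} + t/\sqrt{2^n}) = O(t^2/\sqrt{2^n})$, which completes the induction.

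Summing the per-step errors gives $\sum_{t=1}^{q} O(t^2/\sqrt{2^n}) = O(q^3/\sqrt{2^n}) = O(q^3/N^{1/4})$, using $N = 4^n$ so that $\sqrt{2^n} = N^{1/4}$. Combining with \Cref{lem:close-states-dis,lem:approx-uhlman} gives the claimed trace distance bound, and the $\Omega(\sqrt[12]{N})$ query lower bound for constant distinguishing advantage follows by solving $q^3/N^{1/4} = \Omega(1)$. All the substantive technical work, namely the approximate intertwining of \Cref{lem:int-valid} and the preservation of sophistication of \Cref{lem:soph-pres}, is already done; the only delicate point in the hybrid is to invoke the time-dependent bound $\delta_{t-1}$ at each step rather than naively bounding by $\delta_q$, which is what keeps the final exponent at $q^3$ rather than $q^4$.
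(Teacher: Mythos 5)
Your proposal is correct and follows essentially the same route as the paper: purify the uniform permutation as masked Feistel with uniform (cromulent) twirl, telescope across the $q$ queries using the validity of the evolved state, the sophistication bound of \Cref{lem:soph-pres}, and the approximate intertwining of \Cref{lem:int-valid}, then conclude with \Cref{lem:approx-uhlman}, giving per-step error $O(t^2/\sqrt{2^n})$ and total $O(q^3/N^{1/4})$. One peripheral remark is off but harmless: even the uniform bound $\delta_q = O(q^2/\sqrt{2^n})$ at every step would still sum to $O(q^3/\sqrt{2^n})$, so the time-dependent $\delta_{t-1}$ is not what saves the exponent.
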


\begin{proof}
    We will prove this statement by purifying. The natural purification of a random permutation oracle, which purifies $\rho^{(\algo O)}_\reg{A}$, is to maintain a uniform superposition of permutation truth tables outside the view of the adversary. We will consider the alternate, more complicated purification described in \Cref{sec:Feistel-twirl}, with a uniform twirl. This can be written as follows, where $\reg P$ is the purifying register containing sub-registers $\reg {\Pi\Omega HKF}$, \begin{align}
        \ket{\psi^{(\algo O)}}_\reg{AXYP} &= A_{q,\reg{A}} \dots \mfpu_\reg{AP} A_{0,\reg{A}} \ket{0}_\reg{A} \otimes \sqrt{\frac{1}{(N!)^2 (2^n)^{3 \cdot 2^n}}} \sum_{\pi, \omega \in \mathbf S_N, h, k, f \in \mathbf F_{2^n, 2^n}} \ket{\pi, \omega, h, k, f}_\reg{P}.
    \end{align}
    In other words, we take the twirling distribution $\algo D$ to be uniform over $\mathbf S_N^{\times 2}$. This distribution is cromulent, as shown in \Cref{sec:twirling}. This oracle is perfectly indistinguishable from a random permutation oracle, $\Tr_{\reg P} [\rho(\ket{\psi^{(\algo O)}}_\reg{AXYP})]=\rho^{(\algo O)}_{\reg A}$, as the permutation is a product of a random permutation (say $\pi$) with an independent permutation.
    
    Now, by the standard compressed oracle theory, the state \begin{align}
        \ket{\psi^{(\cmf)}}_\reg{AP} &= A_{q,\reg{A}} \dots \cmf_{\reg{AP}} A_{0,\reg{A}} \ket{0}_\reg{A} \otimes \frac{1}{N!} \sum_{\pi, \omega \in \mathbf S_{N}} \ket{\pi, \omega, \botd, \botd, \botd}_\reg{P}
    \end{align}
    is perfectly indistinguishable after tracing out the $\reg P$ register, in that we have \begin{align}
        \rho^{(\algo O)}_{\reg A} &= \Tr_\reg{P}[\rho(\ket{\psi^{(\algo O)}}_{\reg{AP}})] = \Tr_\reg{P}[\rho(\ket{\psi^{(\cmf)}}_{\reg{AP}})].
    \end{align}
    The purification of the compressed permutation oracle is \begin{align}
        \ket{\psi^{(\cp)}}_\reg{AI} &= A_{q,\reg{A}} \dots \cp_{\reg{AI}} A_{0,\reg{A}} \ket{0}_\reg{A} \ket{\boti}_\reg{I}, & \Tr_{\reg{I}}[\rho(\ket{\psi^{(\cp)}}_\reg{AI})] = \rho_{\reg A}^{(\cp)}.
    \end{align}
    Now we can consider the hybrid states, \begin{align}
        \ket{\psi^{(\mathsf{Hyb})}_0} &= A_{q,\reg{A}} \dots \cp_{\reg{AI}} A_{0,\reg{A}} \iso_\reg{P} \ket{0}_\reg{A} \ket{{\mathbf P(\boti)}}_\reg{P} \\
        \ket{\psi^{(\mathsf{Hyb})}_1} &= A_{q,\reg{A}} \dots \cp_{\reg{AI}} A_{1,\reg{A}}\iso_\reg{P} \cmf_{\reg{AP}} A_{0,\reg{A}} \ket{0}_\reg{A} \ket{{\mathbf P(\boti)}}_\reg{P} \\
        &\dots \nonumber\\
        \ket{\psi^{(\mathsf{Hyb})}_q} &= \iso_\reg{P} A_{q,\reg{A}} \dots \cmf_{\reg{AP}} A_{0,\reg{A}}  \ket{0}_\reg{A} \ket{{\mathbf P(\boti)}}_\reg{P},
    \end{align}
    where $\ket{\psi^{(\cp)}}_\reg{AI} = \ket{\psi^{(\mathsf{Hyb})}_0}$ and $\iso_{\reg P}\ket{\psi^{(\cmf)}}_\reg{AP} =  \ket{\psi^{(\mathsf{Hyb})}_q}$. For $t \leq q$, let us define \begin{align}
        \ket{\phi_t}_{\reg{AP}} = A_{t,\reg{A}} \dots \cmf_{\reg{AP}} A_{0,\reg{A}}  \ket{0}_\reg{A} \ket{{\mathbf P(\boti)}}_\reg{P},
    \end{align}
    observing that $\norm{\pinsoph_{\reg P} \ket{\phi_t}_{\reg{AP}}} = O(\sqrt{t^4/2^n})$ by \Cref{lem:ele-preserved}, and further the $\ket{\phi_t}_{\reg{AP}} \in \piv_{\reg P}$, lying entirely in the valid subspace as $\cmf$ preserves validity. Observing that $\iso_{\reg P}$ clearly commutes with the $A_{t, \reg A}$ operator, we can then compute the successive differences between hybrid states to be \begin{align}
        \norm{\ket{\psi^{(\mathsf{Hyb})}_{t+1}} - \ket{\psi^{(\mathsf{Hyb})}_{t}}} &= \norm{(\iso_{\reg P} \cdot \cmf_{\reg{AP}}-\cp_{\reg{AI}} \cdot \iso_{\reg P})\ket{\phi_t}_{\reg{AP}}} \nonumber\\
        &= O\left(\sqrt{\frac{t^4}{2^n}}\right). & \text{(By \Cref{cor:ideal-cmf-int})}
    \end{align}
    Therefore, we have $\norm{\iso_{\reg P}\ket{\psi^{(\cmf)}}_\reg{AP} - \ket{\psi^{(\cp)}}_\reg{AI}} = O\left(\sqrt{\frac{q^6}{2^n}}\right)$. This equation combined with \Cref{lem:approx-uhlman} proves the claim.
\end{proof}

\section{Feistel security}
\label{sec:feist-sec}

Let $\feist^{(m)}$ denote the distribution induced on permutations by $m$ rounds of Feistel with truly random round functions. This is query-indistinguishable from the case where the round functions are pseudorandom.
The size of all permutations considered is $N=4^n$. Our convention is that the first round is ``left-to-right'', in that input $x$ is transformed to $x_L \Vert (x_R \oplus h(x_L))$ in the first round.

Our proof that the seven round Feistel construction is a qPRP is nearly identical to the soundness proof of our compressed permutation oracle. In particular, we will show that Feistel is indistinguishable from the compressed permutation oracle. To do so, we will apply the machinery we have developed for masked Feistel to the case where the twirling distribution $\algo D$ draws $\pi$ uniformly from $\feist^{(2)}$, and $\omega$ uniformly from $(\feist^{(2)})^{-1}$. In other words, we define \begin{align}
    p_{\algo D}(\pi, \omega) &= p_{\feist^{(2)}\times \feist^{(2)}} (\pi, \omega^{-1}).
\end{align}

Note that this distribution is invariant under the transformation $(\pi, \omega) \mapsto (\omega^{-1}, \pi^{-1})$, maintaining the flip symmetry of all our constructions thus far. The main technical component of this section is proving that this distribution is \emph{cromulent}, which amounts to proving purely classical statistical properties. Once this is in place, we can use the results derived in the previous two sections to give a simple proof of security.

\subsection{Twirling}
\label{sec:outer-feist-twirl}

We will study the statistical properties of the two round Feistel construction. In particular, we will show that $\algo D = \feist^{(2)} \times (\feist^{(2)})^{-1}$ is a cromulent distribution. From the flip symmetry above, it will suffice to show the properties in \Cref{def:cromulent} for the listed forward cases only. We can think of sampling from $\algo D_I$ as a procedure in which we assign inputs to the constituent functions in stages. We use $D^{(\pi)}_1$ and $D^{(\pi)}_2$ to denote the partial databases representing the two Feistel functions composing $\pi$ throughout this procedure. We use $D^{(\omega)}_1$ to represent the last Feistel round, and $D^{(\omega)}_2$ to represent the second to last, in $\omega$. Observe that $D^{(\cdot)}_1$ is left-to-right, and $D^{(\cdot)}_2$ is right-to-left, for both $\pi$ and $\omega$. We stress that this is purely a classical procedure, which is useful as a tool to prove probabilistic properties about this distribution. We use ideas from \Cref{def:non-allowable-dbs} in this description.

\begin{enumerate}
    \item For each $x \in \dom(I)$, sample random values for $D^{(\pi)}_1(x_L)$ and $D^{(\pi)}_2(x_R \oplus D^{(\pi)}_1(x_L))$, and for each $y \in \im(I)$, sample random values for $D^{(\omega)}_1(y_L)$ and $D^{(\omega)}_2(y_R \oplus D^{(\omega)}_1(y_L))$.
    \item Observing that $A=\pi * I * \omega$ is now fully defined, if $A$ is non-allowable, restart the procedure, and otherwise continue.
    \item Sample the remaining undefined points at uniform random.
\end{enumerate}

Now we observe a number of symmetries in this distribution. First, let us define a \emph{right shift} in $\pi$, parameterized by an $s \in \bit^n$, as the transformation $(D^{(\pi)}_1, D^{(\pi)}_2) \mapsto_s (D^{(\pi)'}_1, D^{(\pi)'}_2)$ where \begin{align}
    D^{(\pi)}_1(w_1)=z_1 &\Leftrightarrow D^{(\pi)'}_1(w_1)=z_1 \oplus s, \\
    D^{(\pi)}_2(z_2)=w_2 &\Leftrightarrow D^{(\pi)'}_2(z_2 \oplus s)=w_2.
\end{align}
Similarly, a right shift in $\omega$ parameterized by an $s \in \bit^n$ is the transformation $(D^{(\omega)}_1, D^{(\omega)}_2) \mapsto_s (D^{(\omega)'}_1, D^{(\omega)'}_2)$ where \begin{align}
    D^{(\omega)}_1(w_1)=z_1 &\Leftrightarrow D^{(\omega)'}_1(w_1)=z_1 \oplus s, \\
    D^{(\omega)}_2(z_2)=w_2 &\Leftrightarrow D^{(\omega)'}_2(z_2 \oplus s)=w_2.
\end{align}
We define a left shift as a joint transformation on $(\pi, \omega)$, again parameterized by an $s \in \bit^n$, which maps $(D^{(\pi)}_2, D^{(\omega)}_2) \mapsto_s (D^{(\pi)'}_2, D^{(\omega)'}_2)$ where \begin{align}
    D^{(\pi)}_2(z_{\pi})=w_{\pi} &\Leftrightarrow D^{(\pi)'}_2(z_{\pi})=w_{\pi} \oplus s, \\
    D^{(\omega)}_2(z_\omega)=w_\omega &\Leftrightarrow D^{(\omega)'}_2(z_\omega)=w_\omega \oplus s.
\end{align}

We can equivalently think of sampling from $\algo D_I$ with additional re-randomization, as illustrated by these symmetries. In the following lemma, inside of each step the order does not matter (e.g. the selections in step (1) can be made in any order, as long as they all occur before the selections in step (4)).

\begin{lemma}
    The following procedure is equivalent to sampling $(\pi, \omega)$ from $\algo D_I$, where $\pi$ is defined by the two round Feistel $D^{(\pi)}_1, D^{(\pi)}_2$ and $\omega$ is defined by the inverse two round Feistel of $D^{(\omega)}_1, D^{(\omega)}_2$.
    \begin{enumerate}
        \item For each $x \in \dom(I)$, sample random values for $D^{(\pi)}_1(x_L)$ and $D^{(\pi)}_2(x_R \oplus D^{(\pi)}_1(x_L))$, and for each $y \in \im(I)$, sample random values for $D^{(\omega)}_1(y_L)$ and $D^{(\omega)}_2(y_R \oplus D^{(\omega)}_1(y_L))$. Noting that this suffices to define $A=\pi * I * \omega$, do this conditioned on the fact that $A$ is allowable.
        \item Sample random $s \sim \bit^n$, perform a right shift by $s$ on $\pi$.
        \item Sample random $s' \sim \bit^n$, perform a right shift by $s'$ on $\omega$.
        \item Sample the remaining undefined points of the $D_i^{(\pi)}$ and $D_i^{(\omega)}$ at uniform random.
        \item Sample random $s'' \sim \bit^n$, perform a left shift by $s''$ on $(\pi, \omega)$.
    \end{enumerate}
    \label{lem:shifted-feist-dist}
\end{lemma}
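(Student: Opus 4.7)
The plan is to recognize the three shift operations as distributional symmetries of $\algo D_I$ and conclude that inserting them with uniformly random parameters leaves the sampling distribution unchanged. First I would verify each shift is a symmetry, which amounts to two claims: that it preserves the underlying distribution $\algo D = \feist^{(2)} \times (\feist^{(2)})^{-1}$, and that it preserves $\mathbf R_I$. The first is immediate, since each shift acts on the tuple $(D^{(\pi)}_1, D^{(\pi)}_2, D^{(\omega)}_1, D^{(\omega)}_2)$ either by XORing a fixed constant into values or by relabeling the domain---both measure-preserving bijections of $\mathbf F_{\sqrt N, \sqrt N}$. The second follows from direct calculation from the two-round Feistel formula: the right shift of $\pi$ by $s$ yields $\pi'(x) = \pi(x) \oplus (0^n \Vert s)$, changing only the right-$n$ bits of inputs of $A = \pi * I * \omega$; the right shift of $\omega$ by $s'$ yields $\omega'^{-1}(y) = \omega^{-1}(y) \oplus (0^n \Vert s')$, changing only the right-$n$ bits of outputs of $A$; and the joint left shift of $(\pi,\omega)$ by $s''$ gives both $\pi'(x) = \pi(x) \oplus (s'' \Vert 0^n)$ and $\omega'^{-1}(y) = \omega^{-1}(y) \oplus (s'' \Vert 0^n)$, XORing $s''$ into the left-$n$ bits of every input and every output of $A$ simultaneously so that their pairwise XORs are preserved. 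By \Cref{def:non-allowable-dbs}, allowability depends only on left-$n$ bits of inputs, outputs, and their XORs, so none of these operations can turn an allowable database into a non-allowable one. Hence each shift maps $\mathbf R_I$ to itself.

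Combining the two properties, each shift with a uniformly random and independent parameter is an automorphism of $\algo D_I$. It remains to show that the 5-step procedure agrees with the alternative in which all three shifts are deferred until \emph{after} filling in the undefined positions in step 4---so that the pre-shift output of steps 1 and 4 is directly a sample from $\algo D_I$, to which the shifts are then applied. This reduces to a commutation check: for each shift and each partial function, ``fill undefined positions uniformly, then shift'' and ``shift, then fill undefined positions uniformly'' induce the same joint distribution on the fully-defined output. For the value-XOR components ($D^{(\cdot)}_1$), the shift is a bijection on the codomain preserving uniform measure, so the two orderings agree pointwise in distribution. For the domain-relabel components ($D^{(\cdot)}_2$), the shift only permutes which positions are labeled defined versus undefined, and uniform filling is exchangeable across positions. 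Composing, the 5-step procedure has the same output distribution as direct $\algo D_I$-sampling followed by three independent random shifts, which by the symmetry argument equals $\algo D_I$.

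The trickiest calculation---though still routine---is for the joint left shift. Shifting \emph{only} $\pi$ or only $\omega$ would modify left-$n$ bits of inputs or outputs respectively and would immediately create internal left collisions in $A$; it is essential that both are shifted by the \emph{same} $s''$ so that the XOR appearing in the internal-collision predicate cancels. This is precisely why the lemma pairs $\pi$ and $\omega$ in step 5 instead of shifting them independently.
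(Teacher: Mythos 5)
Your main argument is correct and follows essentially the same route as the paper: the shifts are measure-preserving on the underlying uniform round-function tables, they preserve $\mathbf R_I$ because allowability is a condition only on left substrings (right shifts leave those untouched, and the joint left shift cancels inside each XOR equation of \Cref{def:non-allowable-dbs}), and they commute with the uniform filling of undefined points, so the procedure outputs a sample of $\algo D_I$; the paper's proof is just a terser version of this, leaving the measure-preservation and fill/shift commutation implicit.

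One aside in your last paragraph is false, though it does not affect the proof of the lemma: shifting only $\pi$ (or only $\omega$) by a left shift would \emph{not} create internal left collisions. Each allowability equation contains an even number of input-left variables and an even number of output-left variables (the internal-collision condition $u_L \oplus v_L = u_L' \oplus v_L'$ has two of each), so XORing $s''$ into all input lefts alone, or all output lefts alone, also cancels and preserves allowability. The joint form of the left shift is a choice made for how it is used later (e.g., re-randomizing $\omega^{-1}(y)_L$ in \Cref{lem:twirl-feist-second-mom}), not a necessity for preserving $\mathbf R_I$, and the lemma itself never requires the single-sided variants to fail.
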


\begin{proof}
    Recall that allowability of $A=\pi * I * \omega$ is a condition on only the left substrings of $A$. A random right shift does not affect the left substrings of $A$, and so the right shifts in steps (2) and (3) do not change the distribution. Furthermore, the allowability conditions are a set of linear equations over $\mathbb Z_2^n$ on the left substrings with $2$ or $4$ variables in each. Shifting all of the variables of such an equation does not change its truth/falsehood, as every shift $s$ has order $2$ with $s \oplus s = 0$, so the last step does not affect allowability.
\end{proof}

Note that we can reason similarly about, e.g., $\algo D_{x, I}$, simply adding the condition of allowing $x$ in step (1). The same argument implies that the random shifts do not change the truth or falsehood of this condition. 
With this in place, we are ready to prove that $\algo D$ satisfies the cromulent conditions.

\begin{lemma}
    Fix some $I \in \mathbf I_t$, $x \in [N] \setminus \dom(I)$. Then we have \begin{align}
        \Pr_{(\pi, \omega) \sim \algo D_I} [(\pi, \omega) \in \mathbf R_{x, I}] &= 1-O\left(\frac{t^2}{2^n}\right)
    \end{align}
    when $t=O(2^{n/3})$.
    \label{lem:twirl-feist-prob}
\end{lemma}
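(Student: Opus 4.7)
The plan is to apply the alternative sampling procedure of \Cref{lem:shifted-feist-dist}. After step (1) of that procedure, the database $A = \pi * I * \omega$ is fully determined and is allowable, and by \Cref{rem:allowable-xs} the set of left substrings not allowed by $A$ has size at most $t^2 + t$. Since whether $A$ allows a new input $u$ depends only on $u_L$, it therefore suffices to bound the probability that $\pi(x)_L = x_L \oplus D^{(\pi)}_2(x_R \oplus D^{(\pi)}_1(x_L))$ falls in this bad set, under the distribution $\algo D_I$.

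I would split into cases according to whether $D^{(\pi)}_1(x_L)$ is a fresh step (4) sample (i.e., $x_L \notin \{x'_L : x' \in \dom(I)\}$) and whether $z := x_R \oplus D^{(\pi)}_1(x_L)$ lies in the step (1) domain of $D^{(\pi)}_2$, namely $\{x'_R \oplus D^{(\pi)}_1(x'_L) : x' \in \dom(I)\}$. In any case where either $D^{(\pi)}_1(x_L)$ or $D^{(\pi)}_2(z)$ is freshly sampled in step (4), the step (4) randomness is independent of step (1), so $\pi(x)_L$ is uniform on $\bit^n$ conditional on the step (1) data; its bad probability is then at most $(t^2+t)/2^n$ regardless of the conditioning on allowability of $A$. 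The remaining cases force a step (1) collision: either a uniform $d = D^{(\pi)}_1(x_L)$ sampled in step (4) lands in one of at most $t$ forbidden values, contributing $\leq t/2^n$; or $d$ is determined by step (1) and the event $z \in \dom(D^{(\pi)}_2)$ forces $D^{(\pi)}_1(x''_L) = x''_R \oplus x_R \oplus d$ for some $x'' \in \dom(I)$ with $x''_L \neq x_L$ (since $x_R \neq x'_R$ whenever $x'_L = x_L$, by $x \notin \dom(I)$). This last is a linear relation between two independent uniform step (1) samples, and a union bound over $\leq t$ choices of $x''$ gives probability $\leq t/2^n$ under the unconditioned $\algo D$.

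The main obstacle is transferring the latter bound from $\algo D$ to $\algo D_I = \algo D \mid A \in \mathbf A_t$, for which I would need $\Pr_{\algo D}[A \in \mathbf A_t] = \Omega(1)$ in the regime $t = O(2^{n/3})$. A direct moment computation shows the expected number of input and output left collisions in $A$ under 2-round Feistel is $O(t^2/2^n)$, while the expected number of internal left collisions is $O(t^3/2^n)$ (exploiting independence of $\pi$ and $\omega$, so the mixed quantity $\pi(x)_L \oplus \pi(x')_L \oplus \omega^{-1}(y)_L \oplus \omega^{-1}(y')_L$ is approximately uniform with zero-probability $O(1/2^n)$). For $t = O(2^{n/3})$ both totals are $O(1)$, and a Janson-style or inclusion–exclusion lower bound then yields $\Pr_{\algo D}[A \in \mathbf A_t] = \Omega(1)$. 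Dividing the unconditioned $O(t/2^n)$ probabilities by this constant preserves the $O(t^2/2^n)$ order, and combining with the uniform-case bound $(t^2+t)/2^n$ completes the argument.
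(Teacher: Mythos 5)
Your overall architecture matches the paper's: you invoke the resampling characterization of \Cref{lem:shifted-feist-dist}, reduce allowability of $\pi(x)$ to whether $\pi(x)_L = x_L \oplus D^{(\pi)}_2(x_R \oplus D^{(\pi)}_1(x_L))$ avoids the at most $t^2+t$ disallowed left substrings (\Cref{rem:allowable-xs}, \Cref{lem:allow-prefixes}), and split on whether $x_L$ collides with a left half in $\dom(I)$ and whether $z = x_R \oplus D^{(\pi)}_1(x_L)$ hits the step-(1) domain of $D^{(\pi)}_2$. Your handling of the cases where a fresh step-(4) value is involved is clean and in fact sidesteps the conditioning issue entirely, since step-(4) randomness is uniform conditional on the (already conditioned) step-(1) data. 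Where you genuinely diverge from the paper is in the remaining case, where everything relevant is fixed by step (1): the paper stays inside the conditioned distribution $\algo D_I$ and bounds the likelihood ratio directly, arguing that a colliding value of $D^{(\pi)}_1(x_L)$ can be favored by the allowability conditioning by at most a factor $(1+O(t^2/2^n))^t = 1+O(t^3/2^n) = O(1)$ — this is exactly where the hypothesis $t=O(2^{n/3})$ is used. You instead bound the unconditioned probability and divide by $\Pr_{\algo D}[\pi * I * \omega \in \mathbf A_t]$, which requires that this probability be $\Omega(1)$.

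That last claim is where your proposal has a gap. A first-moment computation showing the expected number of left collisions is $O(t^3/2^n)=O(1)$ does not by itself yield $\Pr[\text{no collision}]=\Omega(1)$: when the expectation is a (possibly large) constant, Markov gives nothing, and truncated inclusion–exclusion only helps when the expectation is below $1$. Janson's inequality in its standard form also does not apply, since the bad events here are parity/equality constraints on uniform values of random round functions rather than monotone subset-containment events over independent Bernoulli indicators; at minimum you would need a Suen-type correlation inequality or, more simply, a sequential argument (reveal the $O(t)$ step-(1) values one at a time, note each newly revealed value completes at most $O(t^2)$ potential collision equations and so avoids them except with conditional probability $O(t^2/2^n)$, and multiply to get $\Pr[\mathbf A_t] \geq (1-O(t^2/2^n))^{O(t)} = \Omega(1)$ for $t = O(2^{n/3})$). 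You also need to treat the degenerate internal-collision tuples (e.g.\ $x=x'$, or the case where both right-wire inputs coincide so the parity becomes deterministic) separately, though these only affect constants. With such a lemma in hand your route goes through and is a legitimate alternative to the paper's likelihood-ratio argument, but as written the proof is incomplete at precisely the step that carries the main content of the lemma.
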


Note the additional condition on $t$ compared to the uniform twirl case; this nonetheless holds in our application.

\begin{proof}
    We will use the characterization in \Cref{lem:shifted-feist-dist}. There are two cases to consider. \begin{enumerate}
        \item There is an $x' \in I$ with $x_L=x_L'$. Let $X \subset I$ be the set of such $x'$'s. Let us imagine selecting the value $D^{(\pi)}_1(x_L)$, and the corresponding values in $D^{(\pi)}_2$, last in step (1). Prior to this selection, define $Z=\{x''_R \oplus D^{(\pi)}_1(x''_L) \, : \, x'' \in I, x''_L \neq x_L\}$ as the set of values on the right wire which have been determined, where $|Z| <t$. We use $Z' = \{z \, : \, \exists x' \in X \cup \{x\} \text{ s.t. } x'_R \oplus z \in Z\}$ as the set of assignments to $D^{(\pi)}_1(x_L)$ which would cause a collision with these values on the right wire. The probability of selecting some $D^{(\pi)}_1(x_L)$ which is not in $Z'$ is uniform, so it only remains to analyze the probability of some selection in $Z'$. Such a selection may be more likely if it causes some $x'$ to immediately satisfy allowability. In such a case, there are a factor $2^n/(2^n-O(t^2)) = 1+O(t^2/2^n)$ more ways to choose the database values, as there would have been $2^n-O(t^2)$ allowable values of $D^{(\pi)}_2(x'_R \oplus D^{(\pi)}_1(x'_L))$ otherwise by \Cref{lem:allow-prefixes}. Over the at most $t$ values in $X$, this gives a factor at most $(1+O(t^2/2^n))^t = 1+O(t^3/2^n)=O(1)$, meaning values in $Z'$ are at most a constant factor more likely than values outside. It follows that values in $Z'$ have probability $O(2^{-n})$. We further have $|Z'|=O(t^2)$ by a simple counting argument.
        
        Over the random choice (conditioned on prior events) of $D^{(\pi)}_1(x_L)$, we then have \begin{align}
            \Pr_{(\pi, \omega) \sim \algo D_I} [D^{(\pi)}_1(x_L) \in Z'] &= O(t^2/2^n)
        \end{align}
        from the union bound. Supposing that this event does not happen, the value $D^{(\pi)}_2(x_R \oplus D^{(\pi)}_1(x_L))$ will be selected at uniform random in step (4). Once again by \Cref{lem:allow-prefixes}, this will lead to an allowable value of $\pi(x)_L$ with probability $1-O(t^2/2^n)$, showing the claim.
        \item There is no $x' \in I$ with $x_L=x_L'$. Then, the value $D^{(\pi)}_1(x_L)$ will be selected in step (4), and will result in a value $D^{(\pi)}_2(x_R \oplus D^{(\pi)}_1(x_L))$ which is not already selected with probability $1-O(t/2^n)$. Similar to the previous case, conditioned on this value being sampled fresh in step (4) there is once again a probability $1-O(t^2/2^n)$ of $x$ being disallowed, proving the claim.
    \end{enumerate}
\end{proof}

\begin{lemma} 
    Fix some $I \in \mathbf I_t$, $y\neq y' \in [N] \setminus \im(I)$, $x \in [N] \setminus \dom(I)$ and $l \in \bit^n$. Then we have \begin{align}
        \Pr_{(\pi, \omega) \sim \algo D_{x, I}} [\omega^{-1}(y)_L = l] &= \frac{1}{2^n} \\
        \Pr_{(\pi, \omega) \sim \algo D_{x, I}} [\omega^{-1}(y)_L = \omega^{-1}(y')_L = l] &= \frac{1}{2^{2n}}\pm O\left(\frac{t^2}{2^{3n}}\right).
    \end{align}
    \label{lem:twirl-feist-second-mom}
\end{lemma}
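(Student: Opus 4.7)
I would use the characterization of $\algo D_{x,I}$ from \Cref{lem:shifted-feist-dist} and exploit the uniform random left shift $s''$ in its final step. Writing $\omega^{-1}(y)_L = y_L \oplus D^{(\omega)}_2(y_R \oplus D^{(\omega)}_1(y_L))$, a direct check shows that the right shift by $s'$ in step (3) leaves this quantity unchanged, while the uniform left shift by $s''$ in step (5) XORs $s''$ into it. Thus, defining $f(y) \coloneqq y_L \oplus D^{(\omega)}_2(y_R \oplus D^{(\omega)}_1(y_L))$ as computed just after step (4) (i.e., before the final shift), we have $\omega^{-1}(y)_L = f(y) \oplus s''$. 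Since $s''$ is uniform and independent of $f(y)$, the first equation $\Pr[\omega^{-1}(y)_L = l] = 1/2^n$ follows immediately.

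For the second equation, both $\omega^{-1}(y)_L$ and $\omega^{-1}(y')_L$ are shifted by the \emph{same} $s''$, so $\omega^{-1}(y)_L = \omega^{-1}(y')_L$ iff $f(y) = f(y')$, an event invariant under $s''$. Combined with the independence of $s''$, this yields
\begin{align*}
\Pr_{\algo D_{x,I}}[\omega^{-1}(y)_L = \omega^{-1}(y')_L = l] \;=\; \frac{1}{2^n} \cdot \Pr_{\algo D_{x,I}}[f(y) = f(y')],
\end{align*}
so it suffices to show $\Pr[f(y) = f(y')] = 1/2^n \pm O(t^2/2^{2n})$.

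To analyze $\Pr[f(y) = f(y')]$, set $z_y \coloneqq y_R \oplus D^{(\omega)}_1(y_L)$ and $z_{y'} \coloneqq y'_R \oplus D^{(\omega)}_1(y'_L)$; then $f(y) = f(y')$ is equivalent to $D^{(\omega)}_2(z_y) \oplus D^{(\omega)}_2(z_{y'}) = y_L \oplus y'_L$. I would case-split on whether $y_L, y'_L$ lie in the step-(1) domain $\{y''_L : y'' \in \im(I)\}$ (of size at most $t$) and whether $z_y, z_{y'}$ lie in the step-(1) domain of $D^{(\omega)}_2$. In the generic case where none of these memberships hold, the relevant $D^{(\omega)}$-values are sampled freshly in step (4), uniformly and essentially independently, yielding $\Pr[f(y) = f(y')] = 1/2^n$ up to an $O(1/2^{2n})$ correction from the rare event $z_y = z_{y'}$. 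Each non-generic case occurs with probability $O(t/2^n)$ over the step-(4) samples, and contributes a further $O(t/2^{2n})$ once multiplied by the residual $O(1/2^n)$ probability of $f(y) = f(y')$. Finally, conditioning on allowability inflates conditional probabilities by at most a factor $1 + O(t^2/2^n)$ (by \Cref{lem:twirl-feist-prob}, exactly as in the proof of \Cref{lem:twirl-feist-prob}), producing the overall $O(t^2/2^{2n})$ error.

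\textbf{Main obstacle.} The main challenge is bookkeeping the case split together with the allowability conditioning, which correlates the step-(1) samples in a non-local way. The borderline case $z_y = z_{y'}$ needs particular care: it forces $f(y) \oplus f(y') = y_L \oplus y'_L$, so it only contributes when $y_L = y'_L$, and in that case $y_R \neq y'_R$ already makes the collision $z_y = z_{y'}$ rare. Managing these correlations without blowing up the error term is the primary technical hurdle, though each individual case reduces to a short counting argument analogous to the one used for \Cref{lem:twirl-perm-second-mom}.
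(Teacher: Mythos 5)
Your overall strategy is the same as the paper's: both arguments invoke the resampling characterization of \Cref{lem:shifted-feist-dist}, peel off the final uniform left shift to obtain the exact factor $2^{-n}$ (which gives the first equation immediately and reduces the second to the pre-shift event), and then bound the pre-shift left-collision probability $\Pr[f(y)=f(y')]$ by casework together with counting admissible values under the allowability conditioning. So the skeleton is right.

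The gap is in your error accounting for the ``non-generic'' cases. Whether $y_L$ or $y'_L$ lies in the step-(1) domain $\{y''_L : y''\in\im(I)\}$, and whether $y_L=y'_L$, are not random events: they are determined by the fixed $I,y,y'$ and can hold with certainty, so the claim that ``each non-generic case occurs with probability $O(t/2^n)$ over the step-(4) samples'' fails precisely there (it is only plausible for the memberships of $z_y,z_{y'}$ in the step-(1) domain of $D_2^{(\omega)}$). For those inputs you cannot discharge the case by rarity; you must show directly that the conditional collision probability is still $2^{-n}\pm O(t^2/2^{2n})$ even though the relevant values of $D_1^{(\omega)}$ and possibly $D_2^{(\omega)}$ are sampled in step (1) under the allowability conditioning. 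This is exactly what the paper's proof supplies: it splits into the cases $y_L=y'_L$, both $y_L$ and $y'_L$ matched in $\im(I)$, and at least one unmatched, and in each case fixes an order of selections so that the decisive value (e.g.\ $D_2^{(\omega)}(z_y)$, chosen after $D_2^{(\omega)}(z_{y'})$ and after everything else relevant) is picked last, at which point it is uniform over at least $2^n-O(t^2)$ admissible values by \Cref{lem:allow-prefixes}, giving a probability in the window $[2^{-n},\,1/(2^n-O(t^2))]$. A blanket appeal to ``conditioning inflates probabilities by $1+O(t^2/2^n)$, as in \Cref{lem:twirl-feist-prob}'' does not substitute for this selection-ordering argument, since that lemma bounds a different event and does not give a pointwise multiplicative bound for arbitrary events. (Minor point: when $y_L=y'_L$ the collision $z_y=z_{y'}$ is not merely rare but impossible, since $y_R\neq y'_R$; this is the observation, used in the paper, that a right collision precludes a left collision.)
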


\begin{proof}
    We will use the characterization in \Cref{lem:shifted-feist-dist}. For the first equation, consider performing steps (1-4). Then, step (5) will re-randomize the value of $\omega^{-1}(y)$, such that it will satisfy $\omega^{-1}(y)_L = l$ with probability $2^{-n}$. 

    For the second equation, let us begin by analyzing the probability that $\omega^{-1}(y)_L=\omega^{-1}(y')_L$ after steps (1-4). There are the following cases. \begin{enumerate}
        \item $y_L = y_L'$. Then, after the selection of $D_1^{(\omega)}(y_L)$, it is guaranteed that $y_R \oplus D_1^{(\omega)}(y_L) \neq y_R' \oplus D_1^{(\omega)}(y_L')$. Let us define \begin{align}
            z&=y_R \oplus D_1^{(\omega)}(y_L), & 
            z'&=y_R \oplus D_1^{(\omega)}(y_L).
        \end{align} 
        Now, consider performing all of the selections which can be performed prior to selecting $D_2^{(\omega)}(z)$. WLOG we can assume that $D_2^{(\omega)}(z')$ is already selected (otherwise, swap $y$ and $y'$). The selection of $D_2^{(\omega)}(z)$ will occur either at the end of step (1) (in which case there are at least $2^n-O(t^2)$ admissible values) or at the end of step (4) (in which case there are $2^n$ admissible values), depending on the prior selections. In all cases, we have shown \begin{align}
            \frac{1}{2^n} \leq \Pr_{(\pi, \omega) \sim \algo D_{x, I}} [\omega^{-1}(y)_L = \omega^{-1}(y')_L \, |\, y_L = y_L'] \leq \frac{1}{2^n-O(t^2)}.
        \end{align}
        \item $y_L \neq y_L'$, and there is  $y'', y''' \in \im(I)$ with $y''_L = y_L$ and $y'''_L=y'_L$. We will call this case $E_2$. Consider in step (1) first selecting all of the values relevant to $\pi$, all of the relevant values for $D_1^{(\omega)}$ except $D_1^{(\omega)}(y_L)$, and none of the values for $D_2^{(\omega)}$. Let us define $z'=y_R' \oplus D_1^{(\omega)}(y_L')$. In the selection of $D_1^{(\omega)}(y_L)$, there are at least $2^n-O(t^2)$ admissible values, i.e. values not immediately violating the allowability condition. If we define $z=y_R \oplus D_1^{(\omega)}(y_L)$, then we will have $z=z'$ with probability no more than $\frac{1}{2^n-O(t^2)}$; call this event $E_{\mathsf{col}}$.

        In the case of $E_{\mathsf{col}}$, we are guaranteed that $\omega^{-1}(y)_L \neq \omega^{-1}(y')_L$, as if there is a right collision there cannot be a left collision. In the other case, we can consider once again selecting $D_2^{(\omega)}(z)$ after $D_2^{(\omega)}(z')$, and after all the (other) selections in step (1) are made. Similar to the previous case, depending on the step in which this occurs, the probability of $\omega^{-1}(y)_L=\omega^{-1}(y')_L$ lies between $\frac{1}{2^n}$ and $\frac{1}{2^n-O(t^2)}$. We then have \begin{align}
            \Pr_{(\pi, \omega) \sim \algo D_{x, I}} [\omega^{-1}(y)_L = \omega^{-1}(y')_L \, |\, E_2] &= \Pr_{(\pi, \omega) \sim \algo D_{x, I}} [\omega^{-1}(y)_L = \omega^{-1}(y')_L \, |\, E_2, E_{\mathsf{col}}] + \nonumber\\&\quad \underbrace{\Pr_{(\pi, \omega) \sim \algo D_{x, I}} [\omega^{-1}(y)_L = \omega^{-1}(y')_L \, |\, E_2, \neg E_{\mathsf{col}}]}_{=0} \\
            &= \left(1-\frac{1}{2^n-O(t^2)}\right) \cdot \left(\frac{1}{2^n \pm O(t^2)}\right) \\
            &= \frac{1}{2^n} \pm O\left(\frac{t^2}{2^{2n}}\right)
        \end{align}
        
        \item There is no $y'' \in \im(I)$ with $y''_L = y_L$, or no such $y''$ with $y''_L=y'_L$. We can WLOG assume the first one. We call this case $E_3$. Now, we imagine performing all of the selections in step (1), as well as selecting $D_1^{(\omega)}(y_L')$ to define $z'=y_R' \oplus D_1^{(\omega)}(y_L')$, and selecting $D_2^{(\omega)}(z')$ to define $w'=y_L' \oplus D_2^{(\omega)}(z')$ (this selection may occur in step (1), or as the first selection in step (4)). Now let $d=y_L \oplus w'$. Let $E_d$ denote the event that $d$ appears in the image of $D_2^{(\omega)}$ at this stage. By the shift invariance of allowability, and the fact that $D_2^{(\omega)}$ is of size at most $t+1$, we have that $E_d$ occurs with probability $O(t/2^n)$.

        Now we consider selecting $D_1^{(\omega)}(y_L)$ and defining $z=y_R \oplus D_1^{(\omega)}(y_L)$. Let $E_z$ denote the event that $z$ is in the domain of $D_2^{(\omega)}$. There are at most $t+1$ such values, so $E_z$ occurs with probability $O(t/2^n)$ over the randomness of the selection. Note that $E_z$ is independent of $E_d$. If $E_z$ occurs, then the value $w=y_L \oplus D_2^{(\omega)}(z)$ is already defined, and we have that $\omega^{-1}(y)_L=\omega^{-1}(y')_L$ can only occur if $D_2^{(\omega)}(z)=d$, i.e. if $E_d$ occurred.

        Turning to the other case, we consider randomly selecting $D_2^{(\omega)}(z)$, which defines $w=y_L \oplus D_2^{(\omega)}(z)$. With probability $2^{-n}$, we will have $w=w'$. This occurs regardless of $E_d$. Assembling all the cases, we can write \begin{align}
            \Pr_{(\pi, \omega) \sim \algo D_{x, I}} [\omega^{-1}(y)_L = \omega^{-1}(y')_L \, |\, E_3] &= \underbrace{\Pr_{(\pi, \omega) \sim \algo D_{x, I}} [\omega^{-1}(y)_L = \omega^{-1}(y')_L \, |\, E_3, E_d, E_z]}_{O(t^2/2^{2n})} + \nonumber\\&\quad\underbrace{\Pr_{(\pi, \omega) \sim \algo D_{x, I}} [\omega^{-1}(y)_L = \omega^{-1}(y')_L \, |\, E_3, \neg E_d, E_z]}_{=0}+ \nonumber\\&\quad
            \underbrace{\Pr_{(\pi, \omega) \sim \algo D_{x, I}} [\omega^{-1}(y)_L = \omega^{-1}(y')_L \, |\, E_3, \neg E_z]}_{(1-O(t/2^n))(2^{-n})} \\
            &= \frac{1}{2^n} \pm O\left(\frac{t^2}{2^{2n}}\right).
        \end{align}
    \end{enumerate}
    Together, these cases imply that we have $\omega^{-1}(y)_L=\omega^{-1}(y')_L$ with probability $2^{-n} \pm O(t^2/2^{2n})$, prior to step (5). Over the random shift in step (5), we will have $\omega^{-1}(y)=l$ with probability $2^{-n}$, independent of the prior event. Multiplying together gives the claim.
\end{proof}

\begin{lemma}
    For any $I \in \mathbf I_t$, $x \in [N] \setminus \dom(I)$, $y \in [N] \setminus \im(I)$, and $\pi, \omega \in \mathbf R_{I[x\ra y]}$ with $A=\pi * I * \omega$, we have \begin{align}
        \sqrt{\frac{p_{\algo D_{I[x\ra y]}}(\pi, \omega)}{(N-t)}} - \sqrt{\frac{p_{\algo D_{x, I}}(\pi, \omega)}{|\mathbf V_{\pi(x), A}|}} &= \pm O\left(\sqrt{\frac{t^2}{2^n}} \cdot \sqrt{\frac{p_{\algo D_{I[x\ra y]}}(\pi, \omega)}{N-t}}\right).
    \end{align}
    \label{lem:twirl-feist-uniform}
\end{lemma}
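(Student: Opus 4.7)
The plan parallels the proof of \Cref{lem:twirl-perm-uniform}, adjusted for the non-uniformity of $\algo D = \feist^{(2)} \times (\feist^{(2)})^{-1}$ over $\mathbf R_I$. Since $\mathbf R_{I[x \ra y]} \subseteq \mathbf R_{x, I}$, Bayes' rule gives, for any $(\pi, \omega) \in \mathbf R_{I[x \ra y]}$,
\begin{equation}
    \frac{p_{\algo D_{I[x\ra y]}}(\pi, \omega)}{p_{\algo D_{x, I}}(\pi, \omega)} = \frac{1}{\Pr_{(\pi', \omega') \sim \algo D_{x, I}}[(\pi', \omega') \in \mathbf R_{I[x\ra y]}]}\,,
\end{equation}
a scalar independent of the particular $(\pi, \omega)$. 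Combined with $|\mathbf V_{\pi(x), A}|/(N-t) = 1 \pm O(t^2/2^n)$ from \Cref{lem:allow-prefixes}, the claimed bound reduces by the same Taylor expansion of $1/\sqrt{\,\cdot\,}$ as in \Cref{lem:twirl-perm-uniform} to the single scalar estimate $\Pr_{\algo D_{x, I}}[\mathbf R_{I[x\ra y]}] = 1 - O(t^2/2^n)$.

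To establish this, I would control the complementary event: under $\algo D_{x, I}$ the database $A$ already allows $\pi(x)$, so failure corresponds exactly to appending $(\pi(x), \omega^{-1}(y))$ creating an internal left collision in $A$. By the case analysis in the proof of \Cref{lem:allow-prefixes}, such a collision forces $\omega^{-1}(y)_L$ to equal one of at most $2t^2 + 2t$ ``bad'' values, each a fixed XOR of at most four elements of $\{\pi(x)_L\} \cup \{\pi(x')_L : x' \in \dom(I)\} \cup \{\omega^{-1}(y')_L : y' \in \im(I)\}$. A union bound then reduces the task to showing that each such XOR identity holds under $\algo D_{x, I}$ with probability $O(1/2^n)$.

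The two-term identities, of the form $\omega^{-1}(y)_L = \omega^{-1}(y')_L$, follow immediately by summing over $l \in \bit^n$ the second-moment estimate already proven in \Cref{lem:twirl-feist-second-mom}. For the four-term identities I would extend the case analysis of that lemma: split on whether each of $x_L, x'_L, y_L, y'_L$ coincides with a left coordinate already present in $\dom(I) \cup \im(I)$, and within each sub-case use the shifted sampling description of \Cref{lem:shifted-feist-dist} to identify a round-function output sampled freshly in step~(1) or step~(4), whose $2^{-n}$-uniformity supplies the needed $1/2^n$ factor up to lower-order corrections from collisions with the at most $O(t)$ already-committed values. I expect this combinatorial case enumeration to be the main technical obstacle, since the sub-case count grows with the number of terms in the identity, although each sub-case reduces to the same elementary probability argument as the second-moment lemma. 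Union-bounding the at most $O(t^2)$ bad events then yields $\Pr_{\algo D_{x, I}}[\mathbf R_{x, I} \setminus \mathbf R_{I[x\ra y]}] = O(t^2/2^n)$, completing the proof.
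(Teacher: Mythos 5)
Your proposal is correct and follows essentially the same route as the paper: both reduce the claim to the two estimates $p_{\algo D_{I[x\ra y]}}(\pi,\omega)/p_{\algo D_{x,I}}(\pi,\omega) = 1 \pm O(t^2/2^n)$ (which the paper dispatches in one line as ``analogous to \Cref{lem:twirl-feist-prob}'' and which you flesh out via the Bayes identity and a union bound over the $O(t^2)$ bad values of $\omega^{-1}(y)_L$) and $(N-t)/|\mathbf V_{\pi(x),A}| = 1 \pm O(t^2/2^n)$ from \Cref{lem:allow-prefixes}, followed by the same Taylor expansion of $a \mapsto 1/\sqrt{a}$. The only slight imprecision is that the two-term collisions $\omega^{-1}(y)_L = \omega^{-1}(y')_L$ arising in your bad-event enumeration have $y' \in \im(I)$, whereas \Cref{lem:twirl-feist-second-mom} is stated for $y \neq y' \notin \im(I)$, so they do not follow ``immediately'' from that lemma; they are, however, covered by the same freshness/shift argument you already invoke for the four-term identities.
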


\begin{proof}
    Observe that we have \begin{align}
        \frac{p_{\algo D_{I[x\ra y]}}(\pi, \omega)}{p_{\algo D_{x, I}}(\pi, \omega)} &= 1 \pm O\left(\frac{t^2}{2^n}\right), & \frac{N-t}{\mathbf V_{\pi(x), \mathbf A}} &= 1 \pm O\left(\frac{t^2}{2^n}\right),
    \end{align}
    the first by an argument analogous to \Cref{lem:twirl-feist-prob} and the second by \Cref{lem:allow-prefixes}. If we let $a = (N-t) / p_{\algo D_{I[x\ra y]}}(\pi, \omega)$, then we can write \begin{align}
        \sqrt{\frac{p_{\algo D_{x, I}}(\pi, \omega)}{|\mathbf V_{\pi(x), A}|}} &= \frac{1}{\sqrt{a(1 \pm O(t^2/2^n)}} \\
        &= \frac{1}{\sqrt{a}} \pm O\left(\frac{t^2}{2^n \cdot \sqrt{a}}\right),
    \end{align}
    where the last line follows from the taylor expansion of $\frac{1}{\sqrt{a+x}}$ about $x=0$.
\end{proof}

\subsection{Putting the pieces together}

As in \Cref{sec:soundness}, let $\algo A$ be a $q$ query quantum algorithm which queries a permutation oracle $\algo O_{\varphi}$ for $\varphi \sim \feist^{(7)}$. Suppose that the algorithm has a workspace register $\reg{A}$, and input/output registers $\reg{X}$ and $\reg{Y}$ of dimension $2^{2n+1}$ and $2^{2n}$. The extra input bit indicates the direction of the query. We describe such an algorithm by a sequence of unitary operators $A_0, \dots, A_q$ acting on the $\reg{AXY}$ system, interleaved with oracles $\algo O_{\varphi, \reg{XY}}$ which acts $\forall b \in \bit, x,y \in \bit^{2n}$ as \begin{align}
    \algo O_{\varphi, \reg{XY}} \ket{b, x}_\reg{X} \ket{y}_\reg{Y} &= 
        \ket{b, x}_\reg{X} \ket{y \oplus  \varphi^{1-2b}(x)}_\reg{Y}.
\end{align}
For simplicity, we will treat the registers $\reg{XY}$ as subregisters of the adversary state $\reg{A}$. We can then write the view of such an algorithm on querying feistel as the density matrix \begin{align}
    \rho^{(\feist)}_\reg{A} &= \mathop{\mathbb{E}}_{\varphi \sim \feist^{(7)}} \left[\rho(A_{q, \reg{A}} \dots \algo O_{\varphi, \reg{A}} A_{0, \reg{A}} \ket{0}_{\reg A})\right].
\end{align}
We can write the view of the same algorithm querying a truly random permutation as \begin{align}
    \rho^{(\algo O)}_\reg{A} &= \mathop{\mathbb{E}}_{\varphi \sim \mathbf S_N} \left[\rho(A_{q, \reg{A}} \dots \algo O_{\varphi, \reg{A}} A_{0, \reg{A}} \ket{0}_{\reg A})\right].
\end{align}

We will show that these two views are indistinguishable unless the algorithm has made a number of queries comparable to the twelveth-root of the permutation size.

\begin{theorem}
    Any $q$ query quantum algorithm cannot the seven round balanced Feistel construction from a uniform random permutation oracle on $[N]$ except with advantage \begin{align}
        \norm{\rho^{(\algo O)}_\reg{A} - \rho^{(\feist)}_\reg{A}}_1 &= O\left(\frac{q^{3}}{N^{1/4}}\right).
    \end{align}
    In particular, constant advantage requires $\Omega(\sqrt[12]{N})$ bidirectional queries.
    \label{thm:feist-indist}
\end{theorem}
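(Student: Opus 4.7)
The plan is to observe that seven round Feistel decomposes as two rounds of Feistel, followed by three rounds, followed by two more rounds. If we let $\pi \sim \feist^{(2)}$ and $\omega^{-1} \sim \feist^{(2)}$ denote the outer rounds and $h, k, f$ be the middle three round functions, then $\omega \circ \varphi_{\mathsf{inner}} \circ \pi$ is distributed exactly as $\feist^{(7)}$, where $\varphi_{\mathsf{inner}}$ is three round Feistel on $h, k, f$. Consequently, the natural purification of $\rho^{(\feist)}_{\reg A}$ is precisely the purification of the masked Feistel ensemble of \Cref{sec:Feistel-twirl} with the twirl distribution $\algo D$ defined by $p_{\algo D}(\pi, \omega) = p_{\feist^{(2)} \times \feist^{(2)}}(\pi, \omega^{-1})$.

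Crucially, the entire soundness analysis of \Cref{sec:soundness} was developed under the sole assumption that the twirling distribution $\algo D$ is \emph{cromulent} (in the sense of \Cref{def:cromulent}); the uniform distribution was invoked only at the very last step of \Cref{thm:main-indist}. Thus, to prove \Cref{thm:feist-indist}, it suffices to show that the Feistel-based twirl $\algo D$ above is cromulent. This is exactly the content of \Cref{sec:outer-feist-twirl}: \Cref{lem:twirl-feist-prob} handles the probability of allowing a new input, \Cref{lem:twirl-feist-second-mom} handles the first and second moment conditions on the left substring of $\omega^{-1}(y)$, and \Cref{lem:twirl-feist-uniform} handles the normalization condition. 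The inverse-direction statements in \Cref{def:cromulent} follow automatically from the flip symmetry $(\pi, \omega) \mapsto (\omega^{-1}, \pi^{-1})$ built into $\algo D$.

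With cromulence in hand, the proof proceeds essentially verbatim as that of \Cref{thm:main-indist}. We introduce the hybrid states $\ket{\psi^{(\mathsf{Hyb})}_t}$ interpolating between the purified masked Feistel experiment (which reproduces $\rho^{(\feist)}_{\reg A}$ after tracing out $\reg P$) and $\iso_{\reg P}$ applied to the compressed permutation oracle experiment. The successive difference between hybrids is bounded by \Cref{cor:ideal-cmf-int}, giving $O(\sqrt{t^4/2^n})$ at step $t$. Summing over $q$ queries yields $\norm{\iso_{\reg P} \ket{\psi^{(\cmf)}}_{\reg{AP}} - \ket{\psi^{(\cp)}}_{\reg{AI}}} = O(\sqrt{q^6/2^n})$, and \Cref{lem:approx-uhlman} (Uhlmann's theorem) converts this into the trace distance bound $O(q^3/N^{1/4})$ since $N = 4^n$. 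Combining with \Cref{thm:main-indist} by triangle inequality shows that $\rho^{(\feist)}_{\reg A}$ is within this distance of $\rho^{(\cp)}_{\reg A}$, and hence of $\rho^{(\algo O)}_{\reg A}$.

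The main subtlety, and the principal departure from the uniform twirl case, is the mild condition $t = O(2^{n/3})$ appearing in \Cref{lem:twirl-feist-prob}. In the meaningful regime for our bound, namely $q = O(N^{1/12}) = O(2^{n/6})$, we have $t \leq q \leq 2^{n/6} = O(2^{n/3})$, so the hypothesis is comfortably satisfied; outside of this regime the theorem statement is vacuous. Beyond verifying this consistency, no new technical obstacles arise: all the analytic work was done in \Cref{sec:soundness} (relating $\cmf$ to $\cp$ via $\iso$) and in \Cref{sec:outer-feist-twirl} (establishing cromulence of $\algo D$), and the final assembly is a direct translation of the soundness argument.
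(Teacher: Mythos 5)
Your proposal is correct and follows essentially the same route as the paper: decompose $\feist^{(7)}$ as masked Feistel with twirl $\feist^{(2)} \times (\feist^{(2)})^{-1}$, invoke the cromulence lemmas of \Cref{sec:outer-feist-twirl}, rerun the hybrid/intertwiner argument of \Cref{thm:main-indist} with \Cref{cor:ideal-cmf-int} and \Cref{lem:approx-uhlman}, and combine with \Cref{thm:main-indist} by triangle inequality. Your explicit check that the $t = O(2^{n/3})$ hypothesis of \Cref{lem:twirl-feist-prob} is harmless in the relevant regime matches the paper's own remark.
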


\begin{proof}
    We will prove this statement by purifying. We will use our compressed oracle as an intermediary between the two, where we have a purifying register $\reg I$. We can write the final state in this experiment as \begin{align}
        \rho^{(\cp)}_\reg{A} &= \Tr_\reg{I} \left[\rho(A_{q, \reg{A}} \dots \cp_\reg{AI} A_{0, \reg{A}} \ket{0}_{\reg{A}} \ket{\boti}_{\reg{I}})\right].
    \end{align}
    By \Cref{thm:main-indist}, we have \begin{align}
        \norm{\rho^{(\cp)}_\reg{A} - \rho^{(\algo O)}_{\reg A}}_1 &= O\left(\frac{q^3}{N^{1/4}}\right).
    \end{align}
    We will therefore analyze the compresed permutation oracle experiment in place of the ideal permutation oracle.
    
    To purify the seven round Feistel construction, we will follow the idea described in \Cref{sec:Feistel-twirl}. In particular, we will consider the two leftmost Feistel rounds to define the permutation $\pi$, the two rightmost Feistel rounds to define the permutation $\omega$, and the internal three rounds to correspond to the $h, k, f$ functions. This can be written as follows, where $\reg P$ is the purifying register containing sub-registers $\reg {\Pi\Omega HKF}$, \begin{align}
        \ket{\psi^{(\algo O)}}_\reg{AXYP} &= A_{q,\reg{A}} \dots \mfpu_\reg{AP} A_{0,\reg{A}} \ket{0}_\reg{A} \otimes \sqrt{\frac{1}{(2^n)^{3 \cdot 2^n}}} \sum_{\pi, \omega \in \mathbf S_N, h, k, f \in \mathbf F_{2^n, 2^n}} \sqrt{p_{\algo D}(\pi, \omega)}\ket{\pi, \omega, h, k, f}_\reg{P}.
    \end{align}
    where $\algo D$ is defined as the distribution $\feist^{(2)} \times (\feist^{(2)})^{-1}$. This distribution is cromulent, as shown in \Cref{sec:outer-feist-twirl}, meaning we can apply the results of \Cref{sec:soundness}. This oracle is perfectly indistinguishable from the seven round Feistel construction, $\Tr_{\reg P} [\rho(\ket{\psi^{(\algo O)}}_\reg{AXYP})]=\rho^{(\feist)}_{\reg A}$.
    Now, by the standard compressed oracle theory, the state \begin{align}
        \ket{\psi^{(\mfc)}}_\reg{AP} &= A_{q,\reg{A}} \dots \cmf_{\reg{AP}} A_{0,\reg{A}} \ket{0}_\reg{A} \otimes  \sum_{\pi, \omega \in \mathbf S_{N}}\sqrt{p_{\algo D}(\pi, \omega)} \ket{\pi, \omega, \botd, \botd, \botd}_\reg{P}
    \end{align}
    is perfectly indistinguishable after tracing out the $\reg P$ register, in that we have \begin{align}
        \rho^{(\algo O)} &= \Tr_\reg{P}[\outerprod{\psi^{(\algo O)}}{\psi^{(\algo O)}}] \\
        &= \Tr_\reg{P}[\outerprod{\psi^{(\cmf)}}{\psi^{(\cmf)}}].
    \end{align}
    For comparison, we will take the following purification of the compressed oracle, \begin{align}
        \ket{\psi^{(\cp)}}_\reg{AI} &= A_{q,\reg{A}} \dots \cp_{\reg{AI}} A_{0,\reg{A}} \ket{0}_\reg{A} \ket{\boti}_\reg{I}, & \Tr_{\reg{I}}[\rho(\ket{\psi^{(\cp)}}_\reg{AI})] = \rho_{\reg A}^{(\cp)}.
    \end{align}
    Now we can consider the hybrid states, \begin{align}
        \ket{\psi^{(\mathsf{Hyb})}_0} &= A_{q,\reg{A}} \dots \cp_{\reg{AI}} A_{0,\reg{A}} \iso_\reg{P} \ket{0}_\reg{A} \ket{{\mathbf P(\boti)}}_\reg{P} \\
        \ket{\psi^{(\mathsf{Hyb})}_1} &= A_{q,\reg{A}} \dots \cp_{\reg{AI}} A_{1,\reg{A}}\iso_\reg{P} \cmf_{\reg{AP}} A_{0,\reg{A}} \ket{0}_\reg{A} \ket{{\mathbf P(\boti)}}_\reg{P} \\
        &\dots \nonumber\\
        \ket{\psi^{(\mathsf{Hyb})}_q} &= \iso_\reg{P} A_{q,\reg{A}} \dots \cmf_{\reg{AP}} A_{0,\reg{A}}  \ket{0}_\reg{A} \ket{{\mathbf P(\boti)}}_\reg{P},
    \end{align}
    where $\ket{\psi^{(\cp)}}_\reg{AI} = \ket{\psi^{(\mathsf{Hyb})}_0}$ and $\iso_{\reg P}\ket{\psi^{(\cmf)}}_\reg{AP} =  \ket{\psi^{(\mathsf{Hyb})}_q}$. For $t \leq q$, let us define \begin{align}
        \ket{\phi_t}_{\reg{AP}} = A_{t,\reg{A}} \dots \cmf_{\reg{AP}} A_{0,\reg{A}}  \ket{0}_\reg{A} \ket{{\mathbf P(\boti)}}_\reg{P},
    \end{align}
    observing that $\norm{\pinsoph_{\reg P} \ket{\phi_t}_{\reg{AP}}} = O(\sqrt{t^4/2^n})$ by \Cref{lem:ele-preserved}, and further the $\ket{\phi_t}_{\reg{AP}} \in \piv_{\reg P}$, lying entirely in the valid subspace as $\cmf$ preserves validity. Observing that $\iso_{\reg P}$ clearly commutes with the $A_{t, \reg A}$ operator, we can then compute the successive differences between hybrid states to be \begin{align}
        \norm{\ket{\psi^{(\mathsf{Hyb})}_{t+1}} - \ket{\psi^{(\mathsf{Hyb})}_{t}}} &= \norm{(\iso_{\reg P} \cdot \cmf_{\reg{AP}}-\cp_{\reg{AI}} \cdot \iso_{\reg P})\ket{\phi_t}_{\reg{AP}}} \nonumber\\
        &= O\left(\sqrt{\frac{t^4}{2^n}}\right). & \text{(By \Cref{cor:ideal-cmf-int})}
    \end{align}
    Therefore, we have $\norm{\iso_{\reg P}\ket{\psi^{(\cmf)}}_\reg{AP} - \ket{\psi^{(\cp)}}_\reg{AI}} = O\left(\sqrt{\frac{q^6}{2^n}}\right)$. This equation combined with \Cref{lem:approx-uhlman} proves the claim.
\end{proof}

\section{Search bounds}
\label{sec:search}
We can also apply this framework to prove new search type query lower bounds. The most important ingredient in such bounds is the fundamental lemma, which relates the compressed permutation oracle database to points the algorithm ``knows''. This result can be applied to show lower bounds for various search problems in which the goal is to find a set of input output points having some property.
All of our bounds follow the same rough format. \begin{enumerate}
    \item Switch to the compressed permutation oracle.
    \item Show that the database is unlikely to contain a solution after $q$ queries.
    \item Apply \Cref{lem:pf-meaningful} to conclude that the adversary is unlikely to have found a solution.
\end{enumerate}

The first step is straightforward, while the second step varies in complexity depending on the problem. Often, it is enough to reason combinatorially about the databases, analogous to compressed function oracles. We will show how to use the fundamental lemma to apply the final step in general.

Throughout this section, our bounds will incur a loss due to the distinguishing advantage against the compressed permutation oracle. We quantify this by the loss term \begin{align}
    \adv{q}&= O\left(\frac{q^{3}}{N^{1/4}}\right),
\end{align}
and this term is often the loosest. Improving the soundness error of our construction would improve all the lower bounds derived. Most would be tight, ignoring the $\adv{q}$ term.

To formalize this procedure, we will capture search problems by a target predicate $\algo R \subset ([N]^{\times 2})^*$, or a set of finite lists of input-output pairs representing the set of potential solutions. Then, we consider an adversary $\algo A$ making $q$ quantum queries to a permutation oracle $\algo O_{\varphi}$. The predicate $\algo R$ is known to the adversary (meaning its intermediate operations may depend on $\algo R$), although naturally the underlying permutation is not, outside of the information available by querying the oracle. The goal of the adversary is to find some input-output pairs of the permutation which satisfy the predicate. In particular, the game proceeds as follows.
\begin{game}
    Predicate search.
    \begin{enumerate}
        \item A random $\varphi \sim \mathbf S_N$ is selected to construct $\algo O_{\varphi}$.
        \item An adversary $\algo A$ makes $q$ quantum queries to $\algo O_{\varphi}$, and outputs a list $\vec x = [(x_1, y_1), \dots, (x_l, y_l)]$.
        \item The adversary wins if $\varphi(x_i)=y_i$ for all $i$, and $\vec x \in \algo R$.
    \end{enumerate}
    \label{game:rel-search}
\end{game}

Alternatively, we could consider a game in which the adversary queries the compressed permutation oracle, and the final condition involves only analyzing the compressed oracle database. We say an $I \in \mathbf I_t$ satisfies $\algo R$, denoted $I \cap \algo R \neq \emptyset$, if there exists $\vec x = [(x_1, y_1), \dots, (x_l, y_l)]$ such that $I(x_i)=y_i$ for all $i \in [l]$ and $\vec x \in \algo R$. We define the predicate projector \begin{align}
    \Pi_{\algo R} \ket{I} &= \begin{cases}
        \ket{I} & \text{(If $I \cap \algo R \neq \emptyset$)} \\
        0 & \text{(Otherwise)}
    \end{cases}
\end{align}
which projects onto databases which contain a set of  input-output pairs in the predicate.

\begin{game}
    Predicate search with compressed oracles.
    \begin{enumerate}
        \item The compressed databases is initialized to $\ket{\boti}_{\reg I}$.
        \item The adversary $\algo A$ with workspace $\reg A$ makes $q$ quantum queries to $\cp_{\reg{AI}}$.
        \item The projective measurement $\algo M_{\algo R} = \{\Pi_{\algo R}, \Pi_{\algo R}^{\perp}\}$ is applied to register $\reg I$. The adversary wins if the first outcome is achieved.
    \end{enumerate}
    \label{game:rel-search-comp}
\end{game}

It turns out that the winning probability of the second game provides an upper bound on the first. We can compare the same adversary across the two games by simply ignoring the output of $\algo A$ in the second game. This is analogous to the result of \cite{Zhandry2018} relating the compressed function oracle database to the winning probability of a search game.

\begin{lemma}
    Suppose that $\algo A$ wins \Cref{game:rel-search-comp} with probability $p_2$, and wins \Cref{game:rel-search} with probability $p_1$. Then we have the bound \begin{align}
        \sqrt{p_1} \leq \sqrt{p_2} + \frac{l}{\sqrt{N-q-l}} + \adv{q+l}.
    \end{align}
    This bound also implies 
    \begin{align}
        p_1 = O\left(p_2 + \frac{l^2}{N-q-l} + \adv{q+l}^2\right).
    \end{align}
    \label{lem:search-bound}
\end{lemma}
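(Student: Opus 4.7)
\medskip

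\noindent\textbf{Proof plan.} The plan is to pass to the compressed oracle picture, apply the fundamental lemma (\Cref{lem:pf-meaningful}) to exchange the "consistent with oracle" projector for the "contained in database" projector, and then observe that the latter is dominated by the Game 2 winning condition. Specifically, I would consider the modified adversary $\algo A'$ that runs $\algo A$, reads its output $\vec x = [(x_1,y_1),\dots,(x_l,y_l)]$, then makes $l$ additional oracle queries at $x_1,\dots,x_l$ to verify that each returns $y_i$, and finally outputs the single bit ``all checks pass and $\vec x \in \algo R$.'' By construction, $\Pr[\algo A' \text{ outputs } 1] = p_1$ when queries go to the real random permutation. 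Since $\algo A'$ makes $q+l$ queries, soundness of the compressed permutation oracle (\Cref{thm:main-indist}, i.e., the $\adv{q+l}$-closeness in trace distance of the reduced states on $\algo A'$'s workspace) implies that the same output bit has probability $p_1 \pm \adv{q+l}$ when $\algo A'$ is run against $\cp$.

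Next, I would argue that in the compressed oracle picture, the event ``all checks pass and $\vec x \in \algo R$'' on $\algo A'$'s workspace is exactly captured by the projector $\Pi^{(\algo O_{\varphi})}_{\reg{AI}}$ restricted to outputs in $\algo R$, applied to the post-$\algo A$ state $\ket{\psi_{\algo A}^{(\cp)}}_{\reg{AI}}$. This is because each checking query is implemented as $\pc \cdot \pu \cdot \pc^\dagger$, and projecting the returned output register onto $y_i$ is precisely the ``decompress, check $(x_i,y_i)\in\mathsf{db}$, recompress'' operation $\pc_{x_i}\xinD{(x_i,y_i)}\pc_{x_i}^\dagger$ appearing in $\Pi^{(\algo O_{\varphi})}$. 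Denoting the restriction-to-$\algo R$ variants by a subscript, this yields
\begin{equation}
\sqrt{p_1} \leq \norm{\Pi^{(\algo O_{\varphi})}_{\algo R,\reg{AI}} \ket{\psi_{\algo A}^{(\cp)}}_{\reg{AI}}} + \adv{q+l}.
\end{equation}

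Applying \Cref{lem:pf-meaningful} (which holds verbatim when both projectors are restricted to the same subset of output tuples, since the bound in its proof is via $\norm{\cdot}_{\leq q+l}$ on a direct sum controlled by $\reg A$), one gets $\norm{\Pi^{(\algo O_{\varphi})}_{\algo R}\ket{\psi_{\algo A}^{(\cp)}}} \leq \norm{\Pi^{(\cp)}_{\algo R}\ket{\psi_{\algo A}^{(\cp)}}} + l/\sqrt{N-q-l}$. Finally, the event ``$\algo A$'s output lies in $\algo R$ and agrees with the compressed database'' entails ``the database contains some $\vec x \in \algo R$,'' so $\norm{\Pi^{(\cp)}_{\algo R}\ket{\psi_{\algo A}^{(\cp)}}}^2 \leq \norm{\Pi_{\algo R, \reg I}\ket{\psi_{\algo A}^{(\cp)}}}^2 = p_2$. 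Chaining the three inequalities proves $\sqrt{p_1} \leq \sqrt{p_2} + l/\sqrt{N-q-l} + \adv{q+l}$, and the $p_1$-bound follows from $(a+b+c)^2 \leq 3(a^2+b^2+c^2)$.

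The main subtlety I expect is the identification in the second step: carefully showing that the combined effect of $\algo A'$'s $l$ check queries and final classical decision is exactly $\Pi^{(\algo O_{\varphi})}_{\algo R}$ acting on $\ket{\psi_{\algo A}^{(\cp)}}$. This requires a small bookkeeping argument (initializing the target registers appropriately, absorbing XORs of the $y_i$'s, and noting that measuring the output register to be zero commutes with the surrounding $\pc,\pc^\dagger$ since they act on the $\reg I$ register). Everything else—the soundness invocation and the fundamental-lemma application—is essentially a direct quotation of results established earlier.
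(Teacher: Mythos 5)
Your proposal follows essentially the same route as the paper: implement the verification of the $l$ output pairs as $l$ extra queries (so soundness applies at cost $\adv{q+l}$), identify the resulting win condition with the decompress-check-recompress projector $\Pi^{(\algo O_\varphi)}$, trade it for $\Pi^{(\cp)}$ via \Cref{lem:pf-meaningful} at cost $l/\sqrt{N-q-l}$, and dominate by $\Pi_{\algo R,\reg I}$ to get $\sqrt{p_2}$; the paper handles the ``output in $\algo R$'' issue by modifying $\algo A$ to always output an element of $\algo R$ rather than by restricting the projectors, but these are interchangeable.

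One technical point to fix: you justify the switch to $\cp$ by trace-distance closeness of the reduced states, which only gives $p_1 \le p_1' + \adv{q+l}$ at the \emph{probability} level, hence $\sqrt{p_1} \le \norm{\Pi^{(\algo O_\varphi)}_{\algo R}\ket{\psi^{(\cp)}_{\algo A}}} + \sqrt{\adv{q+l}}$ rather than the claimed $+\adv{q+l}$; as written your chain yields $p_1 = O(p_2 + l^2/(N-q-l) + \adv{q+l})$, not $\adv{q+l}^2$. To recover the stated bound you should work at the amplitude level, using the Euclidean-norm closeness of the purified states that the soundness argument actually establishes (the hybrid bound $\norm{\iso\ket{\psi^{(\cmf)}} - \ket{\psi^{(\cp)}}} \le \adv{q+l}$ behind \Cref{thm:main-indist}, together with the fact that the masked-Feistel purification reproduces the real-oracle view exactly), so that $\abs{\sqrt{p_1} - \norm{\Pi^{(\algo O_\varphi)}_{\algo R}\ket{\psi^{(\cp)}_{\algo A}}}}$ is directly bounded by $\adv{q+l}$; this is how the paper's proof uses the term ``the resultant state is a distance at most $\adv{q+l}$''. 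Your remaining steps (validity of the restricted fundamental lemma and the domination by $\Pi_{\algo R}$) are fine.
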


\begin{proof}
    For \Cref{game:rel-search}, we can suppose WLOG that the adversary outputs some $\vec x \in \algo R$. If not, then modify the last step to check whether the output satisfies the predicate, and if needed switch to some output which satisfies the predicate. This cannot decrease the success probability, as the adversary is guaranteed to lose if $\vec x \not\in \algo R$. Call this modified adversary $\algo A'$. Now, consider running \Cref{game:rel-search} with the compressed permutation oracle and adversary $\algo A'$, replacing the final check with the projector $\Pi_{\reg{AI}}^{(\algo O_{\varphi})}$ from \Cref{lem:pf-meaningful}. The resultant state is a distance at most $\adv{q+l}$, by the definition of the distinguishing advantage.
    
    Then, we can replace this projector with $\Pi_{\reg{AI}}^{(\cp)}$ by \Cref{lem:pf-meaningful}, and the resultant state is a distance at most $\frac{l}{\sqrt{N-q-l}}$. Now, this projector can only succeed if there is a tuple of input-output pairs in the database which are also in $\algo R$, i.e. the state is in $\Pi_{\algo R, \reg I}$. The magnitude of this component is then at most $\sqrt{p_2}$.
\end{proof}

\subsection{Simple examples}
\label{sec:search-simple}

The simplest example of a problem for which we can show a lower bound is the one-more problem. In this problem, an adversary makes $q$ queries to a random permutation and is tasked with finding $q+1$ input-output pairs.

\begin{problem}
    Sample $\varphi \sim \mathbf S_N$, to produce two way accessible oracle $\algo O_{\varphi}$. $\algo A$ makes $q$ queries to $\algo O_{\varphi}$, and succeeds if it finds $\{(x_1, y_1), \dots, (x_{q+1}, y_{q+1})\}$ such that $\varphi(x_i)=y_i$ for all $i \in [q+1]$, and $x_i \neq x_{i'}$ for $i \neq i'$.
    \label{prob:one-more}
\end{problem}

This problem does not appear to be in the scope of existing techniques for permutation problems. However, it is a simple corollary of our framework.

\begin{theorem}
    No $q$ query quantum algorithm $\algo A$ can solve the one more problem, \Cref{prob:one-more}, with probability exceeding \begin{align}
        p_{\algo A} &= O\left(\frac{q^2}{N}+ \adv{q}^2\right).
    \end{align}
    In particular, exponential queries are required for constant error.
    \label{thm:one-more}
\end{theorem}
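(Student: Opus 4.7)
The plan is to derive the bound as a one-shot application of the generic predicate-search framework of \Cref{game:rel-search,game:rel-search-comp,lem:search-bound}. Define the one-more predicate
\begin{equation*}
    \algo R_{\mathsf{om}} \coloneqq \{[(x_1,y_1),\dots,(x_{q+1},y_{q+1})] \in ([N]^{\times 2})^{q+1} \, : \, x_i \neq x_{i'} \text{ for all } i \neq i'\},
\end{equation*}
so that \Cref{prob:one-more} is exactly the instance of \Cref{game:rel-search} with target predicate $\algo R_{\mathsf{om}}$, in which the adversary outputs $l = q+1$ pairs.

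The heart of the argument is to upper bound the success probability $p_2$ of the corresponding compressed game \Cref{game:rel-search-comp}. By the bounded-growth property \Cref{lem:pf-bounded}, after $q$ queries to $\cp$ the purifying register is supported entirely on $\algo H(\mathbf I_{\leq q})$. Any $I$ in this support has $|\dom(I)| \leq q < q+1$, and so cannot contain the $q+1$ pairs with distinct first coordinates demanded by $\algo R_{\mathsf{om}}$. Consequently $\Pi_{\algo R_{\mathsf{om}}}$ annihilates the final compressed state, giving $p_2 = 0$.

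It remains to invoke \Cref{lem:search-bound} with $l = q+1$, which yields
\begin{equation*}
    p_{\algo A} = O\!\left(p_2 + \frac{(q+1)^2}{N - q - (q+1)} + \adv{2q+1}^2\right) = O\!\left(\frac{q^2}{N} + \adv{q}^2\right),
\end{equation*}
using $\adv{2q+1} = O(\adv{q})$ by the cubic scaling of $\adv{\cdot}$, and assuming $q \ll N$ (otherwise the bound is trivial). Because every intermediate step either invokes a packaged lemma or performs a one-line counting argument, there is no real obstacle here; the only feature worth flagging is that the $q^2/N$ term is inherited directly from the $l/\sqrt{N - q - l}$ loss in the fundamental lemma, which itself reflects the non-commutativity of the permutation-compression operators.
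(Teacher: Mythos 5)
Your proposal is correct and follows essentially the same route as the paper: cast the one-more problem as predicate search, use bounded growth (\Cref{lem:pf-bounded}) to show the compressed database of size at most $q$ cannot contain $q+1$ pairs with distinct inputs so $p_2=0$, and then invoke \Cref{lem:search-bound} with $l=q+1$. Your extra remark that $\adv{2q+1}=O(\adv{q})$ is a harmless detail the paper leaves implicit.
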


\begin{proof}
    We can analyze this game in the compressed permutation oracle picture, as in \Cref{game:rel-search-comp}. In this context, the predicate $\algo R$ is the set of all length $q+1$ lists of distinct input-output pairs. After $q$ queries, the database is of size at most $q$ by \Cref{lem:pf-bounded}. Therefore, no satisfying list can appear in the database, so the success probability of the adversary in this game is zero. The claim now follows from \Cref{lem:search-bound}.
\end{proof}

Another simple problem outside the scope of existing techniques is the cycle finding problem. In this problem, the adversary is tasked with finding a cycle in the permutation.
\\
\begin{problem}
    Sample $\varphi \sim \mathbf S_N$, to produce two way accessible oracle $\algo O_{\varphi}$. $\algo A$ makes $q$ queries to $\algo O_{\varphi}$, and succeeds if it finds $x_1, \dots, x_l$ such that $\varphi(x_i)=x_{i+1 \mod l}$ for all $i \in [l]$.
    \label{prob:cycle}
\end{problem}

\begin{theorem}
    No $q$ query quantum algorithm $\algo A$ can solve cycle finding, \Cref{prob:cycle}, with probability exceeding \begin{align}
        p_{\algo A} &= O\left(\frac{q^3}{N} + \adv{q}^2\right).
    \end{align}
    In particular, exponential queries are required for constant error.
    \label{thm:cycle}
\end{theorem}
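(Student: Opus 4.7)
The plan follows the same three-step blueprint that the excerpt establishes for \Cref{thm:one-more}. First, I would invoke \Cref{lem:search-bound} with the predicate $\algo R$ taken to be the set of all cycle-encoding lists $(x_1,x_2),(x_2,x_3),\dots,(x_l,x_1)$ with $l \geq 1$ and the $x_i$ pairwise distinct. Since any cycle supported by the database $I$ has length at most $|I|\leq q$ by \Cref{lem:pf-bounded}, the loss term from the fundamental lemma is at most $O(q^2/N)$. It then suffices to bound the probability that the final compressed database $I$ contains any cycle.

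For the database-side bound, view $I$ as a directed graph on $[N]$ with in-degree and out-degree at most one, so it decomposes into a disjoint union of paths and cycles. For a cycle-free $I$, a forward query on $x$ can create a cycle in only two ways: assigning $I(x)=y$ where $y$ is the unique start of the path ending at $x$ (when $x\in\im(I)\setminus\dom(I)$), or setting $I(x)=x$ (when $x$ is isolated). The inverse-query case is symmetric by the flip structure of $\cp$ via $\flip$. The key quantitative statement is a progress lemma: if $\ket{\psi}$ has its $\reg I$-register supported on cycle-free databases of size at most $t$, then the component of $\cp\ket{\psi}$ whose database contains a cycle has norm $O(1/\sqrt{N-t})$. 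This is proved by decomposing the action of $\pc$ on each subspace $\algo{H}(I|^x)$, noting that the uniform superposition $\ket{+_{x,I}}$ places amplitude $1/\sqrt{N-|I|}$ on each admissible $y$ and that at most one such $y$ is cycle-closing; orthogonality of the resulting extensions across different $(x,I)$ in the superposition lets these estimates be added in quadrature.

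Iterating the progress lemma across the $q$ queries by the triangle inequality yields
\begin{equation}
\left\|\,\Pi_{\algo R,\reg I}\,\ket{\psi_{\algo A}^{(\cp)}}\right\| \;\leq\; \sum_{t=0}^{q-1} O\!\left(\tfrac{1}{\sqrt{N-t}}\right) \;=\; O\!\left(\tfrac{q}{\sqrt{N}}\right),
\end{equation}
so the compressed-database cycle probability is $O(q^2/N)$, and combining with the $O(q^2/N)$ loss from \Cref{lem:search-bound} and the distinguishing advantage term gives the stated bound (in fact with room to spare relative to $q^3/N$). The main technical obstacle is a careful formalization of the progress lemma: because the cycle-closing value of $y$ depends on the graph structure of each $I$ in the superposition, one cannot simply factor out the relevant projector, and some care is needed to handle the interplay between the injectivity-preserving compression operator, inverse queries routed through $\flip$, and superpositions where different branches have different path decompositions.
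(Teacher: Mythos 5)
Your approach is correct in substance and follows the same per-query ``progress'' strategy as the paper, but with a genuinely sharper key step. The paper bounds the commutator $\norm{[\Pi_{\algo R,\reg I},\cp_{\reg{AI}}]}_{\leq t}$ by replacing $\pc$ with a surrogate compression whose superposition omits \emph{all} of $\dom(I)\cup\im(I)$; since $O(t)$ values are dropped this costs $O(\sqrt{t/N})$ per query and yields $O(q^3/N)$ for the database-side probability. You instead exploit that within each block $\algo H(I|^x)$ there is exactly one cycle-closing image $y^*(x,I)$ (the start of the path ending at $x$, or $x$ itself when $x$ is isolated), so only one value needs special treatment, giving $O(1/\sqrt{N-t})$ per query and $O(q^2/N)$ overall. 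This stronger progress lemma is indeed provable, and the cleanest way to discharge the ``main technical obstacle'' you flag (branches where $x$ is already defined, entanglement with $\reg Y$ and the workspace, inverse queries) is precisely the paper's surrogate trick applied with your smaller excluded set: define $\widetilde\pc$ using the uniform superposition over $[N]\setminus\im(I)\setminus\{y^*(x,I)\}$. Then $\norm{\pc-\widetilde\pc}_{\leq t}=O(1/\sqrt{N-t})$ by \Cref{lem:dirsum-norm}, the modified query fixes $\ket{I[x\ra y^*]}$ in every cycle-free block and therefore commutes \emph{exactly} with $\Pi_{\algo R}$ (the inverse-query case follows by conjugating with $\flip$, under which the cycle predicate is invariant), and the hybrid argument then goes through verbatim without any case analysis over superposed path decompositions. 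Both routes prove the stated theorem, since the final bound is dominated by the $\adv{q}^2$ term in any case; yours would give the marginally stronger database term $O(q^2/N)$.

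One small point to tighten (which the paper's own proof also glosses over): the loss in \Cref{lem:search-bound} is governed by the length $l$ of the list the adversary \emph{outputs}, not by the length of cycles that can appear in the compressed database, so ``cycles in $I$ have length at most $q$'' does not by itself cap the $l/\sqrt{N-q-l}$ and $\adv{q+l}$ terms---the adversary may output a cycle of length $l\gg q$. A clean fix is to split on $l$: if the output cycle has at most $q$ distinct points, apply the bound with $l\leq q$; otherwise a correct output supplies $q+1$ distinct correct input-output pairs, and the success probability is already bounded by the one-more bound of \Cref{thm:one-more}.
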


\begin{proof}
    We use the framework of \Cref{game:rel-search-comp}, where the predicate $\algo R$ is the set of all cycles. Let $\algo A$ be an algorithm with Hilbert space $\algo H_{\reg A}$ containing query subregisters $\reg{XY}$, and intermediate operations $A_0, \dots, A_q$ on this space. The success probability of this algorithm is given by \begin{align}
        p_{\algo A} = \norm{\Pi_{\algo R, \reg I} A_{q, \reg A} \cp_{\reg{AI}} \dots A_{1, \reg A} \cp_{\reg{AI}} A_{0, \reg A} \ket{0}_{\reg A} \ket{\boti}_{\reg A}}^2.
    \end{align}
    To bound this quantity, we will show a bound on the commutator norm \begin{align}
        \norm{[\Pi_{\algo R, \reg I}, \cp_{\reg{AI}}]}_{\leq t} &= O\left(\sqrt{\frac{t}{N}}\right). \label{eqn:cycle-comm}
    \end{align}
    To do this, we can modify the compression operator such that it cannot add cycles to the database. In particular, fix an $I \in \mathbf I_t$ and $x \not\in \dom(I)$. We will first define an operator on $\algo H( I|^x)$, writing \begin{align}
        \ket{\widetilde +_{x, I}}_{\reg I} &\coloneqq \frac{1}{\sqrt{N - |\dom(I) \cup \im(I)|}} \sum_{y \in [N] \setminus \dom(I) \setminus \im(I)} \ket{I[x \ra y]} \\
        \widetilde \pc_{x, I} &\coloneqq \Id - \outerprod{I}{I} - \outerprod{\widetilde +_{x, I}}{\widetilde +_{x, I}} + \outerprod{\widetilde +_{x, I}}{I} + \outerprod{I}{\widetilde +_{x, I}} \\
        \widetilde \pc_{\reg{XI}} &= \bigoplus_{x \in [N], I \in \mathbf I^x} \outerprod{x}{x}_{\reg X} \otimes \pc_{x, I, \reg I}.
    \end{align}
    Note that the above are unitary operators so long as the database is of size less than $N/2$, which is anyways beyond where the bound trivializes. 
    Observe that $\norm{\ket{\widetilde +_{x, I}} - \ket{+_{x, I}}}_{\leq t} = O(\sqrt{t/N})$, so $\norm{\pc - \widetilde \pc}_{\leq t} = O(\sqrt{t/N})$ by \Cref{lem:dirsum-norm}. Now we have \begin{align}
        [\Pi_{\algo R, \reg I}, \widetilde \pc_{\reg{XI}}] &= 0, &[\Pi_{\algo R, \reg I}, \flip_{\reg I} \widetilde \pc_{\reg{XI}} \flip_{\reg I}^\dagger ] &= 0, \\
        [\Pi_{\algo R, \reg I}, \pu_{\reg{XYI}}] &= 0, & [\Pi_{\algo R, \reg I}, \flip_{\reg I} \pu_{\reg{XYI}} \flip_{\reg I}^\dagger] &= 0, 
    \end{align}
    as $\widetilde \pc$ cannot add a cycle to the database.
    If we write \begin{align}
        \widetilde \cp_{\reg{XYI}} &= \outerprod{0}{0}_{\reg X} \otimes (\widetilde \pc_{\reg{XI}} \cdot \pu_{\reg{XYI}} \cdot \widetilde \pc_{\reg{XI}}^\dagger) + \outerprod{1}{1}_{\reg X} \otimes (\flip_{\reg I}\cdot \widetilde \pc_{\reg{XI}} \cdot \pu_{\reg{XYI}} \cdot \widetilde \pc_{\reg{XI}}^\dagger \cdot \flip_{\reg I}^\dagger) \\
        &= \outerprod{0}{0}_{\reg X} \otimes (\widetilde \pc_{\reg{XI}} \cdot \pu_{\reg{XYI}} \cdot \widetilde \pc_{\reg{XI}}^\dagger) + \outerprod{1}{1}_{\reg X} \otimes (\flip_{\reg I}\cdot \widetilde \pc_{\reg{XI}} \cdot (\flip^\dagger \flip)_{\reg I} \cdot \pu_{\reg{XYI}} \cdot (\flip^\dagger \flip)_{\reg I} \cdot \widetilde \pc_{\reg{XI}}^\dagger \cdot \flip_{\reg I}^\dagger),
    \end{align}
    then we have $\norm{\cp - \widetilde \cp}_{\leq t} = O(\sqrt{t/N})$, and $\widetilde \cp_{\reg{XYI}} \Pi_{\algo R, \reg I} = \Pi_{\algo R, \reg I} \widetilde \cp_{\reg{XYI}}$, so the claimed \Cref{eqn:cycle-comm} follows. Now we can write \begin{align}
        \sqrt{p_{\algo A}} &= \norm{\Pi_{\algo R, \reg I} A_{q, \reg A} \cp_{\reg{AI}} \dots A_{1, \reg A} \cp_{\reg{AI}} A_{0, \reg A} \ket{0}_{\reg A} \ket{\boti}_{\reg A}} \\
        &= \norm{ A_{q, \reg A} \cp_{\reg{AI}} \Pi_{\algo R, \reg I} A_{q-1, \reg A} \cp_{\reg{AI}} \dots A_{1, \reg A} \cp_{\reg{AI}} A_{0, \reg A} \ket{0}_{\reg A} \ket{\boti}_{\reg A}} + O\left(\sqrt{\frac{q}{N}}\right) \\
        &\dots  \\
        &= \norm{A_{q, \reg A} \cp_{\reg{AI}} \dots A_{1, \reg A} \cp_{\reg{AI}} A_{0, \reg A} \Pi_{\algo R, \reg I} \ket{0}_{\reg A} \ket{\boti}_{\reg A}} + \sum_{t=1}^q O\left(\sqrt{\frac{t}{N}}\right) \\
        &= O\left(\sqrt{\frac{q^3}{N}}\right).
    \end{align}
\end{proof}

\subsection{A general search lower bound}

The proof strategy developed for the cycle finding lower bounding can be applied much more broadly, to give a lower bound for any search problem. This bound is in terms of a quantity we refer to as the \emph{sparsity} of the underlying predicate. Intuitively, this quantity measures the likelihood that each new query results in a solution being added to the database. Recalling that we allow forwards and inverse queries, we define sparsity in a two-sided way that allows both direction of query.

\begin{definition}
    Let $\algo R \subset ([N]^{\times 2})^*$ be a predicate, and $I \in \mathbf I_t$ such that $I \cap \algo R = \emptyset$. Then we define the $I$-sparsity $s_I$ and $t$-sparsity $s_t$ as \begin{align}
        s_I(\algo R) &= \max \Bigg\{\max_{x \in [N] \setminus \dom(I)} |\{y \not\in \im(I) \, : \, I[x\ra y] \cap \algo R \neq \emptyset\}|, \nonumber\\&\qquad\qquad\quad\max_{y \in [N] \setminus \im(I)} |\{x \not\in \dom(I)\, : \, I[x\ra y] \cap \algo R \neq \emptyset\}| \Bigg\} \nonumber \\
        s_t(\algo R) &= \max_{I \in \mathbf I_{\leq t} \, : \, I \cap \algo R = \emptyset} s_I(\algo R).
    \end{align}
    \label{def:sparsity}
\end{definition}

Now, we can give a lower bound for any predicate search problem in terms of this quantity. Note that special cases of this theorem are known for predicates of constant size, meaning subsets of $([N]^{\times 2})^{O(1)}$, due to \cite{MMW24,cojocaru25lifting}.

\begin{theorem}
    Let $\algo R \subset ([N]^{\times 2})^*$ be a predicate, and $\algo A$ be a quantum algorithm making $q$ bi-directional queries to a random permutation of size $[N]$ to output a list $\mathbf x = [(x_1, y_1), \dots, (x_l, y_l)]$ of input-output pairs. Then, $\algo A$ wins \Cref{game:rel-search} under predicate $\algo R$ with probability at most \begin{align}
        p_{\algo A} &= O\left(\adv{q}^2 + \frac{l^2}{N} + \left(\sum_{t=0}^{q-1} \sqrt{\frac{s_t(\algo R)}{N}}\right)^2 \right).
    \end{align}
    \label{thm:sparsity-lb}
\end{theorem}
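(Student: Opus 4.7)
My plan is to follow the same three-step blueprint used for the cycle-finding bound in \Cref{thm:cycle}: switch to the compressed permutation oracle, show the purifying database is unlikely to contain a solution, and invoke \Cref{lem:search-bound}. The $\adv{q}$ and $l/\sqrt{N}$ terms in the claim come for free from that final step, so the real content is a bound of the form $\norm{\Pi_{\algo R, \reg I} \ket{\psi^{(\cp)}_q}} = O\left(\sum_{t=0}^{q-1} \sqrt{s_t(\algo R)/N}\right)$.

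To establish this, I will mimic the cycle-finding proof by defining a \emph{sanitized} compression operator $\widetilde{\pc}$ that, by construction, cannot create a new solution. Fix $I \in \mathbf{I}_t$ with $I \cap \algo R = \emptyset$ and $x \notin \dom(I)$; let $S_{x, I} \coloneqq \{y \in [N] \setminus \im(I) \,:\, I[x\ra y] \cap \algo R = \emptyset\}$. By \Cref{def:sparsity}, $|S_{x, I}| \geq N - t - s_t(\algo R)$. Define $\ket{\widetilde{+}_{x, I}}$ as the uniform superposition over $S_{x, I}$, and $\widetilde{\pc}_{x, I} = \Id - \proj{\widetilde{+}_{x, I}} - \proj{I} + \outerprod{\widetilde{+}_{x, I}}{I} + \outerprod{I}{\widetilde{+}_{x, I}}$. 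Lift this to the full $\widetilde{\pc}_{\reg{XI}}$ (and define $\widetilde{\cp}$ by replacing $\pc$ with $\widetilde{\pc}$ in the definition of $\cp$, using the same $\flip$ conjugation for inverse queries). Since \Cref{def:sparsity} is two-sided, the same argument applies after $\flip$ for backward queries.

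By the same reasoning as in \Cref{lem:comp-no-bad} (and \Cref{lem:dirsum-norm}, using that the $\widetilde{\pc}_{x, I}$ are controlled on $\reg I$), we get $\norm{(\pc - \widetilde{\pc}) \cdot \Pi_{\algo R, \reg I}^{\perp}}_{\leq t} = O(\sqrt{s_t(\algo R)/N})$; the restriction to $\Pi_{\algo R, \reg I}^\perp$ is needed because $\widetilde{\pc}$ is only defined on databases that do not yet contain a solution. Crucially, by construction $\widetilde{\cp}$ preserves $\Pi_{\algo R, \reg I}^\perp$: it can never take a solution-free database to one containing a solution. Hence $\Pi_{\algo R, \reg I} \widetilde{\cp} \, \Pi_{\algo R, \reg I}^\perp = 0$, so $\Pi_{\algo R, \reg I} \cp \, \Pi_{\algo R, \reg I}^\perp = \Pi_{\algo R, \reg I} (\cp - \widetilde{\cp}) \Pi_{\algo R, \reg I}^\perp$, and this operator has norm $O(\sqrt{s_t(\algo R)/N})$ when restricted to databases of size at most $t$.

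The hybrid argument then proceeds exactly as in the cycle-finding proof: setting $\ket{\psi_t} = A_t \cp \cdots A_0 \ket{0}\ket{\boti}$, write
\begin{align}
\Pi_{\algo R, \reg I} \ket{\psi_t} = A_{t, \reg A} \Pi_{\algo R, \reg I} \cp_{\reg{AI}} \Pi_{\algo R, \reg I} \ket{\psi_{t-1}} + A_{t, \reg A} \Pi_{\algo R, \reg I} \cp_{\reg{AI}} \Pi_{\algo R, \reg I}^\perp \ket{\psi_{t-1}},
\end{align}
and use the two observations above on the respective terms. By triangle inequality and induction on $t$, $\norm{\Pi_{\algo R, \reg I} \ket{\psi_q}} \leq \norm{\Pi_{\algo R, \reg I} \ket{\psi_0}} + \sum_{t=0}^{q-1} O(\sqrt{s_t(\algo R)/N})$. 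The initial term vanishes since $\ket{\psi_0}$ is supported on the empty database, which contains no solution (any satisfying tuple has $l \geq 1$). Thus the probability of winning \Cref{game:rel-search-comp} is $O((\sum_{t=0}^{q-1} \sqrt{s_t(\algo R)/N})^2)$, and \Cref{lem:search-bound} converts this to the claimed bound on \Cref{game:rel-search}, absorbing the $\adv{q+l}^2$ and $l^2/N$ slack.

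The only place to be careful is inverse queries: one should verify that conjugating $\widetilde{\pc}$ by $\flip$ still preserves $\Pi_{\algo R, \reg I}^\perp$, but this is immediate since the two-sided maximum in \Cref{def:sparsity} ensures the restricted superposition also avoids solution-creating preimages. No further obstacle arises, and the proof mirrors \Cref{thm:cycle} line-by-line with the cycle-avoiding superposition $\ket{\widetilde{+}_{x, I}}$ replaced by the $\algo R$-avoiding one.
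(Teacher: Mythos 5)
Your proposal is correct and follows essentially the same route as the paper's proof: a sanitized compression operator that excludes the $\leq s_t(\algo R)$ solution-creating images (and, via the two-sided sparsity, preimages for flipped queries), an $O(\sqrt{s_t(\algo R)/N})$ bound on its distance to the true compression, a query-by-query argument pushing $\Pi_{\algo R}$ through, and \Cref{lem:search-bound} to finish. The only cosmetic difference is that you bound the off-diagonal block $\Pi_{\algo R}\cp\,\Pi_{\algo R}^{\perp}$ (defining the sanitized operator only on solution-free databases), whereas the paper defines $\widetilde\pc^{\ra},\widetilde\pc^{\la}$ globally and bounds the commutator $[\Pi_{\algo R},\cp]$ — the two are equivalent in substance.
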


\begin{proof}
    We use the framework of \Cref{game:rel-search-comp}. Let $\algo A$ act on Hillbert space $\algo H_{\reg A}$ containing query registers $\reg{XY}$ with intermediate operations $A_0, \dots, A_q$ on this space. The success probability of this algorithm is given by \begin{align}
        p_{\algo A} = \norm{\Pi_{\algo R, \reg I} A_{q, \reg A} \cp_{\reg{AI}} \dots A_{1, \reg A} \cp_{\reg{AI}} A_{0, \reg A} \ket{0}_{\reg A} \ket{\boti}_{\reg A}}^2.
    \end{align}
    To bound this quantity, we will show a bound on the commutator norm \begin{align}
        \norm{[\Pi_{\algo R, \reg I}, \cp_{\reg{AI}}]}_{\leq t} &= O\left(\sqrt{\frac{s_t(\algo R)}{N}}\right). \label{eqn:rel-comm}
    \end{align}
    To do this, we can modify the compression operator such that it preserves dissatisfaction of $\algo R$ perfectly. In particular, fix an $I \in \mathbf I_t$ and $x \not\in \dom(I)$. We define the set $\algo R_{x, I}^{\ra}$ to be the set of images whose (forwards) assignment on $x$ would result in the predicate being satisfied, or the empty set if the predicate is already satisfied. Formally, \begin{align}
        \algo R_{x, I}^{\ra} &\coloneqq \begin{cases}
            \{y \, : \, I[x \ra y] \in \algo R\} & \text{(If $I \cap \algo R = \emptyset$)} \\
            \emptyset & \text{(Otherwise)}
        \end{cases}
    \end{align}
    Observe that $|\algo R_{x, I}^\ra| \leq s_I(\algo R)$. Then we write \begin{align}
        \ket{\widetilde +_{x, I}^\ra}_{\reg I} &\coloneqq \frac{1}{\sqrt{N - |\algo R_{x, I}^\ra \cup \im(I)|}} \sum_{y \in [N] \setminus \im(I) \setminus \algo R_{x, I}^\ra} \ket{I[x \ra y]} \\
        \widetilde \pc_{x, I}^\ra &\coloneqq \Id - \outerprod{I}{I} - \outerprod{\widetilde +_{x, I}^\ra}{\widetilde +_{x, I}^\ra} + \outerprod{\widetilde +_{x, I}^\ra}{I} + \outerprod{I}{\widetilde +_{x, I}^\ra} \\
        \widetilde \pc_{\reg{XI}}^\ra &= \bigoplus_{x \in [N], I \in \mathbf I^x} \outerprod{x}{x}_{\reg X} \otimes \widetilde \pc_{x, I}^\ra.
    \end{align}
    In the other direction, we define the set $\algo R_{y, I}^\la$ to be the set of preimages whose (inverse) assignment on $y$ would result in the predicate being satisfied, and otherwise the empty set. Formally, \begin{align}
        \algo R_{y, I}^{\la} &\coloneqq \begin{cases}
            \{x \, : \, (I[y \ra x])^{-1} \in \algo R\} & \text{(If $I^{-1} \cap \algo R = \emptyset$)} \\
            \emptyset & \text{(Otherwise)}
        \end{cases}
    \end{align}
    Observe that $|\algo R_{y, I}^\la| \leq s_{I^{-1}}(\algo R)$. Then we write \begin{align}
        \ket{\widetilde +_{y, I}^\la}_{\reg I} &\coloneqq \frac{1}{\sqrt{N - |\algo R_{y, I}^\la \cup \im(I)|}} \sum_{x \in [N] \setminus \im(I) \setminus \algo R_{y, I}^\la} \ket{I[y \ra x]} \\
        \widetilde \pc_{y, I}^\la &\coloneqq \Id - \outerprod{I}{I} - \outerprod{\widetilde +_{y, I}^\la}{\widetilde +_{y, I}^\la} + \outerprod{\widetilde +_{y, I}^\la}{I} + \outerprod{I}{\widetilde +_{y, I}^\la} \\
        \widetilde \pc_{\reg{XI}}^\la &= \bigoplus_{y \in [N], I \in \mathbf I^y} \outerprod{y}{y}_{\reg X} \otimes \widetilde \pc_{y, I}^\la.
    \end{align}
    
    Observe that $\norm{\ket{\widetilde +_{x, I}^\ra} - \ket{+_{x, I}}} = O(\sqrt{|\algo R_{x, I}^\ra|/N})$, and $\norm{\ket{\widetilde +_{y, I}^\la} - \ket{+_{y, I}}} = O(\sqrt{|\algo R_{y, I}^\la|/N})$. It follows that $\norm{\pc - \widetilde \pc^\ra}_{\leq t} = O(\sqrt{s_t(\algo R)/N})$ and $\norm{\pc - \widetilde \pc^\la}_{\leq t} = O(\sqrt{s_t(\algo R)/N})$ by \Cref{lem:dirsum-norm} and the prior observation.
    Now we have \begin{align}
        [\Pi_{\algo R, \reg I}, \widetilde \pc_{\reg{XI}}^\ra] &= 0, &[\Pi_{\algo R, \reg I}, \flip_{\reg I} \cdot \widetilde \pc_{\reg{XI}}^\la \cdot \flip_{\reg I}^\dagger ] &= 0, \\
        [\Pi_{\algo R, \reg I}, \pu_{\reg{XYI}}] &= 0, & [\Pi_{\algo R, \reg I}, \flip_{\reg I} \cdot \pu_{\reg{XYI}} \cdot \flip_{\reg I}^\dagger] &= 0,
    \end{align}
    as the modified operators cannot add any input-output pair which causes $\algo R$ to be satisfied.
    If we write \begin{align}
        \widetilde \cp_{\reg XYI} &= \outerprod{0}{0}_{\reg X} \otimes (\widetilde \pc_{\reg{XI}}^\ra \cdot \pu_{\reg{XYI}} \cdot \widetilde \pc_{\reg{XI}}^{\ra, \dagger}) + \outerprod{1}{1}_{\reg X} \otimes (\flip_{\reg I}\cdot \widetilde \pc_{\reg{XI}}^\la \cdot \pu_{\reg{XYI}} \cdot \widetilde \pc_{\reg{XI}}^{\la, \dagger} \cdot \flip_{\reg I}^\dagger) \\
        &= \outerprod{0}{0}_{\reg X} \otimes (\widetilde \pc_{\reg{XI}}^\ra \cdot \pu_{\reg{XYI}} \cdot \widetilde \pc_{\reg{XI}}^{\ra, \dagger}) + \outerprod{1}{1}_{\reg X} \otimes (\flip_{\reg I}\cdot \widetilde \pc_{\reg{XI}}^\la \cdot (\flip^\dagger \flip)_{\reg I} \cdot \pu_{\reg{XYI}} \cdot (\flip^\dagger \flip)_{\reg I} \cdot \widetilde \pc_{\reg{XI}}^{\la, \dagger} \cdot \flip_{\reg I}^\dagger),\nonumber
    \end{align}
    then we have $\norm{\cp - \widetilde \cp}_{\leq t} = O(\sqrt{s_t(\algo R)/N})$, and $\widetilde \cp_{\reg{XYI}} \Pi_{\algo R, \reg I} = \Pi_{\algo R, \reg I} \widetilde \cp_{\reg{XYI}}$, so the claimed \Cref{eqn:rel-comm} follows. Now we can write \begin{align}
        \sqrt{p_{\algo A}} &= \norm{\Pi_{\algo R, \reg I} A_{q, \reg A} \cp_{\reg{AI}} \dots A_{1, \reg A} \cp_{\reg{AI}} A_{0, \reg A} \ket{0}_{\reg A} \ket{\boti}_{\reg A}} \\
        &= \norm{ A_{q, \reg A} \cp_{\reg{AI}} \Pi_{\algo R, \reg I} A_{q-1, \reg A} \cp_{\reg{AI}} \dots A_{1, \reg A} \cp_{\reg{AI}} A_{0, \reg A} \ket{0}_{\reg A} \ket{\boti}_{\reg A}} + O\left(\sqrt{\frac{s_{q-1}(\algo R)}{N}}\right) \\
        &\dots  \\
        &= \norm{A_{q, \reg A} \cp_{\reg{AI}} \dots A_{1, \reg A} \cp_{\reg{AI}} A_{0, \reg A} \Pi_{\algo R, \reg I} \ket{0}_{\reg A} \ket{\boti}_{\reg A}} + \sum_{t=0}^{q-1} O\left(\sqrt{\frac{s_t(\algo R)}{N}}\right) \\
        &= \sum_{t=0}^{q-1} O\left(\sqrt{\frac{s_t(\algo R)}{N}}\right).
    \end{align}
    This, combined with \Cref{lem:search-bound} implies the claim.
\end{proof}

One of the simplest examples from the literature of a predicate search problem is the double-sided zero search problem, due to Unruh \cite{Unruh2021,Unruh2023}.

\begin{problem}
    Sample $\varphi \sim \mathbf S_N$ for $N=2^{2n}$, to produce two way accessible oracle $\algo O_{\varphi}$. $\algo A$ makes $q$ queries to $\algo O_{\varphi}$, and succeeds if it finds $(x \Vert 0^n, y \Vert 0^n)$ such that $\varphi(x \Vert 0^n) = y \Vert 0^n)$.
    \label{prob:DSZS}
\end{problem}

This is a search problem with predicate $\algo R = \{[(x\Vert 0^n, y \Vert 0^n)] \, : \, x, y \in \bit^n\}$. It is easy to compute $s_t(\algo R) = O(2^{n})$ for $t = o(2^n)$, giving us the following bound.

\begin{corollary}[of \Cref{thm:sparsity-lb}]
    Any algorithm $\algo A$ making $q$ quantum queries solves \Cref{prob:DSZS} with probability at most \begin{align}
        p_{\algo A} &= O\left(\frac{q^2}{2^n} + \adv{q}^2\right).
    \end{align}
\end{corollary}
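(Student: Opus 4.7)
The plan is to directly invoke \Cref{thm:sparsity-lb} after computing the sparsity of the target predicate, so the only real work is an elementary counting estimate. First I would observe that \Cref{prob:DSZS} is the predicate search problem corresponding to
\begin{align}
    \algo R &= \{[(x\Vert 0^n, y\Vert 0^n)] \, : \, x, y \in \bit^n\},
\end{align}
a predicate on lists of length $l=1$. So it suffices to bound $s_t(\algo R)$ as defined in \Cref{def:sparsity}.

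Next I would verify the two halves of the sparsity definition. Fix any $I \in \mathbf I_{\leq t}$ with $I \cap \algo R = \emptyset$. In the forward direction, for a fixed $x \in [N] \setminus \dom(I)$, an assignment $I[x\ra y]$ lands in $\algo R$ only when both $x$ and $y$ have trailing $n$ bits $0^n$. If $x_R \neq 0^n$ there are no such $y$; otherwise the admissible $y$ are contained in the $2^n$ strings of the form $y'\Vert 0^n$. Symmetrically, for a fixed $y$ with $y_R = 0^n$ there are at most $2^n$ admissible preimages $x$, and zero otherwise. Hence $s_t(\algo R) \leq 2^n$ uniformly in $t$.

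Finally I would substitute $l=1$, $N = 2^{2n}$, and $s_t(\algo R) \leq 2^n$ into the conclusion of \Cref{thm:sparsity-lb}:
\begin{align}
    p_{\algo A} &= O\!\left(\adv{q}^2 + \frac{1}{N} + \left(\sum_{t=0}^{q-1} \sqrt{\frac{2^n}{2^{2n}}}\right)^2\right) = O\!\left(\adv{q}^2 + \frac{q^2}{2^n}\right),
\end{align}
which is the stated bound. There is no substantive obstacle here: the whole content is the observation that conditioning on the last $n$ bits of either endpoint being $0^n$ eliminates a $2^n$ factor of freedom, which makes the two-sided sparsity of $\algo R$ exactly the sparsity used in the known bound of \cite{Unruh2021,Unruh2023}.
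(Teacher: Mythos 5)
Your proposal is correct and follows the same route as the paper: identify the DSZS predicate, bound its two-sided sparsity by $s_t(\algo R) = O(2^n)$ via exactly the counting argument you give, and plug $l=1$, $N=2^{2n}$ into \Cref{thm:sparsity-lb} to get $O(\adv{q}^2 + q^2/2^n)$. The paper merely states the sparsity bound without the explicit case analysis, so your write-up just fills in that routine detail.
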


Lower bounds for this problem are known already \cite{carolan24oneway,MMW24,cojocaru25lifting}, even tight ones. Observe that our bound would be tight, ignoring that $\adv{q}^2$ term. The problems we will look at next are less well understood, and for many no tight bound is known. The bounds we show as applications of our general theorem will in some cases improve on known results and/or be tight. 

Even more intriguingly, all of them would be tight if the $\adv{q}^2$ term were ignored. This suggests that our construction accurately captures the difficulty of these problems, and improving our soundness analysis provides a direct path towards showing tight quantum query lower bounds for many problems in the random permutation model.

\subsection{Application to sponge}
The sponge construction lifts an invertible permutation to a hash function with arbitrary domain. It is parameterized by integers $r, c \in \mathbb N$ and the permutation $\varphi : \bit^{r+c} \ra \bit^{r+c}$. To hash a message $m \in (\bit^r)^*$ of the form $m=m_1 \Vert m_2 \Vert \dots \Vert m_l$, we initialize the internal state to $u_0 = 0^{r+c}$. Then, for every $i$ from $1$ up to $l$, we XOR $m_i$ into the first $r$ bits of $u$ and apply $\varphi$, obtaining $u_{i+1} = \varphi(u_i \oplus (m_i \Vert 0^c))$. The output is then the first $r$ bits of the final state $u_l$.\footnote{For simplicity, we consider only one round of squeezing.} This construction is due to Bertoni et al \cite{BDPvA07,BDPvA08}, and when instantiated with the Keccak permutation underlies the NIST standardized hash function SHA3 \cite{KeccakSub3,KeccakSponge3}.

It is known that this construction is quantum indifferentiable from a random oracle \cite{alagic2025sponge}, which in particular implies preimage and collision resistance. The best known bound for these is due to the same work, which shows $\tilde \Omega(2^{\min(r, c)/5})$ queries are required for constant success probability at either collision or preimage finding. This result required a long and detailed proof tailored to this specific construction, and is not tight. Using our framework, we can improve these bounds in a direct manner.

We will use the notation $x[a : b]$ to denote the $a$-th (inclusive) through $b$-th (exclusive) bits of string $x$ with zero based indexing. Then, we can define the preimage predicate of the sponge for image $w \in \bit^r$ as \begin{align}
    \algo R^{w-\mathsf{pre}} &\coloneqq \left\{[(x_1, y_1), \dots, (x_l, y_l)] \, : \, \substack{
        x_1[r:r+c] = 0^c, \text{ and }\\
        y_i[r:r+c] = x_{i+1}[r:r+c], \text{ and }\\
        y_l[0:r] = z.}\right\} \label{eqn:sponge-w-pre}
\end{align}

\begin{problem}[Sponge inversion.]
    Given quantum query access to an ideal random permutation $\algo O_{\varphi}$, find a sponge pre-image of some fixed $w \in \bit^r$. Equivalently, solve the search problem defined by predicate $\algo R^{w-\mathsf{pre}}$.
    \label{prob:sponge-inv}
\end{problem}

\begin{theorem}
    A $q$ query quantum algorithm $\algo A$ solves sponge inversion, \Cref{prob:sponge-inv}, with probability no more than \begin{align}
        p_{\algo A} = O\left(\frac{q^2}{2^r} + \frac{q^3}{2^c} + \adv{q}^2 \right)
    \end{align}
    \label{thm:sponge-inv}
\end{theorem}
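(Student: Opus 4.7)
The plan is to apply the general predicate search lower bound (\Cref{thm:sparsity-lb}) to the sponge preimage predicate $\algo R^{w\text{-}\mathsf{pre}}$ defined in \Cref{eqn:sponge-w-pre}, so the main task is to bound the sparsity $s_t(\algo R^{w\text{-}\mathsf{pre}})$. Recalling $N = 2^{r+c}$, once we show $s_t(\algo R^{w\text{-}\mathsf{pre}}) = O(2^c + t \cdot 2^r)$, the theorem gives
\begin{align}
\sum_{t=0}^{q-1} \sqrt{\tfrac{s_t(\algo R^{w\text{-}\mathsf{pre}})}{N}} \;\leq\; \sum_{t=0}^{q-1}\left( \tfrac{1}{\sqrt{2^r}} + \sqrt{\tfrac{t}{2^c}}\right) \;=\; O\!\left(\tfrac{q}{\sqrt{2^r}} + \tfrac{q^{3/2}}{\sqrt{2^c}}\right),
\end{align}
whose square matches the claimed bound (the $l^2/N$ term is absorbed since any witness has length $l \leq q$).

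The heart of the proof is the sparsity calculation. Fix an $I \in \mathbf I_{\leq t}$ with $I \cap \algo R^{w\text{-}\mathsf{pre}} = \emptyset$ and some input $x \in [N] \setminus \dom(I)$. I would enumerate the position that a newly added pair $(x,y)$ could occupy in a witnessing chain $(x_1,y_1),\dots,(x_l,y_l)$:
\begin{itemize}
\item If $(x,y) = (x_l,y_l)$ is the \emph{last} link, then $y[0{:}r] = w$ is forced and only $y[r{:}r+c]$ is free, giving at most $2^c$ choices for $y$.
\item If $(x,y)$ is \emph{not} the last link, then $y[r{:}r+c]$ must match $x_{i+1}[r{:}r+c]$ for some successor $x_{i+1} \in \dom(I)$ that continues in $I$ to a chain ending at $w$. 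There are at most $|\dom(I)| \leq t$ such capacity values, and $y[0{:}r]$ is unconstrained, yielding at most $t \cdot 2^r$ choices.
\end{itemize}
Summing and taking the maximum over $x$ (and then symmetrically over $y$ for the inverse direction, which uses the same capacity-matching argument) gives $s_I(\algo R^{w\text{-}\mathsf{pre}}) \leq O(2^c + t \cdot 2^r)$, hence $s_t(\algo R^{w\text{-}\mathsf{pre}})$ is of the same order.

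The main subtlety -- the part I expect to require the most care -- is the case analysis above: we need the count to hold regardless of the structure of $I$, including degenerate situations where the newly inserted pair is simultaneously the start and end of a length-$1$ chain, or where the existing database contains many partial chains that could be stitched together via $(x,y)$. The key observation that makes the bound clean is that once the position of $(x,y)$ in the chain is fixed, either the capacity bits of $y$ are pinned to one of $t$ values (with $y[0{:}r]$ free), or the rate bits of $y$ are pinned to $w$ (with $y[r{:}r+c]$ free); no scenario allows both halves of $y$ to vary freely. After substituting $s_t(\algo R^{w\text{-}\mathsf{pre}}) = O(2^c + t \cdot 2^r)$ into \Cref{thm:sparsity-lb} and simplifying the resulting sum as above, the stated bound follows.
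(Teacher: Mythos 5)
Your proposal is correct and reaches the paper's bound, but it takes a different route to the key sparsity estimate. The paper does not bound the sparsity of the bare preimage predicate at all: it enlarges the target to $\algo R' = \algo R^{\mathsf{icol}} \cup \algo R^{w-\mathsf{pre}}$ (so the adversary also wins by finding an internal collision or a zero-capacity internal preimage), notes this only makes the adversary's task easier, and then quotes the known bound $s_t(\algo R') = O(2^c + t\cdot 2^r)$ from the literature before invoking \Cref{thm:sparsity-lb}. You instead bound $s_t(\algo R^{w-\mathsf{pre}})$ directly by the last-link/non-last-link case split, which is a self-contained and arguably cleaner argument; the paper's augmentation buys a citation (and reuses the same $\algo R^{\mathsf{icol}}$ machinery in the collision proof) at the cost of introducing auxiliary predicates. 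Two small refinements to your count: in the non-last-link case the successor may be the newly inserted pair itself, so the capacity of $y$ ranges over at most $t+1$ (not $t$) values; and in the inverse direction the case where $(x,y)$ is the \emph{first} link pins $x[r{:}r+c]=0^c$ and leaves $x[0{:}r]$ free, contributing an extra $2^r$ term that your "$t\cdot 2^r$" misses at $t=0$. Both give $s_t = O(2^c + (t+1)2^r)$, which leaves the summation and the final bound $O(q^2/2^r + q^3/2^c + \adv{q}^2)$ unchanged. Your handling of the $l^2/N$ term (asserting $l \le q$) is not justified for arbitrarily long preimage messages, but the paper's own proof silently glosses over the same point, so this is not a gap specific to your argument.
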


\begin{proof}
    To show this bound, we will change the predicate slightly. In particular, define the internal $z$-preimage predicate as \begin{align}
        \algo R^{z-\mathsf{ipre}} &\coloneqq \left\{[(x_1, y_1), \dots, (x_l, y_l)] \, : \substack{x_1[r:r+c] = 0^c, \text{ and} \\
        y_i[r:r+c] = x_{i+1}[r:r+c], \text{ and } \\
        y_l[r:r+c] = z.}\right\} \label{eqn:sponge-z-opre},
    \end{align}
    which can be used to construct the internal collision predicate \begin{align}
        \algo R^{\mathsf{icol}} &\coloneqq \algo R^{0^c-\mathsf{ipre}} \cup \bigcup_{z \in \bit^c \setminus 0^c} \algo R^{z-\mathsf{ipre}} \times \algo R^{z-\mathsf{ipre}}. \label{eqn:sponge-icol}
    \end{align}
    We then let $\algo R' = \algo R^{\mathsf{icol}} \cup \algo R^{w-\mathsf{pre}}$; in other words, we allow the adversary to find either a pre-image of $w$ or an internal collision. Clearly, this task is no harder than finding a sponge preimage. It is known that $s_t(\algo R')=O\left(2^{c} + t\cdot 2^{r}\right)$ \cite{LM22}, from which the claim follows.
\end{proof}

We can also define the sponge collision predicate, \begin{align}
    \algo R^{\mathsf{col}} &\coloneqq \bigcup_{w \in \bit^r} \algo R^{w-\mathsf{pre}} \times \algo R^{w-\mathsf{pre}}. \label{eqn:sponge-col}
\end{align}

This gives rise to the following search task.

\begin{problem}[Sponge collision.]
    Given quantum query access to a random permutation oracle $\algo O_{\varphi}$, find a sponge collision. Equivalently, solve the search problem defined by predicate $\algo R^{\mathsf{col}}$.
    \label{prob:sponge-col}
\end{problem}

\begin{theorem}
    A $q$ query quantum algorithm $\algo A$ solves sponge collision, \Cref{prob:sponge-inv}, with probability no more than \begin{align}
        p_{\algo A} = O\left(\frac{q^3}{2^{\min(r, c)}} + \adv{q}^2 \right)
    \end{align}
    \label{thm:sponge-col}
\end{theorem}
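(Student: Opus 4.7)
My plan is to follow the blueprint of the sponge inversion proof (\Cref{thm:sponge-inv}): augment the predicate to something easier to satisfy, bound its sparsity, and invoke \Cref{thm:sparsity-lb}. Define the augmented predicate $\algo R' \coloneqq \algo R^{\mathsf{col}} \cup \algo R^{\mathsf{icol}}$, where $\algo R^{\mathsf{icol}}$ is the internal-collision predicate from \Cref{eqn:sponge-icol}. Since any sponge collision is in particular a solution to $\algo R'$, the winning probability of $\algo A$ on $\algo R^{\mathsf{col}}$ is upper bounded by that on $\algo R'$, and it suffices to bound $s_t(\algo R')$.

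The main step is to show $s_t(\algo R') = O(t \cdot 2^{\max(r,c)})$. Fix any $I \in \mathbf I_{\leq t}$ with $I \cap \algo R' = \emptyset$. Because $I$ has no internal collision, the ``sponge graph'' induced by $I$ (nodes are reachable internal states from $0^{r+c}$, edges are entries of $I$) has at most $t$ reachable internal states and at most $t$ distinct output values. Consider appending a forward query $(x, y)$ to $I$:
\begin{enumerate}
\item The new pair creates an \emph{output collision} only if, after extending some reachable chain through the new edge (and possibly through subsequent existing edges in $I$), the terminal output $[0:r]$-bits collide with one of the at most $t$ existing terminal outputs. Since the terminal output is determined either by $y[0:r]$ directly (if the new edge terminates the chain) or by the $[0:r]$-bits of an already-fixed edge reached via $y[r:r+c]$, this constrains $y$ to a set of size $O(t \cdot 2^c + t \cdot 2^r)$.
\item The new pair creates an \emph{internal collision} only if the capacity $y[r:r+c]$ matches the capacity of one of the $O(t)$ existing reachable internal states (or $0^c$), leaving $2^r$ choices of $y[0:r]$, giving $O(t \cdot 2^r)$ values.
\end{enumerate}
Summing, at most $O(t \cdot 2^{\max(r,c)})$ values of $y$ create a solution. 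The analysis for an inverse query is symmetric, swapping the roles of $x$ and $y$ and working with reverse chains.

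Plugging $s_t(\algo R') = O(t \cdot 2^{\max(r,c)})$ and $N = 2^{r+c}$ into \Cref{thm:sparsity-lb}:
\begin{align}
\sum_{t=0}^{q-1} \sqrt{\frac{s_t(\algo R')}{N}} \;=\; \sum_{t=0}^{q-1} \sqrt{\frac{t}{2^{\min(r,c)}}} \;=\; O\!\left(\frac{q^{3/2}}{2^{\min(r,c)/2}}\right),
\end{align}
and noting that the adversary's output list length $l \leq q$ means the $l^2/(N-q-l)$ term contributes $O(q^2/2^{r+c})$, which is absorbed into $O(q^3/2^{\min(r,c)})$. This yields the claimed bound $p_{\algo A} = O(q^3/2^{\min(r,c)} + \adv{q}^2)$.

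The main obstacle is the case analysis for sparsity, which is delicate because a single new edge can simultaneously extend multiple existing chains and slot into the middle of a long chain assembled from existing edges. The assumption that $I \cap \algo R^{\mathsf{icol}} = \emptyset$ is essential here, as it forces the sponge graph of $I$ to be ``tree-like'' (no two reachable states share both rate and capacity), which makes the counting of chain continuations tractable. This combinatorial bound is essentially the same as the classical sponge collision analysis (cf.~\cite{LM22}), so the argument can be kept brief by appealing to existing sparsity-style bounds on the sponge graph.
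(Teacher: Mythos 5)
Your proposal is correct and follows essentially the same route as the paper: augment the predicate to $\algo R' = \algo R^{\mathsf{icol}} \cup \algo R^{\mathsf{col}}$, bound its sparsity by $s_t(\algo R') = O(t\cdot 2^r + t\cdot 2^c)$ (your $O(t\cdot 2^{\max(r,c)})$ is the same quantity), and plug into \Cref{thm:sparsity-lb}. The only difference is that the paper simply cites the known sparsity bound (from \cite{Unruh2021}) where you sketch the counting argument yourself, which is fine.
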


\begin{proof}
    To show this bound, we will change the predicate slightly. In particular, we once again allow the adversary to find either a sponge collision or an internal collision, letting $\algo R' = \algo R^{\mathsf{icol}} \cup \algo R^{\mathsf{col}}$. It is known that $s_t(\algo R')=O\left(t\cdot 2^r+t\cdot 2^c\right)$ \cite{Unruh2021}, from which the claim follows.
\end{proof}

Observe that using quantum collision and claw finding \cite{BHT97}, there is an algorithm solving \Cref{prob:sponge-col} with probability $O(q^3 2^{-\min(r, c)})$. It follows that \Cref{thm:sponge-col} is tight whenever $\min(r, c)/3 \leq (r+c)/12$, and improves on the prior art \cite{alagic2025sponge} when $\min(r, c)/5 \leq (r+c)/12$. Furthermore, ignoring the $\adv{q}^2$ term the bound would be tight for any $r$ and $c$.

Similarly using Grover search \cite{Gro96} or claw finding, there is an algorithm solving \Cref{prob:sponge-inv} with probability $O(q^3 / 2^r+q^2/2^{c})$. Then \Cref{thm:sponge-inv} is tight whenever $\min(r/2, c/3) \leq (r+c)/12$, and improves on the prior art \cite{alagic2025sponge} when $\min(r, c)/5 \leq (r+c)/12$. Again, ignoring the $\adv{q}^2$ term the bound would be tight for any $r$ and $c$.

\subsection{Application to Davies-Meyer}

Davies-Meyer is a construction that turns an ideal cipher into a fixed length compressing hash function \cite{winternetz84dm}. It underlies SHA1, SHA2, and MD5. A cipher is a keyed permutation, and the quantum ideal cipher model allows quantum query access across both the key and the resulting permutation. We will state our results for the key-less case for simplicity.\footnote{Our framework can be applied to the ideal cipher model, though we leave a detailed exploration to future work.}

The Davies-Meyer function $\mathsf{DM} : \bit^n \ra \bit^n$ is defined by a permutation $\varphi : \bit^n \ra \bit^n$, and is \begin{align}
    \mathsf{DM}^\varphi(x) &\coloneqq \varphi(x) \oplus x.
\end{align}

\begin{problem}[Davies-Meyer inversion.]
    Given bidirectional quantum query access to a random permutation $\varphi : \bit^n \ra \bit^n$, find an $x \in \bit^n$ such that $\mathsf{DM}^\varphi(x) = 0^n$.
    \label{prob:dm-inv}
\end{problem}

\begin{theorem}
    Any $q$ query quantum algorithm $\algo A$ solves Davies-Meyer inversion, \Cref{prob:dm-inv}, with probability bounded by \begin{align}
        p_{\algo A} &= O\left(\frac{q^2}{2^n} + \adv{q}^2\right).
    \end{align}
    \label{thm:dm-inv}
\end{theorem}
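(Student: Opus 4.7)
The plan is to recognize Davies--Meyer inversion as a predicate search problem and invoke \Cref{thm:sparsity-lb}. First, observe that $\mathsf{DM}^\varphi(x) = \varphi(x) \oplus x = 0^n$ holds iff $\varphi(x) = x$, so finding a preimage of $0^n$ under $\mathsf{DM}^\varphi$ is equivalent to finding a fixed point of $\varphi$. This is exactly \Cref{game:rel-search} for the predicate
\[
\algo R^{\mathsf{DM}} \coloneqq \{[(x, x)] \, : \, x \in \bit^n\},
\]
with output length $l=1$. Any algorithm that wins \Cref{prob:dm-inv} wins this predicate search game, so it suffices to apply the general bound.

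The only real step is the sparsity calculation $s_t(\algo R^{\mathsf{DM}}) \leq 1$ for all $t$. Fix any $I \in \mathbf I_{\leq t}$ with $I \cap \algo R^{\mathsf{DM}} = \emptyset$, i.e.\ $I$ contains no fixed point. For a new input $x \in [N] \setminus \dom(I)$, the assignment $I[x \ra y]$ creates a satisfying list only when the new pair itself is a fixed point, which forces $y = x$; that is at most one admissible $y$. By an identical argument in the inverse direction, the backwards count is also at most one. Hence $s_t(\algo R^{\mathsf{DM}}) \leq 1$.

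Plugging $s_t \leq 1$ and $l = 1$ into \Cref{thm:sparsity-lb} yields
\[
p_{\algo A} = O\!\left(\adv{q}^2 + \frac{1}{N} + \left(\sum_{t=0}^{q-1} \sqrt{\frac{1}{2^n}}\right)^{\!2}\right) = O\!\left(\frac{q^2}{2^n} + \adv{q}^2\right),
\]
which is precisely the claimed bound. There is no substantive obstacle: after unpacking the definition of Davies--Meyer, the proof reduces to a single-line sparsity observation and a direct invocation of the general theorem.
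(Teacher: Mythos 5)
Your proposal is correct and follows essentially the same route as the paper: cast inversion of $\mathsf{DM}^\varphi$ at $0^n$ as predicate search for the fixed-point predicate $\{[(x,x)] : x \in \bit^n\}$, note $s_t = O(1)$, and invoke \Cref{thm:sparsity-lb}. The paper simply states the sparsity value without the short forward/inverse counting argument you spell out, so there is no substantive difference.
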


\begin{proof}
    We can define the zero preimage predicate $\algo R = \{[(x, x)] \, : \, x \in \bit^n\}$, which satisfies $s_t(\algo R) = 1$. Then the claim follows from \Cref{thm:sparsity-lb}.
\end{proof}

\begin{problem}[Davies-Meyer collision.]
    Given bidirectional quantum query access to a random permutation $\varphi : \bit^n \ra \bit^n$, find a pair $x \neq x' \in \bit^n$ such that $\mathsf{DM}^{\varphi}(x) = \mathsf{DM}^\varphi(x')$.
    \label{prob:dm-col}
\end{problem}

\begin{theorem}
    Any $q$ query quantum algorithm $\algo A$ solves Davies-Meyer collision, \Cref{prob:dm-col}, with probability bounded by \begin{align}
        p_{\algo A} &= O\left(\frac{q^3}{2^n} + \adv{q}^2\right).
    \end{align}
    \label{thm:dm-col}
\end{theorem}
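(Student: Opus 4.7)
The plan is to apply the general sparsity lower bound, \Cref{thm:sparsity-lb}, to a suitably defined predicate. First I would encode Davies-Meyer collisions as a search predicate on input-output pairs: define
\begin{align}
    \algo R^{\mathsf{DMcol}} &\coloneqq \{[(x, y), (x', y')] \, : \, x \neq x', \, y \oplus x = y' \oplus x'\}.
\end{align}
An adversary winning \Cref{prob:dm-col} outputs $x \neq x'$; by querying forward on both, it can produce the list $[(x, \varphi(x)), (x', \varphi(x'))]$, which lies in $\algo R^{\mathsf{DMcol}}$ iff the original pair is a DM collision. Hence it suffices to bound the winning probability in \Cref{game:rel-search} with this predicate (and $l=2$).

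Next I would bound the sparsity $s_t(\algo R^{\mathsf{DMcol}})$. Fix any $I \in \mathbf I_{\leq t}$ with $I \cap \algo R^{\mathsf{DMcol}} = \emptyset$. For a fixed $x \not\in \dom(I)$, extending by $(x, y)$ produces a DM collision iff there exists $(x', y') \in I$ with $y = x \oplus x' \oplus y'$; the number of such $y$ is at most $|I| \leq t$. An identical argument in the inverse direction: fixing $y \not\in \im(I)$, the number of $x$ for which $I[x \ra y]$ contains a DM collision is at most $t$. Therefore $s_t(\algo R^{\mathsf{DMcol}}) \leq t$.

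Finally, plugging into \Cref{thm:sparsity-lb} with $l=2$ and $N=2^n$ yields
\begin{align}
    p_{\algo A} &= O\!\left(\adv{q}^2 + \frac{1}{2^n} + \Bigg(\sum_{t=0}^{q-1}\sqrt{\frac{t}{2^n}}\Bigg)^{\!2}\right) = O\!\left(\adv{q}^2 + \frac{q^3}{2^n}\right),
\end{align}
using $\sum_{t=0}^{q-1} \sqrt{t} = O(q^{3/2})$. There is no real obstacle here beyond computing the sparsity; all the heavy lifting is done by \Cref{thm:sparsity-lb}, and the fact that each new input-output pair is consistent with only a linear (in $t$) number of DM-collision-causing partners makes this essentially a two-line application, exactly parallel to the Davies-Meyer inversion proof but with sparsity $t$ in place of $1$, accounting for the extra factor of $q$ in the bound.
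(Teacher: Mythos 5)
Your proposal is correct and follows essentially the same route as the paper: define the DM-collision predicate on pairs of input-output points, observe that its sparsity is $s_t = O(t)$ since each new assignment can complete a collision with at most $|I| \leq t$ existing pairs (in either query direction), and invoke \Cref{thm:sparsity-lb} to get $O(q^3/2^n + \adv{q}^2)$. The sparsity computation and the reduction from the adversary's output $x \neq x'$ to a list of input-output pairs match the paper's (terser) argument.
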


\begin{proof}
    We can define the collision predicate \begin{align}
        \algo R = \left\{[(x, y), (x', y')] \, : \, \substack{x,y,x',y' \in \bit^n, \text{ and } \\
        x\oplus y = x' \oplus y', \text{ and }\\
        x \neq x', y \neq y'.}\right\},
    \end{align}
    which satisfies $s_t(\algo R) = t$. Then the claim follows from \Cref{thm:sparsity-lb}.
\end{proof}

Note that \cite{hosoyamanda18dm,cojocaru25lifting} prove bounds for preimage and collision resistance of the Davies-Meyer construction, and these bounds are tighter than the ones given here. However, no tight bound is known for collision resistance: the best known is $\Omega(2^{n/4})$ for constant success probability due to \cite{cojocaru25lifting}.

Ignoring the $\adv{q}^2$ term, our bound would be $\Omega(2^{n/3})$, matching quantum collision finding. Tightening our framework then provides a path towards deriving a tight collision bound for the Davies-Meyer construction.

\pagebreak

\printbibliography

@misc{hosoyamanda18dm,
      author = {Akinori Hosoyamada and Kan Yasuda},
      title = {Building Quantum-One-Way Functions from Block Ciphers: Davies-Meyer and Merkle-Damgård Constructions},
      howpublished = {Cryptology {ePrint} Archive, Paper 2018/841},
      year = {2018},
      url = {https://eprint.iacr.org/2018/841}
}

@misc{alagic2025sponge,
      author = {Gorjan Alagic and Joseph Carolan and Christian Majenz and Saliha Tokat},
      title = {The Sponge is Quantum Indifferentiable},
      howpublished = {Cryptology {ePrint} Archive, Paper 2025/731},
      year = {2025},
      url = {https://eprint.iacr.org/2025/731}
}

@INPROCEEDINGS{winternetz84dm,
  author={Winternitz, Robert S.},
  booktitle={1984 IEEE Symposium on Security and Privacy}, 
  title={A Secure One-Way Hash Function Built from DES}, 
  year={1984},
  volume={},
  number={},
  pages={88-88},
  keywords={Proposals;Encryption;Materials;Educational institutions;Buildings},
  doi={10.1109/SP.1984.10027}}

@misc{cojocaru25lifting,
      author = {Alexandru Cojocaru and Minki Hhan and Qipeng Liu and Takashi Yamakawa and Aaram Yun},
      title = {Quantum Lifting for Invertible Permutations and Ideal Ciphers},
      howpublished = {Cryptology {ePrint} Archive, Paper 2025/738},
      year = {2025},
      url = {https://eprint.iacr.org/2025/738}
}

@misc{yuen2013quantumlowerbounddistinguishing,
      title={A quantum lower bound for distinguishing random functions from random permutations}, 
      author={Henry Yuen},
      year={2013},
      eprint={1310.2885},
      archivePrefix={arXiv},
      primaryClass={cs.CC},
      url={https://arxiv.org/abs/1310.2885}, 
}

@inproceedings{blowfish,
  title={Description of a New Variable-Length Key, 64-bit Block Cipher (Blowfish)},
  booktitle={Fast Software Encryption, Cambridge Security Workshop, Cambridge, UK, December 9-11, 1993, Proceedings},
  series={Lecture Notes in Computer Science},
  publisher={Springer},
  volume={809},
  pages={191-204},
  doi={10.1007/3-540-58108-1_24},
  author={Bruce Schneier},
  year=1993
}

@techreport{fips46,
  author       = {National Institute of Standards, U.S. Department of Commerce},
  title        = {Data Encryption Standard (DES)},
  institution  = {National Bureau of Standards},
  number       = {Federal Information Processing Standards Publication 46},
  year         = {1977},
  address      = {Washington, D.C.},
  url          = {https://csrc.nist.gov/pubs/fips/46/final}
}

@InProceedings{liu19multicols,
author="Liu, Qipeng
and Zhandry, Mark",
editor="Ishai, Yuval
and Rijmen, Vincent",
title="On Finding Quantum Multi-collisions",
booktitle="Advances in Cryptology -- EUROCRYPT 2019",
year="2019",
publisher="Springer International Publishing",
address="Cham",
pages="189--218",
abstract="A k-collision for a compressing hash function H is a set of k distinct inputs that all map to the same output. In this work, we show that for any constant k, {\$}{\$}{\backslash}varTheta {\backslash}left( N^{\{}{\backslash}frac{\{}1{\}}{\{}2{\}}(1-{\backslash}frac{\{}1{\}}{\{}2^k-1{\}}){\}}{\backslash}right) {\$}{\$}quantum queries are both necessary and sufficient to achieve a k-collision with constant probability. This improves on both the best prior upper bound (Hosoyamada et al., ASIACRYPT 2017) and provides the first non-trivial lower bound, completely resolving the problem.",
isbn="978-3-030-17659-4"
}

@InProceedings{carolan24oneway,
author="Carolan, Joseph
and Poremba, Alexander",
editor="Reyzin, Leonid
and Stebila, Douglas",
title="Quantum One-Wayness of the Single-Round Sponge with Invertible Permutations",
booktitle="Advances in Cryptology -- CRYPTO 2024",
year="2024",
publisher="Springer Nature Switzerland",
address="Cham",
pages="218--252",
abstract="Sponge hashing is a widely used class of cryptographic hash algorithms which underlies the current international hash function standard SHA-3. In a nutshell, a sponge function takes as input a bit-stream of any length and processes it via a simple iterative procedure: it repeatedly feeds each block of the input into a so-called block function, and then produces a digest by once again iterating the block function on the final output bits. While much is known about the post-quantum security of the sponge construction in the case when the block function is modeled as a random function or one-way permutation, the case of permutations allowing inverse queries, which more accurately models the construction underlying SHA-3, has so far remained a fundamental open problem.",
isbn="978-3-031-68391-6"
}

@InProceedings{unruh14revocable,
author="Unruh, Dominique",
editor="Nguyen, Phong Q.
and Oswald, Elisabeth",
title="Revocable Quantum Timed-Release Encryption",
booktitle="Advances in Cryptology -- EUROCRYPT 2014",
year="2014",
publisher="Springer Berlin Heidelberg",
address="Berlin, Heidelberg",
pages="129--146",
abstract="Timed-release encryption is a kind of encryption scheme that a recipient can decrypt only after a specified amount of time T (assuming that we have a moderately precise estimate of his computing power). A revocable timed-release encryption is one where, before the time T is over, the sender can ``give back'' the timed-release encryption, provably loosing all access to the data. We show that revocable timed-release encryption without trusted parties is possible using quantum cryptography (while trivially impossible classically).",
isbn="978-3-642-55220-5"
}

@inproceedings{Zhandry12,
	Acmid = {2417838},
	Address = {Washington, DC, USA},
	Author = {Zhandry, Mark},
	Booktitle = {Proceedings of the 53rd Annual Symposium on Foundations of Computer Science},
	Doi = {10.1109/FOCS.2012.37},
	Isbn = {978-0-7695-4874-6},
	Keywords = {Quantum, Pseudorandom Function},
	Numpages = {9},
	Pages = {679--687},
	Publisher = {IEEE Computer Society},
	Series = {FOCS '12},
	Title = {How to Construct Quantum Random Functions},
	Year = {2012},
	Bdsk-Url-1 = {http://dx.doi.org/10.1109/FOCS.2012.37}}

@inproceedings{ito19feistcca,
author = {Ito, Gembu and Hosoyamada, Akinori and Matsumoto, Ryutaroh and Sasaki, Yu and Iwata, Tetsu},
title = {Quantum Chosen-Ciphertext Attacks Against Feistel Ciphers},
year = {2019},
isbn = {978-3-030-12611-7},
publisher = {Springer-Verlag},
address = {Berlin, Heidelberg},
url = {https://doi.org/10.1007/978-3-030-12612-4_20},
doi = {10.1007/978-3-030-12612-4_20},
abstract = {Seminal results by Luby and Rackoff show that the 3-round Feistel cipher is secure against chosen-plaintext attacks (CPAs), and the 4-round version is secure against chosen-ciphertext attacks (CCAs). However, the security significantly changes when we consider attacks in the quantum setting, where the adversary can make superposition queries. By using Simon’s algorithm that detects a secret cycle-period in polynomial-time, Kuwakado and Morii showed that the 3-round version is insecure against quantum CPA by presenting a polynomial-time distinguisher. Since then, Simon’s algorithm has been heavily used against various symmetric-key constructions. However, its applications are still not fully explored.In this paper, based on Simon’s algorithm, we first formalize a sufficient condition of a quantum distinguisher against block ciphers so that it works even if there are multiple collisions other than the real period. This distinguisher is similar to the one proposed by Santoli and Schaffner, and it does not recover the period. Instead, we focus on the dimension of the space obtained from Simon’s quantum circuit. This eliminates the need to evaluate the probability of collisions, which was needed in the work by Kaplan et al.&nbsp;at CRYPTO 2016. Based on this, we continue the investigation of the security of Feistel ciphers in the quantum setting. We show a quantum CCA distinguisher against the 4-round Feistel cipher. This extends the result of Kuwakado and Morii by one round, and follows the intuition of the result by Luby and Rackoff where the CCA setting can extend the number of rounds by one. We also consider more practical cases where the round functions are composed of a public function and XORing the subkeys. We show the results of both distinguishing and key recovery attacks against these constructions.},
booktitle = {Topics in Cryptology – CT-RSA 2019: The Cryptographers' Track at the RSA Conference 2019, San Francisco, CA, USA, March 4–8, 2019, Proceedings},
pages = {391–411},
numpages = {21},
keywords = {Feistel cipher, Quantum chosen-ciphertext attacks, Simon’s algorithm},
location = {San Francisco, CA, USA}
}

@inproceedings{bhaumik24badnorm,
author = {Bhaumik, Ritam and Cogliati, Beno{\^{i}}t and Ethan, Jordan and Jha, Ashwin},
title = {Mind the Bad Norms: Revisiting Compressed Oracle-Based Quantum Indistinguishability Proofs},
year = {2024},
publisher = {Springer-Verlag},
address = {Berlin, Heidelberg},
url = {https://doi.org/10.1007/978-981-96-0947-5_8},
doi = {10.1007/978-981-96-0947-5_8},
booktitle = {Advances in Cryptology – ASIACRYPT 2024: 30th International Conference on the Theory and Application of Cryptology and Information Security, Kolkata, India, December 9–13, 2024, Proceedings, Part IX},
pages = {215–247},
numpages = {33},
keywords = {quantum security, compressed oracle, recording standard oracle with errors, Luby-Rackoff, Misty},
location = {Kolkata, India}
}

@InProceedings{YZ21,
  author       = {Yamakawa, Takashi and Zhandry, Mark},
  booktitle    = {Annual International Conference on the Theory and Applications of Cryptographic Techniques},
  title        = {Classical vs quantum random oracles},
  year         = {2021},
  organization = {Springer},
  pages        = {568--597},
}

@inproceedings{KM10,
	Author = {Hidenori Kuwakado and Masakatu Morii},
	Booktitle = {Proc.\ IEEE International Symposium on
                  Information Theory},
	Pages = {2682-2685},
	Title = "Quantum distinguisher between the 3-round {Feistel} cipher and the random permutation",
	publisher = "IEEE",
	Year = {2010},
	Bdsk-Url-1 = {http://dx.doi.org/10.1109/ISIT.2010.5513654}}

@inproceedings{hosoyamada19lr,
author = {Hosoyamada, Akinori and Iwata, Tetsu},
title = {4-Round Luby-Rackoff Construction is a qPRP},
year = {2019},
isbn = {978-3-030-34577-8},
publisher = {Springer-Verlag},
address = {Berlin, Heidelberg},
url = {https://doi.org/10.1007/978-3-030-34578-5_6},
doi = {10.1007/978-3-030-34578-5_6},
abstract = {The Luby-Rackoff construction, or the Feistel construction, is one of the most important approaches to construct secure block ciphers from secure pseudorandom functions. The 3- and 4-round Luby-Rackoff constructions are proven to be secure against chosen-plaintext attacks (CPAs) and chosen-ciphertext attacks (CCAs), respectively, in the classical setting. However, Kuwakado and Morii showed that a quantum superposed chosen-plaintext attack (qCPA) can distinguish the 3-round Luby-Rackoff construction from a random permutation in polynomial time. In addition, Ito et al. recently showed a quantum superposed chosen-ciphertext attack (qCCA) that distinguishes the 4-round Luby-Rackoff construction. Since Kuwakado and Morii showed the result, a problem of much interest has been how many rounds are sufficient to achieve provable security against quantum query attacks. This paper answers to this fundamental question by showing that 4-rounds suffice against qCPAs. Concretely, we prove that the 4-round Luby-Rackoff construction is secure up&nbsp;to  quantum queries. We also give a query upper bound for the problem of distinguishing the 4-round Luby-Rackoff construction from a random permutation by showing a distinguishing qCPA with  quantum queries. Our result is the first to demonstrate the security of a typical block-cipher construction against quantum query attacks, without any algebraic assumptions. To give security proofs, we use an alternative formalization of Zhandry’s compressed oracle technique.},
booktitle = {Advances in Cryptology – ASIACRYPT 2019: 25th International Conference on the Theory and Application of Cryptology and Information Security, Kobe, Japan, December 8–12, 2019, Proceedings, Part I},
pages = {145–174},
numpages = {30},
keywords = {Symmetric-key cryptography, Post-quantum cryptography, Provable security, Quantum security, The compressed oracle technique, Quantum chosen plaintext attacks, Luby-Rackoff constructions},
location = {Kobe, Japan}
}

@article{Zhandry15,
	Acmid = {2871413},
	Address = {Paramus, NJ},
	Author = {Zhandry, Mark},
	Issue_Date = {May 2015},
	Journal = {Quantum Info. Comput.},
	Keywords = {quantum collision problem, random functions},
	Month = may,
	Number = {7-8},
	Numpages = {11},
	Pages = {557--567},
	Publisher = {Rinton Press, Incorporated},
	Title = {A Note on the Quantum Collision and Set Equality Problems},
	Volume = {15},
	Year = {2015},
	Bdsk-Url-1 = {http://dl.acm.org/citation.cfm?id=2871411.2871413}}

@misc{czajkowski19lazy,
      author = {Jan Czajkowski and Christian Majenz and Christian Schaffner and Sebastian Zur},
      title = {Quantum Lazy Sampling and Game-Playing Proofs for Quantum Indifferentiability},
      howpublished = {Cryptology ePrint Archive, Paper 2019/428},
      year = {2019},
      note = {\url{https://eprint.iacr.org/2019/428}},
      url = {https://eprint.iacr.org/2019/428}
}

@article{Hamoudi_2023,
   title={Quantum Time–Space Tradeoff for Finding Multiple Collision Pairs},
   volume={15},
   ISSN={1942-3462},
   url={http://dx.doi.org/10.1145/3589986},
   DOI={10.1145/3589986},
   number={1–2},
   journal={ACM Transactions on Computation Theory},
   publisher={Association for Computing Machinery (ACM)},
   author={Hamoudi, Yassine and Magniez, Frédéric},
   year={2023},
   month=jun, pages={1–22} }

@misc{Unruh2021,
      author = {Dominique Unruh},
      title = {Compressed Permutation Oracles (And the Collision-Resistance of Sponge/SHA3)},
      howpublished = {Cryptology ePrint Archive, Paper 2021/062},
      year = {2021},
      note = {\url{https://eprint.iacr.org/2021/062}},
      url = {https://eprint.iacr.org/2021/062}
}

@inproceedings{Unruh2023,
author = {Unruh, Dominique},
title = {Towards Compressed Permutation Oracles},
year = {2023},
isbn = {978-981-99-8729-0},
publisher = {Springer-Verlag},
address = {Berlin, Heidelberg},
url = {https://doi.org/10.1007/978-981-99-8730-6_12},
doi = {10.1007/978-981-99-8730-6_12},
abstract = {Compressed oracles (Zhandry, Crypto 2019) are a powerful technique to reason about quantum random oracles, enabling a sort of lazy sampling in the presence of superposition queries. A long-standing open question is whether a similar technique can also be used to reason about random (efficiently invertible) permutations.In this work, we make a step towards answering this question. We first define the compressed permutation oracle and illustrate its use. While the soundness of this technique (i.e., the indistinguishability from a random permutation) remains a conjecture, we show a curious 2-for-1 theorem: If we use the compressed permutation oracle methodology to show that some construction (e.g., Luby-Rackoff) implements a random permutation (or strong qPRP), then we get the fact that this methodology is actually sound for free.},
booktitle = {Advances in Cryptology – ASIACRYPT 2023: 29th International Conference on the Theory and Application of Cryptology and Information Security, Guangzhou, China, December 4–8, 2023, Proceedings, Part IV},
pages = {369–400},
numpages = {32},
location = {Guangzhou, China}
}

@misc{Czajkowski2017,
      author = {Jan Czajkowski and Leon Groot Bruinderink and Andreas Hülsing and Christian Schaffner and Dominique Unruh},
      title = {Post-quantum security of the sponge construction},
      howpublished = {Cryptology ePrint Archive, Paper 2017/771},
      year = {2017},
      note = {\url{https://eprint.iacr.org/2017/771}},
      url = {https://eprint.iacr.org/2017/771}
}

@misc{rosmanis2022tight,
      title={Tight Bounds for Inverting Permutations via Compressed Oracle Arguments}, 
      author={Ansis Rosmanis},
      year={2022},
      eprint={2103.08975},
      archivePrefix={arXiv},
      primaryClass={quant-ph}
}

@InProceedings{BDPvA08,
author="Bertoni, Guido
and Daemen, Joan
and Peeters, Micha{\"e}l
and Van Assche, Gilles",
editor="Smart, Nigel",
title="On the Indifferentiability of the Sponge Construction",
booktitle="Advances in Cryptology -- EUROCRYPT 2008",
year="2008",
publisher="Springer Berlin Heidelberg",
address="Berlin, Heidelberg",
pages="181--197",
abstract="In this paper we prove that the sponge construction introduced in [4] is indifferentiable from a random oracle when being used with a random transformation or a random permutation and discuss its implications. To our knowledge, this is the first time indifferentiability has been shown for a construction calling a random permutation (instead of an ideal compression function or ideal block cipher) and for a construction generating outputs of any length (instead of a fixed length).",
isbn="978-3-540-78967-3"
}

@InProceedings{10.1007/978-3-642-25385-0_3,
author="Boneh, Dan
and Dagdelen, {\"O}zg{\"u}r
and Fischlin, Marc
and Lehmann, Anja
and Schaffner, Christian
and Zhandry, Mark",
editor="Lee, Dong Hoon
and Wang, Xiaoyun",
title="Random Oracles in a Quantum World",
booktitle="Advances in Cryptology -- ASIACRYPT 2011",
year="2011",
publisher="Springer Berlin Heidelberg",
address="Berlin, Heidelberg",
pages="41--69",
abstract="The interest in post-quantum cryptography --- classical systems that remain secure in the presence of a quantum adversary --- has generated elegant proposals for new cryptosystems. Some of these systems are set in the random oracle model and are proven secure relative to adversaries that have classical access to the random oracle. We argue that to prove post-quantum security one needs to prove security in the quantum-accessible random oracle model where the adversary can query the random oracle with quantum state.",
isbn="978-3-642-25385-0"
}

@misc{KeccakSub3,
  author    = {G. Bertoni and J. Daemen and M. Peeters and G. Van Assche},
  title     = {The Keccak SHA-3 submission},
  url        = {http://keccak.noekeon.org/Keccak-submission-3.pdf},
  howpublished = {Submission to NIST (Round 3)},
  year      = {2011},
}

@misc{KeccakSponge3,
  author    = {G. Bertoni and J. Daemen and M. Peeters and G. Van Assche},
  title     = {Cryptographic sponge functions},
  url        = {http://sponge.noekeon.org/CSF-0.1.pdf},
  howpublished = {Submission to NIST (Round 3)},
  year      = {2011},
}

@inproceedings{Gro96,
	Address = {New York},
	Author = {Lov K. Grover},
	Booktitle = {Proceedings of the Twenty-Eighth Annual Symposium on the Theory of Computing},
	Owner = {huelsing},
	Pages = {212--219},
	Publisher = {ACM Press},
	Timestamp = {2012.02.03},
	Title = "A fast quantum mechanical algorithm for database search",
	Year = {1996}}

@misc{BHT97,
  author       = {Gilles Brassard and Peter H{\o}yer and Alain Tapp},
  title        = {Quantum Algorithm for the Collision Problem},
  year         = {1997},
  note = "Avalable at \url{https://arxiv.org/abs/quant-ph/9705002}",
}

@misc{alagic2023twosided,
      title={On the Two-sided Permutation Inversion Problem}, 
      author={Gorjan Alagic and Chen Bai and Alexander Poremba and Kaiyan Shi},
      year={2023},
      eprint={2306.13729},
      archivePrefix={arXiv},
      primaryClass={quant-ph}
}

@misc{don2021extract,
      author = {Jelle Don and Serge Fehr and Christian Majenz and Christian Schaffner},
      title = {Online-Extractability in the Quantum Random-Oracle Model},
      howpublished = {Cryptology {ePrint} Archive, Paper 2021/280},
      year = {2021},
      url = {https://eprint.iacr.org/2021/280}
}

@misc{cpz24precomp,
      author = {Joseph Carolan and Alexander Poremba and Mark Zhandry},
      title = {(Quantum) Indifferentiability and Pre-Computation},
      howpublished = {Cryptology {ePrint} Archive, Paper 2024/1727},
      year = {2024},
      url = {https://eprint.iacr.org/2024/1727}
}

@InProceedings{Zhandry2018,
 author="Zhandry, Mark",
 editor="Boldyreva, Alexandra
 and Micciancio, Daniele",
 title="How to Record Quantum Queries, and Applications to Quantum Indifferentiability",
 booktitle="Advances in Cryptology -- CRYPTO 2019",
 year="2019",
 publisher="Springer International Publishing",
 address="Cham",
 pages="239--268",
 abstract="The quantum random oracle model (QROM) has become the standard model in which to prove the post-quantum security of random-oracle-based constructions. Unfortunately, none of the known proof techniques allow the reduction to record information about the adversary's queries, a crucial feature of many classical ROM proofs, including all proofs of indifferentiability for hash function domain extension.",
 isbn="978-3-030-26951-7"
 }

@inproceedings{chung2021compressed,
  title={On the compressed-oracle technique, and post-quantum security of proofs of sequential work},
  author={Chung, Kai-Min and Fehr, Serge and Huang, Yu-Hsuan and Liao, Tai-Ning},
  booktitle={Annual International Conference on the Theory and Applications of Cryptographic Techniques},
  pages={598--629},
  year={2021},
  organization={Springer}
}

@InProceedings{LM22,
	author="Lefevre, Charlotte
	and Mennink, Bart",
	editor="Dodis, Yevgeniy
	and Shrimpton, Thomas",
	title="Tight Preimage Resistance of the Sponge Construction",
	booktitle="Advances in Cryptology -- CRYPTO 2022",
	year="2022",
	publisher="Springer Nature Switzerland",
	address="Cham",
	pages="185--204",
	abstract="The cryptographic sponge is a popular method for hash function design. The construction is in the ideal permutation model proven to be indifferentiable from a random oracle up to the birthday bound in the capacity of the sponge. This result in particular implies that, as long as the attack complexity does not exceed this bound, the sponge construction achieves a comparable level of collision, preimage, and second preimage resistance as a random oracle. We investigate these state-of-the-art bounds in detail, and observe that while the collision and second preimage security bounds are tight, the preimage bound is not tight. We derive an improved and tight preimage security bound for the cryptographic sponge construction.",
	isbn="978-3-031-15985-5"
}

@inproceedings{BDPvA07,
	title={Sponge functions},
	author={Bertoni, Guido and Daemen, Joan and Peeters, Micha{\"e}l and van Assche, Gilles},
	booktitle="ECRYPT Hash Workhsop",
	year=2007
}

@InProceedings{Zhandry21,
	author="Zhandry, Mark",
	editor="Tibouchi, Mehdi
	and Wang, Huaxiong",
	title="Redeeming Reset Indifferentiability and Applications to Post-quantum Security",
	booktitle="Advances in Cryptology -- ASIACRYPT 2021",
	year="2021",
	publisher="Springer International Publishing",
	address="Cham",
	pages="518--548",
	abstract="Indifferentiability is used to analyze the security of constructions of idealized objects, such as random oracles or ideal ciphers. Reset indifferentiability is a strengthening of plain indifferentiability which is applicable in far more scenarios, but has largely been abandoned due to significant impossibility results and a lack of positive results. Our main results are:Under weak reset indifferentiability, ideal ciphers imply (fixed size) random oracles, and domain shrinkage is possible. We thus show reset indifferentiability is more useful than previously thought.We lift our analysis to the quantum setting, showing that ideal ciphers imply random oracles under quantum indifferentiability.Despite Shor's algorithm, we observe that generic groups are still meaningful quantumly, showing that they are quantumly (reset) indifferentiable from ideal ciphers; combined with the above, cryptographic groups yield post-quantum symmetric key cryptography. In particular, we obtain a plausible post-quantum random oracle that is a subset-product followed by two modular reductions.",
	isbn="978-3-030-92062-3"
}

@misc{MMW24,
	author = {Christian Majenz and Giulio Malavolta and Michael Walter},
	title = {Permutation Superposition Oracles for Quantum Query Lower Bounds},
	howpublished = {Cryptology {ePrint} Archive, Paper 2024/1140},
	year = {2024},
	url = {https://eprint.iacr.org/2024/1140}
}

@misc{zhandry2016notequantumsecureprps,
      title={A Note on Quantum-Secure PRPs}, 
      author={Mark Zhandry},
      year={2016},
      eprint={1611.05564},
      archivePrefix={arXiv},
      primaryClass={cs.CR},
      url={https://arxiv.org/abs/1611.05564}, 
}

@article{LR88,
	author = {Luby, Michael and Rackoff, Charles},
	title = {How to Construct Pseudorandom Permutations from Pseudorandom Functions},
	journal = {SIAM Journal on Computing},
	volume = {17},
	number = {2},
	pages = {373-386},
	year = {1988},
	doi = {10.1137/0217022},
	
	URL = { 
	
	https://doi.org/10.1137/0217022
	},
	eprint = { 
	
	https://doi.org/10.1137/0217022
	
	
	
	}
	,
	abstract = { We show how to efficiently construct a pseudorandom invertible permutation generator from a pseudorandom function generator. Goldreich, Goldwasser and Micali [“How to construct random functions,” Proc. 25th Annual Symposium on Foundations of Computer Science, October 24–26, 1984.] introduce the notion of a pseudorandom function generator and show how to efficiently construct a pseudorandom function generator from a pseudorandom bit generator. We use some of the ideas behind the design of the Data Encryption Standard for our construction. A practical implication of our result is that any pseudorandom bit generator can be used to construct a block private key cryptosystem which is secure against chosen plaintext attack, which is one of the strongest known attacks against a cryptosystem. }
}
\end{document}